\DeclareMathOperator{\R}{\mathbb{R}}
\DeclareMathOperator{\K}{\mathbb{K}}
\DeclareMathOperator{\Sym}{\textnormal{Sym}}
\DeclareMathOperator{\id}{\textnormal{id}}
\DeclareMathOperator{\gfrak}{\mathfrak{g}}
\DeclareMathOperator{\gr}{\textnormal{gr}}
\DeclareMathOperator{\Z}{\mathbb{Z}}
\DeclareMathOperator{\Span}{\textnormal{span}}
\DeclareMathOperator{\End}{\textnormal{End}}
\DeclareMathOperator{\ad}{\textnormal{ad}}
\DeclareMathOperator{\N}{\mathbb{N}}
\DeclareMathOperator{\FF}{\mathcal{F}}
\DeclareMathOperator{\MM}{\mathcal{M}}
\DeclareMathOperator{\NN}{\mathcal{N}}
\DeclareMathOperator{\UU}{\mathcal{U}}
\newcommand{\ip}{I_0^{(2)}}
\newcommand{\iphat}{I^{(2)}}
\newcommand{\gs}{\geqslant}
\newcommand{\ls}{\leqslant}
\newcommand{\la}{\langle}
\newcommand{\ra}{\rangle}
\newcommand{\p}{\partial}
\newtheorem{lem}{Lemma}
\newtheorem{cor}[lem]{Corollary}
\newtheorem{thm}[lem]{Theorem}
\theoremstyle{definition}
\newtheorem{defi}[lem]{Definition}
\newtheorem{rem}[lem]{Remark}
\newtheorem{ass}[lem]{Assumption}
\title{The BRST Complex of Homological Poisson Reduction}
\author{Martin M\"uller-Lennert}
\address{ETH Zurich, Zurich, Switzerland}
\email{martin.mueller-lennert@math.ethz.ch}
\subjclass{Primary 17B63; Secondary 17B70, 13D02}
\keywords{BRST complex, reduction, quantization}
\date{\today}
\begin{document}

\begin{abstract}
	BRST complexes are differential graded Poisson algebras. They are associated
	to a coisotropic ideal $J$ of a Poisson algebra $P$ and provide a description
	of the Poisson algebra $(P/J)^J$ as their cohomology in degree zero. Using
	the notion of stable equivalence introduced in \cite{felder-kazhdan}, we
	prove that any two BRST complexes associated to the same coisotropic ideal
	are quasi-isomorphic in the case $P = \R[V]$ where $V$ is a
  finite-dimensional symplectic vector space and the bracket on $P$ is induced
  by the symplectic structure on $V$.
  As a corollary, the cohomology of the BRST complexes is canonically
  associated to the coisotropic ideal $J$ in the symplectic case. We do not
  require any regularity assumptions on the constraints generating the ideal
  $J$.
	We finally quantize the BRST complex rigorously in the presence of infinitely
	many ghost variables and discuss uniqueness of the quantization
	procedure. 
\end{abstract}
\maketitle
\tableofcontents

\newpage

\section{Introduction}

In the quantization of gauge systems, the so-called BRST complex plays
a prominent role \cite{henneaux-teitelboim}.  In the Hamiltonian
formalism, the theory is called BFV-theory and goes back to Batalin,
Fradkin, Fradkina and Vilkovisky \cite{fradkin-fradkina,
  batalin-fradkin1983, batalin-vilkovisky1977,
  fradkin-vilkovisky1975}. 

In the Hamiltonian formulation of gauge theory, the presence of gauge freedom
yields constraints in the phase space $M$ of the system. The gauge group
still acts on the resulting constraint surface $M_0 \subset M$. The physical
observables are the functions on the quotient $\tilde{M}$ of the constraint
surface $M_0$ by this action. One wishes to quantize those observables. In the
BRST-method, one introduces variables of non-zero degree to the Poisson algebra
$P$ of functions on the original phase space. One then constructs the so-called
BRST-differential on the resulting complex and recovers the functions
on the subquotient $\tilde{M}$ as the cohomology of that complex in degree
zero. One may then attempt to quantize the system by quantizing the
BRST-complex instead of the algebra of functions on $\tilde{M}$.

The quantization procedure involves the construction of gauge
invariant observables from the cohomology of the BRST complex
\cite{brs1976, tyutin1975}. Kostant and Sternberg gave a mathematically
rigorous description of the theory \cite{kostant-sternberg} in the case where
the constraints arise from a Hamiltonian group action on phase space. They make
certain assumptions that allow the BRST-complex to be constructed as a double
complex combining a Koszul resolution of the vanishing ideal $J$ of the
constraint surface $M_0\subset M$ with the Lie algebra cohomology of the gauge
group. In more general cases, the Koszul complex does not yield a resolution
and one has to use a much bigger Tate resolution.

More recently, Felder and Kazhdan formalized the corresponding construction in
the Lagrangian formulation of the theory \cite{felder-kazhdan}. They consider
general Tate resolutions.  The aim of this note is to perform a similar
formalization in the Hamiltonian setting. We consider 
Poisson algebras $P$ as a starting point, which arise in the Hamiltonian
viewpoint as the functions on phase space. We define the notion of a
\emph{BFV-model} for a coisotropic ideal $J\subset P$. In the Hamiltonian
theory, $J$ is given as the vanishing ideal of the constraint surface $M_0
\subset M$. We use techniques from \cite{stasheff1997, felder-kazhdan} to prove existence
of BFV-models and show that they model the Poisson algebra $(P/J)^J$
cohomologically. This latter Poisson algebra is the physically
interesting one since, in the case where $P$ are the functions on phase
space and $J$ is the vanishing ideal of the constraint surface, it
corresponds to the function on the subquotient $\tilde{M}$, which are
the physical observables of the system. The statements about the
existence of what we call BFV-models and their cohomology are known
\cite{henneaux-teitelboim}. However, a rigorous treatment of the question of
uniqueness is missing. Under certain local regularity assumptions
on the constraint functions, which for instance imply that the
constraint surface $M_0$ is smooth, a construction for a uniqueness
proof for the BRST-cohomology was given in \cite{fisch1989}. Stasheff
considers the problem from the perspective of homological perturbation
theory \cite{stasheff1997} and gives further special cases under which
such uniqueness theorems hold. For instance, he considers the case
where a proper subset of the constraints satisfy a regularity
condition. Using the notion of \emph{stable equivalence} from
\cite{felder-kazhdan}, we show that, for a symplectic polynomial
algebra $P=\R[V]$ with bracket induced from the symplectic structure on a
finite-dimensional vector space $V$, any two BRST-complexes for the same
coisotropic ideal $J \subset P$ are quasi-isomorphic.  Hence, we rigorously
prove uniqueness of the BRST-cohomology for such $P$. In contrast to previous
treatments of the problem, the assumption
on $P$ does not force the constraint surface to be smooth. Moreover, we do not
assume a subset of the constraints to be regular.

Finally, we quantize the BRST-complex. Under a cohomological assumption, we
construct a quantum BRST-charge, and discuss its uniqueness. The obstruction to
quantize lies in the second degree of the classical BRST-cohomology, while the
ambiguity lies in the first degree. We do this analysis in a
rigorous fashion. To the best of our knowledge such a rigorous treatment in our
setting for general Tate resolutions is new.

In the smooth setting, Sch\"atz has dealt with the problem in
\cite{schaetz2010}. See also \cite{bordemannwaldmann2000} for the case of a
regular moment map and \cite{herbig2007} for the case of a singular moment map
of a Hamiltonian group action in the smooth setting.

I am very grateful to my supervisor Giovanni Felder for many
stimulating discussions and mathematical support. This research was
partially supported by the NCCR SwissMAP, funded by the Swiss National
Science Foundation.

\section{BFV Models}

We work over $\K = \R$, but any field of characteristic zero will do. Let $P$
be a unital, Noetherian Poisson algebra and $J \subset P$ a coisotropic ideal.
Then the Poisson structure on $P$ induces one on $(P/J)^J$. The purpose of the
BRST complex is to model this Poisson algebra cohomologically.

Let $\MM$ be a negatively graded real vector space with finite
dimensional homogeneous components $\MM^j$. Denote its component-wise
dual by $\MM^* = \bigoplus_{j>0} (\MM^{-j})^*$. Define a Poisson
bracket on $\Sym(\MM \oplus \MM^*)$ via the natural pairing between
$\MM$ and $\MM^*$. For details of the construction we refer to
chapter~\ref{sec:super-poisson} in the appendix.  

Form the tensor product $X_0 = P \otimes \Sym(\MM \oplus \MM^*)$ of
the two Poisson algebras defined above. Let $\FF^p X_0$ denote the
ideal generated by all elements in $X_0$ of degree at least $p$. Using
the filtration defined by the $\FF^pX_0$, complete the space $X_0$ to
a graded commutative algebra $X$ with homogeneous components
\begin{align*}
  X^j = \lim_{\leftarrow p} \frac{X_0^j}{\FF^p X_0 \cap X_0^j}.
\end{align*}
Extend the bracket on $X_0$ to $X$, thus turning $X$ into a graded
Poisson algebra. Again we refer to chapter~\ref{sec:super-poisson} for
details. Denote the bracket on $X$ by $\{-,-\}$.

Set $I \subset X$ to be the homogeneous ideal with homogeneous
components
\begin{align*}
  I^j = \lim_{\leftarrow p} \frac{\FF^1 X_0 \cap X_0^j}{\FF^{p+1} X_0
    \cap X_0^j} \subset X^j.
\end{align*}
Powers of ideals are denoted with exponents in parentheses, e.g. $I^{(k)}$
refers to the $k$-th power of the ideal $I$.

An element $R \in X$ of odd degree which solves $\{R,R\}=0$ defines a
differential $d_R = \{R, -\}$ on $X$ by the Jacobi identity. If $R \in
X^1$, the differential $d_R$ induces a differential on $X/I$ since it
preserves $I$.

\begin{defi}
  A \emph{BFV model} for $P$ and $J$ is a pair $(X,R)$ where
  $(X,\{-,-\})$ is a graded Poisson algebra constructed as above and $R
  \in X^1$ is such that the following conditions hold:
  \begin{enumerate}
  \item $\{R,R\}=0$.
  \item $H^j(X/I, d_R) = 0$ for $j \neq 0$.
  \item $H^0(X/I, d_R) = P/J$.
  \end{enumerate}
  The first equation is called the \emph{classical master equation}
  and the element $R$ is called a \emph{BRST charge.}
\end{defi}
The aim of this note is to prove
\begin{thm}
  \label{thm:bfvexists} Let $P$ be a Poisson algebra and $J \subset P$
  a coisotropic ideal.  BFV models exist and in the case of $P =
  \R[V]$ with with bracket induced from the symplectic structure on a finite
  dimensional vector space $V$ the
  complexes of any two BFV-models for the same ideal $J$ are quasi-isomorphic,
  whence the cohomology $H(X, d_R)$ is uniquely determined by $J$ up to
  isomorphism.
\end{thm}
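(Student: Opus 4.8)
The plan is to treat existence and uniqueness by different means: existence by Tate's construction combined with homological perturbation theory in the style of \cite{stasheff1997}, and uniqueness by the stable-equivalence machinery of \cite{felder-kazhdan}, in which the symplectic hypothesis enters decisively.

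\emph{Existence.} First I would observe that, by the definition of $I$, the quotient $X/I$ is (the completion of) $P \otimes \Sym(\MM)$ carrying only the negatively graded generators, and that $d_R$ induces on it the Koszul--Tate differential $\delta$ determined by the leading, antighost-linear part of $R$. Conditions (2) and (3) then say precisely that $(X/I, d_R)$ is a Koszul--Tate resolution of $P/J$. I would therefore begin by running Tate's procedure: choose generators of $J$, form the Koszul complex, and — since the constraints need not be regular — adjoin further generators in ever lower degree to kill the homology, iterating to produce $\MM$ together with an acyclic $\delta$ resolving $P/J$; Noetherianity of $P$ keeps the homogeneous components finite-dimensional. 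This fixes a leading term $R_0 \in X^1$ with $d_{R_0}$ reducing to $\delta$ modulo $I$. I would then build the full charge $R = R_0 + R_1 + \cdots$ by induction on resolution degree, solving $\{R,R\}=0$ order by order; at each stage the obstruction is $\delta$-closed of the relevant degree, hence $\delta$-exact by acyclicity, so the induction closes, and completeness of $X$ in the $\FF^p$-topology ensures the series converges.

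\emph{Uniqueness.} For the symplectic case $P = \R[V]$, let $(X_1, R_1)$ and $(X_2, R_2)$ be two BFV models, built on possibly different resolution spaces $\MM_1, \MM_2$. The decisive point is that nondegeneracy of the symplectic form on $V$, together with the canonical pairing between $\MM_i$ and $\MM_i^*$, makes each $X_i$ a graded-symplectic algebra, so that the equivalences I shall need can be realized as \emph{canonical transformations}, i.e. flows $e^{\{h,-\}}$ of Hamiltonians $h$. I would first reduce to a common underlying algebra by \emph{stabilization}: adjoining a trivial (contractible) pair of generators to a BFV model yields a quasi-isomorphic one, and by the comparison theorem for Tate resolutions of the fixed module $P/J$, after such additions the two resolution spaces become isomorphic. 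Hence I may assume $X_1 = X_2 = X$, with charges $R, R'$ whose Koszul--Tate leading terms agree.

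\emph{Matching the charges, and the main obstacle.} It then remains to connect $R$ and $R'$ by a canonical transformation: I would construct an even Hamiltonian $h = h_1 + h_2 + \cdots$ of degree $0$, with $\{h,-\}$ raising resolution degree, so that $e^{\{h,-\}} R = R'$. Writing $R' = R + r$ with $r$ in higher filtration, the master equations force the leading part of $r$ to be $d_{R_0}$-closed; acyclicity of the Koszul--Tate complex (and freeness of the ghost part) makes it $d_{R_0}$-exact, and nondegeneracy of the bracket lets me realize the primitive as a Hamiltonian $h_p$ cancelling that order. Iterating and invoking completeness, $\varphi = e^{\{h,-\}}$ is a Poisson automorphism of $X$ with $\varphi^* R = R'$, so it intertwines $d_R$ and $d_{R'}$ and descends to a quasi-isomorphism of complexes, giving $H(X, d_R) \cong H(X, d_{R'})$. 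I expect the main difficulty to be exactly this inductive conjugation together with the preceding stabilization: one must verify that each obstruction class genuinely lies in a group annihilated by acyclicity \emph{and} that its primitive is inner. This last point is where the symplectic hypothesis is indispensable — for a general, possibly degenerate Poisson $P$ the derivations implementing the equivalence need not be Hamiltonian, and the induction would not close; nondegeneracy is precisely what trades $d_{R_0}$-exactness for the existence of a generating Hamiltonian.
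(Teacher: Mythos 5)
Your existence argument follows the paper's route (Tate resolution plus order-by-order correction of the charge using the contracting homotopy of the resolution and completeness in the $\FF^p$-topology), and your overall architecture for uniqueness --- stabilize by trivial models, identify the stabilized resolutions, then conjugate the two charges into each other --- is also the paper's. However, you have misplaced where the symplectic hypothesis actually enters, and in doing so you pass over the genuinely hard step.

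The final conjugation (``matching the charges'') does \emph{not} need nondegeneracy of the bracket on $P$. Once the two charges live on the same $X$ and induce the same differential on $X/I$, the difference $v = R - R'$ gives a $\delta$-cocycle in $\gr^p X^1$; acyclicity of the associated graded complex produces a primitive $\bar c \in \gr^p X^0$, represented by an actual element $c \in \FF^p X^0 \cap I^{(2)}$, and the gauge equivalence is simply $\exp \ad_c$. There is no separate issue of the primitive being ``inner'': the primitive is an element of the algebra, and $\{c,-\}$ is Hamiltonian by fiat; the relevant nondegenerate pairing is the one between $\MM$ and $\MM^*$, which is built into every BFV model regardless of $P$. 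This whole induction (theorem~\ref{thm:gaugeequivalence}) works for any unital Noetherian Poisson algebra. Where the symplectic hypothesis is indispensable is the step you dispatch with ``hence I may assume $X_1 = X_2 = X$.'' The comparison theorem for Tate resolutions only gives an isomorphism $\phi$ of differential graded \emph{commutative} algebras between the stabilized resolutions $P \otimes \Sym(\MM\oplus\NN)$ and $P\otimes\Sym(\MM'\oplus\NN')$; its coefficients lie in $P$ and it need not respect anything on the dual (ghost-momentum) variables. To transport $R$ to the other model one must lift $\phi$ to a \emph{Poisson} isomorphism of the completed algebras $X \hat\otimes Y \to X'\hat\otimes Y'$, which forces compensating corrections in the $\MM^*$-variables \emph{and} in the degree-zero variables themselves. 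The paper does this (lemma~\ref{lem:philift}) by exhibiting a generating function $S(x_i,Y_i,e,E^*)$ in Darboux coordinates on $V$ and invoking lemma~\ref{lem:generatingfunction}; it is precisely here, and not in the charge-matching induction, that $P=\R[V]$ symplectic is used, and for a general Poisson $P$ this lift can fail. Without supplying this lifting argument your reduction to a common $X$ is unjustified, so the uniqueness proof as proposed has a gap at that point.
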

The existence of BFV models is known \cite{henneaux-teitelboim,
  fisch1989}. The problem of uniqueness has been dealt with under certain
regularity assumptions \cite{henneaux-teitelboim, fisch1989}. These
assumptions imply that the constraint surface is smooth. The novel
part is the statement that any two BRST-complexes are
quasi-isomorphic, which gives uniqueness of the BRST-cohomology as a
corollary. We prove this without assuming that the constraint surface
is smooth. For completeness we also include proofs of the already
known facts in our framework.

Finally, we quantize the BRST-charge rigorously and discuss the uniqueness of
the quantization procedure.

\section{Existence}
\label{sec:existence}
\subsection{Tate Resolutions}
\label{sec:tateresolutions}

In order to construct BFV models, we first have to construct a
suitable commutative graded algebra $X$. The odd variables are
obtained via Tate resolutions.

Let $P$ be a unital, Noetherian Poisson algebra and $J \subset P$ be a
coisotropic ideal. Tate constructed resolutions of Noetherian rings by
adding certain odd variables to the ring \cite{tate1957}.
Consider a Tate
resolution $T = P \otimes \Sym(\MM)$ of $P/J$ given by a negatively
graded vector space $\MM$ with finite-dimensional homogeneous
components together with a differential $\delta$ on $T$ of degree 1. Define
the dual $\MM^*$ degree-wise. Extend $\delta$ to $X_0 := P \otimes
\Sym(\MM \oplus \MM^*) = P \otimes \Sym(\MM) \otimes \Sym(\MM^*)$ by
tensoring with the identity. Endow $X_0$ with the natural extension of
the Poisson bracket, define the filtration $\FF^p X_0$, and extend the 
bracket to the completion $X$ as described in section
\ref{sec:super-poisson} in the appendix. We will frequently refer to
statements from that chapter.
\subsubsection{The Differential $\delta$}
\label{sec:delta}
Since $\delta$ is the identity on $\Sym(\MM^*)$, it preserves the
filtration on $X_0$. Hence it extends to the completion $X$ by remark
\ref{rem:extension-maps}. Call this extension $\delta$. The extension
has degree 1 and preserves the filtration on $X$. The extension is
still an odd derivation, whose square is zero. Since $\delta$
preserves the filtration, it defines a differential on the associated
graded mapping $\gr^p X^n$ into $\gr^p X^{n+1}$.

Define $B = P \otimes \Sym(\MM^*)$. Then $X_0 = B \otimes_P T$. Since,
by definition, the extension of $\delta$ to $X$ leaves elements in
$\Sym(\MM^*)$ fixed, we have
\begin{rem}
  \label{rem:differential-delta-assoc}
  The natural isomorphism of lemma \ref{lem:associatedgraded}
  identifies the differential $\delta$ on the associated graded with
  $1 \otimes \delta$ on $B \otimes_P T$.
\end{rem}

\subsubsection{Contracting Homotopy}
\label{sec:s}

From the Tate resolution construct a contracting homotopy $s: T\to T$
of degree $-1$. Then there exists a $\K$-linear split $P/J \to P$ and
a map $\overline{\pi} : T \to T$ which is defined as the composition
$P \otimes \Sym(\MM) \to P \to P/J \to P \to P \otimes \Sym(\MM)$ such
that
\begin{align}
  \label{eq:sdeltacommute}
  \delta s + s \delta = 1 - \overline{\pi}
\end{align}
Extend $\delta$, $s$ and
$\overline{\pi}$ to $X_0$ by tensoring with the identity on
$\Sym(\MM^*)$. From the definition of $\overline{\pi}$ we find
\begin{rem}
  \label{rem:overlinepizero}
  $\overline{\pi} : X_0 \to X_0$ is zero on monomials which contain a
  factor of negative degree.
\end{rem}
The homotopy $s$ does not act on elements in $\Sym(\MM^*)$ and hence
preserves the filtration. For the same reason $\bar{\pi}$ preserves
the filtration. Both $s$ and $\bar{\pi}$ hence naturally extend to the
extension and equation \ref{eq:sdeltacommute} is valid in $X$
too. Moreover, 
\begin{rem}
  \label{lem:spreservesiphat}
  $s$ preserves $\iphat$.
\end{rem}

\subsection{Constructing The BRST Charge}
\subsubsection{First Approximation}
\label{sec:first-approximation}

\begin{defi}
  Let $Q_0$ be the differential $\delta$ on $X/I$ considered as an
  element of $X$. 
\end{defi}
Hence the cohomological conditions for $Q_0$ to be a BRST charge are
satisfied. However, $Q_0$ does not in general satisfy the classical
master equation. We are going to prove the existence part of theorem
\ref{thm:bfvexists} by adding correction terms to $Q_0$.

An explicit description of $Q_0$ is the following: Let $e_i$ be a
homogeneous basis of $\MM$, $e^*_i$ its dual basis. Set $d_i := \deg
e_i = - \deg e^*_i \equiv \deg e^*_i \pmod 2.$ Assume that $i
\leqslant j$ implies $d_i \geqslant d_j$.  Define $Q_0 := \sum_j
(-1)^{1+d_j} e^*_j \delta(e_j)$. By lemma \ref{lem:filtrationsums},
this defines an element of $X^1$. For each $p$, let $L_p$ be an
integer with $ \{ j \in \N : -d_j \leqslant p-1 \} = \{ 1, \dots,
L_p\}$ so that $ (q_0)_p := \sum_{j=1}^{L_p} (-1)^{1+d_j} e^*_j \delta(e_j) $
defines a representative of the $p$-th component of $Q_0$.  Of course,
the element $Q_0$ is independent of the choice of basis $e_j$ of
$\MM$.

\begin{lem}
  \label{lem:deltaexpression}
  We have $\delta = \sum_j (-1)^{1+d_j} \delta(e_j) \{ e^*_j, -\}$ on
  $X$ where the operator on the right hand side is well-defined.
\end{lem}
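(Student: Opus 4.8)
The plan is to show that both $\delta$ and the operator $D := \sum_j (-1)^{1+d_j}\delta(e_j)\{e_j^*,-\}$ are continuous, degree-$1$ graded derivations of $X$, and then to check that they agree on a set of algebra generators; since a filtration-preserving derivation is determined by its values on the topological generators, this forces $\delta = D$.

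First I would settle the well-definedness asserted in the statement. The operator $\{e_j^*,-\}$ is the graded derivation given by bracketing with the fixed element $e_j^*\in X$, so each summand $\delta(e_j)\{e_j^*,-\}$ makes sense individually; the only issue is the infinite sum over $j$. Here I would argue via the filtration: a representative $x_p\in X_0$ of the $p$-th component of any $x\in X$ involves only finitely many of the variables $e_j$, so $\{e_j^*,x_p\}=0$ for all but finitely many $j$, and the resulting partial sums form a compatible family in the projective system defining $X$. Lemma \ref{lem:filtrationsums} then guarantees that $D(x)$ is a well-defined element of $X$ and that $D$ preserves the filtration. A short sign computation shows that, because $\delta(e_j)$ has degree $d_j+1$ and $\{e_j^*,-\}$ raises degree by $-d_j$, each $\delta(e_j)\{e_j^*,-\}$ is homogeneous of degree $1$; using graded commutativity of $X$ together with $d_j \equiv \deg e_j^* \pmod 2$ one checks $\delta(e_j)\{e_j^*,xy\}=\bigl(\delta(e_j)\{e_j^*,x\}\bigr)y+(-1)^{|x|}x\bigl(\delta(e_j)\{e_j^*,y\}\bigr)$, so $D$ is a degree-$1$ graded derivation.

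It then remains to compare $\delta$ and $D$ on the generators $P$, $\MM$ and $\MM^*$ of $X$. On $P$ we have $\delta|_P=0$ because the Tate differential is $P$-linear, while $\{e_j^*,p\}=0$ since $P$ and $\Sym(\MM\oplus\MM^*)$ Poisson-commute in the tensor product; hence both operators vanish. On $\MM^*$ we have $\delta(e_i^*)=0$, as $\delta$ was extended by the identity on $\Sym(\MM^*)$, and $\{e_j^*,e_i^*\}=0$ because $\MM^*$ is isotropic for the pairing bracket; again both vanish. On $\MM$ the only nontrivial comparison occurs: with the pairing convention $\{e_j^*,e_i\}=(-1)^{1+d_i}\delta_{ij}$ of section \ref{sec:super-poisson}, the sum $D(e_i)=\sum_j(-1)^{1+d_j}\delta(e_j)\{e_j^*,e_i\}$ collapses to its $j=i$ term, which equals $\delta(e_i)$. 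Thus $\delta$ and $D$ agree on generators, and by continuity together with the derivation property they agree on all of $X$.

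I expect the main obstacle to be the well-definedness and continuity claim of the second paragraph rather than the generator check: one must verify that the infinite sum genuinely converges in the completion $X$ and that the resulting operator is filtration-preserving, since only then is the ``agree on topological generators'' argument legitimate in the completed setting. The sign bookkeeping needed to confirm the Leibniz rule for $\delta(e_j)\{e_j^*,-\}$ and to fix the pairing convention $\{e_j^*,e_i\}=(-1)^{1+d_i}\delta_{ij}$ is routine, but it must be carried out consistently with the conventions established in the appendix.
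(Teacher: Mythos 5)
Your proposal is correct and follows essentially the same route as the paper's proof: well-definedness of the infinite sum via the filtration and Lemma \ref{lem:filtrationsums}, continuity, and then agreement with $\delta$ checked on generators using the fact that both operators are derivations, extended by density of $X_0$ in $X$. The only cosmetic difference is that the paper verifies the agreement on the dense subalgebra $X_0$ (where the sum is effectively finite) and then invokes continuity, whereas you phrase it as agreement on topological generators; the content is identical.
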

\begin{proof}
  Set $\delta' = \sum_j(-1)^{1+d_j}\delta(e_j) \{ e^*_j, -\}$. This
  defines a map on $X$: For $x \in X^n$, the elements $\{e_j^*, x\}$ are
  in $\FF^{-d_j + n} X$. Hence the sum converges by lemma
  \ref{lem:filtrationsums}. By linearity, $\delta'$ is defined on all
  of $X$. We claim that $\delta'$ is continuous on each $X^n$. Let
  $x^j = (x_{p}^j + \FF^p X_0^n)_p \in X^n$ be a sequence converging
  to zero. Fix $p$. Then there exists a $K$, independent of $j$, such
  that a $p$-th representative of $\delta'(x^{j})$ is given by
  \begin{align*}
    \sum_{k=1}^K(-1)^{1+d_k} \delta(e_k) \{e_k^*, x^j_{s_{-d_k,n}(p)}\}
  \end{align*}
  since the bracket is in $X_0^{-d_k + n}$. Now let $j_0$ be such that
  for $j\gs j_0$ and for all $k \in \{1, \dots, K\}$ we have
  $x^{j}_{s_{-d_k,n}(p)} \in \FF^{s_{-d_k, n}(p)} X_0^n$. Then the
  above representative vanishes modulo $\FF^p X_0^{n+1}$ by corollary
	\ref{cor:tcor}. Hence $\delta'$
  is continuous on $X^n$. The map $\delta'$ descends to a map on
  $X_0$ since the sum is then effectively finite since $\{e^*_j, x\}$
  becomes zero for $j$ large enough, depending on $x \in X_0$. This
  restriction agrees with $\delta$, which can be checked on generators since
  both maps are derivations. Hence $\delta' = \delta$ on each $X^n$ by
  continuity. Hence $\delta = \delta'$.
\end{proof}
\begin{lem}
  \label{lem:l0increase} For $L_0 := \{Q_0, - \} - \delta$ we have
  $L_0(\FF^p X) \subset \FF^{p+1} X$.
\end{lem}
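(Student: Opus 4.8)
The plan is to first obtain a closed expression for $L_0$ and then read off its filtration behaviour from the weights of the ghost variables. Expanding $\{Q_0,-\}$ by the graded Leibniz rule, I would split each term $\{e_j^*\,\delta(e_j),x\}$ into $e_j^*\{\delta(e_j),x\}$ plus a term proportional to $\{e_j^*,x\}\,\delta(e_j)$. Commuting $\{e_j^*,x\}$ past $\delta(e_j)$ in the graded-commutative algebra and collecting signs, the total Koszul sign is $(-1)^{d_j(d_j+1)}=+1$ since $d_j(d_j+1)$ is even; hence the second group of terms reassembles into $\sum_j(-1)^{1+d_j}\delta(e_j)\{e_j^*,x\}$, which equals $\delta(x)$ by Lemma~\ref{lem:deltaexpression}. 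This yields the identity
\begin{align*}
L_0(x)=\{Q_0,x\}-\delta(x)=\sum_j(-1)^{1+d_j}\,e_j^*\,\{\delta(e_j),x\}.
\end{align*}
Before using it I would check that the right-hand side is a well-defined element of $X$: for $x\in X^n$ each bracket $\{\delta(e_j),x\}$ lies in a filtration level that grows with $-d_j$, so the sum converges by Lemma~\ref{lem:filtrationsums}, exactly as in the proof of Lemma~\ref{lem:deltaexpression}, and it then suffices to estimate a single summand.

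The heart of the argument is a weight count on the $\MM^*$-factors, which is exactly what the filtration $\FF^\bullet$ measures: a monomial lies in $\FF^p$ precisely when the total weight $\sum(-d_k)$ of its $\MM^*$-factors is at least $p$. Two facts drive the estimate. First, left multiplication by $e_j^*$ raises this weight by $-d_j\gs 1$. Second, the Poisson bracket is a biderivation, so $\{\delta(e_j),x\}$ contracts exactly one pair: either two factors in $P$, which does not touch $\MM^*$ at all, or one $\MM$-factor $e_i$ of $\delta(e_j)$ against one $\MM^*$-factor $e_i^*$ of $x$, lowering the $\MM^*$-weight by $-d_i$. Since $\delta(e_j)$ has cohomological degree $d_j+1$, every $\MM$-generator $e_i$ occurring in it satisfies $d_i\gs d_j+1$; hence any single contraction lowers the weight by at most $-d_i\ls -d_j-1$.

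Combining the two observations, for $x\in\FF^pX$ the bracket $\{\delta(e_j),x\}$ lies in $\FF^{p+d_j+1}X$, and multiplying by $e_j^*$ restores a weight of $-d_j$, so each summand $e_j^*\{\delta(e_j),x\}$ lies in $\FF^{p+1}X$. Summing over $j$ gives $L_0(\FF^pX)\subset\FF^{p+1}X$, as claimed. I expect the main obstacle to be the bookkeeping in the first step: pinning down the Koszul signs so that the second Leibniz term cancels $\delta$ exactly, and justifying rigorously---on the representatives $(q_0)_p$ and modulo $\FF^pX_0$---that the biderivation performs a single contraction whose $\MM^*$-weight drop is controlled by the generators actually appearing in $\delta(e_j)$.
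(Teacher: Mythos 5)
Your proposal is correct and follows essentially the same route as the paper: both split $\{Q_0,x\}$ via the Leibniz rule into the part $\sum_j(-1)^{1+d_j}\delta(e_j)\{e_j^*,x\}=\delta(x)$ (Lemma~\ref{lem:deltaexpression}) and the remainder $\sum_j(-1)^{1+d_j}e_j^*\{\delta(e_j),x\}$, whose convergence is handled by Lemma~\ref{lem:filtrationsums}, and then estimate each summand by noting that any factor of $\delta(e_j)$ pairing with an $\MM^*$-factor $e_l^*$ of $x$ must have degree $d_l\geqslant d_j+1$, so the weight $-d_l$ lost is more than compensated by the prefactor $e_j^*$ of weight $-d_j$. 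The paper phrases the final step as a reduction to $x=e_l^*$ via the derivation property rather than your monomial weight count, but the content is the same.
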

\begin{proof}
  Fix $x \in \FF^p X^n$. Then, by lemma \ref{lem:bracketcontinuous},
  \begin{align*}
    \{Q_0, x\} = \lim_{m \to \infty} \bigg( \sum_{j=1}^m (-1)^{1+d_j} \delta(e_j)
    \{e_j^*, x\} + \sum_{j=1}^m (-1)^{1+d_j} e_j^* \{\delta(e_j), x\} \bigg).
  \end{align*}
  The first part converges to $\delta(x)$ by lemma
  \ref{lem:deltaexpression}. The second part converges by lemma
  \ref{lem:filtrationsums} and hence equals $L_0$. Fix $j$. By lemma
  \ref{lem:filtrationclosed} it suffices to prove that $e_j^*
  \{\delta(e_j), x\} \in \FF^{p+1} X$. By the derivation property it
  suffices to consider $x=e^*_l$ for some $l$. The term $\delta(e_j)$ is a
  sum of monomials whose factors have degrees in $\{d_j + 1, \dots,
  0\}$. Hence all elementary factors $e_r$ in $\delta(e_j)$ that could
  possibly kill $e^*_l$ have degree $d_l$ and get compensated by a
  factor $e^*_j$ with $\deg(e^*_j) > \deg(e_l^*)$.
\end{proof} 
Moreover, we have
\begin{lem}
  \label{lem:ind1}
  $\{ Q_0, Q_0 \} \in X^2 \cap \iphat \subset \FF^2 X \cap
  \iphat$. 
\end{lem}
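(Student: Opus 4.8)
The plan is to reduce the statement to showing that every monomial occurring in $\{Q_0,Q_0\}$ contains at least two factors from $\MM^*$. The degree bound $\{Q_0,Q_0\}\in X^2$ is automatic, since $Q_0\in X^1$ and the bracket has degree zero. The second inclusion $X^2\cap\iphat\subset\FF^2X\cap\iphat$ comes for free: $I$ is the (closure of the) ideal generated by $\MM^*$, so $I\subset\FF^1X$ and hence $\iphat=I^{(2)}\subset\FF^1X\cdot\FF^1X\subset\FF^2X$. As $\iphat$ is exactly the closed span of monomials with at least two factors from $\MM^*$, it remains to prove this ``momentum degree at least two'' property for $\{Q_0,Q_0\}$.

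To organise the computation I would use the operator identity $\{Q_0,-\}=\delta+L_0$ furnished by lemma \ref{lem:l0increase}, where $L_0=\sum_i(-1)^{1+d_i}e_i^*\{\delta(e_i),-\}$. Applying it to $Q_0$ gives $\{Q_0,Q_0\}=\delta(Q_0)+L_0(Q_0)$. The term $\delta(Q_0)$ vanishes: $\delta$ is an odd derivation which is zero on $\MM^*$ and satisfies $\delta^2=0$, so $\delta\!\left(e_i^*\,\delta(e_i)\right)=\pm\,e_i^*\,\delta^2(e_i)=0$ for each $i$. Hence $\{Q_0,Q_0\}=L_0(Q_0)=\sum_i(-1)^{1+d_i}\,e_i^*\{\delta(e_i),Q_0\}$.

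Next I would expand $\{\delta(e_i),Q_0\}$ by the Leibniz rule. Since $\delta(e_i)\in P\otimes\Sym(\MM)$ carries no momenta, each summand $\{\delta(e_i),e_j^*\,\delta(e_j)\}$ splits into two parts: a part $\pm\,e_j^*\{\delta(e_i),\delta(e_j)\}$ coming from the Poisson bracket on $P$, and a contraction part $\{\delta(e_i),e_j^*\}\,\delta(e_j)$ coming from pairing the ghosts of $\delta(e_i)$ against $e_j^*$. The first part keeps both momenta $e_i^*$ and $e_j^*$, so after summation it lies in $\iphat$. The contraction part has momentum degree exactly one, the sole surviving factor being the outer $e_i^*$, and must therefore be shown to vanish.

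This vanishing is the crux of the lemma. After reordering the graded-commuting factors in $P\otimes\Sym(\MM)$, the coefficient of $e_i^*$ in the contraction part equals, up to an $i$-dependent but $j$-independent sign, the sum $\sum_j(-1)^{1+d_j}\,\delta(e_j)\{e_j^*,\delta(e_i)\}$. By lemma \ref{lem:deltaexpression} this is precisely $\delta(\delta(e_i))=\delta^2(e_i)=0$. Thus the momentum degree one part cancels identically and only the $\iphat$-valued part remains, giving $\{Q_0,Q_0\}\in\iphat$ and hence the claim. I expect the only genuine difficulty to be the Koszul sign bookkeeping required to recognise the contraction sum as $\delta$ applied to $\delta(e_i)$; conceptually, the whole cancellation is forced by $\delta^2=0$.
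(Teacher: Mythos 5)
Your proof is correct and follows essentially the same route as the paper: both reduce the claim to the Leibniz expansion of the bracket, kill the single-momentum (contraction) terms by recognising them as $\delta^2=0$ via lemma \ref{lem:deltaexpression}, and place the surviving terms $\pm e_i^*\{\delta(e_i),\delta(e_j)\}e_j^*$ in $\iphat$ using its closedness. The only difference is cosmetic bookkeeping — you split $\{Q_0,Q_0\}=\delta Q_0+L_0Q_0$ and handle the two cross terms separately, while the paper combines them with a factor of $2$ in a direct double-sum expansion — and your sign check (the relating sign $(-1)^{d_i}$ being $j$-independent) does go through.
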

\begin{proof}
  We compute $\deg \{Q_0, Q_0\} = 2 \deg Q_0 = 2$ and hence $\{Q_0, Q_0\} \in
  \FF^2 X$. For the last statement we need to calculate.
  By lemma \ref{lem:bracketcontinuous},
  \begin{align*}
    \{Q_0, Q_0\} &= \lim_{m\to\infty}(-1)^{d_j+d_k} \sum_{j,k= 1}^m \{\delta(e_j)e^*_j,
    \delta(e_k) e^*_k\}\\
    &=\lim_{m\to\infty} \sum_{j,k= 1}^m 
    \bigg( 
    2(-1)^{1+ d_k} \Big( (-1)^{1+d_j}
    \delta(e_j) \{e^*_j,
    \delta(e_k)\}\Big) e^*_k
    +(-1)^{d_j+d_k}e^*_j
    \{\delta(e_j),
    \delta(e_k)\} e^*_k
    \bigg)
  \end{align*}
  By lemma \ref{lem:deltaexpression} the first term is a sum in $k$
  with summands that contain factors $\delta(\delta(e_k))=0$ and
  hence the first term vanishes. By lemma \ref{lem:iphattopclosed},
  $\{Q_0,Q_0\} = \sum_{j,k}(-1)^{d_j + d_k} e_j^* \{\delta(e_j), \delta(e_k)\} e_k^* \in
  \iphat$. 
\end{proof}
\begin{cor}
  \label{cor:ind2}
  $\delta \{Q_0, Q_0\} \in X^3 \subset \FF^3 X$.
\end{cor}

\subsubsection{Recursive Construction}

We now inductively construct out of $Q_0$ a sequence of elements $R_n
\in X^1$ by setting
\begin{align*}
  R_n &= \sum_{j=0}^n Q_j, & Q_0& \text{ as defined above}, &
  Q_{n+1} &= -\frac{1}{2} s\{R_n, R_n\}.
\end{align*}
The elements $R_n$ have degree 1 since $Q_0$ has and $s$ is of degree
$-1$. The idea for the construction is taken
from~\cite{stasheff1997}. Also the proof of the following theorem is
adapted from that paper.
\begin{thm}
  For all $n$, $\{R_n,R_n\} \in \FF^{n+2} X \cap \iphat$, and
  $\delta \{R_n, R_n\} \in \FF^{n+3} X$.
\end{thm}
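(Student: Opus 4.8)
The plan is to argue by induction on $n$, the base case $n=0$ being exactly lemma \ref{lem:ind1} together with corollary \ref{cor:ind2}. For the inductive step I would write $R_{n+1} = R_n + Q_{n+1}$ with $Q_{n+1} = -\tfrac12 s\{R_n,R_n\}$ and abbreviate $D_n := \{R_n,R_n\}$. Expanding the bracket bilinearly and using that it is symmetric on odd elements gives
\begin{align*}
  D_{n+1} = D_n + 2\{R_n, Q_{n+1}\} + \{Q_{n+1},Q_{n+1}\} = D_n - \{R_n, sD_n\} + \{Q_{n+1},Q_{n+1}\}.
\end{align*}
The heart of the computation is to rewrite the middle term using the homotopy. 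Decomposing $\{R_n,-\} = \delta + L_0 + \{R_n - Q_0, -\}$, where $L_0 = \{Q_0,-\}-\delta$ raises the filtration by lemma \ref{lem:l0increase}, and substituting $\delta s = 1 - \overline{\pi} - s\delta$ from equation \ref{eq:sdeltacommute}, I obtain
\begin{align*}
  D_{n+1} = \overline{\pi} D_n + s\delta D_n - L_0(sD_n) - \{R_n - Q_0, sD_n\} + \{Q_{n+1},Q_{n+1}\}.
\end{align*}

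The bookkeeping terms are then handled one by one against the filtration. Since $s$ preserves the filtration and $\delta D_n \in \FF^{n+3}X$ by the inductive hypothesis, $s\delta D_n \in \FF^{n+3}X$; since $sD_n \in \FF^{n+2}X$, lemma \ref{lem:l0increase} gives $L_0(sD_n)\in\FF^{n+3}X$; the elements $R_n - Q_0 = \sum_{j=1}^n Q_j$ lie in $\FF^2 X$ (each $Q_j$ being $s$ applied to an element of $\FF^{j+1}X$ by the inductive statement at level $j-1$), so the filtration estimate $\{\FF^a X, \FF^b X\}\subseteq \FF^{a+b-1}X$ places both $\{R_n - Q_0, sD_n\}$ and $\{Q_{n+1}, Q_{n+1}\}$ in $\FF^{n+3}X$ as well. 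What remains is the term $\overline{\pi} D_n$, and this is the crucial point. By remark \ref{rem:overlinepizero}, $\overline{\pi}$ annihilates every monomial carrying a factor of negative degree; since $D_n \in X^2\cap\FF^{n+2}X$, for $n\gs 1$ each contributing monomial has $\MM^*$-degree at least $n+2 > 2$ and hence necessarily a factor in $\MM$, so $\overline{\pi} D_n = 0$ automatically. For the passage out of the base case, $n=0$, this degree argument fails and I would instead use coisotropy of $J$: by lemma \ref{lem:ind1} the surviving $\overline{\pi}$-contributions come from $e_j^*\{\delta(e_j),\delta(e_k)\}e_k^*$ with $e_j,e_k$ of degree $-1$, where $\delta(e_j),\delta(e_k)$ are constraints in $J$, so $\{\delta(e_j),\delta(e_k)\}\in J$ is killed by the projection $P\to P/J$ contained in $\overline{\pi}$. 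Thus $\overline{\pi} D_n = 0$ in every case and $D_{n+1}\in\FF^{n+3}X$.

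It then remains to propagate the two auxiliary properties. For $D_{n+1}\in\iphat$ I would check that the construction stays inside the ideal: $Q_{n+1} = -\tfrac12 sD_n \in \iphat$ because $s$ preserves $\iphat$ by remark \ref{lem:spreservesiphat} and $D_n\in\iphat$, and since $R_n \in X^1 \subseteq \FF^1 X$ and brackets satisfy $\{\FF^1 X, \FF^1 X\}\subseteq\FF^1 X$, the operator $\{R_n,-\}$ preserves the generating ideal of $\iphat$; hence $\{R_n, Q_{n+1}\}$ and $\{Q_{n+1},Q_{n+1}\}$ remain in $\iphat$, and so does $D_{n+1}$. Finally, for $\delta D_{n+1}\in\FF^{n+4}X$ I would invoke the graded Jacobi identity $\{R_{n+1}, D_{n+1}\} = 0$, valid because $D_{n+1} = \{R_{n+1},R_{n+1}\}$ with $R_{n+1}$ odd; expanding $\{R_{n+1},-\} = \delta + L_0 + \{R_{n+1}-Q_0,-\}$ and using $D_{n+1}\in\FF^{n+3}X$ yields $\delta D_{n+1} = -L_0 D_{n+1} - \{R_{n+1}-Q_0, D_{n+1}\}\in\FF^{n+4}X$. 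The main obstacle I anticipate is precisely the analysis of $\overline{\pi} D_n$: making the degree bookkeeping rigorous and isolating the single place, the passage $n=0\to 1$, where the coisotropy of $J$ is genuinely used; the remaining estimates are routine applications of the filtration lemmas.
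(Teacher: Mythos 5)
Your overall architecture is the paper's: the same expansion of $\{R_{n+1},R_{n+1}\}$, the same substitution $\delta s = 1-\overline{\pi}-s\delta$, the same two-case treatment of $\overline{\pi}\{R_n,R_n\}$ (degree counting for $n\gs 1$, coisotropy $\{J,J\}\subset J$ for $n=0$), and the same Jacobi-identity argument for $\delta\{R_{n+1},R_{n+1}\}\in\FF^{n+4}X$. However, there is one genuine error: the filtration estimate $\{\FF^aX,\FF^bX\}\subseteq\FF^{a+b-1}X$ that you invoke to dispose of $\{R_n-Q_0,sD_n\}$ (and implicitly of $\{R_{n+1}-Q_0,D_{n+1}\}$ in the last step) is false. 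For instance, if $e^*\in(\MM^*)^a$ with $a\gs 2$ and $e$ is its dual, then $\{e^*,e\}=\pm 1\notin\FF^{a-1}X$; even for two elements of total degree $1$ the bound fails, because a single Leibniz pairing can consume an entire positive factor of large degree. The sharp general statement is lemma \ref{lem:optimal}, and applied to $Q_j\in\FF^{j+1}X^1$ and $sD_n\in\FF^{n+2}X^1$ it only yields $\FF^{n+2}X$, which does not close the induction.

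The correct way to get $\{Q_j,sD_n\}\in\FF^{n+3}X$ is not a filtration count but lemma \ref{lem:ipbracket}: $\{X^1\cap\ip,\FF^mX\}\subset\FF^{m+1}X$. This is where the $\iphat$ half of the induction hypothesis is actually load-bearing: you need $Q_j=-\tfrac12 s\{R_{j-1},R_{j-1}\}\in\iphat\cap X^1$ (which holds since $s$ preserves $\iphat$ by remark \ref{lem:spreservesiphat} and $\{R_{j-1},R_{j-1}\}\in\iphat$ by hypothesis), so that each $Q_j$ carries at least two positive factors, of which a bracket can destroy at most one. You do establish $Q_j\in\iphat$ in your proof, but only use it to propagate the $\iphat$ statement, not to obtain the filtration jump where it is indispensable. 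The same repair is needed for $\{R_{n+1}-Q_0,D_{n+1}\}\in\FF^{n+4}X$ in the final step. (The term $\{Q_{n+1},Q_{n+1}\}$ is unaffected: corollary \ref{cor:increaseF} with $p=q=n+2$ already gives $\FF^{n+3}X$.) With these substitutions your argument coincides with the paper's proof.
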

\begin{proof}
  The base step was done in lemma \ref{lem:ind1} and corollary
  \ref{cor:ind2}. We assume the statement is true for $0 \ls j \ls n$ and
  consider 
  \begin{align*}
    \{R_{n+1}, R_{n+1}\} = \{R_n, R_n\} + 2 \{R_n, Q_{n+1}\} +
    \{Q_{n+1},Q_{n+1}\}
  \end{align*}
  By construction and assumption $Q_{n+1} = -\frac{1}{2}s\{R_n, R_n\}
  \in \FF^{n+2} X^1$.  Hence, by corollary \ref{cor:increaseF},
  \begin{align*}
    \{R_{n+1},R_{n+1}\} \equiv \{R_n, R_n\} + 2 \{R_n, Q_{n+1}\} \pmod
    {\FF^{n+3} X}. 
  \end{align*}
  Expand $\{R_n, Q_{n+1}\} = \sum_{j=1}^n \{Q_j,Q_{n+1}\} + \{Q_0,
  Q_{n+1}\}$. We have, for $j \in \{1, \dots, n+1\}$ by inductive
  hypothesis, that $Q_j = -\frac{1}{2}s\{R_{j-1},R_{j-1}\} \in \FF^{j+1}
  X^1 \cap \iphat$. Hence, by lemma \ref{lem:ipbracket}, 
  \begin{align*}
    \{R_{n+1},R_{n+1}\} \equiv \{R_n, R_n\} + 2 \{Q_0, Q_{n+1}\} \pmod
    {\FF^{n+3} X}.
  \end{align*}
  We split
  $\{Q_0, Q_{n+1}\} = \delta Q_{n+1} + L_0 Q_{n+1}$ and, by lemma
  \ref{lem:l0increase},
  \begin{align*}
    \{R_{n+1},R_{n+1}\} \equiv \{R_n, R_n\} + 2 \delta Q_{n+1} \pmod
    {\FF^{n+3} X}.
  \end{align*}
  Commuting $\delta$ and $s$,
  \begin{align*}
    2 \delta Q_{n+1} = - \delta s \{R_n, R_n\} =
    s \delta \{R_n, R_n\} - \{R_n, R_n\} + \overline{\pi} \{R_n, R_n\}.
  \end{align*}
  Since $\{R_n, R_n\} \in \FF^{n+2} X^2$, we have that $\overline{\pi}
  \{R_n, R_n\} = 0$ for $n > 0$ by remark \ref{rem:overlinepizero}. For
  $n = 0$ we obtain $\overline{\pi}\{R_0,R_0\} = 0$ from $\{Q_0,Q_0\} =
  \sum_{j,k} \pm e_j^* \{\delta(e_j), \delta(e_k)\} e_k^*$ and the fact,
  that $\overline{\pi}$ is zero on $\{J,J\} \subset J$. Hence
  \begin{align*}
    \{R_{n+1},R_{n+1}\} \equiv s \delta \{R_n, R_n\} \pmod {\FF^{n+3}
      X}. 
  \end{align*}
  which vanishes modulo $\FF^{n+3}X$ by the assumption on $\delta
  \{R_n, R_n\}$.
  
  Next, by the graded Jacobi identity we have $0 =
  \{R_{n+1},\{R_{n+1},R_{n+1}\}\}$.  From lemma \ref{lem:l0increase} and
  lemma \ref{lem:ipbracket} we find that $L_{n+1} := \{R_{n+1}, - \} -
  \delta = L_0 + \sum_{j=1}^{n+1} \{Q_j, -\}$ increases filtration
  degree. Hence $L_{n+1}\{R_{n+1}, R_{n+1}\} \in \FF^{n+4} X$ and thus $
  \delta\{R_{n+1},R_{n+1}\} \in \FF^{n+4} X.$
 
  Finally, we prove $\{R_{n+1},R_{n+1}\} = \{R_n, R_n\} + 2 \{R_n, Q_{n+1}\}
  + \{Q_{n+1},Q_{n+1}\} \in \iphat$. By hypothesis $\{R_n, R_n\} \in
  \iphat$. Next $\{Q_{n+1}, Q_{n+1}\} \in \{\iphat, \iphat\} \subset
  \iphat$ by lemma \ref{lem:ipclosed}. Now, by the same lemma, for $j
  \in \{1, \dots, n\}$, $\{Q_j, Q_{n+1}\} \in \{\iphat,\iphat\}\subset
  \iphat$ and $\{Q_0, Q_{n+1}\} \in \{ I, \iphat\} \subset \iphat$
  which concludes the proof.
\end{proof}
From $Q_{n+1} = -\frac12 s\{R_n, R_n\} \in \FF^{n+2}X^1$ it follows that
the $R_n = \sum_{j=0}^n Q_j$ converge to an element $R \in X^1$ by lemma
\ref{lem:filtrationsums}. From lemma \ref{lem:bracketcontinuous}, we
obtain $\{R_n, R_n\} \longrightarrow \{R,R\}$ as $n \longrightarrow
\infty$. We obtain
\begin{cor}
  $\{R,R\} = 0$.
\end{cor}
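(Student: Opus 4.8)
The plan is to deduce $\{R,R\}=0$ as a purely formal limiting argument from the theorem just established, the only nontrivial ingredient being that the completed filtration on $X$ is Hausdorff.

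I would begin by collecting the two convergence statements already recorded just before the corollary. Since $Q_{n+1} = -\tfrac12 s\{R_n,R_n\} \in \FF^{n+2}X^1$, the successive differences $R_{n+1}-R_n = Q_{n+1}$ lie in arbitrarily high filtration degree, so the partial sums $R_n=\sum_{j=0}^n Q_j$ converge to an element $R \in X^1$ by lemma \ref{lem:filtrationsums}. Continuity and bilinearity of the bracket (lemma \ref{lem:bracketcontinuous}) then give $\{R_n,R_n\} \longrightarrow \{R,R\}$ in $X^2$ as $n \to \infty$; concretely one writes $\{R_n,R_n\}-\{R,R\} = \{R_n-R,R_n\}+\{R,R_n-R\}$ and lets $R_n-R \to 0$.

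I would then invoke the bound $\{R_n,R_n\} \in \FF^{n+2}X$ supplied by the theorem. For each fixed $p$, as soon as $n \gs p-2$ we have $\{R_n,R_n\} \in \FF^p X$, so the sequence $\left(\{R_n,R_n\}\right)_n$ converges to $0$ in the filtration topology on $X^2$. Comparing this with the limit from the previous paragraph and using uniqueness of limits in $X$ — equivalently that $\bigcap_p \FF^p X = 0$, which is immediate from the definition of $X^j$ as $\lim_{\leftarrow p} X_0^j/(\FF^p X_0 \cap X_0^j)$ — forces $\{R,R\}=0$.

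There is no substantial obstacle: the corollary is a formal consequence of the theorem, and the only point that deserves a word of justification is the separatedness of the completed filtration, which guarantees that the two computed limits must coincide and is built into the construction of $X$ as an inverse limit.
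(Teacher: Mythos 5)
Your proposal is correct and is essentially the paper's own argument: the paper notes $\{R_{n+l},R_{n+l}\}\in\FF^{n+2}X^2$ for all $l\gs 0$, uses closedness of $\FF^{n+2}X^2$ (lemma \ref{lem:filtrationclosed}) to place the limit $\{R,R\}$ in every $\FF^{n+2}X^2$, and concludes from $\bigcap_p\FF^pX=0$. Your phrasing via convergence to $0$ and uniqueness of limits in the Hausdorff filtration topology is the same argument in equivalent language.
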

\begin{proof}
  We have $\{R_{n+l}, R_{n+l}\} \in \FF^{n+2} X^2$ for all $l
  \geqslant 0$. Hence $\{R,R\} \in \FF^{n+2} X^2$ for all
  $n$ by lemma \ref{lem:filtrationclosed}. Hence $\{R,R\} = 0$.
\end{proof}
We also remark that $R$ as defined above satisfies $R \equiv Q_0 \pmod
\iphat$ since for $j > 0$, we have $Q_j \in \iphat$. We are left to
consider the cohomology of $d_R = \{R,-\}$ on $X/I$.
\begin{lem}
  The action of $d_R$ preserves the filtration and hence defines a
  differential on $\gr X$, which is identified with $1 \otimes
  \delta$ under the natural isomorphism of lemma
  \ref{lem:associatedgraded}. 
\end{lem}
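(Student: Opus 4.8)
The plan is to decompose $d_R = \{R,-\} = \delta + L$, where $L := \{R,-\} - \delta$, and then to show that $\delta$ preserves the filtration while $L$ \emph{strictly} raises it. Once this is done the statement follows immediately: $d_R$ is already known to be a differential on $X$ (since $\{R,R\}=0$), so if it preserves the filtration it descends to a differential on $\gr X$; and since $L$ raises the filtration degree by one, it acts as zero on $\gr X$, so the induced map is the one induced by $\delta$ alone. The latter is identified with $1\otimes\delta$ on $B\otimes_P T$ by remark~\ref{rem:differential-delta-assoc} under the isomorphism of lemma~\ref{lem:associatedgraded}, which is exactly the claim.

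First I would record that $\delta$ preserves the filtration on $X$, as established when $\delta$ was extended to the completion. Hence everything reduces to controlling $L = d_R - \delta$. Rather than analyzing the infinite bracket operator $\{R,-\}$ directly, I would pass to the limit from the finite approximations $R_n = \sum_{j=0}^n Q_j$. In the proof of the recursive construction it was shown that for each $n$ the operator $L_n := \{R_n,-\} - \delta = L_0 + \sum_{j=1}^n \{Q_j,-\}$ satisfies $L_n(\FF^p X) \subset \FF^{p+1} X$, i.e. each finite approximation raises the filtration degree.

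Next I would carry out the limiting argument. Fix $x \in \FF^p X$. Since $R_n \to R$ and the extended bracket is continuous by lemma~\ref{lem:bracketcontinuous}, we have $\{R_n,x\} \to \{R,x\}$, and therefore $L_n x = \{R_n,x\} - \delta x \to \{R,x\} - \delta x = Lx$. Every term $L_n x$ lies in $\FF^{p+1} X$, which is closed by lemma~\ref{lem:filtrationclosed}; hence the limit $Lx$ lies in $\FF^{p+1} X$ as well. This yields $L(\FF^p X) \subset \FF^{p+1} X$, so $d_R = \delta + L$ preserves the filtration. Consequently $d_R$ induces a differential on $\gr^p X = \FF^p X/\FF^{p+1} X$, and on each $x \in \FF^p X$ we have $d_R x = \delta x + Lx \equiv \delta x \pmod{\FF^{p+1}X}$, so the induced differential equals the one induced by $\delta$; remark~\ref{rem:differential-delta-assoc} then finishes the identification with $1\otimes\delta$.

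The hard part will be precisely this passage to the limit: one must guarantee that the infinite correction $L = d_R - \delta$ still strictly raises the filtration even though it is no longer a finite sum of bracket operators of the form $\{Q_j,-\}$. This is exactly where continuity of the extended bracket (lemma~\ref{lem:bracketcontinuous}) and closedness of the filtration steps $\FF^{p+1}X$ (lemma~\ref{lem:filtrationclosed}) are indispensable, since together they ensure that a pointwise limit of filtration-raising operators again raises the filtration. Everything else is bookkeeping with the already-proven behavior of $L_0$ and of the higher brackets $\{Q_j,-\}$.
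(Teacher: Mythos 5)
Your proof is correct, but it reaches the conclusion by a different technical route than the paper. Both arguments rest on the same decomposition idea, namely writing $d_R = \delta + (\text{filtration-raising part})$, but the paper handles the correction term in one shot: it notes $R \in X^1$ and lemma \ref{lem:optimal} already give filtration preservation, writes $\{Q_0,-\} = \delta + L_0$ with $L_0$ filtration-raising by lemma \ref{lem:l0increase}, and then applies lemma \ref{lem:ipbracket} directly to the single completed element $K = R - Q_0 \in \iphat \cap X^1$ to see that $\{K,-\}$ raises the filtration. You instead approximate $R$ by the partial sums $R_n$, invoke the fact (established in the recursion theorem) that each $L_n = \{R_n,-\}-\delta$ raises the filtration, and pass to the limit using continuity of the bracket (lemma \ref{lem:bracketcontinuous}) and closedness of $\FF^{p+1}X$ (lemma \ref{lem:filtrationclosed}). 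Your limiting argument is sound, and all the facts you cite are available at this point in the paper; its one cost is that it is tied to the specific construction of $R$ as $\lim R_n$ from the existence proof, whereas the paper's argument only uses the intrinsic property $R \equiv Q_0 \pmod{\iphat}$ and therefore transfers verbatim to arbitrary BFV models (this is essentially how lemma \ref{lem:induceddiff} is handled later). The paper's route is shorter because lemma \ref{lem:ipbracket} is already known to hold in the completion $X$, so no fresh limit argument is needed; your route buys a self-contained verification that avoids applying that lemma to the completed element $R-Q_0$ at the price of redoing the convergence bookkeeping.
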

\begin{proof}
  $R \in X^1$ and lemma \ref{lem:optimal} imply that $d_R$ preserves
  the filtration and hence descends to the associated graded. We have
  $\{Q_0,-\} = L_0 + \delta$. Since $L_0$ increases filtration degree by
  lemma \ref{lem:l0increase}, we have that $\{ Q_0, - \}$ and $\delta$ induce
  the same maps on $\gr X$.  Moreover, $K := R - Q_0 \in \iphat \cap
  X^1$ by the remark above. Hence by lemma ~\ref{lem:ipbracket}, $\{K, -\}$
  increases filtration degree and thus $d_R = \{R, -
  \}$ and $\{Q_0, -\}$ induce the same maps on the associated graded.
\end{proof}
\begin{cor}
  $H^j( X / I, d_R ) \cong P/J$ if $j = 0$ and zero otherwise.
\end{cor}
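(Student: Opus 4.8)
My plan is to show that the complex $(X/I, d_R)$ is, on the nose, the Tate resolution $(T,\delta)$, and then read off the answer from acyclicity of that resolution. The observation that makes everything collapse is that, directly from the definitions of $I$ and of the filtration, one has $I = \FF^1 X$: indeed $\FF^1 X_0$ is spanned by the monomials containing at least one factor from $\MM^*$, and passing to the completion changes nothing because $\FF^p X \subset \FF^1 X = I$ for all $p \gs 1$. Hence the filtration induced on $X/I$ is concentrated in degree zero, $\FF^p(X/I) = 0$ for $p \gs 1$ and $\FF^p(X/I) = X/I$ for $p \ls 0$, so that $\gr^0(X/I) = X/I$. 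The upshot is that any filtration-preserving operator on $X/I$ coincides with the map it induces on $\gr^0$, so no spectral-sequence convergence argument is required.

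First I would identify the underlying graded space: since $X_0/\FF^1 X_0 = P \otimes \Sym(\MM) = T$, the quotient $X/I$ is $T$. Under the isomorphism of Lemma \ref{lem:associatedgraded} this is the star-weight-zero piece $P \otimes_P T = T$ of $B \otimes_P T$, carrying the differential $1 \otimes \delta$, i.e. the Tate differential $\delta$ on $T$.

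Next I would verify that $d_R$ acts on $X/I$ as $\delta$. By the preceding lemma $d_R$ preserves the filtration and induces $1 \otimes \delta$ on $\gr X$; restricting to filtration degree zero and using $X/I = \gr^0 X$ gives $d_R = \delta$ there. Concretely, writing $d_R = \delta + L_0 + \{K,-\}$ with $K = R - Q_0 \in \iphat \cap X^1$, both $L_0$ (Lemma \ref{lem:l0increase}) and $\{K,-\}$ (Lemma \ref{lem:ipbracket}) raise filtration degree by at least one, hence send $\FF^0(X/I)$ into $\FF^1(X/I) = 0$, leaving only $\delta$.

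It then remains to compute $H^\bullet(T,\delta)$, and here the proof finishes by the defining property of the Tate resolution: $(T,\delta)$ resolves $P/J$, so $H^0(T,\delta) = P/J$ and $H^j(T,\delta) = 0$ for $j \ne 0$. This acyclicity is exactly what the contracting homotopy $s$ of \eqref{eq:sdeltacommute} provides, together with Remark \ref{rem:overlinepizero}, which gives $\overline{\pi} = 0$ in nonzero degree; feeding any $\delta$-cocycle of degree $j \ne 0$ through $\delta s + s\delta = 1 - \overline{\pi}$ exhibits it as a coboundary, while in degree $0$ the same identity yields the isomorphism onto $P/J$. Since $(X/I,d_R) = (T,\delta)$, the corollary follows. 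The only thing demanding genuine care is the bookkeeping of the first two steps — confirming that $X/I$ really sits in a single filtration degree and that every correction to $\delta$ strictly increases that degree — after which the cohomology is simply inherited from the resolution.
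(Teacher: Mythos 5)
Your argument is correct and follows essentially the same route as the paper: identify $X/I$ with $\gr^0 X \cong B^0\otimes_P T = T$, observe via the decomposition $d_R = \delta + L_0 + \{R-Q_0,-\}$ (lemmas \ref{lem:l0increase} and \ref{lem:ipbracket}) that the filtration-raising corrections vanish on $\gr^0$, and then read off $H^j(T,\delta)$ from the defining acyclicity of the Tate resolution. The paper compresses the middle step into the lemma immediately preceding the corollary, but the content is identical.
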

\begin{proof}
  $H^j(X/I, d_R) = H^j(\gr^0 X, d_R) \cong H^j( B^0 \otimes_P T, 1
  \otimes \delta) \cong H^j(T, \delta)$.
\end{proof}

Given a unital, Noetherian Poisson algebra $P$ with a coisotropic
ideal $J$, we thus have constructed a BFV model for $(P,J)$.

\section{Properties}
\label{sec:properties}

In this section we describe general properties of BFV models. We
postpone the discussion of their cohomology to section
\ref{sec:cohomology}. 

Let $(X,R)$ be a BFV-model of $(P,J)$ with $X$ being the completion of
$P \otimes \Sym(\MM \oplus \MM^*)$. Since $R$ is of degree one, the
differential $d_R$ preserves the filtration and hence descends to $\gr
X$. Let $\pi : X \to X/I = T = P \otimes \Sym(\MM)$ be the canonical
projection. Let $j : T \to X_0 \to X$ be the inclusion given by $t \mapsto 1
\otimes t \in \Sym(\MM^*) \otimes T = X_0$. Define $\delta = \pi
\circ d_R \circ j : T \to T$.
\begin{lem}
  The map $\delta: T \to T$ is a derivation and a differential
  of degree 1.
\end{lem}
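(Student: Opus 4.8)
The plan is to establish the three assertions---degree one, derivation, and differential---separately, the guiding principle being that $\pi$ and $j$ are algebra homomorphisms satisfying $\pi \circ j = \id_T$, whereas $d_R$ is a graded derivation of degree one with $d_R^2 = 0$. The degree statement is immediate: since $R \in X^1$ the operator $d_R = \{R,-\}$ raises total degree by one, and both $\pi$ and $j$ preserve degree, so $\delta = \pi \circ d_R \circ j$ has degree one.

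For the derivation property I would first record that $j$ is an algebra homomorphism, being the unit-tensor inclusion $t \mapsto 1 \otimes t$ into $X_0 = \Sym(\MM^*) \otimes T$ followed by the inclusion into the completion, that $\pi$ is an algebra homomorphism because $I$ is an ideal, and that $\pi \circ j = \id_T$, since $j(T)$ consists of the elements free of $\MM^*$-factors while $\pi$ is reduction modulo $I = \FF^1 X$. Then for homogeneous $a,b \in T$ (writing $|a|$ for the degree of $a$),
\begin{align*}
  \delta(ab) &= \pi\, d_R\big( j(a)\, j(b) \big)
  = \pi\big( d_R(j(a))\, j(b) + (-1)^{|a|}\, j(a)\, d_R(j(b)) \big) \\
  &= \pi(d_R(j(a)))\, \pi(j(b)) + (-1)^{|a|}\, \pi(j(a))\, \pi(d_R(j(b))) \\
  &= \delta(a)\, b + (-1)^{|a|}\, a\, \delta(b),
\end{align*}
where the four equalities use, in turn, that $j$ is multiplicative, that $d_R$ is a graded derivation, that $\pi$ is multiplicative, and that $\pi \circ j = \id_T$. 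Thus $\delta$ is a graded derivation of degree one.

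The point needing an idea is $\delta^2 = 0$. The naive attempt fails because $j$ is a section of $\pi$ only on $T$, that is $\pi \circ j = \id_T$ but $j \circ \pi \neq \id_X$, so the inner $j\pi$ in $\delta^2 = \pi\, d_R\, (j\pi)\, d_R\, j$ cannot simply be cancelled against $d_R^2 = 0$; overcoming this is the crux of the argument. The resolution is to recognise $\delta$ as the differential induced by $d_R$ on the associated-graded piece $\gr^0 X = X/I = T$. Indeed, as recorded just before the statement, $R \in X^1$ forces $d_R$ to preserve the filtration, whence $d_R(I) \subseteq I$ and $d_R$ descends to a map $\overline{d_R}$ on $X/I$; under the identification $X/I = T$ and using $j$ as a lift, $\overline{d_R}(t) = \pi(d_R(j(t))) = \delta(t)$, the outcome being independent of the chosen lift precisely because $d_R$ maps $I$ into $I$. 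Since passage to the associated graded is functorial on filtration-preserving operators, $\overline{d_R}^2 = \overline{d_R^2}$, and $d_R^2 = 0$ follows from the classical master equation $\{R,R\}=0$ via the graded Jacobi identity. Hence $\delta^2 = \overline{d_R}^2 = \overline{d_R^2} = 0$.
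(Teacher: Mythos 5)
Your proof is correct and follows essentially the same route as the paper: the paper also handles $\delta^2=0$ by writing $j\pi = (j\pi - \id_X) + \id_X$, using $d_R^2=0$ from the master equation and the fact that $d_R$ preserves $I=\ker\pi$, which is exactly your observation that $\delta$ is the differential induced by $d_R$ on $X/I$. The only difference is presentational: the paper computes directly with $(j\circ\pi-\id_X)(d_R(j(a)))\in I$, while you phrase the same cancellation via the induced map on the quotient; your treatment of the derivation property is just a spelled-out version of what the paper calls immediate.
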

\begin{proof}
  The derivation property follows immediately. Let $a \in T$. We have
  $(j \circ \pi - \id_X) (d_R(j(a)))\in I$ and hence $d_R( j ( \pi (
  d_R(j(a)))) = d_R( (j \circ \pi - \id_X ) (d_R(j(a)))) \in I$ is in the
  kernel of $\pi$. The statement about the degree is obvious.
\end{proof}
\begin{lem}
\label{lem:induceddiff}
  Under the identification of lemma \ref{lem:associatedgraded}, the
  differential $d_R$ induced on $gr X$ corresponds to the differential
  $1 \otimes \delta$ on $B \otimes_P T$.
\end{lem}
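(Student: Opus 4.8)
The plan is to compare the two derivations on a set of algebra generators. Since $R \in X^1$, the map $d_R = \{R,-\}$ preserves the filtration by lemma \ref{lem:optimal}, so it descends to a degree-$1$ derivation $\overline{d_R}$ of $\gr X$, which under lemma \ref{lem:associatedgraded} is a derivation of the graded-commutative algebra $B \otimes_P T = P \otimes \Sym(\MM^*) \otimes \Sym(\MM)$. The map $1 \otimes \delta$ is also a derivation of this algebra. Two derivations that agree on a generating set agree everywhere, so it suffices to compare $\overline{d_R}$ and $1 \otimes \delta$ on the generators $P$, on the $e_i \in \MM$, and on the $e_i^* \in \MM^*$.

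First I would dispose of the generators of filtration degree zero, namely $P$ and the $e_i$. Under lemma \ref{lem:associatedgraded} these span $\gr^0 X = X/I = T$, and on $\gr^0 X$ the projection $X \to \gr^0 X$ is exactly $\pi$; hence the induced map sends the class of $j(t)$ to $\pi(d_R(j(t))) = \delta(t)$, i.e.\ $\overline{d_R}$ restricts on $\gr^0 X$ to $\delta$ by the very definition of $\delta = \pi \circ d_R \circ j$. Since $1 \otimes \delta$ also restricts to $\delta$ on the $T$-factor, the two maps agree on $P$ and on the $e_i$. (On $P$ both vanish, as $T$ is concentrated in non-positive degrees while $\delta$ raises degree by one.)

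The crux is the generators $e_i^* \in \MM^*$, where I must show $\overline{d_R}(e_i^*) = 0$ to match $(1\otimes\delta)(e_i^*) = 0$; equivalently, that $\{R, e_i^*\} \in \FF^{-d_i+1}X$, so its class vanishes in $\gr^{-d_i}X$. The idea is that bracketing with $e_i^*$ acts only through the $\MM$--$\MM^*$ pairing, since $e_i^*$ has no component in $P$ and $\{\MM^*,\MM^*\}=0$; thus $\{R, e_i^*\}$ is, up to the nonzero constant $\{e_i,e_i^*\}$, the contraction of $R$ against the ghost $e_i$. The decisive degree count is the following: every monomial occurring in $R$ has cohomological degree $1$, so if it contains the factor $e_i$ of degree $d_i < 0$, its $\MM^*$-factors must carry total degree at least $1 - d_i = -d_i+1$, because the remaining ghost and $P$ factors contribute non-positive degree. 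As the filtration level of a monomial equals the total degree of its $\MM^*$-factors (lemma \ref{lem:associatedgraded}) and deleting the factor $e_i$ leaves that content untouched, every term of the contraction lies in $\FF^{-d_i+1}X$, which gives $\{R,e_i^*\} \in \FF^{-d_i+1}X$ as required.

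This filtration estimate is the only non-formal step; everything else is bookkeeping with the derivation property and the identification $\gr X \cong B \otimes_P T$. The main obstacle I anticipate is making the contraction argument rigorous in the completed algebra $X$ rather than in $X_0$: I would handle this using continuity of the bracket (lemma \ref{lem:bracketcontinuous}) together with the fact that the filtration levels are closed (lemma \ref{lem:filtrationclosed}), thereby reducing the estimate to the finite monomials representing each filtration component of $R$.
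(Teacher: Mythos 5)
Your proof is correct and follows essentially the same route as the paper's: the paper applies the Leibniz rule to a pure tensor $a \otimes b \in B^p \otimes_P T$ and observes that $d_R(a)b \in \FF^{p+1}X$ (which is exactly your filtration estimate for $\{R,e_i^*\}$, and also follows at once from $\deg\{R,e_i^*\} = 1-d_i > 0$ together with $X^n \subset \FF^n X$ for $n \geq 0$), while $a\, d_R(b) \equiv 1\otimes\delta(a\otimes b)$ modulo $\FF^{p+1}X$ by the definition of $\delta = \pi\circ d_R\circ j$. Organizing the check over algebra generators instead of over pure tensors is only a cosmetic difference.
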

\begin{proof}
  Let $x \in \gr^p X$ and pick a representative $a \otimes b \in B^p
  \otimes_P T$. (It suffices to consider the case where this is a pure tensor.)
  Then $d_R(ab) = d_R(a) b + (-1)^p a d_R(b)$. The first summand is in
  $\FF^{p+1}X$ and the second is equivalent to $1 \otimes
  \delta(a\otimes b)$ modulo $\FF^{p+1}X$.
\end{proof}
Let $Q_0$ be the differential $\delta$ on $X/I$ as an element of
$X$.
\begin{rem}
  The complex $(X/I, d_R) = (T, \delta)$ is a Tate resolution of
  $P/J$. Hence the results from appendix~\ref{sec:super-poisson} and
  sections~\ref{sec:tateresolutions} and~\ref{sec:first-approximation} apply.
\end{rem}
\begin{lem}
  \label{lem:q0}
  We have $R \equiv Q_0 \pmod \iphat$. Moreover, $\{R, -\} \equiv \{Q_0,
  -\} \pmod I$.
\end{lem}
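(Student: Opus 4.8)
The plan is to establish the two congruences in Lemma~\ref{lem:q0} separately, treating the statement about $R$ first and then deriving the statement about the brackets from it. For the first claim, $R \equiv Q_0 \pmod{\iphat}$, I would return to the recursive construction $R = \sum_{j=0}^\infty Q_j$ from the existence section. The key observation is that $R - Q_0 = \sum_{j \geqslant 1} Q_j$, so it suffices to show each correction term $Q_j$ with $j \geqslant 1$ lies in $\iphat$. This is essentially already contained in the existence proof: we have $Q_{n+1} = -\tfrac12 s\{R_n, R_n\}$, and the inductive argument there showed $\{R_n, R_n\} \in \iphat$, while Remark~\ref{lem:spreservesiphat} tells us that $s$ preserves $\iphat$. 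Hence $Q_{n+1} \in \iphat$ for every $n \geqslant 0$, i.e. $Q_j \in \iphat$ for all $j \geqslant 1$. Since $\iphat$ is closed under the filtration limits (the $Q_j$ converge in $X$ and $\iphat$ is closed), the sum $\sum_{j \geqslant 1} Q_j$ lands in $\iphat$, giving $R - Q_0 \in \iphat$.

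For the second claim, $\{R, -\} \equiv \{Q_0, -\} \pmod I$, I would write $\{R, -\} - \{Q_0, -\} = \{R - Q_0, -\} = \{K, -\}$ where $K := R - Q_0$. From the first part we know $K \in \iphat \cap X^1$. The goal is then to show that the derivation $\{K, -\}$ maps $X$ into $I$ (so that it vanishes modulo $I$ as an operator on $X/I$). Here I would invoke Lemma~\ref{lem:ipbracket}, which governs brackets with elements of $\iphat$: since $K \in \iphat$, the bracket $\{K, x\}$ should land in a power of $I$, in particular in $I$ itself. The point is that bracketing with something in the square of $I$ produces elements in $I$, so $\{K, -\}$ descends to the zero map on the quotient $X/I$.

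The main subtlety to watch is the bookkeeping of filtration versus the ideal $I$: the congruence $\{R,-\} \equiv \{Q_0,-\} \pmod I$ is a statement about the induced operators on $X/I$, not merely about filtration degree, so I must be careful that $\{K,-\}$ genuinely lands in $I$ and not just in some $\FF^p X$. This is exactly where Lemma~\ref{lem:ipbracket} does the real work, as it relates brackets of elements in $\iphat = I^{(2)}$ to membership in $I$. I expect this to be the only genuine obstacle; the rest is assembling facts already proved in the existence section. Indeed, much of this argument duplicates the final lemma of that section (which showed $d_R$ and $\{Q_0,-\}$ induce the same map on $\gr X$), so I anticipate the proof here can be quite short, essentially pointing back to the earlier computation that $K \in \iphat \cap X^1$ and applying Lemma~\ref{lem:ipbracket}.
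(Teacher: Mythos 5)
There is a genuine gap, and it lies in the scope of the statement. Lemma~\ref{lem:q0} appears in section~\ref{sec:properties}, where $(X,R)$ is an \emph{arbitrary} BFV model and $Q_0$ is defined as the element of $X$ representing the differential $\delta = \pi \circ d_R \circ j$ that this given $R$ induces on $X/I$. Your proof of the first claim instead assumes that $R$ is the particular charge produced by the recursion $R = \sum_j Q_j$ of section~\ref{sec:existence}. For that specific $R$ the congruence $R \equiv Q_0 \pmod{\iphat}$ was indeed already recorded as a remark at the end of the existence proof, by exactly the argument you give; but the lemma has to hold for every solution of the classical master equation on $X$ inducing $\delta$ (this is precisely how it is used in lemma~\ref{lem:gaugeinduction} and theorem~\ref{thm:gaugeequivalence}, where $R$ and $R'$ are two a priori unrelated solutions). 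For a general $R$ there is no recursion to appeal to, and your argument does not get off the ground.

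Your logical order is also the reverse of what the situation demands. The second claim, $\{R,-\} \equiv \{Q_0,-\} \pmod I$, is the easy one: it holds by construction, since $Q_0$ is defined precisely so that $\{Q_0,-\}$ induces $\delta = \pi \circ d_R \circ j$ on $X/I$. The real content of the lemma is to upgrade this to $R - Q_0 \in \iphat$. The paper does this by expanding $R - Q_0 = \sum_j h_j$ with $h_j \in B^j \otimes_P T^{1-j}$ (lemma~\ref{lem:expansion}), splitting off the part $\beta_j$ that is linear in $\MM^*$, writing $\beta_j = \sum_s a_{j,s}\,{e_s^{(j)}}^*$ with $a_{j,s} \in T$, and using the already-established congruence of brackets to show $a_{l,k} = \pm\sum_j\{\beta_j, e_k^{(l)}\} \in I \cap T = 0$. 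So the first claim is deduced from the second, not the other way around. (Your derivation of the second claim from the first via lemma~\ref{lem:ipbracket} is correct as a piece of logic --- $\{X^1 \cap \iphat, X\} \subset \FF^1 X = I$ --- but it rests on the unproved first claim, and in any case the second claim needs no such argument.)
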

\begin{proof}
  We have $\{R, - \} \equiv \{Q_0, -\} \pmod I$ by construction. Expand
  $R-Q_0 = \sum_{j \geqslant 0} h_j$ with $h_j \in B^j \otimes_P
  T^{1-j}$. Such a decomposition exists by lemma
  \ref{lem:expansion}. Decompose $h_j = \alpha_j + \beta_j$ with
  $\alpha_j \in B^j \otimes_P T^{1-j} \cap \ip$ and $\beta_j \in B^j
  \otimes_P T^{1-j} \setminus \ip$. Let $\{e_k^{(l)}\}_k$ be a basis
  of $\MM^{-l}$ with dual basis $\{ {e_k^{(l)}}^*\}_k$. By the
  Leibnitz rule, $\sum_j \{\beta_j, e_k^{(l)}\} = \{R-Q_0, e_k^{(l)}\} -
  \sum_j \{\alpha_j, e_k^{(l)}\} \in I$. Expand each $\beta_j = \sum_s
  a_{j,s} {e_s^{(j)}}^*$ with $a_{j,s} \in T$. We obtain $a_{l,k} =
  (-1)^{1+l}\sum_j\{\beta_j, e_k^{(l)}\} \in I$, hence all $a_{l,k}$ vanish.
\end{proof}

\section{Uniqueness}
\label{sec:uniqueness}

Fix a unital, Noetherian Poisson algebra $P$ and a coisotropic ideal
$J$.  In a first step, we prove that two BFV models for $(P,J)$
related to the same Tate-resolutions have isomorphic
cohomologies. This is a known fact \cite{henneaux-teitelboim,
  fisch1989} and is presented in sections~\ref{sec:gauge}
to~\ref{sec:gaugeuniqueness}. The key tool will be the notion of gauge
equivalences. In a second step, we prove that BFV models for $(P = \R[V],J)$,
$V$ a finite-dimensional symplectic vector space, on different spaces $X$ have
isomorphic cohomologies
too. We present this result in sections~\ref{sec:trivial}
to~\ref{sec:stableuniqueness}. Here, the key tool will be the notion
of stable equivalence, introduced in the corresponding Lagrangian
setting in \cite{felder-kazhdan}. The novel part is that we do not
require regularity assumptions, that would imply that the constraint surface is
smooth.

\subsection{Gauge Equivalences}
\label{sec:gauge}

We adapt the language of \cite{felder-kazhdan} and call the elements in
$\gfrak = X^0 \cap \iphat$ generators of gauge equivalences. Different
BRST-charges for the same Tate resolution will be related by these
equivalences.
\begin{lem}
  \label{lem:exponentiates}
  The set of generators of gauge equivalences $\mathfrak{g}$ is a closed subset
  which forms a Lie algebra acting nilpotently on $X/\FF^pX$ via the adjoint
  representation. The Lie algebra
  $\ad(\gfrak)$ exponentiates to a group $G$ acting on $X$ by
  Poisson automorphisms.
\end{lem}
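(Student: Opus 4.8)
The plan is to establish the four assertions — closedness, the Lie-algebra structure, nilpotency of the adjoint action, and integrability to a group of Poisson automorphisms — in turn, each reducing to a filtration estimate already available. First, closedness of $\gfrak = X^0 \cap \iphat$ is immediate: the graded component $X^0$ is closed in the filtration topology, and $\iphat$ is closed by lemma \ref{lem:iphattopclosed}, so their intersection is closed. For the Lie-algebra structure, I would note that the Poisson bracket has degree $0$, so $\{X^0, X^0\} \subset X^0$, while lemma \ref{lem:ipclosed} gives $\{\iphat, \iphat\} \subset \iphat$; hence $\{\gfrak, \gfrak\} \subset \gfrak$. Since every element of $\gfrak$ is of even degree, the graded antisymmetry of $\{-,-\}$ becomes ordinary antisymmetry, and the Jacobi identity is inherited from the graded Poisson structure on $X$, so $\gfrak$ is an honest Lie algebra with bracket $\{-,-\}$, and $[\ad(g), \ad(h)] = \ad(\{g,h\})$.

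Next, nilpotency. For $g \in \gfrak \subset \iphat$, lemma \ref{lem:ipbracket} shows that $\ad(g) = \{g, -\}$ raises the filtration degree by at least one, i.e. $\ad(g)(\FF^q X) \subset \FF^{q+1} X$. In particular $\ad(g)$ preserves $\FF^p X$ and so descends to $X/\FF^p X$. Since $\FF^0 X = X$ (the unit lies in $\FF^0 X_0$), this quotient carries the finite filtration $X/\FF^p X \supset \FF^1 X/\FF^p X \supset \cdots \supset \FF^p X/\FF^p X = 0$ of length $p$, which $\ad(g)$ shifts up by one step; therefore $\ad(g)^p = 0$ on $X/\FF^p X$. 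This is the asserted nilpotency. More is true and will be used below: the whole image $\ad(\gfrak)$ lands in the Lie algebra $\NN_p$ of operators on $X/\FF^p X$ that strictly raise filtration degree, and $\NN_p$ is nilpotent because any composite of $p$ of its elements vanishes.

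For exponentiation I would work one level at a time and then pass to the limit. Fix $p$. Because $\ad(g)$ is nilpotent on $X/\FF^p X$, the series $\exp(\ad(g)) = \sum_{k=0}^{p-1} \frac{1}{k!}\, \ad(g)^k$ is a finite sum, hence a well-defined $\K$-linear endomorphism (here characteristic zero is used). As $\ad(g)$ is simultaneously a derivation of the commutative product (graded Leibniz, with no signs since $g$ is even) and of the bracket (Jacobi identity), its exponential is an automorphism of both structures, i.e. a Poisson automorphism of $X/\FF^p X$. Since $\ad(\gfrak)$ is a Lie subalgebra of the nilpotent algebra $\NN_p$, the Baker–Campbell–Hausdorff series terminates and stays inside $\ad(\gfrak)$; thus $G_p := \exp(\ad(\gfrak))$ is a group of unipotent Poisson automorphisms of $X/\FF^p X$. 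The projections $X/\FF^{p+1} X \to X/\FF^p X$ intertwine these actions, giving a tower $\cdots \to G_{p+1} \to G_p \to \cdots$; setting $G := \lim_{\leftarrow p} G_p$ and using $X = \lim_{\leftarrow p} X/\FF^p X$ yields an action of $G$ on $X$ by continuous, filtration-preserving maps that respect the product and the bracket, i.e. by Poisson automorphisms.

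The degree and derivation bookkeeping is routine; the one point that needs genuine care — and which I expect to be the main obstacle — is the passage to the group. One must check that the level-wise exponentials are compatible under the tower, so that $\lim_{\leftarrow p} G_p$ is well defined and actually acts on the completion $X$ rather than only on its truncations, and that the $\exp$/BCH formalism is legitimate for the \emph{infinite-dimensional} Lie algebra $\gfrak$. This is precisely where the pro-nilpotent structure is essential: the strict raising of filtration degree by $\ad(\gfrak)$ guarantees, uniformly in $p$, that all relevant series are finite, so that $\exp$ is a bijection onto the unipotent group $1 + \NN_p$, BCH converges, and the constructions are compatible across levels. Verifying this compatibility carefully, rather than any single algebraic identity, is the substantive step.
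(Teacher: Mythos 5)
Your treatment of closedness, the Lie algebra structure, and the passage from level-wise exponentials to a pro-nilpotent group acting on the completion all match the paper's (much terser) argument. The gap is in the nilpotency step. You claim that for $g \in \gfrak = X^0 \cap \iphat$ the operator $\ad(g)$ raises filtration degree, $\ad(g)(\FF^q X) \subset \FF^{q+1}X$, citing lemma \ref{lem:ipbracket}. That lemma is stated only for elements of $X_0^1 \cap \ip$, i.e.\ for \emph{degree-one} elements, and its proof genuinely uses that degree; the analogous statement for degree-zero elements of $\iphat$ is false. Concretely, take $e_1, e_2 \in \MM^{-1}$, $f \in \MM^{-2}$, and set $g = e_1^* e_2^* f \in X^0 \cap \iphat$. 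Then $f^* \in \FF^2 X^2$, but
\begin{align*}
\{g, f^*\} = \pm\, e_1^* e_2^* \{f, f^*\} = \pm\, e_1^* e_2^*,
\end{align*}
which lies in $\FF^2 X$ but not in $\FF^3 X$ (every monomial in $\FF^3 X_0$ carries positive factors of total degree at least $3$, whereas $e_1^* e_2^*$ carries total positive degree $2$). So $\ad(g)$ does not shift the finite filtration $X/\FF^pX \supset \FF^1X/\FF^pX \supset \cdots$, and your argument for $\ad(g)^p = 0$ on $X/\FF^pX$ breaks at the first application.

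The conclusion is still correct, but the decreasing chain one must use is the chain of powers of the ideal $I$, not the filtration $\FF^\bullet$. Lemma \ref{lem:ipnil} gives $\{\iphat, I^{(q)}\} \subset I^{(q+1)} \subset \FF^{q+1}X$, so $\ad(g)^k(X) \subset I^{(k)} \subset \FF^k X$ and hence $\ad(g)^p = 0$ on $X/\FF^p X$; the fact that $\ad(g)$ merely \emph{preserves} each $\FF^qX$ (corollary \ref{cor:x0preservesfp}, since $g$ has degree zero) is what lets it descend to the quotient in the first place. With this substitution --- and the corresponding replacement in your description of the nilpotent operator algebra on $X/\FF^pX$ --- the remainder of your exponentiation and inverse-limit argument goes through as written and agrees with the paper.
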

\begin{proof}
  By lemma \ref{lem:iphattopclosed}, the set is closed. By lemma
  \ref{lem:ipclosed} and the fact that $\{X^0, X^0\} \subset X^0$, this
  is a Lie algebra. By corollary \ref{cor:x0preservesfp}, $\gfrak$ acts on
  $X/\FF^p X$. By lemma \ref{lem:ipnil}, this action is nilpotent. Hence $\ad
	\gfrak$ exponentiates to a group acting on $X$ by vector space automorphisms.
	Since $\ad \gfrak$ consists of derivations both for the product and the
	bracket, those automorphisms are Poisson.

\end{proof}
The elements of $G$ are called gauge equivalences.
\begin{lem}
\label{lem:gaugestill}
  For $x \in X^1$ and a gauge equivalence $g$ we have $gx \equiv x
  \pmod { \iphat }$.
\end{lem}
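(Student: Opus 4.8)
The plan is to use the explicit description of $g$ as the exponential of the adjoint action of a generator of gauge equivalences. By lemma~\ref{lem:exponentiates} we may write $g = \exp(\ad\gamma)$ for some $\gamma \in \gfrak = X^0 \cap \iphat$, so that
\begin{align*}
  gx = \exp(\ad\gamma)(x) = \sum_{n \gs 0} \frac{1}{n!}(\ad\gamma)^n(x) = x + \{\gamma, x\} + \tfrac{1}{2}\{\gamma,\{\gamma,x\}\} + \cdots.
\end{align*}
Thus it suffices to show that every term $(\ad\gamma)^n(x)$ with $n \gs 1$ lies in $\iphat$, and that the resulting series converges inside $\iphat$.

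The key observation is that $X^1 \subseteq I$: a homogeneous element of positive total degree must contain at least one factor from $\MM^*$, whence $X_0^1 = \FF^1 X_0 \cap X_0^1$ and therefore $X^1 = I^1 \subseteq I$ after completion. With this in hand I would argue by induction on $n$. For $n=1$, since $\gamma \in \iphat$ and $x \in X^1 \subseteq I$, lemma~\ref{lem:ipclosed} gives $\{\gamma,x\} \in \{\iphat, I\} \subseteq \iphat$; moreover $\deg\{\gamma,x\} = \deg\gamma + \deg x = 1$, so $\{\gamma,x\} \in X^1 \cap \iphat$. For the inductive step, $(\ad\gamma)^n(x) \in X^1 \cap \iphat$ by hypothesis, and applying lemma~\ref{lem:ipclosed} once more yields $(\ad\gamma)^{n+1}(x) = \{\gamma, (\ad\gamma)^n(x)\} \in \{\iphat, \iphat\} \subseteq \iphat$, again of degree one. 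Throughout I use that $\ad\gamma$ preserves total degree because $\gamma \in X^0$.

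It remains to pass from the termwise statement to the series. Since $\ad\gfrak$ acts nilpotently on each quotient $X/\FF^p X$ (lemma~\ref{lem:exponentiates}), the sum defining $gx - x$ is finite modulo each $\FF^p X$, so its image there lies in the image of $\iphat$; as $\iphat$ is closed (lemma~\ref{lem:iphattopclosed}), the limit $gx - x$ itself lies in $\iphat$, which is the claim. For a general $g \in G$ obtained as a product of such exponentials, the same conclusion follows by composition: each $\exp(\ad\gamma)$ preserves $\iphat$ (again by lemma~\ref{lem:ipclosed}) and fixes $X^1$ modulo $\iphat$, and both properties are stable under composition.

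The main obstacle I anticipate is not any single estimate but correctly recognizing that $X^1$ already sits inside $I$; once this is noted, the bracket-closure lemma~\ref{lem:ipclosed} does essentially all the work, and the only remaining care is the routine convergence argument in the completed, filtered setting.
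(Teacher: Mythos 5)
Your proof is correct and follows essentially the same route as the paper: expand $g=\exp(\ad_\gamma)$, show each term $\ad_\gamma^j x$ with $j\gs 1$ lies in $\iphat$ via the bracket-closure lemma, and conclude by closedness of $\iphat$ (lemma~\ref{lem:iphattopclosed}). The only cosmetic difference is that you obtain the base case from $X^1\subseteq I$ together with $\{\iphat,I\}\subseteq\iphat$, whereas the paper cites the item $\{I^{(p)},X^1\}\subseteq I^{(p)}$ of the same lemma directly.
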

\begin{proof}
  Let $c \in X^0 \cap \iphat$ be a generator. Then $ gx - x =
  \sum_{j>0} \frac{1}{j!} \ad_c^j x \in \iphat $ by lemmas
   \ref{lem:iphattopclosed} and \ref{lem:iqx1}.
\end{proof}

\subsection{Uniqueness for Fixed Tate Resolution}
\label{sec:gaugeuniqueness}
In this section we prove that given two solutions $R,R'$ of the
classical master equation in the same space $X$ which induce the same
map on $X/I$ are related by a gauge equivalence. Since by lemma
\ref{lem:exponentiates}, gauge equivalences are Poisson automorphisms,
this implies that they have isomorphic cohomologies. We use known
techniques, which are adapted from \cite{felder-kazhdan}.

\begin{rem}
  \label{rem:gaugecme}
  If $R$ solves $\{R,R\} = 0$ and $g \in G$ is a gauge equivalence,
  then also $\{gR,gR\} = 0$.
\end{rem}
We now discriminate elements in the associated graded $\gr^p X^n$ according to how many
positive factors they contain at least by defining $A^n_{p,q}:= \{ v \in \gr^p
X^n : \text{$v$ has representative in $I^{(q)}$} \}$. From the proof
of lemma \ref{lem:associatedgraded} we see that $A^n_{p,q}$ can be
identified with $(B^p \cap I_0^{(q)}) \otimes_P T$, where $I_0 = \FF^1 X_0$. We
now use remark \ref{rem:differential-delta-assoc} to see that $A^\bullet_{p,q}$
is a subcomplex and bound its cohomology:
\begin{lem}
  \label{lem:assoccoho}
  Fix $p$ and $q$.   We have $H^j( A^\bullet_{p,q}, \delta ) = 0$ for $j < p$.
\end{lem}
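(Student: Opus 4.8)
The plan is to identify $(A^\bullet_{p,q},\delta)$ with a shifted direct sum of copies of the Tate resolution $(T,\delta)$ and then to read off the cohomology. By the identification stated just before the lemma, together with the isomorphism of lemma \ref{lem:associatedgraded} and remark \ref{rem:differential-delta-assoc}, the complex $(A^\bullet_{p,q},\delta)$ is isomorphic to $\bigl((B^p \cap I_0^{(q)}) \otimes_P T,\ 1 \otimes \delta\bigr)$, with the differential acting only on the second tensor factor.

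First I would fix the two gradings. Since $B = P \otimes \Sym(\MM^*)$ contains no factor of negative degree, on $B$ the filtration degree coincides with the cohomological degree; hence every element of $B^p$, and in particular of the submodule $M := B^p \cap I_0^{(q)}$, is homogeneous of cohomological degree $p$. As $T^m$ sits in cohomological degree $m \ls 0$, the degree-$n$ part of $A^\bullet_{p,q}$ is $M \otimes_P T^{n-p}$, and $1 \otimes \delta$ raises $n$ by one by acting as $\delta$ on the $T$-factor. Thus $(A^\bullet_{p,q},\delta)$ is exactly $(T,\delta)$ tensored over $P$ with $M$ and shifted upward by $p$.

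Next I would exploit that $M$ is a free $P$-module: $B^p$ is free over $P$ on the monomial basis of $\Sym(\MM^*)^p$, and $M = B^p \cap I_0^{(q)}$ is spanned by those basis monomials involving at least $q$ factors from $\MM^*$, so it is a free direct summand. Choosing such a basis $\{m_\alpha\}$ presents $A^\bullet_{p,q}$ as a direct sum $\bigoplus_\alpha (T,\delta)$ of copies of the Tate resolution, each shifted into cohomological degrees above $p$. Since cohomology commutes with direct sums, $H^n(A^\bullet_{p,q},\delta) \cong \bigoplus_\alpha H^{n-p}(T,\delta)$.

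Finally, $(T,\delta)$ resolves $P/J$, so $H^m(T,\delta)=0$ for $m \neq 0$ and $H^0(T,\delta)=P/J$. For $j < p$ we have $j - p < 0$, whence $H^{j-p}(T,\delta)=0$ and therefore $H^j(A^\bullet_{p,q},\delta)=0$, as claimed; the same computation in fact shows the cohomology is concentrated in degree $j=p$, where it equals $M/JM$. The one step needing care — and the mild obstacle I anticipate — is the bookkeeping of the filtration and cohomological gradings: one must verify that they agree on $B$, so that $M$ lies in the single cohomological degree $p$, which is what makes the shift produce exactly the bound $j < p$. Once the complex is correctly identified as this shifted free base change of $(T,\delta)$, the vanishing is immediate.
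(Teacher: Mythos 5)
Your proposal is correct and follows essentially the same route as the paper: identify $(A^\bullet_{p,q},\delta)$ with $\bigl((B^p\cap I_0^{(q)})\otimes_P T,\ 1\otimes\delta\bigr)$ via remark \ref{rem:differential-delta-assoc}, use that $B^p\cap I_0^{(q)}$ is a free $P$-module to pull the cohomology through the tensor factor, and conclude from the acyclicity of the Tate resolution $T$ in nonzero degrees. Your extra bookkeeping of the two gradings and the explicit monomial basis only makes the paper's one-line freeness argument more concrete.
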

\begin{proof}
  From remark \ref{rem:differential-delta-assoc}, we have
  $H^j( A^\bullet_{p,q}, \delta)  \cong H^j( (B^p \cap
   I_0^{(q)}) \otimes_P T, 1\otimes \delta)$. Now we may factor
  this space into $(B^p \cap I_0^{(q)}) \otimes_P H^{j-p}(T,
  \delta)$ since $B^p \cap I_0^{(q)}$ is a free $P$-module. For $j<p$, the
  second factor vanishes, since $T$ is a resolution of $P/J$.
\end{proof}
\begin{lem}
  \label{lem:gaugeinduction}
  Fix $p \geqslant 2$. Let $R, R' \in X^1$ be two solutions of the
  classical master equation which induce the same maps on $X/I$. Then,
  for $2 \leqslant q \leqslant p$, we have that $R \equiv R' \pmod{
    I^{(q)} \cap \FF^p X^1 + \FF^{p+1} X^1}$ implies the existence of
  a gauge equivalence $g$ with generator $c \in \FF^p X^0 \cap \iphat$
  such that $gR \equiv R' \pmod{I^{(q+1)} \cap \FF^p X^1 + \FF^{p+1}
    X^1}$. Moreover, the element $gR \in X^1$ sill satisfies the classical
	 master equation and induces the same map on $X/I$ as $R$ and $R'$.
\end{lem}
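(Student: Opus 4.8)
The plan is to run a single step of an obstruction-theoretic induction: write the leading discrepancy between $R$ and $R'$ as a $\delta$-cocycle in the complex $(A^\bullet_{p,q},\delta)$, kill it using the acyclicity of lemma \ref{lem:assoccoho}, and realise the killing coboundary as the generator $c$ of the sought gauge equivalence. The hypothesis lets me write $R-R' = u + v$ with $u \in I^{(q)}\cap \FF^p X^1$ and $v \in \FF^{p+1}X^1$, so that the class $\bar u \in A^1_{p,q} \subset \gr^p X^1$ is the leading obstruction. First I would check that $\bar u$ is a cocycle: subtracting the two master equations $\{R,R\}=0$ and $\{R',R'\}=0$ and substituting $R = R'+u+v$ gives $0 = 2\{R',u\} \pmod{\FF^{p+1}X^2}$, since $\{R',v\} \equiv \delta(v) \in \FF^{p+1}$ and the remaining nonlinear terms $\{u,u\},\{u,v\},\{v,v\}$ all raise the filtration by lemma \ref{lem:ipbracket} (here $u \in I^{(q)} \subset \iphat$ because $q \gs 2$). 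Using $R' \equiv Q_0 \pmod{\iphat}$ (lemma \ref{lem:q0}) together with $\{Q_0,-\} = \delta + L_0$, and the fact that $L_0$ and brackets against $\iphat$ raise the filtration (lemmas \ref{lem:l0increase}, \ref{lem:ipbracket}), I obtain $\{R',u\} \equiv \delta(u) \pmod{\FF^{p+1}X^2}$. As $\delta$ preserves both the filtration and the ideal power $I^{(q)}$, the class of $\delta(u)$ lives in $A^2_{p,q}$ and equals $\delta\bar u$ by remark \ref{rem:differential-delta-assoc}; hence $\delta\bar u = 0$.

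Since $p \gs 2$ and we are in cohomological degree $j=1<p$, lemma \ref{lem:assoccoho} gives $H^1(A^\bullet_{p,q},\delta)=0$, so $\bar u = \delta\bar c$ for some $\bar c \in A^0_{p,q}$. I would lift $\bar c$ to an honest element $c \in I^{(q)}\cap\FF^p X^0$; because $q \gs 2$ we have $I^{(q)} \subset \iphat$, so $c \in \FF^p X^0 \cap \iphat$ is a legitimate generator of a gauge equivalence, and I set $g = \exp(\ad_c) \in G$ (lemma \ref{lem:exponentiates}).

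It then remains to check that $g$ does the job. Expanding $gR = R + \{c,R\} + \tfrac12\{c,\{c,R\}\} + \cdots$, every term beyond the first contains at least one bracket against $c \in \iphat$ and so raises the filtration by lemma \ref{lem:ipbracket}, placing the quadratic and higher terms in $\FF^{p+1}X^1$. For the linear term I write $\{c,R\} = -\delta(c) - L_0(c) - \{R-Q_0,c\}$, where $L_0(c) \in \FF^{p+1}$ (lemma \ref{lem:l0increase}) and $\{R-Q_0,c\} \in \FF^{p+1}$ (lemma \ref{lem:ipbracket}, as $R-Q_0 \in \iphat$). Thus $gR - R \equiv -\delta(c) \pmod{\FF^{p+1}X^1}$. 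Since $\delta$ preserves the filtration and the ideal power, the element $\delta(c) \in I^{(q)}\cap \FF^p X^1$ represents $\delta\bar c = \bar u$ by remark \ref{rem:differential-delta-assoc}, i.e. $\delta(c) \equiv u \pmod{\FF^{p+1}X^1}$. Combining, $gR \equiv R - \delta(c) \equiv R - u \equiv R' \pmod{\FF^{p+1}X^1}$, which is in fact sharper than the asserted congruence modulo $I^{(q+1)}\cap\FF^p X^1 + \FF^{p+1}X^1$. Finally $gR$ again solves the master equation by remark \ref{rem:gaugecme}, and since $c \in \iphat$ forces $gR \equiv R \equiv Q_0 \pmod{\iphat}$ (lemma \ref{lem:gaugestill}), the argument of lemma \ref{lem:q0} shows $\{gR,-\}\equiv\{Q_0,-\}\pmod I$, so $gR$ induces the same map $\delta$ on $X/I$ as $R$ and $R'$.

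The step I expect to be the main obstacle is the filtration and ideal-power bookkeeping rather than any conceptual point: one must verify simultaneously that the leading obstruction is captured \emph{exactly} in the bidegree $(p,q)$ piece $A^1_{p,q}$ of the associated graded, and that every correction produced by the exponential together with every cross term from the two master equations is pushed into $\FF^{p+1}X$, so that the degree-$(p,q)$ computation on $\gr X$ is undisturbed. The conceptual heart, by contrast, is cheap — it is exactly the vanishing $H^1(A^\bullet_{p,q},\delta)=0$ for $p \gs 2$ supplied by lemma \ref{lem:assoccoho}.
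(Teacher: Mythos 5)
Your overall strategy is the paper's: realise the leading discrepancy as a $\delta$-cocycle in $A^1_{p,q}$, kill it with lemma \ref{lem:assoccoho}, lift the primitive to $c \in \FF^pX^0\cap I^{(q)}$, and exponentiate. The cocycle computation and the treatment of the linear term $\{c,R\} \equiv -\delta(c) \pmod{\FF^{p+1}X^1}$ both match the paper and are correct.

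The gap is in your disposal of the quadratic and higher terms of $\exp(\ad_c)$. You claim they lie in $\FF^{p+1}X^1$ ``by lemma \ref{lem:ipbracket}, since $c \in \iphat$.'' But lemma \ref{lem:ipbracket} reads $\{X^1\cap\iphat,\FF^mX\}\subset\FF^{m+1}X$: it requires the $\iphat$-element to have \emph{degree one}, and it is precisely the compensating factor of negative degree forced by $\deg = 1$ that produces the filtration gain. Your generator $c$ has degree zero, and for $c\in X^0\cap\iphat$ the operator $\ad_c$ does \emph{not} raise the filtration in general: with $\deg e_1^* = \deg e_2^* = 1$ and $m\in\MM^{-2}$, the element $c = e_1^*e_2^*m \in \FF^2X^0\cap\iphat$ satisfies $\{c,m^*x\} = \pm e_1^*e_2^*x$, which has the same filtration degree as $m^*x$. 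So the cited lemma does not deliver $\ad_c^jR\in\FF^{p+1}X^1$ for $j\gs 2$, and lemma \ref{lem:optimal} only gives $\FF^pX^1$. This is exactly why the paper instead tracks the number of positive factors: by lemmas \ref{lem:iqx1} and \ref{lem:ipnil} one gets $\ad_cR\in I^{(q)}$ and $\ad_c^jR\in I^{(q+1)}$ for $j\gs2$, which is why the lemma's conclusion is only the congruence modulo $I^{(q+1)}\cap\FF^pX^1+\FF^{p+1}X^1$, and why theorem \ref{thm:gaugeequivalence} then needs an inner induction over $q=2,\dots,p+1$ at each filtration level. The fact that your conclusion came out ``sharper than the asserted congruence'' should have been the warning sign: if it held as you justify it, the whole $q$-bookkeeping of the lemma and of theorem \ref{thm:gaugeequivalence} would be superfluous. (A monomial-by-monomial analysis of which dual pairings can avoid a filtration gain suggests the sharper estimate may in fact be provable, but that requires a genuine combinatorial argument about the positive weights of the factors of $c$ and of $R$, not a citation of lemma \ref{lem:ipbracket}; as written, the step is unjustified.)
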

\begin{proof}
  Let $\delta$ be the common differential on $X/I$ and $Q_0$ be the
  map $\delta$ as an element of $X$. Hence $R \equiv Q_0 \equiv R'
  \pmod \iphat$ by lemma \ref{lem:q0}.  Define $v := R-R' \in I^{(q)}
  \cap \FF^p X^1+ \FF^{p+1} X^1 \subset \FF^p X^1$. We have
  \begin{align*}
    0 = \{R + R', R- R'\} = 2 \{Q_0, v\} + \{R-Q_0, v\} +
    \{R'-Q_0, v\} \equiv 2 \{Q_0, v\} \pmod {\FF^{p+1} X^2}
  \end{align*}
  by lemma \ref{lem:ipbracket}. By lemma \ref{lem:induceddiff}, the
  maps $d_R$ and $d_{R'}$ also induce the same map on of $\gr
  X$ which we denote by $\delta$ too. Since $\delta v = \{Q_0, v\} - L_0 v
  \equiv \{Q_0, v\} \pmod{\FF^{p+1} X^2}$ by lemma \ref{lem:l0increase}, the
  above implies $\delta v \equiv 0 \pmod{ \FF^{p+1} X^2}$. Hence $v$ defines a
  cocycle $\bar{v}$ in $\gr^p X^1$. We have $p>1$. By lemma
  \ref{lem:assoccoho} there exists $\bar{c} \in \gr^p X^{0}$ with $\delta
  \bar{c} = \bar{v}$ and a corresponding representative $c \in
  \FF^p X^0 \cap I^{(q)}$, so that $\delta c \equiv v \pmod{ \FF^{p+1} X^1}$.
  This $c$ will be the generator of the gauge equivalence we seek. Set $g :=
  \exp \ad_c$. We have 
  \begin{align*}
    g R - R' = v + \sum_{j=1}^\infty \frac{1}{j!} \ad_c^j R 
    \equiv 
	 \delta(c) - d_R(c) + \sum_{j=2}^\infty \frac{1}{j!} \ad_c^j R \equiv
	 \sum_{j=2}^\infty \frac{1}{j!} \ad_c^j R  \pmod{ \FF^{p+1}
      X^1} .
  \end{align*}
  From lemma \ref{lem:optimal}, we know that this sum is in $\FF^p
  X^1$. We are left to show that the sum is in $I^{(q+1)}$.
  By lemma \ref{lem:iqx1} we have $\ad_c R \in I^{(q)}$. By lemma
  \ref{lem:ipnil} we obtain $\ad_c^j R \in I^{(q+1)}$ for all $j
  \geqslant 2$ since $q \geqslant 2$.

  By remark \ref{rem:gaugecme}, $g R$ still
  satisfies the classical master equation and $gR \equiv Q_0 \pmod{ \iphat}$
  by lemma \ref{lem:gaugestill}, whence all maps $R, R', gR$ induce the same
  map on $X/I$.
\end{proof}
\begin{thm}
  \label{thm:gaugeequivalence}
  Let $R, R' \in X^1$ be solutions of the classical master equation
  with differentials inducing the same maps on $X/I$. Then there
  exists a gauge equivalence $g \in G$ with $R' = g R$.
\end{thm}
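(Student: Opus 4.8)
The plan is to prove the theorem by iterating the single inductive step of Lemma~\ref{lem:gaugeinduction}, organized as a double induction on the filtration degree $p$ and the ideal power $q$, and then passing to a limit. First I would record two bookkeeping facts. Since $R$ and $R'$ induce the same map on $X/I$, Lemma~\ref{lem:q0} gives $R \equiv Q_0 \equiv R' \pmod{\iphat}$, so that $v := R-R' \in \iphat = I^{(2)}$; and since each generator of $I = \FF^1 X$ has positive degree, we have $I^{(q)} \subseteq \FF^q X$ for all $q$. In particular $v \in I^{(2)} \subseteq \FF^2 X^1$, which will be the starting point of the induction.

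The outer induction is on $p \gs 2$, with inductive hypothesis that, after finitely many gauge equivalences, $R$ has been replaced by an element (still called $R$) which solves the classical master equation, induces the same map on $X/I$, and satisfies $R \equiv R' \pmod{\FF^p X^1}$. The base case $p=2$ holds with no transformation since $v \in \FF^2 X^1$. For the step at level $p$, note that $R-R' \in \iphat \cap \FF^p X^1 = I^{(2)} \cap \FF^p X^1$, which is exactly the hypothesis of Lemma~\ref{lem:gaugeinduction} with $q=2$; since $R-R'$ stays in $\iphat$ under every gauge equivalence by Lemma~\ref{lem:gaugestill}, this base is automatic at each level. I then run an inner induction on $q$ from $2$ to $p$: each application of Lemma~\ref{lem:gaugeinduction} produces a gauge equivalence with generator in $\FF^p X^0 \cap \iphat$, improves the congruence from $\pmod{I^{(q)} \cap \FF^p X^1 + \FF^{p+1} X^1}$ to $\pmod{I^{(q+1)} \cap \FF^p X^1 + \FF^{p+1} X^1}$, and preserves both the classical master equation and the induced map on $X/I$. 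After the application at $q=p$ the difference lies in $I^{(p+1)} \cap \FF^p X^1 + \FF^{p+1} X^1$; but $I^{(p+1)} \subseteq \FF^{p+1} X$, so this reduces to $\FF^{p+1} X^1$. Thus the modified $R$ satisfies $R \equiv R' \pmod{\FF^{p+1} X^1}$, completing the outer step and feeding the next level.

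This produces a sequence of gauge equivalences $g_1, g_2, \dots$, say $g_k = \exp \ad_{c_k}$ with $c_k \in \FF^{p_k} X^0 \cap \iphat$ and $p_k \to \infty$, such that $R^{(k)} := g_k \cdots g_1 R$ satisfies $R^{(k)} \equiv R' \pmod{\FF^{N_k} X^1}$ with $N_k \to \infty$. It remains to assemble these into a single $g \in G$ with $gR = R'$. Since by Lemma~\ref{lem:ipnil} the generators act nilpotently on each quotient $X/\FF^N X$ and $p_k \to \infty$, on any fixed $X/\FF^N X$ all but finitely many $g_k$ act trivially; hence the composites $g_k \cdots g_1$ stabilize modulo $\FF^N X$ and define a compatible family in the inverse limit, that is, a well-defined element $g \in G$ by Lemma~\ref{lem:exponentiates}. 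Using the same finiteness on each quotient, $gR \equiv R^{(k)} \equiv R' \pmod{\FF^N X^1}$ for all $N$, and since the completion $X$ is separated for the filtration this forces $gR = R'$.

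The main obstacle is this last convergence argument: the single inductive step is already furnished by Lemma~\ref{lem:gaugeinduction}, so the real work lies in controlling the infinite product of gauge equivalences. The two points to verify carefully are that the composite limit genuinely lands in $G$ (resting on the pro-nilpotence of $\ad \gfrak$ on the quotients $X/\FF^N X$ from Lemma~\ref{lem:ipnil} together with the closedness of $\gfrak$ from Lemma~\ref{lem:exponentiates}), and that the filtration degrees of the generators $c_k$ tend to infinity, so that the product converges and the identity $gR = R'$ follows from separatedness. The persistence of the classical master equation and of the induced map on $X/I$ at every stage, needed for the hypotheses of Lemma~\ref{lem:gaugeinduction} to remain valid throughout, is guaranteed by the final clause of that lemma together with Lemma~\ref{lem:gaugestill}.
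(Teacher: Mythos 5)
Your proposal is correct and follows essentially the same route as the paper: a double induction in the filtration degree $p$ and the ideal power $q$ driven by Lemma~\ref{lem:gaugeinduction}, followed by convergence of the resulting infinite product of gauge equivalences. The only place you are thinner than the paper is the final step of showing that the limit of the composites actually lies in $G$: the paper applies the Campbell--Baker--Hausdorff formula to show that the generators $\gamma_m$ of the partial products $g_m\cdots g_2$ satisfy $\gamma_{m+1}\equiv\gamma_m \pmod{\FF^{m+1}X^0}$, hence converge to a single generator $\gamma\in\gfrak$ with $g=\exp\ad_\gamma$, which is exactly the point you flag but do not carry out.
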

\begin{proof}
  First we inductively construct a sequence of gauge equivalences
  $g_2, g_3, \dots$ such that for all $p \geqslant 2$ we have $g_p \cdots g_2 R
  \equiv R' \pmod {\FF^{p+1} X^1 \cap \iphat}$. By lemma \ref{lem:gaugestill},
  it suffices to ensure that $g_p \cdots g_2 R \equiv R' \pmod {\FF^{p+1}
    X^1}$
  
	 For $p = 2$  note that $R - R' \in I^{(2)} \subset (I^{(2)} \cap \FF^2 X^1)
	 + \FF^3 X^1$ by lemma \ref{lem:q0}. Now apply lemma
	 \ref{lem:gaugeinduction} with $q = p$ to obtain $g_2$. 
	 
	 Next, assume the $g_2, \dots, g_p$ have been constructed to fulfill
  \begin{align*}
    R'' := g_p \cdots g_2 R \equiv R' \pmod{ \FF^{p+1} X^1 \cap \iphat}.
  \end{align*}
  By remark \ref{rem:gaugecme}, $R''$ solves the classical master
  equation. Moreover $R'' \equiv Q_0 \pmod{ \iphat}$ by lemmas \ref{lem:q0} and
  \ref{lem:gaugestill}. Hence the pair $(R'', R')$
  satisfies the requirements of lemma \ref{lem:gaugeinduction} with
  $q=2$. We obtain a gauge equivalence $g_{p+1,2}$ with generator
  $c_{p+1, 2} \in \FF^{p+1} X^0 \cap I^{(2)}$ and
  \begin{align*}
    g_{p+1,2} R'' \equiv R' \pmod { I^{(3)} \cap \FF^{p+1} X^1 + \FF^{p+2}
      X^1} .
  \end{align*}
  If we continue to apply the lemma for $q = 3, \dots, p+1$ we obtain gauge
  equivalences $g_{p+1,3}, \dots, g_{p+1,p+1}$ with generators
  $c_{p+1, 3}, \dots, c_{p+1,p+1} \in \FF^{p+1} X^0 \cap I^{(2)}$ such that
  \begin{align*}
    g_{p+1, p+1} \cdots g_{p+1, 2} R'' \equiv R' \pmod{\FF^{p+2} X^1
      \cap \iphat}
  \end{align*}
  Set $g_{p+1} := g_{p+1,p+1} \cdots g_{p+1,2}$. The construction of the
  sequence is complete.
  
  We claim that $\lim_{m \to \infty} g_m g_{m-1} \cdots g_2$ converges
  point-wise to a gauge equivalence $g$. Since all generators $c_{m,
  j}$ are in $\FF^{m} X^0 \cap I^{(2)}$ and this set is closed under the
  bracket, the Campbell-Baker-Hausdorff formula implies that the generator
  $c_m$ of $g_m$ is also in $\FF^{m} X^0 \cap I^{(2)}$. Now denote the
  generator of $g_m \cdots g_2$ by $\gamma_m$. We have $\gamma_m \in I^{(2)}$
  by the CBH formula. Moreover, the CBH
  formula implies that the generator $\gamma_{m+1}$ of $g_{m+1} g_m
  \cdots g_2$ satisfies
  \begin{align*}
    \gamma_{m+1} = c_{m+1} + \gamma_{m} + \text{higher terms}
  \end{align*}
  where ``higher terms'' are terms involving commutators of $c_{m+1}$
  and $\gamma_m$ where each contains at least one instance of
  $c_{m+1} \in \FF^{m+1} X^0$. Since $\gamma_m \in X^0$ all these terms
  are in $\FF^{m+1} X^0$. Hence
  \begin{align*}
    \gamma_{m+1} \equiv \gamma_m \pmod {\FF^{m+1} X^0}
  \end{align*}
  Hence there exists $\gamma \in X^0$ with $\gamma_m \to \gamma$ as $m
  \to \infty$. We set $g := \exp \ad_\gamma$. By lemma
  \ref{lem:iphattopclosed} this element defines a gauge
  equivalence. We claim that $\exp \ad_{\gamma_m} = g_m \dots g_2 \to g$
  point-wise. Let $x \in X^n$. Then
  \begin{align*}
    \exp \ad_{\gamma_m} x - \exp \ad_{\gamma} x & = \{\gamma_m -
    \gamma, x\} + \frac12 \{\gamma_m,\{\gamma_m,x\}\} - \frac12 \{\gamma,
    \{\gamma,x\}\} + \cdots
  \end{align*}
  Modulo a fixed $\FF^k X$, this sum is finite and the number of terms
  does not depend on $m$ since all $\gamma_m$ are at least in
  $\iphat$. Since $\gamma_m \to \gamma$ and the bracket is continuous
  in fixed degree by lemma \ref{lem:bracketcontinuous}, we obtain the
  claim.

  Finally $\exp \ad_{\gamma_{m+l}} R - R' \in \FF^{m}
  X^1$ implies $g R - R' \in \FF^{m} X^1$ for all $m$ which implies
  $gR = R'$.
\end{proof}

\subsection{Trivial BFV Models}
\label{sec:trivial}

The key construction in the proof of uniqueness for different spaces
$X$ in theorem \ref{thm:bfvexists} is the notion of stable
equivalence. The idea of adding variables that do not change the
cohomology was already present in \cite{henneaux-teitelboim}. It was
first explicitly formalized in \cite{felder-kazhdan} in a similar situation
in the Lagrangian setting. Roughly speaking, one proves that different
BRST complexes for the same pair $(P,J)$ are quasi-isomorphic by adding
more variables of non-zero degree. This is formalized by taking
products with so-called trivial BFV models.

Let $P = \R$ with zero bracket and $J = 0$. Then $P$ is a unital,
Noetherian Poisson algebra and $J$ is a coisotropic ideal. Let $\NN$
be a negatively graded vector space and $\NN[1]$ the same space with
degree shifted by $-1$. Define the differential $\delta$ on $\MM = \NN
\oplus \NN[1]$ by $\delta( a \oplus b) = b \oplus 0$. Set $T = P
\otimes \Sym(\MM)$ and extend $\delta$ to an odd, $P$-linear derivation on
$T$.
\begin{lem}
  \label{lem:trivial-cohomology}
  The complex $(T, \delta)$ has trivial cohomology 
  and hence defines a Tate resolution of $P/J = \R$.
\end{lem}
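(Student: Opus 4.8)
The plan is to exhibit an explicit contracting homotopy, fitting the framework of equation \eqref{eq:sdeltacommute}: I will produce an $\R$-linear map $s\colon T \to T$ of degree $-1$ together with the projection $\pi\colon T \to \R$ onto the constants $\Sym^0(\MM)$ such that $\delta s + s\delta = \id - \pi$. Since we work in $T$ itself rather than in its completion, every element is a finite sum of monomials involving only finitely many generators, so all operators below are defined termwise and no convergence issue arises; in particular the argument is insensitive to $\NN$ being infinite-dimensional.

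First I would build the homotopy on generators. The differential $\delta$ kills the summand $\NN$ and maps $\NN[1]$ isomorphically onto $\NN$ while raising degree by one. Let $\sigma\colon \MM \to \MM$ be the degree $-1$ endomorphism that vanishes on $\NN[1]$ and equals the canonical identification $\NN \xrightarrow{\sim} \NN[1]$ (lowering degree by one) on $\NN$. A direct check on each of the two summands gives $\delta\sigma + \sigma\delta = \id_\MM$. Now extend both $\delta$ and $\sigma$ to odd $\R$-linear derivations of $T = \Sym(\MM)$, as is already done for $\delta$. The graded commutator of two odd derivations is again a derivation, so $E := \delta\sigma + \sigma\delta$ is a derivation of $T$; since it restricts to $\id_\MM$ on generators, $E$ is the Euler (word-length) operator, acting as multiplication by $d$ on the word-length component $\Sym^d(\MM)$. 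The only place characteristic zero enters is here: on $\Sym^d(\MM)$ with $d \gs 1$ the scalar $d$ is invertible, so setting $s := \tfrac1d \sigma$ on $\Sym^d(\MM)$ for $d \gs 1$ and $s := 0$ on $\Sym^0(\MM) = \R$ yields $\delta s + s \delta = \id - \pi$, as both $\delta$ and $\sigma$ preserve word length.

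Finally I would read off the cohomology. Because $\MM$ is negatively graded, every nonconstant monomial has strictly negative degree, so $T^0 = \Sym^0(\MM) = \R$ and $\delta$ cannot map into degree $0$; hence $H^0(T,\delta) = \R$. For $j \neq 0$ one has $\pi x = 0$ for $x \in T^j$, so any cocycle satisfies $x = \delta(s x) + s(\delta x) = \delta(sx)$ and is a coboundary, giving $H^j(T,\delta) = 0$. Thus the cohomology is $\R$ concentrated in degree $0$, which is exactly the statement that $(T,\delta)$ resolves $P/J = \R$. The computation is routine; the only real content is the identification $\delta\sigma + \sigma\delta = E$, where the signs in the Leibniz rule for two odd operators must be tracked, together with the characteristic-zero inversion of the word length. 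Alternatively one could decompose $\MM$ into the two-dimensional $\delta$-invariant subspaces spanned by a generator and its image, note that each contributes an acyclic two-variable factor $\Sym(\la e,\sigma e\ra)$ with cohomology $\R$, and invoke the K\"unneth formula together with the fact that cohomology commutes with the filtered colimit $T = \varinjlim_F \bigotimes_{i\in F}\Sym(\la e_i,\sigma e_i\ra)$ over finite sets of pairs; the homotopy argument I favour avoids this colimit bookkeeping.
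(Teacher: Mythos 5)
Your proof is correct and follows essentially the same route as the paper: both construct the degree $-1$ map on $\MM$ that inverts the identification $\NN[1]\xrightarrow{\sim}\NN$, extend it as an odd derivation, observe that the anticommutator with $\delta$ is the word-length (Euler) operator, and conclude that the cohomology is concentrated in $\Sym^0(\MM)=\R$. Your only cosmetic difference is rescaling by $1/d$ to get a genuine contracting homotopy $\delta s+s\delta=\id-\pi$, whereas the paper leaves the anticommutator as $k\,\id$ on $\Sym^k(\MM)$ and reads off the vanishing word-length component by word-length component; the content is identical.
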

\begin{proof}
  On $\MM$ there is a map $s( a \oplus b ) = 0 \oplus a$ with $s
  \delta + \delta s = \id_{\MM}$. Extend $s$ to an odd, $P$-linear derivation
  on $T$. Then $s \delta + \delta s$ is an even derivation on $T$ which is the
  identity on $\MM$ and hence
  \begin{align*}
    s \delta + \delta s = k \id \quad \text{on $P \otimes \Sym^k(\MM)$}.
  \end{align*}
  Since both $s$ and $\delta$ preserve the $k$-degree, we have
  \begin{align*}
    H^j( T, \delta) = \oplus_k H^j( P \otimes \Sym^k(\MM), \delta) =
    H^j( P \otimes \Sym^0(\MM), \delta) =
    \begin{cases}
      P, & \text{if $j=0$}\\
      0, & \text{otherwise}
    \end{cases}
  \end{align*}
\end{proof}
Complete the space
$Y_0 = P \otimes \Sym(\MM \oplus \MM^*)$ to the space $Y$. Let $e_j$
be a homogeneous basis of $\MM$ such that $\delta(e_j) = e_k$ for some
$k$ depending on $j$. Define $Q_0 = \sum_j (-1)^{1+d_j} e^*_j \delta(e_j)$ as in
section \ref{lem:l0increase}. Since $\{Q_0, Q_0\} = \sum_{j,k} \pm
e^*_j\{\delta(e_j), \delta(e_k)\}e^*_k = 0$, the construction of section
\ref{sec:existence} yields the BRST charge $S = Q_0$. Hence $(Y,S)$ is
a BFV model for $(P,J) = (\R,0)$. BFV models arising from this
construction are called \emph{trivial}. 
\begin{lem}
  \label{lem:trivial-induced}
  For trivial BFV models, $d_S$ equals the induced map of $\Delta = \delta
  \oplus \delta^* : \MM \oplus \MM^* \to \MM \oplus \MM^*$ on $Y$, where the
  dual differential $\delta^* : \MM^* \to \MM^*$ is given by $\delta^* ( u ) =
  (-1)^{\deg u} u \circ \delta$, i.e. $\delta^*( a \oplus b) = (-1)^{j} 0
  \oplus a$ on $(\MM^*)^j$. Conversely, the map $d_S$ induces a differential on
	$Y_0$ which coincides with the induced map of $\delta \oplus \delta^*$ on
	$Y_0$. 
\end{lem}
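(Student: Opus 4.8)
The plan is to use that $d_S = \{S, -\}$ and the derivation extension of $\Delta = \delta \oplus \delta^*$ are both derivations for the product on $Y$, so that it is enough to check they agree on the generators $\MM \oplus \MM^*$ (both vanish on the scalars $P = \R$), after which continuity of the bracket (lemma \ref{lem:bracketcontinuous}) propagates the equality to the whole completion. The computational input is the splitting $d_S = \{Q_0, -\} = \delta + L_0$ from section \ref{sec:first-approximation}, where, reading off the proof of lemma \ref{lem:l0increase}, one has $L_0(x) = \sum_j (-1)^{1+d_j} e_j^* \{\delta(e_j), x\}$. The feature that makes the trivial case clean is that here each $\delta(e_j)$ is a single basis vector $e_{k(j)}$ (or zero), so the brackets below yield scalars rather than higher-order corrections.

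First I evaluate on a generator $e_l \in \MM$. Since $\delta(e_j)$ and $e_l$ both lie in $\MM$ and the Poisson bracket pairs $\MM$ only against $\MM^*$, every bracket $\{\delta(e_j), e_l\}$ vanishes, so $L_0(e_l) = 0$ and $d_S(e_l) = \delta(e_l)$, which is exactly the $\MM$-component of $\Delta$. Next I evaluate on a generator $e_l^* \in \MM^*$. By lemma \ref{lem:deltaexpression}, $\delta(e_l^*) = \sum_j (-1)^{1+d_j}\delta(e_j)\{e_j^*, e_l^*\} = 0$ because $\{e_j^*, e_l^*\} = 0$, so $d_S(e_l^*) = L_0(e_l^*) = \sum_j (-1)^{1+d_j} e_j^* \{\delta(e_j), e_l^*\}$. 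The bracket $\{\delta(e_j), e_l^*\}$ extracts the coefficient of $e_l$ in $\delta(e_j)$, so only the indices $j$ with $\delta(e_j) = e_l$ survive, and $L_0(e_l^*)$ becomes a scalar multiple of $\sum_{j : \delta(e_j) = e_l} e_j^*$.

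It remains to match this scalar with the one in $\delta^*(e_l^*) = (-1)^{\deg e_l^*} e_l^* \circ \delta = (-1)^{\deg e_l^*} \sum_{j : \delta(e_j) = e_l} e_j^*$. For the surviving indices one has $d_j = d_l - 1$ since $\delta$ raises degree by one, whence $(-1)^{1+d_j} = (-1)^{d_l} = (-1)^{\deg e_l^*}$; combined with the normalization of the pairing $\{e_l, e_l^*\}$ fixed in section \ref{sec:super-poisson} (the same normalization that makes lemma \ref{lem:deltaexpression} hold), the explicit factor, the pairing sign and the sign defining $\delta^*$ collapse to the single factor $(-1)^{\deg e_l^*}$, giving $d_S(e_l^*) = \delta^*(e_l^*)$. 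Thus $d_S$ and $\Delta$ agree on all generators, hence on $Y_0$ by the derivation property and on $Y$ by continuity. For the converse, note that $\Delta$ sends each generator to a finite sum of generators, so its derivation extension preserves the uncompleted algebra $Y_0$; since $d_S$ coincides with this extension, $d_S$ restricts to $Y_0$ and agrees there with the induced map of $\delta \oplus \delta^*$, even though $S = Q_0$ itself lies only in the completion $Y$.

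The only real obstacle I anticipate is the sign bookkeeping on $\MM^*$: one must combine the explicit factor $(-1)^{1+d_j}$ appearing in $Q_0$, the Koszul sign hidden in the pairing bracket, and the sign in the definition of $\delta^*$, and confirm via $d_j = d_l - 1$ that they reduce to $(-1)^{\deg e_l^*}$. Everything else follows formally from the derivation property and from the vanishing of the $\MM$--$\MM$ and $\MM^*$--$\MM^*$ brackets.
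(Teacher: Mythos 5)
Your proof is correct and follows essentially the same route as the paper's: the paper's own proof simply asserts that one checks $d_S = \Delta$ "by acting on the generators $e_j$ and $e_j^*$", extends by density and continuity to $Y$, and notes the sum defining $d_S$ is effectively finite on $Y_0$ for the converse. You have merely made explicit the sign bookkeeping (the identity $(-1)^{1+d_j}=(-1)^{\deg e_l^*}$ for the surviving indices, using $d_j=d_l-1$) that the paper leaves to the reader, and your verification is accurate.
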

\begin{proof}
  By acting on the generators $e_j$ and $e_j^*$ defined above one sees
  that the induced differential $d_S$ on $Y_0$ equals
	$\Delta = \delta \oplus \delta^*$. Since $Y_0$ is dense in $Y$ and both maps
	are continuous the first claim follows. The second claim follows from the
	observation that the sum $d_S(x) = \sum_j \pm \{e_j \delta(e_j), x\}$ is
	effectively finite if $x \in Y_0$.
\end{proof}
\begin{lem}
\label{lem:trivial-bfv-coho}
  We have $H^j( Y, d_S ) = 0$ for $j \neq 0$ and $H^0( Y, d_S ) = \R$.
	The same statement holds if we replace $Y$ by $Y_0$.
\end{lem}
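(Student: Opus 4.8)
The plan is to invoke Lemma~\ref{lem:trivial-induced} to replace $d_S$ by the derivation induced by $\Delta = \delta \oplus \delta^*$ on $\Sym(\MM \oplus \MM^*)$, and then to run the Euler-operator homotopy argument of Lemma~\ref{lem:trivial-cohomology}, now carried out on the full space $V := \MM \oplus \MM^*$ rather than on $\MM$ alone. The first step is to record that $(V,\Delta)$ has vanishing cohomology with an explicit contracting homotopy. The map $s$ from the proof of Lemma~\ref{lem:trivial-cohomology} satisfies $\delta s + s\delta = \id_\MM$; dualizing it (a routine sign check using $\delta^*(u) = (-1)^{\deg u} u \circ \delta$) produces $s^* : \MM^* \to \MM^*$ with $\delta^* s^* + s^* \delta^* = \id_{\MM^*}$. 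Since $\Delta$ and $\sigma := s \oplus s^*$ are block-diagonal, there are no cross terms and $\Delta \sigma + \sigma \Delta = \id_V$.

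Next I would extend $\sigma$ to an odd, $P$-linear derivation $\Sigma$ on $\Sym(V)$, continue it to the completion $Y$, and form $N := d_S \Sigma + \Sigma d_S$. Exactly as in Lemma~\ref{lem:trivial-cohomology}, $N$ is an even derivation which restricts to $\id_V$ on generators, hence acts as multiplication by the symmetric degree $k$ on $\Sym^k(V)$. Both $d_S$ and $\Sigma$ preserve this $k$-degree, so for the non-completed space $Y_0 = \Sym(V)$ the statement is immediate: a cocycle $y = \sum_k w_k$ is a \emph{finite} sum whose components $w_k$ are each cocycles, and for $k \geq 1$ one has $w_k = \tfrac1k N w_k = \tfrac1k d_S(\Sigma w_k)$; only the $\Sym$-degree-zero part $w_0 \in \R$ survives, giving $H^0(Y_0, d_S) = \R$ and $H^j(Y_0,d_S) = 0$ for $j \neq 0$.

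The main work is to push this through the completion $Y$, where a cocycle is an infinite sum over $\Sym$-degrees and the homotopy series $\sum_{k \geq 1}\tfrac1k \Sigma w_k$ must be shown to converge in the filtration topology. The crucial estimate is that a monomial of $\Sym$-degree $k$ and cohomological degree $n$ has $\MM^*$-degree at least $(k+n)/2$: each $\MM^*$-factor has degree $\geq 1$ and each $\MM$-factor degree $\leq -1$, so if there are $r$ resp.\ $k-r$ such factors, then the $\MM^*$-degree $m$ satisfies $m \geq r$ and $m - n \geq k - r$, whence $2m - n \geq k$. Consequently the $\Sym$-degree-$k$ component $w_k$ of any $y \in Y^n$ lies in $\FF^{\lceil (k+n)/2\rceil} Y$, so $w_k \to 0$ and the decomposition $y = \sum_k w_k$ converges by Lemma~\ref{lem:filtrationsums}. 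Since $\Sigma$ lowers $\MM^*$-degree by at most one (it acts through $s^*$ on a single $\MM^*$-factor while preserving the $\Sym$-degree), the same estimate shows $\sum_{k\geq 1}\tfrac1k\Sigma w_k$ converges as well. Using continuity of $d_S$ (Lemma~\ref{lem:bracketcontinuous}) and of the $\Sym$-degree projections together with the closedness of the filtration (Lemma~\ref{lem:filtrationclosed}), each $w_k$ is again a cocycle and $d_S\bigl(\sum_{k\geq1}\tfrac1k\Sigma w_k\bigr) = \sum_{k\geq 1} w_k = y - w_0$. Hence every cocycle is cohomologous to its constant part $w_0 \in \R$, which is never a coboundary since $d_S$ preserves the $\Sym$-degree and there is nothing in $\Sym$-degree $0$ and cohomological degree $-1$; this yields $H^0(Y, d_S) = \R$ and $H^j(Y, d_S) = 0$ for $j \neq 0$.

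I expect the convergence bookkeeping of the previous paragraph to be the only genuine obstacle: verifying that both the $\Sym$-degree decomposition of a cocycle and the homotopy series converge, which rests entirely on the estimate linking $\Sym$-degree to $\MM^*$-degree. Everything else is a verbatim repetition of the Euler-operator computation in Lemma~\ref{lem:trivial-cohomology}, now fed the full homotopy $\sigma = s \oplus s^*$ on $V$ instead of $s$ on $\MM$.
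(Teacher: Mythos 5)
Your proposal is correct and follows essentially the same route as the paper: extend a contracting homotopy for $(\MM\oplus\MM^*,\Delta)$ to an odd derivation so that the anticommutator with $d_S$ is the Euler operator $k\,\id$ on form degree $k$, then handle the completion by decomposing cocycles into form-degree components. The convergence estimate you derive by hand (form degree $k$ and cohomological degree $n$ force $\MM^*$-degree at least $(k+n)/2$) is exactly the content of the paper's lemmas \ref{lem:form-degree-decomposition}, \ref{lem:form-degree-sum} and \ref{lem:form-degree-convergence}, which it simply cites; your argument that constants are not coboundaries (via $\Sym$-degree preservation) is a harmless variant of the paper's.
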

\begin{proof}
  Since the cohomology of the complex $(\MM \oplus \MM^*, 
	\Delta)$ is trivial, there is a map $s : \MM \oplus \MM^* \to \MM \oplus
	\MM^*$ of degree $-1$ with $\Delta s + s \Delta = \id$. Its extension to
	$Y_0$ as a derivation thus
	satisfies 
	\begin{align}
					\label{eq:ch}
    s \Delta + \Delta s = j \id \quad \text{on $Y_0^{n,j}$}. 
  \end{align}
	Since both maps $\Delta$ and $s$ preserve form degree, we obtain the
	statement about $H^j(Y_0, d_S)$. 

  Let $x \in Y^n$ with $d_S x = 0$. By lemma
  \ref{lem:form-degree-decomposition}, there are $x_j \in Y^n$ of form
  degree $j$ with $\sum_j x_j = x$. By continuity of the bracket, $ 0
  = \sum_{j} d_S x_{j}$.  By lemma \ref{lem:form-degree-sum}, $d_S
  x_{j} = 0$ for all $j$, since $d_S$ preserves form degree. 
	Lift $\Delta$ and $s$ to $Y$. Then equation \ref{eq:ch} is still valid on
	$Y^{n,j}$. 
  For $j >0$, there are $y_{j} \in Y^{n-1, j}$ with $d_S
  y_{j} = x_{j}$. By lemma \ref{lem:form-degree-convergence}, the
  element $y = \sum_{j> 0} y_{j}$ is well defined and
  \begin{align*}
    x = \sum_{j>0}x_{j} + x_{0} = d_S y + x_{0}
  \end{align*} 
  with $x_{0} \in Y^{n,0}$. For $n \neq 0$ this is the empty set
  and hence $x$ is exact. For $n = 0$ this set is $\R$. We are left to
  show that two distinct $d_S$-closed elements of $\R$ always define
  distinct cohomology classes. This follows from the fact that each
  summand in $d_s y
  = \sum_j (\pm\delta(e_j) \{e^*_j, y\} \pm e^*_j \{\delta(e_j), y\})$ is
  zero or has 
  a factor of nonzero degree since $\delta(e_j) = e_k$ for some $k$
  depending on $j$.
\end{proof}

\subsection{Stable Equivalence}
\label{sec:stable}

Let $P$ be a unital, Noetherian Poisson algebra and $J \subset P$ a
coisotropic ideal. Let $(X,R)$ be a BFV model for $(P,J)$ and $(Y,S)$
be a trivial BFV model. Let $\MM$ and $\NN$ be the corresponding
vector spaces. Define $Z$ as the completion of $Z_0 := X_0 \otimes Y_0
= P \otimes \Sym( \UU \oplus \UU^*)$ where $\UU = \MM \oplus \NN$ and $L = R
\otimes 1 + 1 \otimes S$. Both $X$ and $Y$
naturally sit inside $Z$ as Poisson subalgebras since the inclusions
$X_0 \to Z_0$ and $Y_0 \to Z_0$ preserve the respective filtrations.
\begin{lem}
  \label{lem:tensorstable}
  The pair $(Z,L)$ defines another BFV model for $(P,J)$.
\end{lem}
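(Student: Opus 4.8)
The plan is to verify the three defining conditions of a BFV model for the pair $(Z,L)$, using throughout that $X$ and $Y$ embed in $Z$ as Poisson subalgebras built from the disjoint pairs of conjugate variables $(\MM,\MM^*)$ and $(\NN,\NN^*)$.

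First I would establish the classical master equation $\{L,L\}=0$. The bracket on $Z$ is governed by the pairing of $\UU=\MM\oplus\NN$ with $\UU^*=\MM^*\oplus\NN^*$, which is block-diagonal, together with the bracket on $P$; the base $\R$ of the trivial model is central. Since $R$ involves only the variables $(P,\MM,\MM^*)$ while $S$ involves only $(\NN,\NN^*)$, their polynomial representatives Poisson-commute term by term, so $\{R\otimes 1,1\otimes S\}=0$ by continuity of the bracket (lemma~\ref{lem:bracketcontinuous}). Expanding $\{L,L\}=\{R\otimes 1,R\otimes 1\}+2\{R\otimes 1,1\otimes S\}+\{1\otimes S,1\otimes S\}$ and using $\{R,R\}=0$ in $X$ and $\{S,S\}=0$ in $Y$ then yields $\{L,L\}=0$.

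Next I would compute the cohomology of $(Z/I,d_L)$. The quotient is the uncompleted tensor product $Z/I=P\otimes\Sym(\UU)=T_X\otimes_{\K}\Sym(\NN)$, where $T_X=X/I$ is the Tate resolution of $P/J$ underlying $(X,R)$ with induced differential $\delta_X$, and $\Sym(\NN)=Y/I$ carries the differential $\delta_Y$ induced by $S$. As established in Section~\ref{sec:properties}, the induced differential $d_L$ on this quotient is a derivation, so it is determined by its values on generators: on $e\in\MM$ the summand $\{1\otimes S,e\}$ vanishes and we recover $\delta_X(e)$, on $f\in\NN$ the summand $\{R\otimes 1,f\}$ vanishes and we recover $\delta_Y(f)$, and on $P$ both contributions vanish. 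Hence $(Z/I,d_L)$ is precisely the tensor product of complexes $(T_X,\delta_X)\otimes_{\K}(\Sym(\NN),\delta_Y)$. Because we work over the field $\K$, the Künneth theorem then applies without Tor terms and gives
\begin{equation*}
H^n(Z/I,d_L)\;=\;\bigoplus_{i+j=n}H^i(T_X,\delta_X)\otimes_{\K}H^j(\Sym(\NN),\delta_Y).
\end{equation*}
The factor $H^i(T_X,\delta_X)$ equals $P/J$ for $i=0$ and vanishes otherwise because $(X,R)$ is a BFV model, while $H^j(\Sym(\NN),\delta_Y)$ equals $\K$ for $j=0$ and vanishes otherwise by lemma~\ref{lem:trivial-cohomology}. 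Therefore $H^0(Z/I,d_L)=P/J$ and $H^j(Z/I,d_L)=0$ for $j\neq 0$, which are exactly the two cohomological conditions; together with the master equation this shows that $(Z,L)$ is a BFV model for $(P,J)$.

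I expect the only delicate point to be the differential identification: one must check that the derivation $d_L$ on the quotient splits with no interaction terms, and that passing from the completed algebra $Z$ to $Z/I$ really produces an honest tensor product of complexes rather than a completed one. The latter holds because each homogeneous component of $\Sym(\UU)$ is finite-dimensional over $P$, so no completion survives in $Z/I$, and the Künneth formula may be applied degreewise. Granting this identification, both the master equation and the Künneth step are routine.
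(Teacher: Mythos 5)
Your proof is correct and takes essentially the same route as the paper's two-sentence argument: the cross-brackets between the $X$- and $Y$-variables vanish, so $L$ solves the master equation, and the cohomological conditions follow from the K\"unneth formula. Your explicit identification of $(Z/I,d_L)$ with the tensor product of complexes $(T_X,\delta_X)\otimes_\K(\Sym(\NN),\delta_Y)$ and your use of lemma~\ref{lem:trivial-cohomology} for the second factor simply supply the details (and arguably the more apt citation) that the paper leaves implicit.
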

\begin{proof}
  Since the bracket between elements of $X$ and elements of $Y$ is
  zero, the element $L$ solves the master equation. The K\"unneth
	formula implies together with lemma \ref{lem:trivial-bfv-coho} 
	the conditions on the cohomology.
\end{proof}
We call $Z$ the product of $X$ and $Y$ and write $Z = X \hat{\otimes}
Y$. Adding the new variables in $\NN$ does not change the cohomology
of the BRST complex $X$:
\begin{lem}
\label{lem:quasiiso}
  The natural map $X \to Z$ defines a quasi-isomorphism of
  differential graded commutative algebras.
\end{lem}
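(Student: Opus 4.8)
The plan is to exhibit the quasi-isomorphism through a contracting homotopy acting only in the $Y$-direction, in direct analogy with the proof of lemma \ref{lem:trivial-bfv-coho}. The natural map $\iota : X \to Z$ is the algebra homomorphism $x \mapsto x \otimes 1$. Because the bracket between elements of $X$ and elements of $Y$ vanishes, we have $d_L = \{L,-\} = d_R \otimes 1 + 1 \otimes d_S$ on $Z_0 = X_0 \otimes Y_0$, so $\iota$ is a cochain map. I will grade $Z$ by the \emph{$Y$-form degree}, that is, the number of tensor factors taken from $\NN \oplus \NN[1]$ and their duals. By lemma \ref{lem:trivial-induced}, $d_S = \delta \oplus \delta^*$ sends generators to generators, so both summands of $d_L$ preserve this grading.

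First I would import the contracting homotopy $s$ of $(Y,d_S)$ from the proof of lemma \ref{lem:trivial-bfv-coho}, which as a derivation satisfies $d_S s + s d_S = j \cdot \id$ on the $Y$-form degree $j$ part; see equation \ref{eq:ch}. Setting $\sigma(x \otimes y) = (-1)^{\deg x}\, x \otimes s(y)$ on pure tensors in $Z_0$ and extending continuously to $Z$, the sign rule together with the vanishing of the cross bracket gives, after a short computation,
\begin{align*}
  d_L \sigma + \sigma d_L = j \cdot \id \qquad \text{on the $Y$-form degree $j$ component of $Z$.}
\end{align*}
The two $d_R \otimes 1$ contributions cancel because $s$ is odd, and only the term $1 \otimes (d_S s + s d_S)$ survives. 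Hence $\sigma$ contracts every component of positive $Y$-form degree, and the cohomology of $(Z,d_L)$ is forced into $Y$-form degree zero, which is exactly $X \otimes 1 = \iota(X)$.

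To convert this into the statement about $\iota$ I would repeat the decomposition argument of lemma \ref{lem:trivial-bfv-coho}, now using the analogues of lemmas \ref{lem:form-degree-decomposition}, \ref{lem:form-degree-sum} and \ref{lem:form-degree-convergence} for the $Y$-form grading on $Z$. For surjectivity, decompose a $d_L$-cocycle as $\zeta = \sum_j \zeta_j$; each $\zeta_j$ is closed, and for $j > 0$ the displayed identity gives $\zeta_j = \tfrac1j\, d_L(\sigma \zeta_j)$, so $\eta := \sum_{j>0} \tfrac1j \sigma\zeta_j$ converges in $Z$ and $\zeta \equiv \zeta_0 \pmod{\Image d_L}$ with $\zeta_0 \in X \otimes 1$ and $d_R$-closed. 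For injectivity, if $x \otimes 1 = d_L z$ with $d_R x = 0$, comparing the $Y$-form degree zero components (where $d_L = d_R \otimes 1$) and writing $z_0 = w \otimes 1$ yields $x = d_R w$. Thus $\iota$ induces an isomorphism on cohomology, and since it is an algebra map the lemma follows.

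The computation of the homotopy identity is purely formal, so the main obstacle is the convergence bookkeeping imposed by the completion. One must verify that $\sigma$ extends continuously to $Z$, that the $Y$-form degree decomposition is available on the completed algebra, and that the infinite sum $\eta = \sum_{j>0}\tfrac1j\sigma\zeta_j$ genuinely lies in $Z$. All of this is precisely what the form-degree lemmas invoked in lemma \ref{lem:trivial-bfv-coho} supply, so the real work is to confirm that those lemmas, and the continuity of the bracket from lemma \ref{lem:bracketcontinuous}, apply to the $Y$-form grading on the larger algebra $Z$ rather than to $Y$ in isolation.
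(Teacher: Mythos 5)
Your proposal is correct and follows essentially the same route as the paper: both contract the $Y$-direction using the homotopy $s$ of the trivial model via equation \ref{eq:ch}. The only difference is organizational — the paper absorbs the $1/j$ normalization into a single operator $t$ (with $d_S t + t d_S = \id - \pi$), obtaining the global identity $d_L(\id\otimes t) + (\id\otimes t)d_L = \id - \iota\circ p$ and hence a deformation retract without decomposing cocycles, whereas you keep $\sigma = \pm\,\id\otimes s$ unnormalized and divide by $j$ component-wise, which is why you need the form-degree decomposition and convergence lemmas on $Z$; both versions close the same convergence gap, since the estimate $g \geqslant \tfrac12(n+j)$ from lemma \ref{lem:form-degree-convergence} carries over to the $Y$-form grading.
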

\begin{proof}
	We define the maps $\iota : X_0 \to Z_0$ as the natural map and $p: Z_0 \to
	X_0$ as the map taking $x \otimes y$ to $x \pi(y)$, where $\pi : Y_0 \to Y_0$
	is the projection onto $\R = \Sym^0( \MM \oplus \MM^*)$ along the $\Sym^j(\MM
	\oplus \MM^*)$ with $j>0$. Both maps extend to the respective completions. We
	claim that they define mutual inverses on cohomology.

	From equation \ref{eq:ch} we infer that there exists a map $t$ on
	$Y_0$ such that $d_S t + t d_S = \id - \pi$. In particular $\pi d_S = d_S \pi
	= 0$. We have $t = \frac{1}{j} s$ on form degree $j>0$ and $t = s$ on $\R$.
	Hence $t$ preserves the filtration up to degree shift. Hence $\id \otimes t$
	extends from $Z_0$ to the completion $Z$. By tensoring the other maps too we
	obtain the identity
	\begin{align*}
		d_{R+S} (\id \otimes t) + (\id \otimes t) d_{R+S} = 
		(\id \otimes d_S)(\id \otimes t) + (\id \otimes t)(\id \otimes d_S) = \id
		\otimes (\id - \pi) = \id - \iota \circ p
	\end{align*}
	on $Z$. The first equality is true since $t$ shifts degree by one. We are
	left to show that both maps $\iota$ and $p$ descend to cohomology. For
	$\iota$ this is trivial. For $p$ note that for homogeneous $x$ of degree $k$,
	\begin{align*}
		p(d_{R+S}(x\otimes y)) = p ( (d_R x) \otimes y + (-1)^k x \otimes d_S
		y) = \pi(y) d_R x + (-1)^k x \pi d_S y = d_R( x\pi y) = d_R( p
		(x\otimes y))
	\end{align*}
\end{proof}
Now we are ready to formulate the notion of stable equivalence
introduced in \cite{felder-kazhdan}:
\begin{defi}
  Let $(X,R)$ and $(X', R')$ be two simple BFV models for $(P,J)$. We
  say that $(X,R)$ and $(X',R')$ are stably equivalent if there exist
  trivial BFV models $(Y, S)$ and $(Y', S')$ and a Poisson isomorphism
  $ X \hat{\otimes} Y \longrightarrow X' \hat{\otimes} Y'$ taking $R +
  S$ to $R' + S'$.
\end{defi}

\subsection{Relating Tate Resolutions}
\label{sec:stableuniqueness}

Now we want to consider BFV models $(R, X)$ and $(R', X')$ whose
Tate resolutions $(X/I, d_R)$ and $(X'/I',d_{R'})$ are not equal. We
have the notion of stable equivalence. Our aim is to prove that any
two such BFV models are stably equivalent and that stably equivalent
BFV models are quasi-isomorphic. As a tool we need the following lifting
statement:
\begin{lem}
  \label{lem:philift}
  Let $P = \R[V]$ with bracket induced by a symplectic structure of a
  finite-dimensional vector space $V$ and consider $T = P \otimes \Sym(\MM)$ and $T'= P \otimes
  \Sym(\MM')$. Assume there is an isomorphism $\phi : T \to T'$ of graded
  commutative algebras which is the identity in degree zero. Let $X$ be the
  completion of $X_0 = P \otimes \Sym( \MM \oplus \MM^*)$. Construct
  analogously the space $X'$. Then $\phi$ lifts to
  a Poisson isomorphism $\Phi : X \to X'$.
\end{lem}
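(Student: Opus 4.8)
The plan is to build $\Phi$ on generators and extend multiplicatively, determining the generators' values order by order in the filtration so that the Poisson relations are forced in the limit. Fix linear coordinates $z_a$ on $V$ in which the bracket is constant, a homogeneous basis $e_i$ of $\MM$ with dual basis $e_i^*$, and likewise $e_i'$, $(e_i')^*$ for $X'$. A homomorphism of complete graded-commutative algebras $X \to X'$ is determined by the images of $z_a$, $e_i$, $e_i^*$, so it suffices to prescribe $\Phi(z_a), \Phi(e_i), \Phi(e_i^*) \in X'$ with two properties: that they reduce modulo the ideal $I'$ generated by $(\MM')^*$ to $z_a$, $\phi(e_i)$ and $0$ respectively, so that $\Phi$ covers $\phi$ on $X/I=T$, and that all pairwise brackets among $z_a, e_i, e_i^*$ are preserved. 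The first requirement pins down the filtration-zero parts; the content is to produce higher-filtration corrections realizing the second.

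First I would record the leading term as a cotangent lift. Since $\phi$ is a $P$-algebra isomorphism that is the identity in degree zero, it carries the $P$-linear derivations $\{e_i^*,-\}|_T = \partial_{e_i}$ to derivations $\phi\,\partial_{e_i}\,\phi^{-1} = \sum_j g_{ij}\,\partial_{e_j'}$ of $T'$, and the matrix $(\partial_{e_j'}\phi(e_i))$ is invertible over $T'$ because the linear part of $\phi$ is invertible over $P$. This suggests the first approximation $\Phi(z_a)=z_a$, $\Phi(e_i)=\phi(e_i)$, $\Phi(e_i^*)=\sum_j g_{ij}(e_j')^*$. If the bracket on $P$ were central this would already satisfy every relation exactly; in general the only obstructions are the symplectic cross terms $\{\phi(e_i),\phi(e_j)\}$ and $\{\phi(e_i),z_a\}$ produced by the nonzero bracket of $P=\R[V]$, and these lie in $\gr^0 X' = T'$.

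The main step, and the main obstacle, is to kill these obstructions by corrections in deeper and deeper filtration, in the same spirit as the recursive construction of the BRST charge from $s\{R_n,R_n\}$. At each stage the graded Jacobi identity forces the accumulated error to be a cocycle for the differential induced on $\gr X'$, which by lemma \ref{lem:associatedgraded} is $1\otimes\delta$ on $B'\otimes_P T'$; the acyclicity of the Tate resolution through the homotopy $s$ of equation \ref{eq:sdeltacommute} (equivalently the vanishing in lemma \ref{lem:assoccoho}) then exhibits the error as a coboundary and yields the next correction to $\Phi(z_a),\Phi(e_i),\Phi(e_i^*)$. The nondegeneracy of the symplectic form on $V$ enters decisively here: it is what allows the $P$-direction obstruction $\{\phi(e_i),z_a\}$ to be absorbed by correcting $\Phi(z_a)$, since the Hamiltonian fields $\{z_a,-\}$ span all directions of $V$; for a general Poisson algebra $P$ no such correction need exist, which is exactly why the hypothesis $P=\R[V]$ appears. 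The delicate bookkeeping is to resolve the whole coupled system of relations by a single application of the homotopy at each order without disturbing what was arranged below.

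Finally I would verify convergence and bijectivity. As every correction strictly raises filtration degree, the prescribed values converge in the complete algebra $X'$ by lemma \ref{lem:filtrationsums}, so $\Phi$ is a well-defined filtered algebra map; continuity of the bracket (lemma \ref{lem:bracketcontinuous}) promotes the relations, which hold to all finite orders by construction, to exact identities, making $\Phi$ a Poisson map. That $\Phi$ is an isomorphism follows because its associated graded is $\phi$ on the $T$-factor and the inverse transpose of $\phi$ on the $\Sym(\MM^*)$-factor, hence invertible; equivalently one runs the same scheme for $\phi^{-1}$ to produce the inverse, or appeals to the fact that a filtered map inducing an isomorphism on $\gr$ is an isomorphism of complete filtered algebras.
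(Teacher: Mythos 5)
Your leading term is the same one the paper's construction produces (the fiberwise cotangent lift: $\Phi(z_a)=z_a$, $\Phi(e_i)=\phi(e_i)$, $\Phi(e_i^*)$ the inverse-transpose Jacobian applied to the $(e_j')^*$), and your convergence and invertibility arguments at the end are fine. But the heart of the proof --- actually producing the higher-filtration corrections that restore the Poisson relations --- is missing, and the mechanism you propose for it does not apply. You want to kill the obstructions by exhibiting them as coboundaries for ``the differential induced on $\gr X'$, which is $1\otimes\delta$,'' using the contracting homotopy $s$ of equation \ref{eq:sdeltacommute} and the vanishing in lemma \ref{lem:assoccoho}. There is no such $\delta$ in the hypotheses of lemma \ref{lem:philift}: $\phi$ is assumed to be an isomorphism of graded \emph{commutative algebras} only, $T$ and $T'$ carry no differential, and lemma \ref{lem:associatedgraded} identifies $\gr X'$ with $B'\otimes_P T'$ as an algebra, not as a complex. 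Even in the intended application, where $T$ and $T'$ are Tate resolutions, the obstruction to a filtered algebra map being Poisson is not naturally a cocycle for $1\otimes\delta$; it is a closed element of the (Lichnerowicz--Poisson, equivalently de Rham) complex of the symplectic algebra $X'$. You would have to identify that complex, prove the error is closed there via the Jacobi identity, and prove the relevant cohomology vanishes --- none of which is done, and your own remark that the ``delicate bookkeeping'' of the coupled system remains to be resolved concedes exactly this point.

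The paper sidesteps the entire obstruction calculus by exploiting the nondegeneracy of the bracket on all of $X$ at once: after reducing to $\MM=\MM'$, it writes down the explicit mixed-variable generating function $S(x_i,Y_i,e_j^{(l)},{E_j^{(l)}}^*)=\sum_i x_iY_i+\sum {E_j^{(l)}}^*\,\phi(e_j^{(l)})(x,Y)$, solves the resulting implicit equations in the filtered completion (the linear part being invertible), and invokes lemma \ref{lem:generatingfunction} to conclude in one stroke that the resulting map is a Poisson automorphism lifting $\phi$. That is where the symplectic hypothesis on $P=\R[V]$ really enters --- it makes $X$ ``symplectic'' so that generating functions exist --- not, as you suggest, merely to absorb the single cross term $\{\phi(e_i),z_a\}$ by correcting $\Phi(z_a)$. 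As written, your proposal does not constitute a proof.
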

\begin{proof}
  Since $T$ and $T'$ are negatively graded and isomorphic as graded algebras,
  we have $\MM \cong \MM'$ as graded vector spaces. Hence we may assume $\MM =
  \MM'$ and thus $T = T'$ and $X = X'$.  

  Pick standard coordinates $\{x_1, \dots, x_n, y_1, \dots, y_n\}$ on the space
  $V$ for the symplectic structure, so that $\R[V] = \R[x_i,y_j]$ and $\{x_i,
  y_j\} = \delta_{ij}$.  Let $\{e_j^{(l)}\}_j$ be a basis of $\MM^{-l}$ and
  $\{{e_j^{(l)}}^*\}_j$ be the respective dual bases. Then there are
  elements $a_{j_1 \dots j_k}^{l_1 \dots l_k}(j,l)(x_i,y_i) \in
	\R[x_i, y_i]$ and invertible matrices $a_{jk}^{(l)} \in \R[x_i,y_i]$ 
	such that
  \begin{align*}
    \phi(e_j^{(l)}) = \sum_k a_{jk}^{(l)}(x_i,y_i) e_k^{(l)} + \sum
    a_{j_1 \dots j_k}^{l_1 \dots l_k}(j,l)(x_i,y_i) e_{j_1}^{(l_1)}
    \cdots e_{j_k}^{(l_k)}
  \end{align*}
  where the sum runs over all integers $k \geq 2$ and $(j_1, l_1), \dots
  (j_k, l_k)$ with $l_1 + \cdots + l_k = l$ and is thus
  finite. Consider indeterminats $Y_i, {E_j^{(l)}}^* \in X_0$ of degree
  0 and $l$ respectively, defining
  \begin{align*}
    S(x_i, Y_i, e_j^{(l)}, {E_j^{(l)}}^*) = \sum_i x_i Y_i +
		\sum_{j,k,l} a_{jk}^{(l)}(x_i,Y_i) {E_j^{(l)}}^* e_k^{(l)} +
    \sum_{(j,l)} \sum a_{j_1 \dots j_k}^{l_1 \dots
      l_k}(j,l)(x_i,Y_i)  
			{E_j^{(l)}}^* e_{j_1}^{(l_1)} \cdots e_{j_k}^{(l_k)}
  \end{align*}
  Consider the equations
	\begin{align*}
    \frac{ \p S}{ \p x_i} & = y_i & \frac{ \p S }{ \p Y_i} & = X_i &
    \frac{\p S}{\p e_j^{(l)}} & = (-1)^l {e_j^{(l)}}^* & \frac{\p S}{\p
      {E_j^{(l)}}^*} & = E_j^{(l)}
  \end{align*}
  which read
  \begin{align*}
    y_i & = Y_i +
		\sum_{j,k,l} \frac{\p a_{jk}^{(l)}(x_i,Y_i)}{\p x_i} {E_j^{(l)}}^*
		e_k^{(l)} +
    \sum_{(j,l)} \sum \frac{\p a_{j_1 \dots j_k}^{l_1 \dots
      l_k}(j,l)(x_i,Y_i)}{\p x_i}  {E_j^{(l)}}^* e_{j_1}^{(l_1)} \cdots
    e_{j_k}^{(l_k)}\\
    X_i & = x_i+
		\sum_{j,k,l} \frac{\p a_{jk}^{(l)}(x_i,Y_i)}{\p Y_i} {E_j^{(l)}}^*
		e_k^{(l)} +
    \sum_{(j,l)} \sum \frac{ \p a_{j_1 \dots j_k}^{l_1 \dots
      l_k}(j,l)(x_i,Y_i)}{\p Y_i}  {E_j^{(l)}}^* e_{j_1}^{(l_1)} \cdots
    e_{j_k}^{(l_k)}\\
		{e_j^{(l)}}^* & =  \sum_k a_{kj}^{(l)}(x_i,Y_i) {E_k^{(l)}}^* + 
    \sum_{(j',l')} \sum a_{j_1 \dots j_k}^{l_1 \dots
		l_k}(j',l')(x_i,Y_i) (-1)^{l (l'+1)} {E_{j'}^{(l')}}^*\frac{\p (e_{j_1}^{(l_1)} \cdots
    e_{j_k}^{(l_k)})}{\p e_j^{(l)}}\\
		E_j^{(l)} & = \sum_k a_{jk}^{(l)}(x_i,Y_i) e_k^{(l)} + 
    \sum a_{j_1 \dots j_k}^{l_1 \dots
      l_k}(j,l)(x_i,Y_i)  e_{j_1}^{(l_1)} \cdots e_{j_k}^{(l_k)}
  \end{align*}
	The linear part is invertible. Hence we can solve the equations
  for $(X_i, Y_i, E_j^{(l)}, {E_j^{(l)}}^*)$ in terms of $(x_i,
  y_i, e_j^{(l)}, {e_j^{(l)}}^*)$ (and vice versa) and hence
	also for $(x_i, Y_i,
  e_j^{(l)}, {E_j^{(l)}}^*)$ in terms of $(x_i, y_i, e_j^{(l)},
  {e_j^{(l)}}^*)$ (and vice versa) in the completion $X$. Hence the
  function $S$ generates a Poisson automorphism $\Phi : X \to X$ by
  lemma \ref{lem:generatingfunction}. Let $I$ be the ideal generated by
  positive elements as defined previously. We have $\Phi(x_i) = X_i
	\equiv x_i = \phi(x_i) { \pmod I}$ and $\Phi(y_i) = Y_i
	\equiv y_i = \phi(y_i) {\pmod I}$. Thus also
	$\Phi(e_j^{(l)}) = E_j^{(l)} \equiv \phi(e_j^{(l)}) {\pmod
	I}$.  Hence $\Phi$ is a lift of $\phi$.  
\end{proof}

\begin{thm}
  \label{thm:stableequivalence}
  Consider $P = \R[V]$ with bracket induced by a symplectic structure on a
  finite-dimensional vector space $V$. Any two BFV models for $(P,J)$ are
  stably equivalent.  \end{thm}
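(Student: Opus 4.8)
The plan is to reduce the statement to a purely homological fact about Tate resolutions, namely that the two resolutions underlying the given models become \emph{isomorphic as differential graded commutative algebras} once we enlarge them by contractible pairs of generators, and then to transport the BRST charge across this isomorphism and correct it by a gauge equivalence. Let $(X,R)$ and $(X',R')$ be two BFV models for $(P,J)$, with Tate resolutions $(T,\delta) = (X/I, d_R)$ and $(T',\delta') = (X'/I', d_{R'})$, where $T = P\otimes\Sym(\MM)$ and $T' = P\otimes\Sym(\MM')$. Tensoring $(X,R)$ with a trivial BFV model $(Y,S)$ built on $\MM_Y = \NN\oplus\NN[1]$ produces, by Lemma \ref{lem:tensorstable}, a BFV model $(X\hat{\otimes}Y, R+S)$ whose Tate resolution is $(T\otimes\Sym(\MM_Y),\ \delta\otimes 1 + 1\otimes\delta_Y)$; since $(\Sym(\MM_Y),\delta_Y)$ is acyclic by Lemma \ref{lem:trivial-cohomology}, this is exactly the operation of adjoining contractible pairs of odd generators to $T$. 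Thus the whole theorem will follow once we produce trivial models $(Y,S)$ and $(Y',S')$ together with an isomorphism of differential graded $P$-algebras
\[
  \phi : \big(T\otimes\Sym(\MM_Y),\ \delta\otimes 1 + 1\otimes\delta_Y\big)\ \lra\ \big(T'\otimes\Sym(\MM_{Y'}),\ \delta'\otimes 1 + 1\otimes\delta_{Y'}\big)
\]
which is the identity in degree zero, both sides having degree-zero part $P$.

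The homological heart of the argument is the construction of $\phi$. Since a Tate resolution is semifree over $P$, the identity of $P/J$ lifts to $P$-algebra chain maps $f: T\to T'$ and $g: T'\to T$ over $\id_{P/J}$, constructed generator by generator using the contracting homotopies of the target resolutions; both are quasi-isomorphisms and $g\circ f$, $f\circ g$ are homotopic to the respective identities. I would then invoke the \emph{stabilization} principle for semifree differential graded algebras: a quasi-isomorphism between such algebras can be promoted to an isomorphism after tensoring each side with a contractible free differential graded algebra. Concretely, the contractible factors are the $\Sym(\MM_Y)$ and $\Sym(\MM_{Y'})$ — one records the generators needed to absorb $f$, $g$ and the chosen homotopies into an honest inverse pair — and this yields the desired differential graded isomorphism $\phi$ that is the identity in degree zero. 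This is the direct analogue in the Hamiltonian setting of the corresponding lemma of \cite{felder-kazhdan}, and it is the step I expect to be the main obstacle, since controlling the grading, the $P$-linearity and the compatibility with the differentials simultaneously is where all the work sits.

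With $\phi$ in hand the rest is formal, and this is where the symplectic hypothesis on $P=\R[V]$ enters. By Lemma \ref{lem:philift} the algebra isomorphism underlying $\phi$ lifts to a Poisson isomorphism $\Phi : Z := X\hat{\otimes}Y \to Z' := X'\hat{\otimes}Y'$ covering $\phi$ on the quotients. Write $L = R+S$ and $L' = R'+S'$. Since $\Phi$ is Poisson and $L$ solves the classical master equation, so does $\Phi(L)$, while $L'$ does by construction. The differential that $\Phi(L)$ induces on $Z'/I'$ is the transport $\phi\circ(\delta\otimes 1 + 1\otimes\delta_Y)\circ\phi^{-1}$, which equals $\delta'\otimes 1 + 1\otimes\delta_{Y'}$ precisely because $\phi$ intertwines the differentials; this is the differential induced by $L'$. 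Hence $\Phi(L)$ and $L'$ are two solutions of the master equation in the \emph{same} space $Z'$ inducing the same map on $Z'/I'$, so by Theorem \ref{thm:gaugeequivalence} there is a gauge equivalence $g$ with $g\,\Phi(L) = L'$. As gauge equivalences are Poisson automorphisms preserving the master equation by Remark \ref{rem:gaugecme}, the composite $g\circ\Phi : Z \to Z'$ is a Poisson isomorphism taking $R+S$ to $R'+S'$, which is exactly a stable equivalence between $(X,R)$ and $(X',R')$.
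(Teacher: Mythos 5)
Your proposal is correct and follows essentially the same route as the paper: stabilize the two Tate resolutions until they become isomorphic as differential graded algebras over $P$ (the paper cites \cite[theorem A.2]{felder-kazhdan} for exactly the step you sketch by hand), lift that isomorphism to a Poisson isomorphism of the completed products via lemma \ref{lem:philift}, and then correct the transported charge by a gauge equivalence using theorem \ref{thm:gaugeequivalence}. The only difference is that you outline the stabilization argument rather than invoking the cited theorem, and you correctly identify it as the main point of substance.
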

\begin{proof}
  Let $(X,R)$ and $(X',R')$ be BFV models with associated Tate
  resolutions $T := X/I \cong P \otimes \Sym(\MM)$ and $T' := X'/I'
  \cong P \otimes \Sym(\MM')$. By~\cite[theorem A.2]{felder-kazhdan},
  there exist negatively graded vector spaces $\NN$ and $\NN'$ with
  finite dimensional homogeneous components, differentials
  $\delta_{\NN} : \Sym(\NN) \to \Sym(\NN), \delta_{\NN'} : \Sym(\NN')
  \to \Sym(\NN')$ with cohomology $\R$, and an isomorphism $\phi$ of
  differential graded commutative algebras
  \begin{align*}
    P\otimes \Sym( \MM \oplus \NN ) \to & P\otimes \Sym(\MM' \oplus
    \NN')
  \end{align*}
  restricting to $\id_P : P \to P$ in degree 0.  Let $Y$ and $Y'$ be
  the trivial BFV models corresponding to $\NN$ and $\NN'$ with BRST
  charges $S$ and $S'$, respectively. Consider the spaces $Z = X
  \hat{\otimes} Y$ and $Z' = X' \hat{\otimes} Y'$. Together with the
  operators $L = R + S$ and $L' = R' + S'$ they form BFV models $(Z,
  L)$ and $(Z', L')$ for $(P,J)$ by lemma \ref{lem:tensorstable}. 
  
  We now construct a Poisson isomorphism $\Phi : X \hat{\otimes} Y \to
  X' \hat{\otimes} Y'$ sending $R + S$ to $R' +
  S'$. By lemma \ref{lem:philift}, the map $\phi$ lifts to a Poisson
  isomorphism  $\Psi : X \hat{\otimes} Y \to X' \hat{\otimes}
  Y'$. Now $L''= \Psi(L)$ solves
  $\{-,-\}=0$ in $X' \hat{\otimes} Y'$. Moreover, $\{L'', -\}$ induces
  $\delta'$ on $P \otimes \Sym( \MM' \oplus \NN')$. By theorem
  \ref{thm:gaugeequivalence}, there exists a Poisson isomorphism
  $\chi$ of $X' \hat{\otimes} Y'$ with $L' = \chi( L '')$. Set $\Phi =
  \chi \circ \Psi$.

  We are now in the situation
  \begin{align*}
    \xymatrix{ X \ar[d] & X' \ar[d] \\
      X \hat{\otimes} Y \ar[r]^{\Phi} & X' \hat{\otimes} Y'}
  \end{align*}
  where the vertical arrows represent natural maps which are
  quasi-isomorphisms by lemma \ref{lem:quasiiso}.
\end{proof}

\begin{lem}
  \label{lem:stabcoho}
  The complexes of two stably equivalent BFV models are
  quasi-isomorphic. In particular, they have cohomologies which are
  isomorphic as graded commutative algebras.
\end{lem}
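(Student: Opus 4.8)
The plan is to reduce the statement entirely to Lemma~\ref{lem:quasiiso}, together with the single observation that the Poisson isomorphism furnished by stable equivalence is automatically a morphism of complexes. By the definition of stable equivalence there are trivial BFV models $(Y,S)$ and $(Y',S')$ and a Poisson isomorphism $\Phi : X \hat{\otimes} Y \to X' \hat{\otimes} Y'$ with $\Phi(R+S) = R'+S'$. Writing $L = R+S$ and $L' = R'+S'$, the first thing I would check is that $\Phi$ intertwines the BRST differentials $d_L = \{L,-\}$ and $d_{L'} = \{L',-\}$: since $\Phi$ preserves the bracket, for every $x$ we have $\Phi(d_L x) = \Phi\{L,x\} = \{\Phi L, \Phi x\} = \{L', \Phi x\} = d_{L'}(\Phi x)$. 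Being simultaneously an isomorphism of graded commutative algebras, $\Phi$ is therefore an isomorphism of differential graded commutative algebras $(X\hat{\otimes} Y, d_L) \to (X'\hat{\otimes} Y', d_{L'})$, and in particular induces an algebra isomorphism on cohomology.

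Next I would invoke Lemma~\ref{lem:quasiiso} twice: the natural inclusions $X \to X \hat{\otimes} Y$ and $X' \to X' \hat{\otimes} Y'$ are quasi-isomorphisms of differential graded commutative algebras. Assembling these with $\Phi$ produces the zig-zag
\begin{align*}
  (X, d_R) \longrightarrow (X\hat{\otimes} Y, d_L) \xrightarrow{\ \Phi\ } (X'\hat{\otimes} Y', d_{L'}) \longleftarrow (X', d_{R'}),
\end{align*}
in which the two outer arrows are quasi-isomorphisms and the middle arrow is an isomorphism of complexes. This exhibits $(X,d_R)$ and $(X',d_{R'})$ as quasi-isomorphic. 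Passing to cohomology, the outer quasi-isomorphisms and the middle isomorphism all induce isomorphisms of graded commutative algebras, since the inclusions and $\Phi$ are algebra maps; composing them, and inverting the last arrow (legitimate because it comes from an honest isomorphism), yields the desired isomorphism $H(X, d_R) \cong H(X', d_{R'})$ of graded commutative algebras.

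I do not expect a genuine obstacle here, as the analytic content is already contained in Lemma~\ref{lem:quasiiso}; the only point demanding care is the verification that the Poisson structure makes $\Phi$ a chain map, which is immediate from the bracket identity above. The sole bookkeeping subtlety is that a zig-zag of quasi-isomorphisms a priori yields only isomorphic cohomology rather than a single quasi-isomorphism between $X$ and $X'$. This is harmless for the stated conclusion: because the middle arrow $\Phi$ is an isomorphism of complexes, the composite $X \to X\hat{\otimes} Y \xrightarrow{\Phi} X'\hat{\otimes} Y'$ is itself a quasi-isomorphism, so both $(X,d_R)$ and $(X',d_{R'})$ map to $(X'\hat{\otimes} Y', d_{L'})$ by quasi-isomorphisms of differential graded commutative algebras, and the cohomology algebras are thereby identified.
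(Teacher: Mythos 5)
Your proof is correct and follows essentially the same route as the paper: both rely on the zig-zag $X \to X\hat{\otimes} Y \xrightarrow{\Phi} X'\hat{\otimes} Y' \leftarrow X'$, with the outer arrows being quasi-isomorphisms by Lemma~\ref{lem:quasiiso} and the middle arrow the Poisson isomorphism from the definition of stable equivalence. Your added verification that $\Phi$ is a chain map (via the bracket identity) is a detail the paper leaves implicit, but the argument is the same.
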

\begin{proof}
  Let $(X,R)$ and $(X',R')$ be two stably equivalent BFV
  models. Hence we are in the situation
  \begin{align*}
    \xymatrix{ X \ar[d] & X' \ar[d] \\
      X \hat{\otimes} Y \ar[r] & X' \hat{\otimes} Y'}
  \end{align*}
  where the downward arrows are quasi-isomorphisms of differential
  graded commutative algebras by lemma \ref{lem:quasiiso} and the
  bottom arrow is a Poisson isomorphism $X \hat{\otimes} Y \to X'
  \hat{\otimes} Y'$ sending $R + S$ to $R' + S'$.
\end{proof}
From theorem \ref{thm:stableequivalence} and lemma \ref{lem:stabcoho}
we obtain analogously to the treatment of the Lagrangian case in
\cite{felder-kazhdan} 
\begin{cor}
  Let $P=\R[V]$ with bracket induced by a symplectic structure on a
  finite-dimensional vector space $V$. 
  Any two BRST-complexes arising from BFV-models for the same coisotropic
  ideal $J\subset P$ are quasi-isomorphic. Hence, the BRST cohomology
  is uniquely determined by $(P=R[x_i,y_i],J)$ up to an isomorphism of
  graded commutative algebras.
\end{cor}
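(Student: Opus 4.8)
The plan is to deduce the corollary directly from the two results that immediately precede it, since it is a purely formal consequence of them. First I would unwind the definitions: a BRST-complex is, by construction, the differential graded commutative algebra $(X, d_R)$ attached to a BFV model $(X, R)$ for the pair $(P, J)$. Hence comparing two arbitrary BRST-complexes amounts to comparing two BFV models for the same $(P, J)$, and nothing is lost in passing from the former to the latter.

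Next, given two such BFV models $(X, R)$ and $(X', R')$ with $P = \R[V]$ symplectic, I would invoke Theorem \ref{thm:stableequivalence} to conclude that they are stably equivalent. Concretely, this produces trivial BFV models $(Y, S)$ and $(Y', S')$ together with a Poisson isomorphism $X \hat{\otimes} Y \to X' \hat{\otimes} Y'$ sending $R + S$ to $R' + S'$.

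Then Lemma \ref{lem:stabcoho} does the rest: it asserts precisely that the complexes of stably equivalent BFV models are quasi-isomorphic and have cohomologies isomorphic as graded commutative algebras. The quasi-isomorphism is assembled from the natural inclusions $X \to X \hat{\otimes} Y$ and $X' \to X' \hat{\otimes} Y'$, which are quasi-isomorphisms of differential graded commutative algebras by Lemma \ref{lem:quasiiso}, together with the Poisson isomorphism on the bottom row, forming a zig-zag between $(X, d_R)$ and $(X', d_{R'})$. Since every map in this zig-zag is a morphism of graded commutative algebras, the induced isomorphism on cohomology respects the algebra structure, which is exactly the final assertion about uniqueness up to isomorphism of graded commutative algebras.

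There is no genuine obstacle left at the level of the corollary; all the difficulty has been front-loaded into Theorem \ref{thm:stableequivalence}, and in particular into the lifting Lemma \ref{lem:philift}. That lemma is where the symplectic-polynomial hypothesis $P = \R[V]$ is indispensable, since the lift of the Tate-resolution isomorphism to a Poisson automorphism is engineered via a generating function written in Darboux coordinates $\{x_i, y_i\}$. Thus the only point I expect to need explicit care here is the observation that the composite isomorphism on cohomology is algebra-respecting, which follows at once from the fact that every constituent map of the zig-zag is a morphism of commutative algebras.
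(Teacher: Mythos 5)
Your proposal is correct and follows exactly the paper's route: the corollary is stated there as an immediate consequence of Theorem \ref{thm:stableequivalence} (any two BFV models for $(P,J)$ with $P=\R[V]$ symplectic are stably equivalent) combined with Lemma \ref{lem:stabcoho} (stably equivalent models have quasi-isomorphic complexes via the zig-zag through $X\hat{\otimes}Y$ and $X'\hat{\otimes}Y'$). Your additional remarks on where the symplectic hypothesis enters (Lemma \ref{lem:philift}) and on why the induced isomorphism respects the algebra structure are accurate and consistent with the paper.
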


\section{Cohomology}
\label{sec:cohomology}

Let $P$ be a unital, Noetherian Poisson algebra and $J$ a coisotropic
ideal. Let $(X,R)$ be a BFV model for $J \subset P$. In this section
we analyze the cohomology of the complex $(X,d_R)$. We follow the
strategy from \cite{felder-kazhdan}.

\subsection{Cohomology and Filtration}
The associated graded is defined as $\gr^p X = \FF^pX /
\FF^{p+1} X$. The differential $d_R$ induces a map $\delta$ on
$X/I = T = P \otimes \Sym(\MM)$ and the results from section
\ref{sec:properties} apply.
\begin{lem}
    \label{lem:cohocalc}
    $ H^j( \gr^p , d_R )\cong  B^p \otimes_P P/J$ for $j=p$ and $ H^j(
    \gr^p , d_R ) \cong 0$ for $j \neq p$.
\end{lem}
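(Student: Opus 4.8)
The plan is to reduce the computation entirely to the structure of the associated graded that was already set up. First I would invoke Lemma \ref{lem:induceddiff}: under the natural isomorphism of Lemma \ref{lem:associatedgraded}, the differential induced by $d_R$ on $\gr^p X$ is identified with $1 \otimes \delta$ acting on $B^p \otimes_P T$, where $B = P \otimes \Sym(\MM^*)$ and $(T,\delta) = (X/I, d_R)$ is the associated Tate resolution of $P/J$. Thus, up to this identification, the complex $(\gr^p X, d_R)$ is nothing but $(B^p \otimes_P T, 1 \otimes \delta)$.

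The second step is a K\"unneth-type factorization, exactly as in the proof of Lemma \ref{lem:assoccoho}. Since $B^p$ is a free, in particular flat, $P$-module, tensoring with $B^p$ is exact and commutes with passing to cohomology of the $\delta$-complex. Keeping track of the internal degree $p$ carried by $B^p$, this gives
\begin{align*}
  H^j(\gr^p X, d_R) \cong H^j(B^p \otimes_P T, 1 \otimes \delta) \cong B^p \otimes_P H^{j-p}(T, \delta),
\end{align*}
where the degree shift by $p$ reflects the fact that elements of $B^p$ sit in form degree $p$ while $\delta$ raises the $T$-degree by one.

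Finally, because $(T,\delta)$ resolves $P/J$, we have $H^k(T,\delta) \cong P/J$ for $k=0$ and $H^k(T,\delta)=0$ otherwise; substituting $k = j - p$ yields $B^p \otimes_P P/J$ precisely when $j = p$ and zero in every other degree, which is the claim. I expect no serious obstacle: the only points needing care are the bookkeeping of the degree shift by $p$ and the appeal to flatness of $B^p$, both already exercised in Lemma \ref{lem:assoccoho}. Indeed this statement is essentially the specialization of that argument to the full ideal (the case analogous to $q=0$), now carried out with the genuine differential $d_R$ rather than with $\delta$ alone, the two agreeing on the associated graded by Lemma \ref{lem:induceddiff}.
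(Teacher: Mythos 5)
Your proposal is correct and follows the paper's own proof essentially verbatim: identify the induced differential on $\gr^p X$ with $1\otimes\delta$ on $B^p\otimes_P T$ via Lemma \ref{lem:induceddiff}, factor out the free $P$-module $B^p$ with the degree shift by $p$, and use that $(T,\delta)$ resolves $P/J$ (equivalently, the cohomological conditions on $(X/I,d_R)$ in the definition of a BFV model). No gaps.
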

\begin{proof}
  Fix $p$. $B^p$ is a free $P$-module. By lemma \ref{lem:induceddiff}, we have
  \begin{align*}
    H^j( \FF^p X / \FF^{p+1} X, d_R )\cong
    H^j( B^p \otimes_P T^{\bullet - p} , 1 \otimes \delta)
    \cong B^p \otimes_P H^{j-p}(T, \delta) \cong
      B^p \otimes_P H^{j-p}(X/I, d_R)
  \end{align*}
\end{proof}
Next, we want to prove that, in order to compute the cohomology in a
fixed degree, one may disregard elements of high filtration degree.
\begin{lem}
  \label{lem:fpbecomeszero}
  Let $j < p$ be integers with $p \geqslant 0$. Then $H^j( \FF^p X,
  d_R ) = 0$. 
\end{lem}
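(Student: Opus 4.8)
The plan is to prove the vanishing by successive approximation along the filtration $\FF^{p'}X$, $p' \geqslant p$, exploiting the fact from lemma \ref{lem:cohocalc} that the cohomology of each associated graded piece $\gr^{p'}X$ with respect to $d_R$ is concentrated in the single cohomological degree $p'$. (This is really the statement that the spectral sequence of the filtered complex $\FF^pX$ collapses because its $E_1$-page is supported on one line, but the direct argument is cleaner.) So I take a $d_R$-closed element $x \in \FF^pX^j$ with $j < p$ and aim to produce $y \in \FF^pX^{j-1}$ with $d_R y = x$.

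First I reduce modulo $\FF^{p+1}X$. Since $d_R$ preserves the filtration, $x$ defines a $d_R$-closed class $\bar x \in \gr^pX^j$; because $j < p$, so in particular $j \neq p$, lemma \ref{lem:cohocalc} gives $H^j(\gr^pX, d_R) = 0$, hence $\bar x$ is a coboundary. Lifting a primitive yields $y_p \in \FF^pX^{j-1}$ with $x - d_R y_p \in \FF^{p+1}X^j$, and this element is again $d_R$-closed. I then iterate: assuming inductively that $x - d_R(y_p + \cdots + y_{p'}) \in \FF^{p'+1}X^j$ with each $y_{p''} \in \FF^{p''}X^{j-1}$, I reduce modulo $\FF^{p'+2}X$ and invoke lemma \ref{lem:cohocalc} once more. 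Crucially $j < p \leqslant p'+1$, so $j \neq p'+1$ and $H^j(\gr^{p'+1}X, d_R) = 0$; a lift $y_{p'+1} \in \FF^{p'+1}X^{j-1}$ pushes the error into $\FF^{p'+2}X^j$. The filtration degrees of the $y_{p'}$ strictly increase, so by lemma \ref{lem:filtrationsums} the series $y = \sum_{p' \geqslant p} y_{p'}$ converges to an element of $\FF^pX^{j-1}$.

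The only point that requires care is the passage to the limit, and this is where I expect the main (mild) obstacle to lie: one must check that the accumulated error genuinely dies. Since $d_R$ preserves the filtration it is continuous, so $d_R\big(\sum_{p''=p}^{p'} y_{p''}\big) \to d_R y$; on the other hand $x - d_R\big(\sum_{p''=p}^{p'} y_{p''}\big) \in \FF^{p'+1}X^j \to 0$ because the filtration on $X$ is separated and complete, its intersection $\bigcap_{p'} \FF^{p'}X$ being zero by the construction of $X$ as a completion (together with lemma \ref{lem:filtrationclosed}). Comparing the two limits yields $d_R y = x$, so $x$ is exact and $H^j(\FF^pX, d_R) = 0$. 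Everything else is routine bookkeeping in the associated graded, and the hypothesis $j < p$ with $p \geqslant 0$ is used precisely to guarantee $j \neq p'$ for every $p' \geqslant p$ appearing in the induction.
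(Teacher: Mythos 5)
Your proposal is correct and follows essentially the same successive-approximation argument as the paper: reduce modulo the next filtration step, kill the class using lemma \ref{lem:cohocalc} (valid since $j<p\leqslant p'$ for every stage $p'$), sum the primitives via lemma \ref{lem:filtrationsums}, and pass to the limit using continuity of $d_R$, closedness of $\FF^pX^{j-1}$, and separatedness of the filtration. No substantive difference from the paper's proof.
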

\begin{proof}
  Let $x \in \FF^p X^j$ be a cocycle representing a cohomology class
  in $H^j(\FF^p X, d_R)$. Then $x + \FF^{p+1} X^j$ defines a cocycle
  in $H^j(\FF^p X/ \FF^{p+1} X, d_R)$. By lemma \ref{lem:cohocalc},
  there is $y_0 \in \FF^p X^{j-1}$ with $x - d_Ry_0 \in \FF^{p+1}
  X^j$. Hence this element defines a cocycle in $H^j(\FF^{p+1} X /
  \FF^{p+2} X, d_R) = 0$. Hence there is $y_1 \in \FF^{p+1} X^{j-1}$ with
  $x - d_Ry_0 - d_Ry_1 \in \FF^{p+2}X^j$.  Iterating this procedure we
  find a sequence $y_0, y_1, \dots$ of elements $y_j \in \FF^{p+j}
  X^{j-1}$ with $x - d_R(y_0 + \cdots + y_j) \in \FF^{j+1} X^j$. By
  lemma \ref{lem:filtrationsums} the element $y := y_0 + \cdots \in
  X^{j-1}$ is well-defined and $y_0 + \cdots + y_j \to y$. Since all
  $y_j$ are in $\FF^p X^{j-1}$ and this set is closed by lemma
  \ref{lem:filtrationclosed}, we have $y \in \FF^p X^{j-1}$. Finally,
  for $n$ fixed, and all $j$,
  \begin{align*}
    d_R y_0 + \cdots + d_R y_n + \cdots + d_R y_{n+j} - x \in
    \FF^{n+1} X^j
  \end{align*}
  Since $d_R = \{R, -\}$ is continuous (lemma
  \ref{lem:bracketcontinuous}), we have $d_R y - x \in \FF^{n+1}
  X^j$. Since $n$ was arbitrary, $d_R y = x$.
\end{proof}
\begin{cor}
\label{cor:cohononneq}
  The cohomology of $(X, d_R)$ is concentrated in non-negative degree.
\end{cor}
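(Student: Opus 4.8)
The plan is to obtain the statement as the degenerate case $p = 0$ of Lemma~\ref{lem:fpbecomeszero}. The one point to record first is that the filtration exhausts the whole algebra at the bottom, i.e. $\FF^0 X = X$. Indeed, $\FF^0 X_0$ is by definition the ideal of $X_0$ generated by the homogeneous elements of degree at least $0$; since the multiplicative unit $1 \in P \subset X_0$ is such an element, this ideal is all of $X_0$, and passing to the completion gives $\FF^0 X = X$. Consequently $H^j(X, d_R) = H^j(\FF^0 X, d_R)$ for every $j$.

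Now I would simply invoke Lemma~\ref{lem:fpbecomeszero} with $p = 0$. The hypotheses $p \geqslant 0$ and $j < p$ read $0 \geqslant 0$ and $j < 0$, so the lemma yields $H^j(\FF^0 X, d_R) = 0$ for all $j < 0$. Combined with the identification of the previous paragraph this gives $H^j(X, d_R) = 0$ whenever $j < 0$, which is precisely the assertion that the cohomology of $(X, d_R)$ is concentrated in non-negative degree.

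There is essentially no obstacle here: the entire content lives in Lemma~\ref{lem:fpbecomeszero}, whose proof already performs the filtration-wise induction (successively killing a cocycle modulo $\FF^{p+1}X$, $\FF^{p+2}X$, \dots\ via Lemma~\ref{lem:cohocalc}, and then assembling the resulting primitives into a convergent sum using that $d_R$ is continuous and that $\FF^p X$ is closed). The corollary only specializes that argument to the bottom of the filtration. The sole thing worth double-checking — and the only place where an error could creep in — is the claim $\FF^0 X = X$, i.e. that one is genuinely allowed to take $p = 0$ in the lemma; this is exactly what the presence of the unit in degree zero guarantees.
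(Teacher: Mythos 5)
Your proposal is correct and is exactly the argument the paper intends: the corollary is stated without proof immediately after Lemma~\ref{lem:fpbecomeszero} precisely because it is the case $p=0$, $j<0$ of that lemma, combined with the observation $\FF^0 X = X$. Your extra care in checking that the unit forces $\FF^0 X_0 = X_0$ is a worthwhile detail the paper leaves implicit.
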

\begin{cor}
  \label{cor:finitecohomology}
  The natural map $H^j( X, d_R ) \to H^j( X / \FF^{p+1} X, d_R)$ is an
  isomorphism for $j < p$ and injective for $j=p$. 
\end{cor}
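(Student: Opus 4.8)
The plan is to compare $X$ with its quotient $X/\FF^{p+1}X$ through the short exact sequence of cochain complexes
\[
  0 \lra \FF^{p+1} X \lra X \lra X/\FF^{p+1} X \lra 0 ,
\]
which is indeed a sequence of \emph{complexes} because $d_R = \{R,-\}$ with $R \in X^1$ preserves the filtration, so that $\FF^{p+1}X$ is a subcomplex and the quotient inherits the induced differential. The natural map appearing in the statement is exactly the map induced on cohomology by the surjection $X \to X/\FF^{p+1}X$, so everything should follow from the associated long exact sequence.

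Passing to that long exact sequence in cohomology yields, for each $j$, the four-term fragment
\[
  H^j(\FF^{p+1} X, d_R) \lra H^j(X, d_R) \lra H^j(X/\FF^{p+1} X, d_R)
  \lra H^{j+1}(\FF^{p+1} X, d_R).
\]
The single input I would feed in here is lemma \ref{lem:fpbecomeszero}, applied to the filtration step $\FF^{p+1}X$: it gives $H^k(\FF^{p+1} X, d_R) = 0$ whenever $k < p+1$, that is, for all $k \ls p$.

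From this I would read off the two conclusions directly. Injectivity of $H^j(X,d_R) \to H^j(X/\FF^{p+1}X, d_R)$ follows as soon as the incoming term $H^j(\FF^{p+1}X, d_R)$ vanishes, which holds for every $j \ls p$, and thus covers both $j < p$ and the boundary value $j = p$. For surjectivity one additionally needs the outgoing term $H^{j+1}(\FF^{p+1}X, d_R)$ to vanish, and the lemma supplies this only for $j+1 \ls p$, i.e. for $j < p$. Hence the map is an isomorphism when $j < p$ and merely injective when $j = p$, as claimed.

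The only point worth flagging is the sharpness of the degree bound rather than any genuine obstacle: lemma \ref{lem:fpbecomeszero} yields acyclicity of $\FF^{p+1}X$ \emph{strictly below} degree $p+1$, so in the range $j \ls p$ the left-hand arrow always vanishes while the right-hand arrow is guaranteed to vanish only for $j < p$. This asymmetry in the vanishing range is precisely what forces the asymmetric conclusion, and beyond tracking it there is nothing further to prove.
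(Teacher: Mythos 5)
Your proposal is correct and follows exactly the paper's argument: the long exact sequence attached to $0 \to \FF^{p+1}X \to X \to X/\FF^{p+1}X \to 0$, combined with the vanishing $H^k(\FF^{p+1}X, d_R) = 0$ for $k \ls p$ from lemma \ref{lem:fpbecomeszero}. Your closing remark about the asymmetry in the vanishing range is precisely the degree bookkeeping the paper performs.
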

\begin{proof}
  The short exact sequence $0 \to \FF^{p+1} X \to X \to X / \FF^{p+1}
  X \to 0$ defines the long exact sequence
  \begin{align*}
    \cdots \to H^j( \FF^{p+1}X, d_R ) \to H^j( X, d_R ) \to H^j ( X / \FF^{p+1}
    X, d_R ) \to H^{j+1}( \FF^{p+1} X, d_R ) \to \cdots
  \end{align*}
  For $j \leqslant p$ the first term is zero and for $j < p$ both the
  first and the last terms are zero by lemma \ref{lem:fpbecomeszero}.
\end{proof}

\subsection{Spectral Sequences}

\begin{lem}
  \label{lem:ssdeg}
  Let $E^{p,q}_r$ be the spectral sequence corresponding to the
  filtered complex $\FF^p X^{p+q}$ with differential $d_R$.  We have
  $H^\bullet( X, d_R ) \cong E^{\bullet,0}_2$ as graded commutative
  algebras.
\end{lem}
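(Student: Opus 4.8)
The plan is to read the $E_1$-page off Lemma~\ref{lem:cohocalc}, observe that it is concentrated in a single row, deduce that the sequence degenerates at $E_2$, and then obtain convergence to $H^\bullet(X,d_R)$ by reducing to the finite truncations $X/\FF^{p+1}X$ and transporting the result back via Corollary~\ref{cor:finitecohomology}.

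First I would unwind the bookkeeping. By definition $E_0^{p,q}=\gr^p X^{p+q}$, so $E_1^{p,q}=H^{p+q}(\gr^p X,d_R)$. Lemma~\ref{lem:cohocalc} says this cohomology is $B^p\otimes_P P/J$ when the cohomological degree equals $p$ and vanishes otherwise; setting the degree equal to $p+q$ forces $q=0$, so $E_1^{p,q}=0$ for $q\neq 0$ and $E_1^{p,0}\cong B^p\otimes_P P/J$. Since $\FF^p X\cdot\FF^{p'}X\subseteq\FF^{p+p'}X$ and $d_R$ is a derivation, this is a spectral sequence of bigraded commutative algebras; in particular $(E_1^{\bullet,0},d_1)$ is a graded commutative dg-algebra and $E_2^{\bullet,0}=H^\bullet(E_1^{\bullet,0},d_1)$ is a graded commutative algebra. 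Because $d_r$ sends $E_r^{p,q}$ to $E_r^{p+r,q-r+1}$, for every $r\geq 2$ either its source or its target lies off the row $q=0$; hence $d_r=0$ and the sequence degenerates, $E_2^{\bullet,0}=E_\infty^{\bullet,0}$.

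The real work is convergence: on a fixed $X^n$ the filtration $\FF^p X$ is not bounded — it is only complete and Hausdorff — so the classical convergence theorem does not apply to $X$ directly. To bypass this I would fix $n$, choose $p>n$, and pass to the truncation $C:=X/\FF^{p+1}X$, whose induced filtration is finite ($\FF^0 C=C$, $\FF^{p+1}C=0$). Its spectral sequence $E_r^{s,t}(C)$ therefore converges strongly to $H^\bullet(C,d_R)$ and is again a spectral sequence of algebras. Since $\gr^s C=\gr^s X$ for $s\leq p$, the page $E_1(C)$ agrees with $E_1$ in all columns $s\leq p$, is concentrated in the row $t=0$, degenerates at $E_2$, and satisfies $E_2^{s,0}(C)=E_2^{s,0}$ for $s<p$ (the $d_1$'s into and out of column $s$ all survive in $C$). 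Convergence over a finite filtration gives $\gr^s H^n(C)\cong E_\infty^{s,n-s}(C)$, which vanishes unless $s=n$; the filtration on $H^n(C)$ thus has a single nonzero graded piece, in filtration degree $n$, yielding a canonical algebra isomorphism $H^n(C)\cong E_\infty^{n,0}(C)=E_2^{n,0}$ for $n<p$.

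Finally the quotient $X\to C$ is a morphism of differential graded algebras, so the induced map $H^\bullet(X,d_R)\to H^\bullet(C,d_R)$ is a ring homomorphism, and it is an isomorphism in degrees $<p$ by Corollary~\ref{cor:finitecohomology}. Composing with the isomorphism of the previous paragraph gives an algebra isomorphism $H^n(X,d_R)\cong E_2^{n,0}$ for every $n<p$; since both sides are independent of the auxiliary $p$ this is canonical and assembles, over all $n$, into the asserted isomorphism $H^\bullet(X,d_R)\cong E_2^{\bullet,0}$ of graded commutative algebras. The one genuinely delicate point throughout is the convergence step, and the device that makes it go through is precisely the reduction to the bounded truncations together with the degreewise stabilization supplied by Corollary~\ref{cor:finitecohomology}.
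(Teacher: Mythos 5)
Your proposal is correct, and its first two steps---reading $E_1^{p,q}\cong B^p\otimes_P P/J$ for $q=0$ and $E_1^{p,q}=0$ for $q\neq 0$ off Lemma~\ref{lem:cohocalc}, then degeneration at $E_2$ for degree reasons---coincide with the paper's proof. Where you genuinely diverge is the convergence step. The paper disposes of it in one line by citing the convergence criterion of Cartan--Eilenberg (chapter~XV, proposition~4.1), whose hypothesis is exactly Lemma~\ref{lem:fpbecomeszero}, i.e.\ $H^j(\FF^pX,d_R)=0$ for $j<p$. You instead give a self-contained argument: truncate to $C=X/\FF^{p+1}X$, where the filtration is bounded and classical strong convergence applies, identify $H^n(C)\cong E_2^{n,0}$ for $n<p$ from the one-row $E_\infty$-page, and transport back along the natural map $H^n(X,d_R)\to H^n(C,d_R)$, which is an isomorphism for $n<p$ by Corollary~\ref{cor:finitecohomology}. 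Since that corollary is itself deduced from Lemma~\ref{lem:fpbecomeszero} via the long exact sequence of $0\to\FF^{p+1}X\to X\to C\to 0$, both arguments ultimately rest on the same vanishing input; yours trades the external citation for extra bookkeeping (agreement of $E_2^{s,0}(C)$ with $E_2^{s,0}$ for $s<p$, the choice $p>n+m$ to see multiplicativity, independence of the auxiliary $p$), all of which you handle correctly, and it buys a proof that never leaves the paper's own toolkit. The one point I would spell out is the compatibility of your isomorphisms as $p$ varies: it follows from functoriality of the spectral sequence under the quotients $X/\FF^{p'+1}X\to X/\FF^{p+1}X$, and it is what lets the degreewise identifications assemble into a single isomorphism of graded commutative algebras; you assert it but do not verify it.
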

\begin{proof}
  Begin with $E_0^{p,q} := \FF^p X^{p+q} / \FF^{p+1} X^{p+q}$. It is
  concentrated in degree $p \geqslant 0, q \leqslant 0$. By lemma
  \ref{lem:cohocalc} we have the following isomorphism of differential
  bi-graded algebras:
  \begin{align*}
    E^{p,q}_1 = H^q ( E_0^{p, \bullet} , d_R ) = H^q( \FF^p X^{p+ \bullet} /
    \FF^{p+1} X^{p + \bullet}, d_R )&= H^{p+q}(\FF^p X / \FF^{p+1}
    X, d_R) \\
    &\cong
    \begin{cases}
      B^p \otimes_P P/J, & \text{if $q=0$} \\\nonumber 0, & \text{if
        $q\neq 0$}.  \end{cases}
  \end{align*}
  Hence $E^{p,q}_1$ is concentrated in degree $p \geqslant 0$ and $q =
  0$.  Moreover, $d_1^{p,q}$ maps $E_1^{p,q}$ to $E_1^{p+1, q}$. Hence
  also $E_2^{p,q}$ is concentrated in $p \geqslant 0, q = 0$. Since
  $d_2$ maps $E^{p,0}_2 $ to $E_2^{p+2,-1}$ it is zero for degree
  reasons and hence the spectral sequence degenerates at $E_2$.
  
  We are left to prove that the spectral sequence converges to the
  cohomology. By \cite[chapter XV, proposition
  4.1]{cartan-eilenberg1956}, this follows from lemma \ref{lem:fpbecomeszero}.
\end{proof}
We could use this lemma to prove $H^0(X, d_R) \cong (P/J)^J$ as
algebras. However, we want to consider an additional structure on the
latter space.

\subsection{The Poisson Algebra Structure on $(P/J)^J$}

We have a Poisson algebra structure on $(P/J)^J$.
\begin{lem}
  The Poisson algebra structure on $P$ induces a Poisson algebra
  structure on $(P/J)^J$.
\end{lem}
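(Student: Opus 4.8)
The plan is to realize $(P/J)^J$ as the quotient $N(J)/J$ of the \emph{Poisson normalizer} $N(J) := \{ f \in P : \{f, J\} \subseteq J \}$, and then to check directly that $N(J)$ is a Poisson subalgebra of $P$ inside which $J$ sits as a Poisson ideal. The Poisson structure on the quotient is then formal.

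First I would make the meaning of the invariants precise by pinning down the $J$-action on $P/J$. For $j \in J$ the derivation $\{j, -\}$ descends to a well-defined map on $P/J$: if $f$ and $f'$ agree modulo $J$, say $f' = f + j'$ with $j' \in J$, then $\{j, f'\} - \{j, f\} = \{j, j'\} \in J$ precisely because $J$ is coisotropic, $\{J, J\} \subseteq J$. This turns $J$ into a set of derivations of $P/J$, and by definition $(P/J)^J$ consists of those cosets $\bar f$ killed by the entire action, i.e. those $f$ with $\{f, J\} \subseteq J$. Hence $(P/J)^J = N(J)/J$, and note that $J \subseteq N(J)$ is exactly the coisotropy hypothesis.

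Next I would verify that $N(J)$ is a Poisson subalgebra of $P$. Closure under multiplication follows from the Leibniz rule, since for $f, g \in N(J)$ we get $\{fg, J\} \subseteq f\{g, J\} + \{f, J\}g \subseteq J$ as $J$ is an ideal. Closure under the bracket follows from the Jacobi identity: for $f, g \in N(J)$ and $h \in J$,
\begin{align*}
  \{\{f, g\}, h\} = \{f, \{g, h\}\} - \{g, \{f, h\}\} \in \{f, J\} + \{g, J\} \subseteq J,
\end{align*}
where one uses $\{g, h\}, \{f, h\} \in J$ first and then $f, g \in N(J)$; thus $\{f, g\} \in N(J)$.

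Finally, $J$ is a Poisson ideal in $N(J)$: it is a multiplicative ideal because it already is one in $P$, and $\{N(J), J\} \subseteq J$ holds by the very definition of $N(J)$. Therefore both the product and the bracket on $N(J)$ descend to well-defined operations on $N(J)/J = (P/J)^J$, which inherit the Poisson algebra axioms automatically from those on $N(J)$. The only step that genuinely uses the coisotropy assumption is the well-definedness of the $J$-action on $P/J$ in the first paragraph; I expect this to be the sole real obstacle, since everything afterwards is a formal consequence of the Leibniz and Jacobi identities.
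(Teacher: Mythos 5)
Your proposal is correct and follows essentially the same route as the paper: identify $(P/J)^J$ with the cosets of elements normalizing $J$, use the Leibniz rule for multiplicative closure, the Jacobi identity for bracket closure, and coisotropy for well-definedness modulo $J$. Your packaging via the normalizer $N(J)$ and the explicit check that the $J$-action on $P/J$ is well defined is a slightly more systematic presentation of the same argument.
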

\begin{proof}
  Let $p + J, q + J \in (P/J)^J$, i.e. $p,q \in P$ with $\{p, J\}, \{q,
  J\} \subset J$ and $a \in \R$. $J$ is a linear subspace of $P$ and
  $\{a p + q, J \} \subset J$ hence $(P/J)^J$ is a vector space. $J$ is
  an ideal in $P$ and $\{pq, J\} \subset J$ by the Leibnitz rule. Hence
  $(P/J)^J$ is an algebra. Finally, we have to show that the Poisson
  bracket descends to $(P/J)^J$. We have $\{p, q\} \in J$ if $p \in
  J$. Hence the definition $\{p+J, q+J\} := \{p,q\} + J$ does not depend
  on the choice of representatives. Moreover, $\{\{p,q\}, J\} \subset J$ by
  the Jacobi identity. Hence the bracket is well-defined.
\end{proof}
We also have a Poisson algebra structure on $H^0(X, d_R)$. 
\begin{lem}
  The graded Poisson algebra structure on $X$ induces a Poisson
  algebra structure on the cohomology $H^0(X, d_R)$ in degree zero. 
\end{lem}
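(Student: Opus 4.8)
The plan is to show that both the graded commutative product and the Poisson bracket descend from $X$ to the degree-zero cohomology and that the resulting operations obey the Poisson axioms. Everything is driven by two facts: the differential $d_R = \{R,-\}$ is a graded derivation of \emph{both} the product and the bracket, and $d_R^2 = 0$. The derivation property for the product is part of the definition of a graded Poisson algebra, while the derivation property for the bracket is the graded Jacobi identity applied to $R$: for homogeneous $a$ of degree $|a|$ one has $d_R\{a,b\} = \{d_R a, b\} + (-1)^{|a|}\{a, d_R b\}$. The identity $d_R^2 = \frac{1}{2}\{\{R,R\},-\} = 0$ follows from the classical master equation $\{R,R\}=0$.

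First I would record the degree bookkeeping. The product maps $X^0 \times X^0$ into $X^0$, and, since the bracket on $X$ has degree zero (the pairing between $\MM$ and $\MM^*$ matches $e \in \MM^{-j}$ with an element of $(\MM^{-j})^*$ of degree $+j$, so that $\{e,e^*\}$ has degree zero), the bracket also maps $X^0 \times X^0$ into $X^0$. Thus both bilinear operations restrict to the degree-zero component and are candidates for operations on $H^0(X,d_R)$.

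Next I would verify compatibility with passage to cohomology. Writing $Z^0$ for the $d_R$-cocycles in $X^0$ and $B^0 = d_R(X^{-1})$ for the coboundaries, the derivation property shows that the product (resp. bracket) of two cocycles is again a cocycle, and that the product (resp. bracket) of a cocycle with a coboundary is itself a coboundary: for the product this reads $(d_R a') b = d_R(a' b)$ whenever $d_R b = 0$, and analogously $\{d_R a', b\} = d_R\{a',b\}$ for the bracket, where $a' \in X^{-1}$. Hence both operations are well defined on the quotient $H^0(X,d_R) = Z^0/B^0$, and the quotient map $Z^0 \to H^0(X,d_R)$ is a homomorphism for each of them.

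Finally, the Poisson axioms are inherited: associativity and commutativity of the product, antisymmetry and the Jacobi identity for the bracket, and the Leibniz compatibility between the two all hold for representatives in $X^0$ and are preserved because the quotient map respects both operations; in degree zero the graded signs are trivial, so $H^0(X,d_R)$ is an ordinary (ungraded) Poisson algebra. There is no genuine obstacle here; the only point requiring care is the degree accounting of the second paragraph, namely confirming that the bracket has degree zero so that it sends $X^0 \times X^0$ into $X^0$, and that degree-zero coboundaries arise from $X^{-1}$, so that the well-definedness computation takes place in the correct degrees.
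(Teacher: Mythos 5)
Your proposal is correct and follows essentially the same route as the paper: the product part is the standard fact that cohomology of a differential graded commutative algebra is a graded commutative algebra, and the bracket descends because $d_R=\{R,-\}$ is a derivation of the bracket by the graded Jacobi identity, so brackets of cocycles are cocycles and a bracket with a coboundary is a coboundary. Your extra degree bookkeeping (the bracket has degree zero, coboundaries in degree zero come from $X^{-1}$) is consistent with the paper and adds nothing problematic.
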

\begin{proof}
  The cohomology of a differential graded commutative algebra is
  naturally a graded commutative algebra. In particular, the
  cohomology in degree 0 is a commutative algebra. We have to show
  that the bracket descends to $H^0(X, d_R)$. Let $x, y \in X^0$
  be representatives of cohomology classes in $H^0(X, d_R)$. Let $a
  \in \R$. Then $\{x,y\} \in X^0$ is closed: $\{R, \{x,y\}\} = 0$ by the
  graded Jacobi identity. Moreover, if $x = d_R x'$ is exact, then
  $d_R \{x', y\} = \{R, \{x', y\}\} = -\{x', \{y, R\}\} - \{y, \{R, x'\}\} = -\{y, x\}
  = \{x,y\}$.
\end{proof}
Those two structures are in fact isomorphic. We will explicitly
construct a Poisson isomorphism. By corollary \ref{cor:finitecohomology},
we have $H^0(X, d_R) \cong H^0(X/\FF^2 X, d_R)$ as vector spaces.
\begin{lem}
  \label{lem:repcoho}
  Representatives in $X^0$ of cocycles in $X^0/\FF^2 X^0$ defining
  elements in $H^0(X/\FF^2 X , d_R)$ may be taken of the form
  \begin{align*}
    x = x_0 + \sum_{i,j \in L} a_{ij} e^*_i e_j
  \end{align*}
  where $L = \{ n \in \N : \deg(e^*_j) = 1 \}$, $x_0 \in P$, the $\{e_j\}$ are
  a homogeneous basis of $\MM$, and the $a_{ij} \in P$ are chosen such that 
  \begin{align*}
    \{\delta(e_j), x_0\} = \sum_{i \in L} a_{ji} \delta(e_i)
  \end{align*}
  Conversely, every such element defines a cohomology class.
\end{lem}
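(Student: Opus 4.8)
The plan is to make the quotient $X^0/\FF^2 X^0$ completely explicit and then read off the cocycle condition from a direct computation of $d_R$ modulo $\FF^2 X$; no reduction by coboundaries turns out to be needed, since \emph{every} element of $X^0/\FF^2 X^0$ already has the asserted shape.

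First I would describe $X^0/\FF^2 X^0$. Because $\FF^2 X$ consists of the elements of total $\MM^*$-degree at least two, the quotient is spanned by monomials carrying at most one factor from $\MM^*$; in degree zero this leaves the $\MM^*$-degree-zero piece $T^0 = P$ together with $\gr^1 X^0 \cong B^1 \otimes_P T^{-1}$ via lemma \ref{lem:associatedgraded}. Here $B^1 = \bigoplus_{i\in L} P\,e_i^*$, since a single dual generator has total degree one exactly when its degree is one, and likewise $T^{-1} = \bigoplus_{j\in L} P\,e_j$. Consequently any representative of a class in $H^0(X/\FF^2 X, d_R)$ may be taken of the form $x = x_0 + \sum_{i,j\in L} a_{ij}\,e_i^* e_j$ with $x_0\in P$ and $a_{ij}\in P$, and it remains to translate the cocycle condition $d_R x\in\FF^2 X^1$ into the stated relation.

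Next I would compute $d_R x$ modulo $\FF^2 X$, writing $x = x_0 + x_1$ with $x_1 = \sum a_{ij}e_i^* e_j$. Since $d_R$ preserves the filtration, $d_R x_1$ has no $\gr^0$-component and its $\gr^1$-part is $(1\otimes\delta)(x_1) = \sum_{i,j\in L} a_{ij}\,\delta(e_j)\,e_i^*$ by lemma \ref{lem:induceddiff}. For $x_0$ I replace $R$ by $Q_0$ using $R\equiv Q_0\pmod{\iphat}$ (lemma \ref{lem:q0}) together with $\{\iphat,-\}\subset\iphat\subset\FF^2 X$ (lemma \ref{lem:ipbracket}), so that $d_R x_0 \equiv \{Q_0, x_0\}$. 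Expanding $Q_0 = \sum_m (-1)^{1+d_m} e_m^* \delta(e_m)$ and using $\{e_m^*, x_0\} = 0$ gives $\{Q_0, x_0\} = \sum_m (-1)^{1+d_m} e_m^*\{\delta(e_m), x_0\}$; every summand with $m\notin L$ carries the factor $e_m^*$ of total degree $-d_m\geq 2$ and hence lies in $\FF^2 X$, leaving $\sum_{m\in L} e_m^*\{\delta(e_m), x_0\}$.

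Finally I would collect coefficients. Combining the two contributions,
\begin{align*}
  d_R x \equiv \sum_{i\in L} e_i^*\Big(\{\delta(e_i), x_0\} + \sum_{j\in L} a_{ij}\,\delta(e_j)\Big) \pmod{\FF^2 X},
\end{align*}
up to the sign convention fixing the $a_{ij}$. Since $\gr^1 X^1 = B^1 = \bigoplus_{i\in L} P\,e_i^*$ is free over $P$ on the $e_i^*$, this vanishes modulo $\FF^2 X$ if and only if each coefficient vanishes, which is exactly the relation $\{\delta(e_j), x_0\} = \sum_{i\in L} a_{ji}\,\delta(e_i)$; the converse direction is the same computation read backwards. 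The main obstacle is bookkeeping rather than conceptual: I must keep $\iphat = I^{(2)}$ carefully distinct from $\FF^2 X$ and verify that every term involving a dual generator of degree $\geq 2$ — arising from the $m\notin L$ part of $Q_0$, from the higher corrections $R - Q_0\in\iphat$, and from would-be filtration-lowering brackets — genuinely lands in $\FF^2 X$, while also tracking the Koszul signs when expanding $\{Q_0,-\}$ by the Leibniz rule.
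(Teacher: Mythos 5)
Your proposal is correct and follows essentially the same route as the paper: an explicit description of $X^0/\FF^2 X^0$ showing every class already has the asserted form, followed by a computation of $d_R$ modulo $\FF^2 X^1$ using $R \equiv Q_0 \pmod{\iphat}$ and a comparison of coefficients of the $e_j^*$ with $j \in L$. The only cosmetic difference is that you route the $\sum a_{ij}e_i^*e_j$ term through lemma \ref{lem:induceddiff} while the paper expands $\{R,-\}$ directly by the Leibniz rule, and your citation of lemma \ref{lem:ipbracket} for $\{\iphat, x_0\}\subset\FF^2X$ is slightly off (that inclusion follows from the Leibniz rule and $\{\MM\oplus\MM^*,P\}=0$ rather than from that lemma), but neither affects the argument.
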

\begin{proof}
  We have
  \begin{align*}
    X^0 / \FF^2 X^0 &= P \oplus ( P \otimes (\MM^*)^{-1} \otimes
    \MM^{-1} ) \\
    X^{-1} / \FF^2 X^{-1} & = (P \otimes \MM^{-1}) \oplus ( P \otimes (
    \MM^*)^1 \otimes \MM^{-2}) \oplus (P \otimes (\MM^*)^1 \otimes
    (\MM^{-1} \wedge \MM^{-1})).
  \end{align*}
  Hence an arbitrary cochain may be taken to be of the form 
  \begin{align*}
    x = x_0 + \sum_{i,j \in L} a_{ij} e^*_i e_j
  \end{align*}
  for some $x_0, a_{ij} \in P$.  We compute with the help of lemma
  \ref{lem:q0},
  \begin{align*}
    d_R x &= \{R, x_0\} + \sum_{i,j \in L} ( \{R, a_{ij} \}e^*_i e_j + \{R,
    e^*_i \} e_j a_{ij} - \{R,  e_j\} e^*_i a_{ij}) \equiv \{R, x_0\} -
    \sum_{i,j \in L} \{R, e_i\} e_j^* a_{ji}\\
    &\equiv  \sum_{j \in L}
    \bigg( (-1)^{1+d_j}\{\delta(e_j), x_0\} - \sum_{i \in L} a_{ji} \delta(e_i)
    \bigg) e^*_j \pmod{ \FF^2 X^1}
  \end{align*}
\end{proof}
\begin{thm}
  $H^0(X, d_R) \cong (P/J)^J$ as Poisson algebras.
\end{thm}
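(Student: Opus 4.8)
The plan is to write down an explicit Poisson isomorphism $\Theta\colon H^0(X,d_R)\to (P/J)^J$ and to verify the three structures (underlying vector space, associative product, Poisson bracket) in turn, taking advantage of the fact that the first two are essentially already supplied by the spectral sequence of Lemma \ref{lem:ssdeg}. By Corollary \ref{cor:finitecohomology} it suffices to work in $X/\FF^2 X$, and by Lemma \ref{lem:repcoho} every class in $H^0(X,d_R)$ has a representative of the canonical form $x=x_0+\sum_{i,j\in L}a_{ij}e_i^*e_j$ with $x_0\in P$ subject to $\{\delta(e_j),x_0\}=\sum_{i\in L}a_{ji}\delta(e_i)$. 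I define $\Theta[x]:=x_0+J$; that is, $\Theta$ reads off the $P$-component of $x$, equivalently the image of $x$ under the reduction $X^0\to\gr^0 X^0=P$ modulo $\FF^1 X$, followed by the projection $P\to P/J$.

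First I would check that $\Theta$ is well defined with image in $(P/J)^J$. Since $\{\delta(e_j)\}_{j\in L}$ generate $J$ (being the image of $\delta\colon T^{-1}=P\otimes\MM^{-1}\to T^0=P$ in the Tate resolution, where Noetherianity guarantees $J$ finitely generated), the relation $\{\delta(e_j),x_0\}=\sum_i a_{ji}\delta(e_i)\in J$ together with the Leibniz rule yields $\{x_0,J\}\subset J$, so that $x_0+J\in(P/J)^J$. Passing to another representative changes $x$ by a $d_R$-coboundary, and the $\gr^0$-component of $d_R$ applied to an element of $X^{-1}$ equals $\delta$ applied to its $T^{-1}$ part, which lies in $\Image\delta=J$; hence $x_0+J$ is independent of the chosen representative. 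That $\Theta$ is an algebra homomorphism is immediate, because the reduction $X^0\to\gr^0 X^0=P$ is multiplicative ($\FF^1 X$ being an ideal) and commutes with the projection to $P/J$. Bijectivity I would deduce from Lemma \ref{lem:ssdeg}: the edge isomorphism identifies $H^0(X,d_R)$ with $E_2^{0,0}=\ker\big(d_1\colon E_1^{0,0}\to E_1^{1,0}\big)$, and a direct computation of $d_1$ from $Q_0$ via Lemma \ref{lem:q0} gives $d_1[x_0]=\sum_{k\in L}(-1)^{1+d_k}e_k^*\otimes(\{\delta(e_k),x_0\}+J)$ in $E_1^{1,0}=B^1\otimes_P P/J$, whose kernel is exactly $(P/J)^J$; under this identification $\Theta$ is precisely the edge map, hence a bijection. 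Alternatively, surjectivity follows from the converse statement in Lemma \ref{lem:repcoho}.

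The genuinely new content, and the step I would carry out most carefully, is that $\Theta$ respects the Poisson bracket, i.e. $\Theta(\{[x],[y]\})=\{x_0,y_0\}+J$. The bracket on $H^0$ is $\{[x],[y]\}=[\{x,y\}]$, so I must compute the $\gr^0$-component of $\{x,y\}$ for two canonical representatives $x=x_0+\xi$, $y=y_0+\eta$ with $\xi,\eta\in\FF^1 X^0$ quadratic in the ghost generators. Expanding $\{x,y\}=\{x_0,y_0\}+\{x_0,\eta\}+\{\xi,y_0\}+\{\xi,\eta\}$, the central observation is a filtration count: the bracket between $P$ and the generators $e_k,e_k^*$ vanishes, so $\{x_0,\eta\}$ and $\{\xi,y_0\}$ only bracket the $P$-coefficients and stay in $\FF^1 X$; and since $\{-,-\}$ is a biderivation, $\{\xi,\eta\}$ contracts exactly one $\MM^*$-factor against one $\MM$-factor, leaving one factor of each and hence also landing in $\FF^1 X$. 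Thus the only contribution to $\gr^0$ is $\{x_0,y_0\}\in P$, and since $\{x,y\}$ is $d_R$-closed its class automatically satisfies the compatibility condition of Lemma \ref{lem:repcoho}. Consequently $\Theta[\{x,y\}]=\{x_0,y_0\}+J=\{\Theta[x],\Theta[y]\}$, which is by definition the bracket on $(P/J)^J$.

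The main obstacle is therefore not conceptual but a matter of bookkeeping: one must verify the filtration estimates in the bracket computation precisely, confirming that every cross term and every contraction term lands in $\FF^1 X$ so that $\gr^0\{x,y\}=\{x_0,y_0\}$. Everything else reduces either to the spectral sequence of Lemma \ref{lem:ssdeg}, which already furnishes the algebra isomorphism, or to the explicit canonical representatives of Lemma \ref{lem:repcoho}.
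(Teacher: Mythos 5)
Your map is the paper's map: project a degree-zero cocycle onto its $P$-component and reduce mod $J$. The well-definedness argument (the $\gr^0$-component of $d_R$ on $X^{-1}$ is $\delta$ of the $T^{-1}$-part, hence lands in $J$) and the surjectivity argument (the $\delta(e_j)$, $j\in L$, generate $J$, so the required $a_{ij}$ exist) coincide with the paper's. Two steps differ. For bijectivity you invoke the spectral sequence of lemma \ref{lem:ssdeg} and identify $\Theta$ with the edge map onto $E_2^{0,0}=\ker(d_1)\cong (P/J)^J$; the paper explicitly notes this route is available for the algebra structure but instead proves injectivity by hand, correcting a cocycle $x$ with $\pi(x)\in J$ by successive coboundaries $d_Ry_j$ with $y_j\in\FF^jX^{-1}$ and summing via lemma \ref{lem:filtrationsums}. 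Both are valid; yours is shorter, the paper's reuses only lemma \ref{lem:cohocalc}. For the bracket the paper avoids canonical representatives altogether: it shows $\{\pi(x)-x,X^0\}\subset\ker\pi$ (every degree-zero summand of an element of $\ker\pi\cap X^0$ has at least two factors of nonzero degree and the biderivation contracts at most one), so $\pi(\{x,y\})=\{\pi(x),\pi(y)\}$ for \emph{arbitrary} cocycle representatives; your term-by-term filtration count on $\{x_0,\eta\}$, $\{\xi,y_0\}$, $\{\xi,\eta\}$ is the same computation written in the coordinates of lemma \ref{lem:repcoho}.

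One step needs a patch. The canonical forms of lemma \ref{lem:repcoho} represent cocycles in $X^0/\FF^2X^0$, not cocycles in $X^0$; a genuine representative of a class in $H^0(X,d_R)$ is $x_0+\xi+h$ with $h\in\FF^2X^0$, and the bracket $\{[x],[y]\}=[\{x,y\}]$ must be computed from genuine cocycles --- it does not obviously descend to $H^0(X/\FF^2X,d_R)$, since lemma \ref{lem:optimal} only gives $\{X^{-1},\FF^2X^1\}\subset\FF^{1}X^0$, so coboundaries need not stay coboundaries modulo $\FF^2$. The repair is one line: by lemma \ref{lem:optimal}, $r_{0,0}(2,0)=2$, so $\{h,y\}\in\FF^2X^0\subset\FF^1X^0$ and the tails do not affect $\gr^0\{x,y\}$; your conclusion $\gr^0\{x,y\}=\{x_0,y_0\}$ then stands. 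With that sentence added the proof is complete.
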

\begin{proof} 
  Let $\pi : X \to P = X/(I + I_-)$ denote the projection onto all monomials which
  contain no factors of nonzero degree. Here $I_- \subset X$ denotes the ideal
  generated by all elements of negative degree. Define the map $\Phi : H^0(X,
  d_R) \to P/J$ by $\Phi([x]) := \pi(x) + J$.  This map is well
  defined: Let $x = d_R y$ be exact. Consider again the differential $\delta$
  on $T=X/I$ that is induced by $d_R$ and its representation as an element $Q_0
  \in X^1$. Also pick a homogeneous basis $e_j$ of
  $\MM$ as done before. By lemma \ref{lem:q0}, we obtain 
  \begin{align*}
	 d_R(y) \equiv \{Q_0, y\} \equiv \sum_i \{ (-1)^{1+d_i} \delta(e_i) e_i^*,
  y\} \equiv \sum_{i : \deg e_i = -1} \delta(e_i) \{ e_i^*, y\} \pmod{ I +
  I_-}.
  \end{align*}
  The last sum is finite. Hence
  \begin{align*}
	 \pi(x) = \pi(d_R y) = \sum_{i : \deg e_i = -1} \delta(e_i) \pi \{e_i^*, y\}
	 \in J.
  \end{align*}

  By lemma \ref{cor:finitecohomology}, we have $H^0(X, d_R) \cong
  H^0(X/\FF^2 X, d_R)$ as vector spaces. Hence we have a corresponding
  linear map $X/\FF^2 X \to P/J$.

  The image of either of those maps is $J$-invariant: Let $[x] \in
  H^0( X / \FF^2 X, d_R)$. According to lemma \ref{lem:repcoho}, we
  may pick a representative $x_0 = \pi(x_0) + \sum_{i,j \in L} a_{ij}
  e^*_i e_j$ of $x$ where $a_{ij} \in P$ satisfy $\{\delta(e_j),
  \pi(x_0) \} = \sum_{i\in L} a_{ji} \delta(e_i)$. In particular
  $\{\delta(e_j), \pi(x_0)\} \in J$. Fix $b \in J$. Then there exist
  $b_j \in P$ with $b = \sum_{i \in L} b_i \delta(e_i)$ and thus $ \{b,
  \pi(x_0)\} = \sum_{i \in L} \big (b_i \{\delta(e_i), \pi(x_0)\} +
  \delta(e_i) \{b_i, \pi(x)\} \big) \in J$.

  Hence we have two linear maps
  \begin{align*}
    \phi : H^0(X/\FF^2 X, d_R) &\to (P/J)^J \\
    \Phi : H^0(X, d_R) &\to (P/J)^J
  \end{align*}
  given by projection onto the $P$ component followed by modding out
  $J$, which correspond to each other under the isomorphism $H^0(X,
  d_R) \cong H^0(X/\FF^2 X, d_R)$.

  The map $\phi$ is surjective: Let $p \in P$ with $\{J,p\}\subset
  J$. By lemma \ref{lem:repcoho}, the element $x = p + \sum_{ij
    \in L} a_{ij} e^*_i e_j$ is a cocycle if $\{\delta(e_j), p\} =
  \sum_{i\in L} a_{ji} \delta(e_i)$. But those $a_{ij} \in P$ exist
  since the $\{\delta(e_j)\}_{ j \in L}$ generate $J$. Hence also the
  map $\Phi$ is surjective.

  The map $\Phi$ is injective: Let $x \in X^0$ represent $[x] \in
  H^0(X, d_R)$ with $\pi(x) \in J$. We claim that there exist $y_j \in
  \FF^j X^{-1}$ with $x - d_R( y_0 + \cdots + y_n ) \in \FF^{n+1}
  X^0$. By lemma \ref{lem:cohocalc}, we know that $ H^j( \FF^p X /
  \FF^{p+1} X, d_R )$ is concentrated in degree zero with $H^0(X
  / \FF^{1} X, d_R ) \cong P/J$ via the natural map. Now $x + \FF^1
  X^0$ defines the zero cohomology class in $H^0( X / \FF^1 X, d_R)$
  since $\pi(x) \in J$. Hence there exists $y_0 \in \FF^0 X^{-1}$ with
  $x - d_R y_0 \in \FF^1 X^0$. Again, $x - d_R y_0 + \FF^2 X^0$
  defines the zero cohomology class in $H^0( \FF^1 X /\FF^2 X, d_R) =
  0$. Hence there exists $y_1 \in \FF^1 X^{-1}$ with $x - d_R(y_0 +
  y_1) \in \FF^2 X^0$ and so on. Hence the $y_j$ exist and their sum
  converges to an element $y \in X^{-1}$ by lemma
  \ref{lem:filtrationsums}, which satisfies $x - d_R y = 0$ by lemma
  \ref{lem:filtrationclosed}. 
  
  Hence the map $\Phi$ is an isomorphism of vector spaces.
  This map also respects the product structure
  \begin{align*}
    \Phi( [x] [y] ) & = \Phi( [xy] ) = \pi(x y) + J = \pi(x) \pi(y) +
    J = (\pi(x) + J)(\pi(y) + J)= \Phi([x])\Phi([y])
  \end{align*}
  and is hence an isomorphism of algebras. Finally, map $\Phi$
  respects the bracket:
  \begin{align*}
    \Phi( \{ [x], [y] \}) &= \Phi( [ \{x,y\}]) = \pi(\{x,y\}) + J =
    \pi(\{\pi(x), y\} ) + J = \pi( \{\pi(x), \pi(y)\}) + J \\ &= \{\pi(x),
    \pi(y)\} + J = \{\pi(x) + J, \pi(y) + J\} = \{\Phi(x), \Phi(y)\}
  \end{align*}
  since $\{ \pi(x) - x , X^0\} \subset \{ (I + I_0) \cap X^0, X^0\} \subset
  \ker \pi$ where $\ker \pi = I + I_0 \subset X$ is the ideal generated by all
  elements of nonzero degree. The last inclusion holds by the Leibnitz rule
  since all summands of elements in $I + I_0$ that are of degree zero contain
  at least two factors of nonzero degree.
\end{proof}

\section{Examples}

\subsection{Rotations of the Plane}
\label{sec:sphere}

Here we present an example, where the cohomology in degree zero has a
nontrivial bracket and the cohomology in degree 1 does not vanish. It
is obtained by considering the symplectic lift of the rotations of the
plane to the cotangent bundle of the plane.

Consider $P = \R[x_1, x_2, y_1, y_2]$ with $\{x_i, y_j\} =
\delta_{ij}$. The ideal $J \subset P$ generated by $\mu = x_1 y_2 -
x_2 y_1$ is coisotropic.  A Tate resolution of $J$ is given by
\begin{align*}
  0 \to P \cdot e \to P \to P/J \to 0
\end{align*}
where the differential $\delta$ is the $P$-linear derivation defined by
$\delta(e) = \mu$. Indeed, this complex is a Koszul complex
which is exact since $\mu\neq 0$ defines a regular sequence. Hence $X = \big(
P \cdot e \big) \oplus \big( P \oplus P \cdot e^* e \big) \oplus \big(
P \cdot e^* \big)$.  We now
apply the construction from section \ref{sec:existence}.  We obtain
$Q_0 = e^* \mu$ and $R = Q_0$ since $\{Q_0, Q_0\} = 0$. One easily calculates
\begin{align*}
  H^0(X, d_R) = \frac{ \{ a + b e^*e : \{\mu, a\} = \mu b , a,b \in P\}
  }{ \{ \mu c + \{\mu,c\} e^* e : c \in P \} }.
\end{align*}
Notice, that the isomorphism $H^0(X, d_R) \to (P/J)^J$ given by
projection onto $P$ is evident here. Moreover, the bracket on this
space does not vanish: $x_1^2 + x_2^2$ and $y_1^2 + y_2^2$ define
cohomology classes, for which $\{x_1^2 + x_2^2, y_1^2 + y_2^2\} = 4 (x_1 y_1
+ x_2 y_2)$ is not in $J$.  Furthermore,
\begin{align*}
  H^1(X, d_R) = \frac{\{ a e^* : a \in P\}}{\{ d_R( a + b e^*e ) : a,
    b \in P\}} \cong \frac{ P }{ \{\{\mu, a\} + \mu b  : a, b \in P \}}
\end{align*}
does not vanish since $\deg_0\{\mu, a\} \geq 1$ and $\deg_0(\mu b) \geq
2$. Here, $\deg_0$ denotes the degree in $P=\R[x_i, y_i]$.

\subsection{Rotations of Space}
\label{sec:space}

Let $X = \R^3$ and $M = T^*X \cong X \oplus X^*$. Consider the group $G =
SO(3)$ acting on $X$ via the standard representation $\rho_0 : G \to \End{X}$.
The symplectic lift is given by $\rho : G \to \End{M}$, $\rho(A)(x,p) = (Ax, p
\circ A^{-1})$. Mapping the standard basis of $X = \R^3$ to its dual basis, we
obtain an isomorphism $\iota : X \to X^*$. A possible moment map is the angular
momentum mapping $\mu : M \to \R^3$, $\mu(x,p) = x \times \iota{p}$. Here
$\times$ refers to the vector product, and we identified $\gfrak \cong \R^3$
using the basis
\begin{align*}
		  e_1 & = 
		  \begin{pmatrix}
					 0 & 0 & 0 \\
					 0 & 0 & -1 \\
					 0 & 1 & 0
		  \end{pmatrix},
		  &
		  e_2 & =
		  \begin{pmatrix}
					 0 & 0 & -1 \\
					 0 & 0 & 0 \\
					 1 & 0 & 0
		  \end{pmatrix},
		  &
		  e_3 & = 
		  \begin{pmatrix}
					 0 & -1 & 0 \\
					 1 & 0 & 0 \\
					 0 & 0 & 0
		  \end{pmatrix}.
\end{align*}
We define $M_0 = \mu^{-1}(0) = \{ (x,p) \in X \oplus X^* : \iota p \parallel
x\}$. This is not a manifold. If it was one, it had dimension 4. However, all
$(x,0), x \in \R^3$ and all $(0,p), p \in (\R^3)^*$ belong to $M_0$. Hence they
also belong to the tangent space at the origin, provided $M_0$ was a manifold.
Since the tangent space at the origin is linear, it would have dimension $6 > 4$. Since the
constraint surface $M_0$ is not a manifold, results from \cite{fisch1989},
\cite{kostant-sternberg} do not apply.

We take the Tate resolution $T$ of the vanishing ideal $J$ of
$M_0$ in $P = \R[ M ] = \R[x_1,x_2,x_3,p_1,p_2,p_3]$ with $\{x_i,p_j\} = \delta
_{ij}$. It exists since $P$ is Noetherian. We obtain the existence and
uniqueness of a BRST charge $R$ as described in the previous chapters.

%
%
%

\section{Quantization}

In this section, we discuss quantization. In section \ref{ssec:qalg}, we define
a quantum algebra quantizing the Poisson algebra $X$ from the previous part of
this note. We rigorously define multiplication via normal ordering in the
presence of infinitely many ghost variables. 

In section \ref{ssec:qmesolution}, we construct a solution of the quantum
master equation associated to a given solution of the classical master
equation. This means we construct an element of the quantum algebra that agrees
with the quantization of the classical solution up to an error of order $\hbar$
and squares to zero.

In section \ref{ssec:qmeunique}, we discuss the uniqueness of such solutions of
the quantum master equation. We parallel our discussion in the classical case.
In section \ref{sssec:qgauge}, we define the
notion of a quantum gauge equivalence. In section \ref{sssec:uniquenessgivenR},
we prove that two solutions of the quantum master equation that agree up to an
error of order $\hbar$ are related via an automorphism of associative algebras.
In section \ref{sssec:qsametate}, we show that the solutions of the quantum
master equation associated to two BRST-models associated to the same Tate
Resolution are also related by an automorphism of associative algebras. In
section \ref{sec:stableuniqueness}, we have shown that any two BRST-models
associated to the same coisotropic ideal $J$ are stably equivalent. We would
like to find a quantum analogue of this theorem. We were able to prove in
section \ref{sssec:quantproduct} that the process of adding extra variables
yields a quasi-isomorphism of differential graded algebras on the quantum
level. However, as discussed in section \ref{sssec:genunique}, we were unable
to quantize the general Poisson isomorphism of lemma \ref{lem:philift}.

\subsection{Quantum Algebra}
\label{ssec:qalg}

\begin{ass}
  Assume that $P = \R[V]$ where $V$ is a finite-dimensional real symplectic
  vector space and the bracket on $P$ is induced by the symplectic structure. 
\end{ass}
Pick a decomposition $V = W \oplus W^*$ where $W$ is a Lagrangian subspace of
$V$ and $\{v, \lambda\} = \lambda(v)$ for $v \in W$ and $\lambda \in W^*$. 
Set $\NN := \MM \oplus W$ so that we may write $X_0 = \Sym( \NN \oplus \NN^*)$.
Further define $X_- = \Sym(\NN)$ and $X_+ = \Sym(\NN^*)$. The first
approximation to the quantum algebra is the vector space $G_0 = G_- \otimes
G_+$. Here, $G_\pm = X_\pm$ as vector spaces. We discard the algebra structure
induced by the tensor product. Later we will construct a different one. The
canonical isomorphism $\Sym(\NN) \otimes \Sym(\NN^*) \to \Sym(\NN \oplus
\NN^*)$ has an inverse which we call the classical normal ordering map $q_0 :
X_0 \to G_0$. It is an isomorphism of graded vector spaces. Although, given
$W$, the map $q_0$ is canonical, the choice of $W$ is not.

Our first goal is to define a suitable algebra structure to realize the quantum
commutation relations. For this task, we need to introduce a formal parameter
$\hbar$ and construct the graded vector space $G_0[\hbar]$. We consider $\hbar$
to be of degree zero. Our second
goal is to complete this vector space in such a way that we can extend the
normal ordering map $q_0 : X_0 \to G_0 \subset G_0[\hbar]$ to a quantization
map defined on all of $X$.

First, we introduce normal ordering on the quantum level. Let $s_- :
\Sym(\NN) \to T(\NN)$ and $s_+ : \Sym(\NN^*) \to T(\NN^*)$ be any two
$\R$-linear splits from the symmetric algebras to the respective tensor
algebras. Denote the canonical isomorphism $T(\NN) \otimes T(\NN^*) \to T(\NN
\oplus \NN^*)$ by $\tau$ and set $s = \tau \circ (s_- \otimes s_+) :
G_0 \to T(\NN \oplus \NN^*)$. Extend the latter map to 
\begin{align*}
s : G_0[\hbar] \to T(\NN
\oplus \NN^*)[\hbar]
\end{align*}
by linearity in $\R[\hbar]$. Similarly, using the canonical projections $T(\NN)
\to \Sym(\NN)$ and $T(\NN^*) \to \Sym(\NN^*)$, define the map $
	(T(\NN) \otimes T(\NN^*))[\hbar] \to G_0[\hbar]$
by $\R[\hbar]$-linear extension. Let $(cr) \subset T(\NN
\oplus \NN^*)[\hbar]$ be the two-sided ideal generated by 
\begin{align*} [x,y] - \hbar \{x,y\}, \quad x,y \in \NN \oplus
  \NN^* 
\end{align*} 
where $[-,-]$ denotes the graded commutator on the tensor algebra. To define a
multiplication on $G_0[\hbar]$ we have to prove the following Poincar\'e-Birkhoff-Witt
type statement:
\begin{thm}
	Let $x , y \in G_0[\hbar]$.  Then the class of 
  $s(x) s(y) \in T(\NN \oplus \NN^*)[\hbar]$ modulo $(cr)$ has a
  representative in $(T(\NN) \otimes
	T(\NN^*)) [\hbar]$. Its image in $G_0[\hbar]$ is uniquely determined by $x$ and $y$
  and is moreover independent of the choice of splits $s_{\pm}$. We denote it by
	$\nu(x,y)$. The resulting map $\nu : G_0[\hbar] \times G_0[\hbar] \to G_0[\hbar]$ is
  $\R[\hbar]$-bilinear.
\end{thm}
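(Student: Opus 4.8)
The plan is to reduce the entire statement to a single Poincar\'e--Birkhoff--Witt fact about the quotient algebra $A := T(\NN \oplus \NN^*)[\hbar]/(cr)$. Namely, I would show that the normal-ordering map
\[
  q : G_0[\hbar] \xrightarrow{\ s\ } T(\NN \oplus \NN^*)[\hbar] \twoheadrightarrow A
\]
(post-composition of $s$ with the canonical projection) is a well-defined $\R[\hbar]$-module isomorphism that does not depend on the chosen splits $s_\pm$. Granting this, every clause of the theorem follows formally. First, $[s(x)]_A$ depends only on $x$: the kernel of $\pi_-:T(\NN)\to\Sym(\NN)$ is the two-sided ideal generated by the elements $a\otimes b-(-1)^{|a||b|}b\otimes a$ with $a,b\in\NN$ (here $|x|$ denotes the degree of $x$), and each such generator is the graded commutator $[a,b]$, which lies in $(cr)$ since $\{a,b\}=0$ because $\NN$ is isotropic for the pairing; the same holds for $\pi_+$. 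Hence $q$ is split-independent and the projection being an algebra map gives $[s(x)s(y)]_A=q(x)q(y)$. Surjectivity of $q$ then says precisely that $q(x)q(y)$ admits a normal-ordered representative $w\in (T(\NN)\otimes T(\NN^*))[\hbar]$, which is the first assertion. Writing $\Pi=\pi_-\otimes\pi_+$ for the projection to $G_0[\hbar]$, one checks directly that $q\circ\Pi=[\,\cdot\,]_A$; therefore if $w,w'$ are two normal-ordered representatives of the same class, injectivity of $q$ forces $\Pi(w)=\Pi(w')$, so $\nu(x,y):=\Pi(w)=q^{-1}\bigl(q(x)q(y)\bigr)$ is well-defined. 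Finally $\R[\hbar]$-bilinearity of $\nu$ is immediate from $\R[\hbar]$-linearity of $q$ and bilinearity of the product on $A$.

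It remains to prove that $q$ is an isomorphism. Surjectivity is the usual straightening argument: using the relations $y\otimes x\equiv (-1)^{|x||y|}x\otimes y+\hbar\{y,x\}$ for $x\in\NN,\ y\in\NN^*$, together with $a\otimes b\equiv(-1)^{|a||b|}b\otimes a$ within $\NN$ and within $\NN^*$, any word can be rewritten modulo $(cr)$ as an $\R[\hbar]$-combination of normal-ordered monomials. Induction on the pair (tensor degree, number of inversions), ordered lexicographically, terminates, because each reordering either lowers the inversion count at fixed degree or produces a correction term of strictly smaller tensor degree. Hence normal-ordered monomials span $A$.

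Injectivity is the heart of the matter, and I would establish it by constructing a faithful Fock-type representation on $F:=G_0[\hbar]$ itself. For $x\in\NN$ let $\hat x$ act by left multiplication on the $\Sym(\NN)$-factor, and for $y\in\NN^*$ define
\[
  \hat x(u\otimes v)=(xu)\otimes v, \qquad
  \hat y(u\otimes v)=(-1)^{|u|\,|y|}\,u\otimes(yv)+\hbar\,(D_y u)\otimes v,
\]
where $D_y$ is the degree-$|y|$ derivation of $\Sym(\NN)$ determined by $D_y(x)=\{y,x\}$. Since each $u$ is a finite product these are honest operators on $F$, so no completion is needed at this stage. The plan is then to verify the defining relations
\[
  [\hat x_1,\hat x_2]=0,\qquad [\hat y_1,\hat y_2]=0,\qquad [\hat x,\hat y]=\hbar\{x,y\}\,\id_F,
\]
the last being the key computation, which uses the super-Leibniz rule $D_y(xu)=\{y,x\}u+(-1)^{|x||y|}xD_y u$ together with graded antisymmetry $\{y,x\}=-(-1)^{|x||y|}\{x,y\}$. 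This produces an $\R[\hbar]$-algebra homomorphism $\rho:A\to\End_{\R[\hbar]}(F)$. Evaluating on the vacuum one finds $\rho\bigl(q(u\otimes v)\bigr)(1\otimes 1)=u\otimes v$ for every normal-ordered monomial $u\otimes v$: the positive generators act first as pure creation, since $D_y$ annihilates $1$, and then the negative generators build up $u$. Thus $\mathrm{ev}_{1\otimes 1}\circ\rho\circ q=\id_F$ on the normal-ordered basis, whence $q$ is injective.

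The step I expect to be the genuine obstacle is the sign bookkeeping in verifying $[\hat x,\hat y]=\hbar\{x,y\}\,\id_F$ and $[\hat y_1,\hat y_2]=0$. The odd ghost variables force Koszul signs through the derivation $D_y$ and through the interchange of the $\Sym(\NN)$- and $\Sym(\NN^*)$-factors, and one must confirm that graded antisymmetry, together with the fact that $[D_{y_1},D_{y_2}]$ vanishes on generators (a $D_y$ of a scalar is zero), makes all contributions cancel exactly as required. Everything else --- independence of the splits, the straightening, and the vacuum evaluation --- is routine once this representation is in place.
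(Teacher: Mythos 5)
Your proof is correct, but it takes a genuinely different route from the paper's. The paper picks an ordered homogeneous basis $\{\xi_i\}$ of $\NN$ with dual basis $\{\eta_i\}$, identifies $T(\NN\oplus\NN^*)[\hbar]$ with the free algebra $\R\langle\xi_i,\eta_i\rangle[\hbar]$, and invokes theorem~\ref{thm:pbw} (existence, uniqueness and linearity of a normal-ordered representative), whose proof is in turn a one-line citation of the Poincar\'e--Birkhoff--Witt theorem applied to the graded Heisenberg Lie algebra $\Span\{\xi_i,\eta_i,\hbar\}$ with $\hbar$ central; because a basis was chosen, the paper then needs the separate, comparatively laborious lemma~\ref{lem:pbw-base-indep} to see that $\nu(x,y)$ is canonical. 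You instead work basis-free with the canonical projections $\pi_\pm$, observing that $\ker\pi_\pm$ is generated by graded commutators of vectors on which the pairing vanishes and hence lies in $(cr)$ (which also gives split-independence for free), and you prove the required PBW statement from scratch: straightening for surjectivity of $q$, and an explicit Fock representation of $T(\NN\oplus\NN^*)[\hbar]/(cr)$ on $G_0[\hbar]$, with $\NN$ acting by creation and $\NN^*$ by annihilation plus creation, for injectivity --- the vacuum evaluation $\mathrm{ev}_{1\otimes 1}\circ\rho\circ q=\id$ doing the work. The sign computations you single out do close up: the order-$\hbar^0$ terms cancel by graded commutativity of $\Sym$, the order-$\hbar$ terms by the super-Leibniz rule together with graded antisymmetry of the pairing, and the $\hbar^2$-term in $[\hat y_1,\hat y_2]$ because $[D_{y_1},D_{y_2}]$ is a derivation killing generators. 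What your approach buys is self-containedness and the complete elimination of the base-independence lemma; what the paper's buys is brevity at the point of use, at the price of outsourcing injectivity (the hard half) to the general super-PBW theorem and of an extra lemma to repair the basis dependence.
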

\begin{proof}
Pick an ordered homogeneous basis $\{\xi_i\}$ of $\NN$ and its
corresponding dual basis $\{ \eta_i \}$ of $\NN^*$, so that $G_0 = \R[\xi_i]
\otimes \R[\eta_i]$. We now have $s(x)s(y) \in F_0[\hbar]$ where $F_0 = \R\la
\xi_i, \eta_i \ra$ is a free algebra. By theorem \ref{thm:pbw}, we obtain a
unique normal ordered representative of $s(x)s(y) + (cr) \in F_0[\hbar]/(cr)$,
which lives in $(\R\la \xi_i \ra \otimes \R \la \eta_i \ra )[\hbar]$. Its image
in $(\R[\xi_i]\otimes\R[\eta_i])[\hbar]$ is denoted by $\nu(x,y)$. If we change
either of the splits $s_\pm$ the class $s(x)s(y) + (cr)$ will not change, since
the difference lives in the ideal $(cr)$. Moreover, this class depends
$\R[\hbar]$-linearly on $x$ and $y$.  By theorem \ref{thm:pbw}, also the normal
ordered representative depends linearly on that class. Hence $\nu(x,y)$ is
$\R[\hbar]$-bilinear. By lemma \ref{lem:pbw-base-indep}, the element $\nu(x,y)$
does not depend on the choice of basis $\{\xi_i\}$.
\end{proof}
\begin{lem}
	$(G_0[\hbar], \nu)$ is a graded unital associative algebra.
\end{lem}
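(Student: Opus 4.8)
The plan is to realize $\nu$ as the transport, along a Poincar\'e-Birkhoff-Witt isomorphism, of the genuine associative product on the quotient algebra $A := T(\NN \oplus \NN^*)[\hbar]/(cr)$. Being the quotient of the tensor algebra (associative and unital) by a two-sided ideal, $A$ is an associative unital $\R[\hbar]$-algebra, and each homogeneous generator $[x,y] - \hbar\{x,y\}$ of $(cr)$ has degree $\deg x + \deg y$ (since $\hbar$ has degree $0$ and the bracket has degree $0$), so $(cr)$ is a graded ideal and $A$ is a graded algebra. Writing $p : (T(\NN)\otimes T(\NN^*))[\hbar] \to G_0[\hbar]$ for the projection induced by the canonical maps to the symmetric algebras, I would introduce the $\R[\hbar]$-linear, degree-preserving map
\begin{align*}
  \sigma : G_0[\hbar] \lra A, \qquad \sigma(z) = s(z) + (cr).
\end{align*}

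First I would record the one structural fact driving everything: for $u,v$ both in $\NN$ or both in $\NN^*$ one has $\{u,v\} = 0$ (the ghosts $\MM$ pair only with $\MM^*$, and $W,W^*$ are Lagrangian), so $[u,v] \equiv \hbar\{u,v\} = 0 \pmod{(cr)}$. Hence any two orderings of a monomial inside the $\NN$-block, or inside the $\NN^*$-block, agree modulo $(cr)$; equivalently $\ker p \cap (T(\NN)\otimes T(\NN^*))[\hbar] \subset (cr)$. This has two payoffs: $\sigma$ is in fact independent of the chosen splits $s_\pm$ (consistent with the split-independence of $\nu$), and, since $p\circ s = \id$, for every $r \in (T(\NN)\otimes T(\NN^*))[\hbar]$ the elements $r$ and $s(p(r))$ have the same image under $p$, whence $s(p(r)) \equiv r \pmod{(cr)}$.

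Next I would check that $\sigma$ is a graded linear isomorphism with $\sigma(1)=1_A$ and that it intertwines $\nu$ with the product of $A$. Bijectivity is exactly the PBW theorem \ref{thm:pbw}: the normal-ordered monomials form a basis of $A$, and these are precisely the $\sigma$-images of the standard monomial basis of $G_0[\hbar]$; that $\sigma(1)=1_A$ is immediate since the splits fix $1$. For the intertwining, by construction $\nu(x,y) = p(r)$ where $r$ is the normal-ordered representative of the class of $s(x)s(y)$, so $r \equiv s(x)s(y) \pmod{(cr)}$, and applying the congruence $s(p(r))\equiv r$ gives
\begin{align*}
  \sigma(\nu(x,y)) = s(p(r)) + (cr) = r + (cr) = s(x)s(y) + (cr) = \sigma(x)\,\sigma(y).
\end{align*}

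With these facts the conclusion is formal: as $\sigma$ is a degree-preserving linear bijection carrying $\nu$ to the associative, unital, graded product of $A$ and $1$ to $1_A$, associativity, unitality and gradedness of $(G_0[\hbar],\nu)$ follow by transport of structure (for instance $\sigma(\nu(\nu(x,y),z)) = (\sigma x\,\sigma y)\sigma z = \sigma x(\sigma y\,\sigma z) = \sigma(\nu(x,\nu(y,z)))$, and $\sigma$ is injective). I expect the only genuine content to be the bijectivity of $\sigma$, i.e.\ the PBW statement \ref{thm:pbw}, together with the care needed to see that normal ordering within each block is unambiguous modulo $(cr)$; everything else is bookkeeping.
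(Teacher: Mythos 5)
Your proposal is correct and follows essentially the same route as the paper: the paper's proof is exactly the congruence chain $s(\nu(x,y))s(z) \equiv s(x)s(y)s(z) \equiv s(x)s(\nu(y,z)) \pmod{(cr)}$ followed by uniqueness of the normal-ordered representative (theorem \ref{thm:pbw}), which is your transport-of-structure argument along $\sigma$ written out without naming the quotient algebra. The one thing you add is a careful justification of the step the paper dismisses with ``by construction,'' namely that $s(p(r)) \equiv r \pmod{(cr)}$ because reordering within the $\NN$-block or the $\NN^*$-block costs only commutators with vanishing bracket; this is a worthwhile clarification but not a different proof.
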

\begin{proof}
		  We have to show associativity. We have $s( \nu(x,y) ) \equiv s(x) s(y)
		  \pmod{ (cr)} $ by construction. Hence 
		  \begin{align*}
					 s( \nu(x,y) ) s(z) \equiv s(x) s(y) s(z) \equiv s(x)
					 s(\nu(y,z)) \pmod{ (cr)}
		  \end{align*}
		  and associativity follows.
\end{proof}
We write $ab$ instead of $\nu(a,b)$.
Define $[a, b] = ab - (-1)^{\deg a \deg b} ba$.
\begin{rem}
  If we pick a basis $\{x_i,y_i\}$ of $V$ adapted to the symplectic structure
  and a homogeneous basis $e_i$ of $\MM$ and corresponding dual bases $y_i \in
  V^*$ and $e_i^* \in \MM^*$ we have
	\begin{align*}
	  [x_i, y_j] &= \hbar \delta_{ij} & 
	  [e_k, e_l^*] &= \hbar \delta_{kl} \\
	  [x_i, x_j] &= 0 &  [y_i,y_j] &= 0 & [e_k, e_l] &= 0 &  [e_k^*,
	  e_l^*] = 0 \\ 
	  [x_i, e_k] & = 0 &  [x_i, e_k^*] &= 0 & [y_i, e_k] &= 0 & [y_i,
	  e_k^*] = 0. 
	\end{align*}
\end{rem}
Hence we have realized the quantum commutation relations. 

Next we define a completion and extend the product to it.
We introduce the filtration on $G_0$ defined by the subspaces $\FF^p G_0 =
\bigoplus_{q \geq p} G_- \otimes G_+^p$. Set $\FF^p G_0^n = \FF^p G_0 \cap
G_0^n = \bigoplus_{q \geq p} G_-^{n-p} \otimes G_+^p$. We complete $G_0$ to the
graded vector space $G = \bigoplus_n G^n$ where
\begin{align*}
  G^n = \lim_{\leftarrow p} \frac{G_0^n}{\FF^p G_0^n}.
\end{align*}
This graded vector space is again filtered by the subspaces $\FF^p G^n =
\lim_{\leftarrow q} \frac{\FF^p G_0^n}{\FF^{p+q} G_0^n}$.

Define the graded vector space $G_\hbar$ by its homogeneous components $H_\hbar^n =
G^n[[\hbar]]$. We have a family of projections 
 $p_j : G_\hbar \to G$ mapping $\sum_{k \geq 0} x_k \hbar^k \mapsto x_j$.
 The space $G_0[\hbar]$ can be considered a graded subspace of $G_\hbar$. We
 want to extend the algebra structure on $G_0[\hbar]$ to $G_\hbar$. For this
 task, we need to analyze the compatibility of the product on $G_0[\hbar]$ with the
 filtration on $G_0$.
\begin{lem}
  \label{lem:multcont}
	We have for all $j, p \geq 0$ and $n,m \in \Z$
	\begin{enumerate}
	  \item $p_j( G_0^n \cdot \FF^p G_0^m ) \subset \FF^p G_0^{n+m}$
	  \item $p_j( \FF^p G_0^n \cdot G_0^m ) \subset \FF^{p+m} G_0^{n+m}$
	\end{enumerate}
\end{lem}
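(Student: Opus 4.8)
The plan is to reduce everything to normal-ordered monomials and then track two gradings through the explicit straightening that defines the product. Since $\nu$ is $\R[\hbar]$-bilinear and each $p_j$ is $\R$-linear, it suffices to treat $a \in G_0^n$ and $b \in G_0^m$, and by further linearity to take $a = \alpha\beta$ and $b = \gamma\delta$ to be normal-ordered monomials with $\alpha,\gamma \in \Sym(\NN)$ and $\beta,\delta \in \Sym(\NN^*)$ in the chosen bases. Writing $\deg$ for the ghost grading, so that $G_0^n = \bigoplus_q G_-^{n-q}\otimes G_+^q$ and $\FF^p G_0$ collects the summands with $q \geq p$, I record at the outset that $\deg\alpha,\deg\gamma \leq 0$ because $\NN = \MM\oplus W$ sits in non-positive degrees, while $\deg\beta,\deg\delta \geq 0$ because $\NN^* = \MM^*\oplus W^*$ sits in non-negative degrees. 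The filtration index of a normal-ordered monomial is precisely the ghost degree of its $\Sym(\NN^*)$-block.

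First I would make the product explicit. Computing $\nu(a,b)$ means lifting through a split $s$, multiplying $s(a)s(b)$ in $T(\NN\oplus\NN^*)[\hbar]$, and reducing modulo $(cr)$, i.e. repeatedly rewriting $\eta\xi = (-1)^{|\eta||\xi|}\xi\eta + \hbar\{\eta,\xi\}$ until the word is normal ordered; by theorem~\ref{thm:pbw} and lemma~\ref{lem:pbw-base-indep} the outcome is independent of $s$ and of the order of rewriting, so I would invoke these rather than recompute. Two observations control the process. Since both $\NN$ and $\NN^*$ are isotropic for the bracket (the commutation relations of the preceding remark show the graded commutator vanishes within $\Sym(\NN)$ and within $\Sym(\NN^*)$), the only straightening needed in $\alpha\beta\gamma\delta$ is to move $\beta$ past $\gamma$, and the blocks $\alpha$ and $\delta$ are never contracted, remaining leftmost and rightmost. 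Moreover each rewriting step either swaps without cost in $\hbar$ or, at the price of one power of $\hbar$, contracts a factor $\eta$ of $\beta$ against a factor $\xi$ of $\gamma$ with $\{\eta,\xi\}\neq 0$, deleting both. Hence $p_j(\nu(a,b))$ is a finite sum of normal-ordered monomials $\alpha\,(\gamma\setminus\Xi)\,(\beta\setminus H)\,\delta$ where $H=\{\eta_1,\dots,\eta_j\}\subset\beta$, $\Xi=\{\xi_1,\dots,\xi_j\}\subset\gamma$, and every contracted pair, being a dual pair $e_k^*\leftrightarrow e_k$ or $y_i\leftrightarrow x_i$, satisfies $\deg\eta_i = -\deg\xi_i$.

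With this description the bookkeeping is immediate. In each such term the filtration index is $\deg_+ = \deg(\beta\setminus H) + \deg\delta = \deg\beta + \deg\delta - \sum_i\deg\eta_i$, while the total ghost degree equals $n+m$ because $\deg\eta_i + \deg\xi_i = 0$ for every pair. For part (1) the block $\delta$ is untouched and $\deg(\beta\setminus H)\geq 0$, so $\deg_+ \geq \deg\delta = \deg_+(b) \geq p$, which together with total degree $n+m$ gives $p_j(ab)\in\FF^p G_0^{n+m}$. For part (2) the key inequality is $\sum_i\deg\eta_i \leq -\deg\gamma$: since $\Xi$ consists of factors of the monomial $\gamma$ and the complementary factors have non-positive degree, $\sum_i\deg\xi_i \geq \deg\gamma$, and $\deg\eta_i=-\deg\xi_i$ gives the bound. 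Substituting, and using $\deg\beta = \deg_+(a)\geq p$, yields $\deg_+ = \deg\beta+\deg\delta-\sum_i\deg\eta_i \geq p + \deg\delta + \deg\gamma = p+m$, so $p_j(ab)\in\FF^{p+m} G_0^{n+m}$.

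The one genuine point of care, and the main obstacle, is passing from the abstract PBW definition of $\nu$ to the Wick-type contraction formula for $p_j(\nu(a,b))$ above, together with its independence of the splits $s_\pm$; this is exactly what theorem~\ref{thm:pbw} and lemma~\ref{lem:pbw-base-indep} supply. Everything afterwards is the degree accounting, whose only nontrivial step is the inequality $\sum_i\deg\eta_i\leq-\deg\gamma$ in part (2), which is precisely the place where the non-positivity of the grading on $\NN$ (equivalently on $\MM$) is used.
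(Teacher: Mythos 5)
Your proof is correct and follows essentially the same route as the paper's: both reduce to normal-ordered monomials, describe $p_j$ of the product as a finite sum over $j$-fold contractions pairing positive factors of the first argument with negative factors of the second, and then do the degree bookkeeping using that $\NN$ is non-positively and $\NN^*$ non-negatively graded. Your inequality $\sum_i \deg\eta_i \leq -\deg\gamma$ is just a repackaging of the paper's conservation identity $\deg u' + \deg b' = \deg u + \deg b$ together with $\deg b' \leq 0$, and your extra care in deriving the Wick-contraction formula from theorem~\ref{thm:pbw} only makes explicit what the paper asserts.
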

\begin{proof}
  Consider $x = a\otimes u \cdot b\otimes v \in G_0^n \cdot G_0^m$ where 
	$ a \in G_-^{n-l}, u \in G_+^l, b \in G_-^{m-k}$ and $v  \in G_+^k$
  	for some $l, k \geq 0$. Then
	\begin{align*}
	  p_j(a\otimes u \cdot b\otimes v) = \sum_{b' u'} \pm ab' \otimes u'v
	\end{align*}
	where $u'$ and $b'$ arise from $u$ and $b$ by deleting $j$ matching pairs
	$(e,e^*)$ where $e \in \NN$ is a factor in $b$ and
	$e^* \in \NN^*$ is a factor in $u$ of opposite degree. The sum is finite.

	To prove the first statement, it suffices to note that $\deg u' \geq 0$ and
	$k \geq p$. For the second statement we note that $\deg u' + \deg
	b' = \deg u + \deg b$ and hence we can estimate $ \deg u' + \deg v \geq \deg
	u' + \deg b' + \deg v = \deg u + \deg b + \deg v = l + m-k + k \geq p +
	m$.
\end{proof}
\begin{lem}
	The product (and hence also the commutator) extend to the completion
 turning $(G_\hbar, \cdot)$ into a graded algebra. \end{lem}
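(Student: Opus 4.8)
The plan is to extend the product one power of $\hbar$ at a time. Write $\nu$ for the product on $G_0[\hbar]$ and, for each $j \geq 0$, set $\nu_j := p_j \circ \nu$, so that $\nu_j : G_0^n \times G_0^m \to G_0^{n+m}$ is $\R$-bilinear and $\nu(x,y) = \sum_j \nu_j(x,y)\,\hbar^j$ for $x,y \in G_0$, a finite sum for fixed arguments. In this language lemma \ref{lem:multcont} reads: for all $j$,
\begin{align*}
  \nu_j(G_0^n, \FF^p G_0^m) &\subset \FF^p G_0^{n+m}, &
  \nu_j(\FF^p G_0^n, G_0^m) &\subset \FF^{p+m} G_0^{n+m}.
\end{align*}
These two asymmetric estimates are the only analytic inputs; everything else is bookkeeping.

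First I would extend each $\nu_j$ to the completion $G$. Given $x \in G^n$ and $y \in G^m$ represented by compatible systems $(x_p)_p$ and $(y_p)_p$, I define
\begin{align*}
  \nu_j(x,y) \bmod \FF^P G_0^{n+m} := \nu_j(x_{P-m}, y_P) \bmod \FF^P G_0^{n+m}.
\end{align*}
The point is that this class depends on only finitely much data: changing $y_P$ within $\FF^P G_0^m$ alters the right-hand side by an element of $\FF^P G_0^{n+m}$ by the first estimate, while changing $x_{P-m}$ within $\FF^{P-m} G_0^n$ alters it by an element of $\FF^{(P-m)+m} G_0^{n+m} = \FF^P G_0^{n+m}$ by the second. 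The same computation shows the classes are compatible in $P$, so they assemble to a well-defined $\nu_j(x,y) \in G^{n+m}$, bilinear in $(x,y)$ and agreeing with $p_j \circ \nu$ on $G_0$. This step---juggling the two filtration shifts simultaneously so that the product modulo $\FF^P$ is computed from finite truncations---is the crux of the argument, and I expect it to be the main obstacle: the asymmetry, whereby one estimate controls the right factor while the other controls the left factor only after a shift by the degree $m$, is exactly what makes the recipe consistent, and getting the shift backwards would break compatibility.

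With the $\nu_j$ in hand I would assemble the product on $G_\hbar$. For $a = \sum_k a_k \hbar^k \in G_\hbar^n$ and $b = \sum_l b_l \hbar^l \in G_\hbar^m$ set
\begin{align*}
  a \cdot b := \sum_{N \geq 0} \Big( \sum_{k+l+j = N} \nu_j(a_k, b_l) \Big) \hbar^N.
\end{align*}
For fixed $N$ the inner sum is finite since all indices are non-negative, so each $\hbar$-coefficient lies in $G^{n+m}$ and $a\cdot b \in G_\hbar^{n+m} = G^{n+m}[[\hbar]]$. The product is manifestly $\R[[\hbar]]$-bilinear and graded, restricts to $\nu$ on $G_0[\hbar]$, and has unit $1 \in G_0^0$ since $\nu_j(1,-) = \delta_{j,0}\,\id = \nu_j(-,1)$.

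It remains to check associativity, which I would do by reduction to $G_0[\hbar]$. To prove $(ab)c = a(bc)$ it suffices to compare $\hbar^N$-coefficients modulo each $\FF^P G_0^\bullet$. By the estimates of lemma \ref{lem:multcont}, both sides truncated in this way are computed from representatives of $a$, $b$, $c$ at a finite filtration level depending only on $N$, $P$ and the three degrees; for those representatives, which are genuine elements of $G_0[\hbar]$, the identity is precisely the associativity of $\nu$ already established. Reducing modulo $\FF^P$ and $\hbar^{N+1}$ then yields the claim, and since the commutator is built from $\cdot$ it extends automatically. The only subtlety here is deciding how far to truncate, which is again dictated by the degree shift in the second estimate; once that is pinned down the verification is routine.
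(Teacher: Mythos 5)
Your proof is correct and follows essentially the same route as the paper: the well-definedness of the product modulo each $\FF^P G_0^{n+m}$ via the two asymmetric estimates of lemma \ref{lem:multcont}, with the shift $x_{P-m}$ on the left factor, is exactly the paper's construction (made explicit in remark \ref{rem:explicitproduct}), followed by $\R[[\hbar]]$-bilinear extension. You are merely more explicit than the paper about the coefficient-wise assembly and the associativity check by truncation.
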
 
\begin{proof}
  First we consider $x = (x_p + \FF^p G_0^n)_p \in G^n$ and $y = (y_p + \FF^p
  G_0^m)_p \in G^m$. By lemma \ref{lem:multcont} the limit $\lim_{p \to \infty}
  x_p \cdot y_p \in H^{n+m}$ is well-defined and the definition $x \cdot p =
  \lim_{p\to \infty} x_p \cdot y_p$ does not depend on the choice of
	representatives $x_p, y_p$. The multiplication extends to $G_\hbar$ by
	bi-linearity in $\R[ [\hbar]]$.
\end{proof}
\begin{rem}
  \label{rem:explicitproduct}
  More explicitly, we have $x\cdot y = \sum_j (
  p_j (x_{p-m} \cdot y_p ) + \FF^p G_0^{n+m})_p \hbar^j$ for $x \in G^n$ and $y
  \in G^m$ and thus for general $x = \sum_j x^{(j)} \hbar^j$ and $y = \sum_k
  y^{(k)} \hbar^k$ with $x^{(j)} = (x_p^{(j)} + \FF^p G_0^n)_p$ and $y^{(k)} =
  (y_p^{(k)} + \FF^p G_0^m)_p$ we obtain the formula
  \begin{align*}
	 x \cdot y = \sum_{k=0}^{+\infty} \bigg( \sum_{l=0}^k p_l \Big(
	 \sum_{j=0}^{k-l} x_{p-m}^{(j)} y_p^{(k-l-j)} \Big) + \FF^p G_0^{n+m}
	 \bigg)_p \hbar^k.
  \end{align*}
\end{rem}
\begin{lem}
  The product on $G_\hbar^{n} \times G_\hbar^m$ is continuous in each entry.
\end{lem}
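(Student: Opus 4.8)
The plan is to read off continuity directly from the explicit product formula of remark \ref{rem:explicitproduct} together with the two filtration estimates of lemma \ref{lem:multcont}. Since the product is $\R[[\hbar]]$-bilinear, it suffices to treat the two partial maps separately: left multiplication $L_x : G_\hbar^m \to G_\hbar^{n+m}$, $y \mapsto x \cdot y$, for fixed $x \in G_\hbar^n$, and right multiplication $R_y : G_\hbar^n \to G_\hbar^{n+m}$, $x \mapsto x \cdot y$, for fixed $y \in G_\hbar^m$. Both are $\R[[\hbar]]$-linear, so continuity is equivalent to continuity at the origin. The topology on each $G_\hbar^k = G^k[[\hbar]]$ has as a basis of neighborhoods of $0$ the subspaces $W_{P,K} := \{ \sum_i w^{(i)} \hbar^i : w^{(i)} \in \FF^P G^k \text{ for all } i \ls K \}$, which combine the filtration topology on the coefficients with the $\hbar$-adic one. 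Thus I must show that for each $P, K$ the preimage $L_x^{-1}(W_{P,K})$, respectively $R_y^{-1}(W_{P,K})$, contains some $W_{P',K'}$.

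The key structural observation, read off from remark \ref{rem:explicitproduct}, is that the $\hbar^k$-component of $x \cdot y$ depends only on the components $x^{(j)}, y^{(i)}$ of $\hbar$-degree $j, i \ls k$, and is represented at filtration level $p$ by the finite sum $\sum_{l=0}^k p_l(\sum_{j=0}^{k-l} x^{(j)}_{p-m} y^{(k-l-j)}_p)$. Because this dependence is triangular in the $\hbar$-degree, it suffices to bound each such representative inside $\FF^P G_0^{n+m}$ for $k \ls K$; no condition on the higher $\hbar$-degrees is needed, which is what makes the finite truncation $W_{P',K'}$ enough.

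For $L_x$ I would fix the first factor and apply lemma \ref{lem:multcont}(1). If $y \in W_{P,K}$, then for every $i \ls K$ the membership $y^{(i)} \in \FF^P G^m$ lets me pick a level-$P$ representative $y^{(i)}_P \in \FF^P G_0^m$, whence each summand $p_l(x^{(j)}_{P-m} y^{(k-l-j)}_P)$ with $k \ls K$ lies in $\FF^P G_0^{n+m}$ by part (1), the first factor being unconstrained. Hence the $\hbar^k$-component of $x \cdot y$ lies in $\FF^P G^{n+m}$ for all $k \ls K$, i.e. $L_x(W_{P,K}) \subset W_{P,K}$. For $R_y$ I would instead invoke part (2): if $x \in W_{P-m,K}$, then $x^{(j)}_{P-m} \in \FF^{P-m} G_0^n$ for $j \ls K$, and part (2) yields $p_l(x^{(j)}_{P-m} y^{(k-l-j)}_P) \in \FF^{(P-m)+m} G_0^{n+m} = \FF^P G_0^{n+m}$ with the second factor now unconstrained, so $R_y(W_{P-m,K}) \subset W_{P,K}$.

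The main subtlety, and the only place where the two directions genuinely differ, is the degree shift $+m$ in lemma \ref{lem:multcont}(2): varying the left factor forces a loss of $m$ filtration levels, which must be tracked consistently through the formula of remark \ref{rem:explicitproduct}. For $m < 0$ this is a gain rather than a loss, and one should simply read $\FF^{P-m}$ as the full space whenever $P - m \ls 0$. A minor point I would verify carefully is that $y^{(i)} \in \FF^P G^m$ indeed permits choosing the level-$P$ representative inside $\FF^P G_0^m$, so that the hypotheses of lemma \ref{lem:multcont} apply to the concrete representatives appearing in the product formula; granting this, the two inclusions above establish separate continuity and hence the claim.
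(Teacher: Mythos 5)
Your proof is correct and follows essentially the same route as the paper's: both arguments rest on the triangular dependence in $\hbar$-degree visible in remark \ref{rem:explicitproduct} together with the two filtration estimates of lemma \ref{lem:multcont}, with part (2) (and its shift by $m$) handling variation of the left factor and part (1) the right. Recasting this as continuity at the origin via $\R[[\hbar]]$-linearity and explicit neighborhoods $W_{P,K}$ is only a cosmetic reformulation of the paper's sequence-based version of the same estimates.
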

\begin{proof}
  Let $x_r, x \in G_\hbar^n$ with $x_r \to x$ and $y \in G_\hbar^m$. Write $p_j(x_r) =
  (x_{r,p}^{(j)} + \FF^p G_0^n)_p$, $p_j(x) = (x_p^{(j)} + \FF^p G_0^n)_p$, and
  $p_j(y)= (y_p^{(j)} + \FF^p G_0^m)_p$.  Fix $j, p \in \N_0$.  Take $r_0$ such
  that for all $0 \leq k \leq j$ and all $r \geq r_0$ we have $x_{r,
  p-m}^{(k)} \equiv x_{p-m}^{(k)} \pmod{ \FF^{p-m} G_0^n}$. Such $r_0$ exists
  since $x_r \to x$. For $r \geq r_0$ we have by lemma \ref{lem:multcont} and
  remark \ref{rem:explicitproduct},
  \begin{align*}
	 p_j(x_r \cdot y)-p_j(x\cdot y) = \bigg( \sum_{l=0}^j p_l\Big( \sum_{k=0}^{j-l}
	 (x_{r,p-m}^{(k)} - x_{p-m}^{(k)}) \cdot y_p^{(j-l-k)} \Big) + \FF^p
	 G_0^{n+m}\bigg)_p
	 = 0
  \end{align*}
  Continuity in the other entry follows analogously.
\end{proof}

Now that we have set up the algebra, we define the quantization mapping.  We
already have the map $q_0 : X_0 \to G_0$. Since this map respects the
respective filtrations, we obtain the quantization mapping
\begin{align*}
  q : X \to G \subset G_\hbar 
\end{align*}
as an extension. It is a morphism of graded vector spaces.
\begin{lem}
  The map $q: X \to G$ is an isomorphism of graded vector spaces.
\end{lem}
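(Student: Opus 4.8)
The plan is to deduce the statement directly from the corresponding property of the classical normal ordering map $q_0 : X_0 \to G_0$ by passing to the inverse limits that define the two completions. The map $q_0$ is already known to be an isomorphism of graded vector spaces, so everything reduces to checking that it carries the filtration on $X_0$ (the ideal $\FF^p X_0$ generated by the elements of degree at least $p$) onto the filtration on $G_0$ (the subspaces $\FF^p G_0$ recording $G_+$-degree), and that this compatibility is in fact an equality $q_0(\FF^p X_0 \cap X_0^n) = \FF^p G_0^n$ in each homogeneous degree $n$.

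First I would make the filtration correspondence explicit on the normal-ordered monomial basis. Recall that $\NN^* = \MM^* \oplus W^*$ with $\MM^*$ positively graded and $W^*$ of degree zero, so the degree of an element of $G_+ = \Sym(\NN^*)$ equals the degree carried by its $\MM^*$-factors alone. A monomial $m \in X_0 = \Sym(\NN \oplus \NN^*)$ lies in $\FF^p X_0$ precisely when it admits a factorization with one factor of degree at least $p$; for a monomial the largest-degree sub-product is obtained by collecting exactly its positive-degree factors, and since every positive-degree variable comes from $\MM^*$, this maximal degree is exactly the $\MM^*$-degree of $m$. Under $q_0$ the monomial $m$ is sent to (its $\NN$-part) $\otimes$ (its $\NN^*$-part), and the $G_+$-degree of the image is again the $\MM^*$-degree of $m$, because the $W^*$-factors contribute degree zero. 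Hence $m \in \FF^p X_0$ if and only if $q_0(m) \in \FF^p G_0$, and since $q_0$ merely permutes the monomial bases bijectively we obtain $q_0(\FF^p X_0 \cap X_0^n) = \FF^p G_0^n$ for all $p$ and $n$.

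With this in hand, $q_0$ induces for each fixed $n$ and each $p$ an isomorphism of quotients $X_0^n/(\FF^p X_0 \cap X_0^n) \to G_0^n/\FF^p G_0^n$, and these isomorphisms commute with the transition maps of the two inverse systems, being all restrictions of the single map $q_0$. An isomorphism of inverse systems induces an isomorphism on inverse limits, so the extension $q = \lim_{\leftarrow p} q_0$ is an isomorphism $X^n \to G^n$ in each degree, that is, an isomorphism of graded vector spaces. Since $G$ sits inside $G_\hbar$ as the $\hbar^0$-part, this also exhibits $q$ as a graded embedding $X \to G_\hbar$.

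I expect the only genuine point requiring care to be the identification of the ideal-generated filtration $\FF^p X_0$ with the $\MM^*$-degree filtration on monomials: the defining description of $\FF^p X_0$ as an ideal (whose generators may be multiplied by factors of negative degree) is a priori different from the explicit graded description of $\FF^p G_0$, and it is the observation about maximal sub-products that reconciles them. Once this is settled the remainder is the formal inverse-limit argument and involves no further computation.
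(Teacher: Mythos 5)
Your argument is correct and is essentially the paper's: the paper simply notes that $q_0^{-1}$ also respects the filtrations and hence extends to an inverse of $q$, which is exactly the content of your equality $q_0(\FF^p X_0 \cap X_0^n) = \FF^p G_0^n$ followed by the formal inverse-limit step. You merely spell out the monomial-level verification (maximal-degree sub-products equal the $\MM^*$-degree, and $W^*$-factors contribute degree zero) that the paper leaves implicit.
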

\begin{proof}
  The inverse of $q_0$ also respects the filtrations and hence extends to a map
  $G \to X$ which is the inverse of $q$.
\end{proof}
To relate the multiplicative structure on $X$ to the one on $G_\hbar$, we set $A =
G_\hbar/(\hbar)$ where $(\hbar)\subset G_\hbar$ is the two sided ideal generated by
$\hbar$. Hence $A \cong G$ as graded vector spaces but not as graded algebras
since $G$ is not closed under multiplication.
\begin{rem}
  \label{rem:cominh}
  We have $[G_\hbar, G_\hbar] \subset (\hbar)$. Hence $(\hbar)$ is the two-sided
  ideal generated by the commutator. Hence the map $\frac{1}{\hbar}[ - , - ] :
  G_\hbar \otimes G_\hbar \to G_\hbar$ is well-defined and turns $G_\hbar$
 	into a graded noncommutative Poisson algebra.  This map descends to $A$.
\end{rem}
\begin{thm}
  \label{thm:comp-brack-comm}
	The graded (commutative) Poisson algebras $X$ and $A$ are isomorphic via the
	map $\phi  = \pi \circ \iota \circ q : X \to G \to G_\hbar \to A$, where $\pi
	: G_\hbar
	\to A=G_\hbar/(\hbar)$ is the canonical projection and $\iota : G \to G_\hbar$ is the
  inclusion. 
\end{thm}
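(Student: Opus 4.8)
The plan is to exhibit $\phi = \pi\circ\iota\circ q$ as a graded vector space isomorphism first and then promote it to an isomorphism of Poisson algebras by checking compatibility with the product and with the bracket separately. For the vector space statement I would observe that, since $\hbar$ is central of degree zero, the two-sided ideal $(\hbar)\subset G_\hbar$ equals $\hbar G_\hbar$, so that $A=G_\hbar/(\hbar)$ is canonically identified with the $\hbar^0$-part $G$ via the constant-term map. Thus $\pi\circ\iota\colon G\to A$ is a graded vector space isomorphism, and composing it with the graded vector space isomorphism $q\colon X\to G$ shows $\phi$ is a graded linear bijection. Note also that $A$ is graded \emph{commutative}, since $[G_\hbar,G_\hbar]\subset(\hbar)$ by remark~\ref{rem:cominh}, so that comparing $X$ and $A$ as commutative algebras makes sense.

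For multiplicativity the key reduction is the identity $q(xy)\equiv\nu(q(x),q(y))\pmod\hbar$, which amounts to saying that the PBW product $\nu$ degenerates modulo $\hbar$ to the classical commutative product transported through $q_0$. First I would prove this for $x,y\in X_0$: by construction $\nu(q_0(x),q_0(y))$ is the normal-ordered representative of $s(q_0(x))\,s(q_0(y))$ in $F_0[\hbar]/(cr)$, and since every defining relation of $(cr)$ has the form $[a,b]-\hbar\{a,b\}$, reducing modulo $\hbar$ collapses $(cr)$ to the plain commutator ideal and the normal-ordering procedure to the reshuffling isomorphism $\Sym(\NN)\otimes\Sym(\NN^*)\cong X_0$. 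Hence $\nu(q_0(x),q_0(y))\equiv q_0(xy)\pmod\hbar$, and applying $\pi$ gives $\phi(xy)=\phi(x)\phi(y)$ on the dense subalgebra $X_0$. I would then extend to all of $X$ using that $q$ is continuous (being a filtration-respecting extension), that $\nu$ is continuous in each entry, and that $\pi,\iota$ are continuous for the evident topologies; lemma~\ref{lem:multcont} and theorem~\ref{thm:pbw} supply the structural input.

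For the bracket I would use that both $\{-,-\}$ on $X$ and the induced bracket $\{-,-\}_A$ on $A$, obtained by descending $\tfrac1\hbar[-,-]$ as in remark~\ref{rem:cominh}, are continuous biderivations for the respective commutative products, and that $\phi$ is already known to be an algebra isomorphism. Consequently it suffices to verify $\phi(\{a,b\})=\{\phi(a),\phi(b)\}_A$ on the algebra generators $a,b\in\{x_i,y_i,e_k,e_k^*\}$, whence the Leibniz rule propagates the identity to $X_0$ and continuity (lemma~\ref{lem:bracketcontinuous}) to all of $X$. On generators $q$ is the identity, and the explicit commutation relations give $[q(a),q(b)]=\hbar\{a,b\}$ exactly, so $\tfrac1\hbar[q(a),q(b)]=\{a,b\}=q(\{a,b\})$ modulo $\hbar$, which is precisely $\phi(\{a,b\})=\{\phi(a),\phi(b)\}_A$.

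The conceptual heart of the argument is essentially free: on generators the two ``modulo $\hbar$'' correspondences are immediate from the defining commutation relations $[x,y]=\hbar\{x,y\}$. The genuine obstacle I expect is analytic rather than algebraic, namely making the degeneration arguments rigorous in the doubly-completed setting, where $G$ is already a filtered completion and $G_\hbar=G[[\hbar]]$ carries a second adic direction. One must ensure that the identities established on the dense subalgebra $X_0$ (for the product) and on generators (for the bracket) actually propagate to the completion $X$, which rests on the continuity of $\nu$, $q$, and $\{-,-\}$ and on the compatibility of multiplication with the filtration recorded in lemma~\ref{lem:multcont}. Since those continuity statements are already available, the remaining work is a careful but routine passage to the limit.
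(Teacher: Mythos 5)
Your proposal is correct and follows essentially the same route as the paper: the vector-space isomorphism and commutativity of $A$ are noted first, multiplicativity is obtained by observing that normal ordering degenerates modulo $\hbar$ to the reshuffling isomorphism on the dense subalgebra $X_0$ and then passing to the completion by continuity, and the bracket identity is propagated from generators by the Leibniz rule and continuity. The only cosmetic difference is that the paper spells out the Leibniz propagation as an explicit double induction on the number of factors (its equation for $\tfrac1\hbar[q(x),q(y)]\equiv q(\{x,y\})$), whereas you invoke the equivalent general principle that two continuous biderivations for the same product agreeing on generators coincide.
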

\begin{proof}
  We already know that $\phi$ is an isomorphism of graded vector spaces. By
  remark \ref{rem:cominh}, $A=G_\hbar/[G_\hbar,G_\hbar]$ is graded commutative. Let $x \in X_0^n$
  and $y \in X_0^m$. Write $q(x) = x_- \otimes x_+$ and $q(y) = y_- \otimes
  y_+$. We have $q(xy) = (-1)^{\deg x_+ + \deg y_-} x_- y_- \otimes x_+ y_+$.
  On the other hand $q(x) \cdot q(y) = (x_- \otimes 1) \cdot (1 \otimes x_+)
  \cdot ( y_- \otimes 1) \cdot (1 \otimes y_+)$. The component of $(1 \otimes
  x_+ )\cdot( y_- \otimes 1)$  constant in $\hbar$ equals $(-1)^{\deg x_+ +
  \deg y_-} y_- \otimes x_+$. Hence $q(x) q(y) \equiv q(xy) \pmod{\hbar}$. The
  statement for the completion follows from continuity of $q$ and the product.
  Next we show: 
  For all $x,y \in X$	
  \begin{align}
					\label{eq:commbracket}
    \frac{1}{\hbar}[q(x),q(y)] \equiv q(\{x,y\}) \quad \pmod{ (\hbar)}
  \end{align}
  By density of $X_0 \subset X$ and continuity of all maps involved in the
  statement it suffices to consider $X_0$. We perform an
  induction in the number $n \in \N_0$ of factors of $x$. The anchor $n=0$
  is obvious. We also need the anchor for $n=1$: We prove it by induction on
	the number $m$ of factors in $y$. For $m=0$ nothing is to prove. For
	$m=1$ the statement holds with equality by construction. Now assume the
	statement holds for all $x \in \Sym^1(\NN \oplus \NN^*)$ and all $y \in
	\Sym^m(\NN \oplus \NN^*)$, where $m \geq 1$. Let $x, b \in \Sym^1(\NN \oplus
	\NN^*)$ and $y \in \Sym^m(\NN \oplus \NN^*)$. We have
	\begin{align*}
		\frac{1}{\hbar} [q(x), q(by)] & \equiv \frac{1}{\hbar}[q(x), q(b) q(y)] =
		\frac{1}{\hbar} ( q(b)[q(x), q(y)] (-1)^{\deg x \deg b} + [q(x), q(b)]
		q(y)) \\
		& \equiv
		q(b) q(\{ x,y\}) (-1)^{\deg x \deg b} + q(\{x,b\}) q(y) \\
		& \equiv
		q\Big(b \{ x,y\} (-1)^{\deg x \deg b} + \{x,b\} y \Big) \\
		&\equiv q(\{x,by\})
		\pmod{\hbar}.
	\end{align*}
	Assume now the statement holds for all $x \in \Sym^n( \NN \oplus
	\NN^*)$, where $n \geq 0$. Let $x \in \Sym^n( \NN \oplus \NN^*)$ and $a \in
	\Sym^1( \NN \oplus \NN^*)$. Let $y \in X_0$. We compute similarly
  \begin{align*}
	 \frac{1}{\hbar} [ q(ax), q(y) ] 
	 &\equiv \frac{1}{\hbar} [ q(a)q(x), q(y)] 
	 = \frac{1}{\hbar} q(a) [ q(x),q(y)]  + 
	 \frac{1}{\hbar} [ q(a), q(y)] q(x)
	 (-1)^{\deg x \deg y}\\
	 &\equiv 
  q(a) q(\{ x, y \} ) + 
  q( \{ a, y \})  q(x)
	 (-1)^{\deg x \deg y} \\
	 &\equiv 
  q\Big(a \{ x, y \} + \{ a, y \} x (-1)^{\deg x \deg y}\Big)
  \\
	 &\equiv 
	 q( \{a x, y \})  \quad \pmod{\hbar}
  \end{align*}
\end{proof}
\begin{cor}
\label{cor:rr}
  If $R \in X^1$ solves the classical master equation, then
  $\frac{1}{\hbar} [q(R),q(R)] \equiv 0 \pmod{
					(\hbar) }$. Conversely, if $r = q(R) + \hbar(\cdots)$ solves the
	quantum master equation $[r,r]=0$, then $R$ solves the classical master
	equation.
\end{cor}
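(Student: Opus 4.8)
The plan is to deduce both implications from the bracket--commutator correspondence established as equation~\ref{eq:commbracket} in the proof of Theorem~\ref{thm:comp-brack-comm}, namely $\frac{1}{\hbar}[q(x),q(y)] \equiv q(\{x,y\}) \pmod{(\hbar)}$, together with the injectivity of the isomorphism $\phi = \pi\circ\iota\circ q : X \to A$ supplied by the same theorem. The forward direction is immediate: specializing equation~\ref{eq:commbracket} to $x = y = R$ and using the hypothesis $\{R,R\} = 0$ gives $\frac{1}{\hbar}[q(R),q(R)] \equiv q(\{R,R\}) = q(0) = 0 \pmod{(\hbar)}$.

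For the converse, I would write $r = q(R) + \hbar\rho$ with $\rho \in G_\hbar^1$ and expand the graded commutator by bilinearity over $\R[[\hbar]]$ as
\begin{align*}
  [r,r] = [q(R),q(R)] + \hbar\big([q(R),\rho] + [\rho,q(R)]\big) + \hbar^2 [\rho,\rho].
\end{align*}
By remark~\ref{rem:cominh} every graded commutator lies in the ideal $(\hbar)$, so each term on the right is divisible by $\hbar$ and I may divide the relation $[r,r] = 0$ by $\hbar$. Reducing the resulting identity modulo $(\hbar)$, the contribution $[q(R),\rho] + [\rho,q(R)]$ vanishes because it is again a sum of commutators, hence lies in $(\hbar)$ by remark~\ref{rem:cominh}, and the term $\hbar[\rho,\rho]$ vanishes because of its explicit factor of $\hbar$. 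What survives is $\frac{1}{\hbar}[q(R),q(R)] \equiv 0 \pmod{(\hbar)}$, and combining this with equation~\ref{eq:commbracket} yields $q(\{R,R\}) \equiv 0 \pmod{(\hbar)}$, i.e. $\phi(\{R,R\}) = 0$ in $A$. Since $\phi$ is injective by Theorem~\ref{thm:comp-brack-comm}, I conclude $\{R,R\} = 0$.

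I do not expect a genuine obstacle, since the statement is a formal corollary of the correspondence between the Poisson bracket on $X$ and the leading term in $\hbar$ of the commutator on $G_\hbar$. The only point requiring care is the bookkeeping in the converse direction: one must verify that, after dividing the quantum master equation by $\hbar$, every term beyond $\frac{1}{\hbar}[q(R),q(R)]$ drops out modulo $(\hbar)$. This rests entirely on the containment $[G_\hbar, G_\hbar] \subset (\hbar)$ of remark~\ref{rem:cominh}, which guarantees that the cross terms remain divisible by $\hbar$ even after the division and therefore contribute nothing to the leading order.
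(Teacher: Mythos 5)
Your proof is correct and follows essentially the same route as the paper, which states this as an immediate consequence of Theorem~\ref{thm:comp-brack-comm}: both directions reduce to equation~\ref{eq:commbracket} applied to $x=y=R$, with the converse additionally using that the cross terms in $[r,r]$ lie in $(\hbar)$ by remark~\ref{rem:cominh} and that $\phi$ is injective. Your bookkeeping in the converse direction is accurate, so nothing further is needed.
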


\subsection{Solving the Quantum Master Equation}
\label{ssec:qmesolution}

We want to construct a solution $r \in G_\hbar^1$ of the quantum master
equation $[r,r] = 0$. We seek a solution of the form 
\begin{align*}
  r = q(R) + \hbar q(R_1) + \hbar^2 q(R_2) + \cdots
\end{align*}
for some $R_j \in X^1$ where $R \in X^1$ is a given solution of the classical
master equation.

\begin{ass}
\label{ass:coho}
  We assume $H^2(X, d_R) = 0$. 
\end{ass}

\subsubsection{A Differential on the Quantum Algebra}
\label{sssec:diffonqa}

Define $D = \frac{1}{\hbar}[ q(R), -]$. This defines a map $G_\hbar \to
G_\hbar$ by remark \ref{rem:cominh}. It preserves the ideal $(\hbar)$ and
hence descends to a derivation $D_0$ on $A = G_\hbar/(\hbar)$.  We
calculate
\begin{align*}
  D^2(x) = \frac{1}{\hbar^2}[ q(R), [q(R), x]] =
  \frac{1}{\hbar^2} [ [q(R), q(R)], x] - \frac{1}{\hbar^2}
  [q(R) , [q(R), x]]
\end{align*}
Hence by corollary \ref{cor:rr} and remark \ref{rem:cominh},
\begin{align}
  \label{eq:1}
  D^2(x) &= \frac{1}{2\hbar} [ \frac{1}{\hbar}[q(R), q(R)], x]
  \equiv 0 \quad \pmod{ \hbar}
\end{align}
 so $D_0$ is a differential on $A$.
\begin{thm}
  We have $D_0 \circ \phi = \phi \circ d_R$. In particular $\phi : X
  \to A$ is an isomorphism of differential graded commutative
  algebras and $H^\bullet(X, d_R) \cong H^\bullet(A, D_0)$.
\end{thm}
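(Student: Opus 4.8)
The plan is to obtain the intertwining relation $D_0 \circ \phi = \phi \circ d_R$ as an immediate consequence of the commutator--bracket compatibility \eqref{eq:commbracket} of Theorem \ref{thm:comp-brack-comm}, and then to read off the two stated corollaries from facts already established. The genuine analytic input --- the Poincar\'e--Birkhoff--Witt normal-ordering statement, continuity of the product, and the extension of \eqref{eq:commbracket} from the dense subspace $X_0$ to all of $X$ --- is already in place, so what remains is essentially bookkeeping.

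First I would recall the structural facts already proven: $D = \frac{1}{\hbar}[q(R),-]$ preserves the ideal $(\hbar)$ and descends to a derivation $D_0$ on $A = G_\hbar/(\hbar)$, and $D_0$ is a differential by equation \eqref{eq:1} together with Corollary \ref{cor:rr}. By construction of the descent one has $\pi \circ D = D_0 \circ \pi$, where $\pi : G_\hbar \to A$ is the projection. These are exactly the ingredients ensuring that $(A, D_0)$ is a differential graded commutative algebra and that the computation below is meaningful.

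Next I would carry out the core computation: for arbitrary $x \in X$,
\begin{align*}
  D_0(\phi(x)) = D_0(\pi(q(x))) = \pi(D(q(x))) = \pi\Big( \tfrac{1}{\hbar}[q(R), q(x)] \Big) = \pi(q(\{R,x\})) = \phi(d_R x),
\end{align*}
where I suppress the inclusion $\iota : G \to G_\hbar$, use $\pi \circ D = D_0 \circ \pi$ in the second equality, apply \eqref{eq:commbracket} with $R$ in the first slot in the fourth equality (the error lies in $(\hbar)$ and is annihilated by $\pi$), and use $d_R x = \{R, x\}$ together with $\phi = \pi \circ \iota \circ q$ in the last. Since \eqref{eq:commbracket} is valid on all of $X$, this holds for every $x \in X$, which is the asserted identity.

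Finally I would deduce the corollaries. Theorem \ref{thm:comp-brack-comm} already provides that $\phi$ is an isomorphism of graded commutative algebras; the relation just proven shows it additionally carries the differential $d_R$ to $D_0$, hence $\phi$ is an isomorphism of differential graded commutative algebras. Such an isomorphism induces an isomorphism on cohomology, giving $H^\bullet(X, d_R) \cong H^\bullet(A, D_0)$ as graded commutative algebras. I do not expect a genuine obstacle at this point: the whole difficulty of comparing the classical and quantum differentials has been concentrated into \eqref{eq:commbracket}, and the reason the weaker, modulo-$(\hbar)$ identity suffices is precisely that the target $A$ is the quotient $G_\hbar/(\hbar)$, so passing to $A$ discards exactly the terms not controlled by \eqref{eq:commbracket}.
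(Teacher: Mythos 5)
Your proposal is correct and follows essentially the same route as the paper: both reduce the intertwining relation to the single chain $D_0(\phi(x)) = \pi(D(q(x))) = \pi(\tfrac{1}{\hbar}[q(R),q(x)]) = \pi(q(\{R,x\})) = \phi(d_R x)$, using equation \eqref{eq:commbracket} modulo $(\hbar)$. The extra remarks on why the modulo-$(\hbar)$ identity suffices and on passing to cohomology are accurate but left implicit in the paper.
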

\begin{proof}
  Let $x \in X$. By theorem \ref{thm:comp-brack-comm}
  \begin{align*}
    D_0(\phi(x)) = D_0( \pi( q(x) ) )= \pi( D(q(x))) = \pi(
    \frac{1}{\hbar} [q(R), q(x)] ) = \pi( q( \{R,x\} )) = 
	 \pi(q(d_R(x))) = \phi(d_R(x))
  \end{align*}
\end{proof}
\begin{cor}
  \label{cor:had0}
  Under assumption \ref{ass:coho}, $H^2(A, D_0) = 0$.
\end{cor}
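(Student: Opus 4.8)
The plan is to obtain this as an immediate consequence of the preceding theorem, which asserts that $\phi : X \to A$ is an isomorphism of differential graded commutative algebras intertwining $d_R$ and $D_0$, and in particular furnishes a graded isomorphism $H^\bullet(X, d_R) \cong H^\bullet(A, D_0)$. First I would record that $\phi = \pi \circ \iota \circ q$ is degree-preserving: the quantization map $q$ is an isomorphism of graded vector spaces, the inclusion $\iota : G \to G_\hbar$ preserves degree, and the projection $\pi : G_\hbar \to A$ preserves degree since $\hbar$ carries degree zero. Hence the isomorphism on cohomology respects the homological grading and restricts to each fixed degree.

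Specializing that isomorphism to homological degree two gives $H^2(A, D_0) \cong H^2(X, d_R)$. It then remains only to invoke Assumption~\ref{ass:coho}, which states $H^2(X, d_R) = 0$. Combining the two yields $H^2(A, D_0) \cong H^2(X, d_R) = 0$, as claimed. There is no real obstacle: the entire content has already been packaged into the isomorphism of complexes provided by the previous theorem, so the corollary is genuinely a one-line reading-off of that isomorphism in degree two together with the hypothesis.
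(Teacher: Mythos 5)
Your argument is exactly the intended one: the preceding theorem gives $H^\bullet(X,d_R)\cong H^\bullet(A,D_0)$ as graded objects, and Assumption~\ref{ass:coho} kills degree two, which is why the paper states this as a corollary without further proof. Your additional remark that $\phi$ preserves the homological grading is a correct (if routine) verification that the isomorphism can indeed be read off degree by degree.
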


\subsubsection{Construction of a Solution of the Quantum Master Equation}
\label{sssec:qmesolcons}

\begin{thm} Let $n \geq 0$ be an integer.  For $n \geq 1$, assume we have
  constructed $R_1, R_2, \dots, R_n \in X^1$ such that $r_n := q(R) +
  \sum_{l=1}^n \hbar^l q(R_l)$ satisfies $  \frac{1}{\hbar}[r_n, r_n] \equiv 0
  \pmod{ (\hbar^{n+1}) }$. For $n=0$ set $r_n = q(R)$ which also satisfies this
  assumption by corollary \ref{cor:rr}.	We claim, there
  exists $R_{n+1} \in X^1$ such that 
  \begin{align*}
    \frac{1}{\hbar} [r_n + \hbar^{n+1} q(R_{n+1}), r_n +
    \hbar^{n+1} q(R_{n+1}) 
    ] \equiv 0 \quad \pmod{ (\hbar^{n+2}) }
  \end{align*}
\end{thm}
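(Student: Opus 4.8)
The plan is to isolate the obstruction to extending $r_n$, to recognise it as a $d_R$-cohomology class in degree two, and to kill it using assumption \ref{ass:coho}. I would write $r_{n+1} = r_n + \hbar^{n+1} q(R_{n+1})$ for an as-yet-undetermined $R_{n+1} \in X^1$. Since $r_n$ and $q(R_{n+1})$ are both odd, the graded commutator is symmetric in its two arguments, so
\begin{align*}
  [r_{n+1}, r_{n+1}] = [r_n, r_n] + 2\hbar^{n+1}[r_n, q(R_{n+1})] + \hbar^{2n+2}[q(R_{n+1}), q(R_{n+1})].
\end{align*}
By remark \ref{rem:cominh} each commutator lies in $(\hbar)$, so I may divide by $\hbar$. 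The last term lies in $(\hbar^{2n+2}) \subset (\hbar^{n+2})$, while $r_n \equiv q(R) \pmod \hbar$ together with equation \ref{eq:commbracket} gives $\frac{1}{\hbar}[r_n, q(R_{n+1})] \equiv q(d_R R_{n+1}) \pmod \hbar$. Hence
\begin{align*}
  \frac{1}{\hbar}[r_{n+1}, r_{n+1}] \equiv \frac{1}{\hbar}[r_n, r_n] + 2\hbar^{n+1} q(d_R R_{n+1}) \pmod{\hbar^{n+2}}.
\end{align*}

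Next I would extract the obstruction. As the product of two elements of $G$ has all its $\hbar$-coefficients in $G = q(X)$, every coefficient of $\frac{1}{\hbar}[r_n, r_n]$ lies in $q(X)$; by the inductive hypothesis this series is divisible by $\hbar^{n+1}$, so there is a unique $O_n \in X^2$ with $\frac{1}{\hbar}[r_n, r_n] \equiv \hbar^{n+1} q(O_n) \pmod{\hbar^{n+2}}$. Comparing with the previous display, the task reduces to solving $d_R R_{n+1} = -\frac12 O_n$ in $X^1$, which is possible precisely when $O_n$ is $d_R$-exact.

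The crux is to show that $O_n$ is a $d_R$-cocycle. For this I would use the identity $[a,[a,a]]=0$, valid for any odd element $a$ of an associative algebra, applied to $a = r_n$: since $[r_n,r_n] \in (\hbar)$ the element $\frac{1}{\hbar^2}[r_n,[r_n,r_n]]$ is well-defined and vanishes. Substituting $\frac{1}{\hbar}[r_n,r_n] \equiv \hbar^{n+1} q(O_n) \pmod{\hbar^{n+2}}$ and applying equation \ref{eq:commbracket} once more in the form $\frac{1}{\hbar}[r_n, q(O_n)] \equiv q(d_R O_n) \pmod \hbar$, the $\hbar^{n+1}$-coefficient of $0 = \frac{1}{\hbar^2}[r_n,[r_n,r_n]]$ becomes $q(d_R O_n) = 0$, whence $d_R O_n = 0$ by injectivity of $q$. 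I expect this closedness computation to be the main obstacle, since it demands careful bookkeeping of which iterated commutators lie in which power of $(\hbar)$, so that every division by $\hbar$ is legitimate and the comparison of $\hbar^{n+1}$-coefficients is justified.

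Finally, $O_n \in X^2$ is a $d_R$-cocycle, and $H^2(X, d_R) = 0$ by assumption \ref{ass:coho}, so $O_n = d_R S$ for some $S \in X^1$. Setting $R_{n+1} := -\frac12 S$ yields $d_R R_{n+1} = -\frac12 O_n$, and by the reduction above the element $r_{n+1} = r_n + \hbar^{n+1} q(R_{n+1})$ satisfies $\frac{1}{\hbar}[r_{n+1}, r_{n+1}] \equiv 0 \pmod{\hbar^{n+2}}$, which completes the inductive step. Note that the argument is uniform in $n \geq 0$, the base case $n=0$ being covered by corollary \ref{cor:rr} and the estimate $2n+2 \geq n+2$.
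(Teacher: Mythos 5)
Your proposal is correct and follows essentially the same route as the paper: expand $[r_n+\hbar^{n+1}q(R_{n+1}),\,\cdot\,]$, reduce modulo $(\hbar^{n+2})$ so that only $\frac{1}{\hbar}[r_n,r_n]+2\hbar^{n+1}q(d_RR_{n+1})$ survives, prove the obstruction is closed via the Jacobi identity $[r_n,[r_n,r_n]]=0$, and kill it using the vanishing of the second cohomology. The only cosmetic difference is that you pull the obstruction back to $X^2$ via $q^{-1}$ and invoke $H^2(X,d_R)=0$ directly, whereas the paper works in $A=G_\hbar/(\hbar)$ with $D_0$ and corollary \ref{cor:had0}; these are equivalent by theorem \ref{thm:comp-brack-comm}.
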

\begin{proof}
  We compute for any $R_{n+1} \in X^1$, using corollary \ref{cor:rr},
  \begin{align*}
   & \frac{1}{\hbar} [r_n + \hbar^{n+1} q(R_{n+1}), r_n +
    \hbar^{n+1} q(R_{n+1}) ] \\
    =&
    \frac{1}{\hbar} [r_n, r_n] + 2 \hbar^{n+1} \frac{1}{\hbar} [r_n,
    q(R_{n+1}) ] + \hbar^{2n+2} \frac{1}{\hbar} [q(R_{n+1}),
    q(R_{n+1})] \\
    \equiv&
    \frac{1}{\hbar} [r_n, r_n] + 2 \hbar^{n+1} \frac{1}{\hbar} [q(R),
    q(R_{n+1}) ] \quad \pmod{ (\hbar^{n+2})}
  \end{align*}
  By the induction assumption, we can write the right hand side as
  $\hbar^{n+1}$ times
  \begin{align*}
    \frac{1}{\hbar^{n+1}} \frac{1}{\hbar}[r_n,r_n] + 2
    D(q(R_{n+1})) \in H
  \end{align*}
  and we want this to be a multiple of $\hbar$. This means we
  need to show that we can pick $R_{n+1} \in X^1$ such that 
  $  \pi( \frac{1}{\hbar^{n+1}} \frac{1}{\hbar}[r_n,r_n] ) + 2 D_0
    (\phi(R_{n+1})) $
  vanishes in $A$. By corollary \ref{cor:had0} and the fact that
  $\phi$ is surjective, it suffices to prove that $    \pi(
  \frac{1}{\hbar^{n+1}} \frac{1}{\hbar}[r_n,r_n] ) $ is $D_0$-closed.
  By the Jacobi identity,
  \begin{align*}
    0 = \frac{1}{\hbar^2} [r_n, [r_n, r_n]] = 
    D(\frac{1}{\hbar} [r_n, r_n]) + \sum_{l=1}^{n} \hbar^l
    \frac{1}{\hbar} [q(R_l), \frac{1}{\hbar}[r_n, r_n]].
  \end{align*}
  By the induction assumption we may divide this equation by $\hbar^{n+1}$ to
  arrive at
  \begin{align*}
	D( \frac{1}{\hbar^{n+1}} \frac{1}{\hbar} [r_n, r_n] ) =
	- \sum_{l=1}^n \hbar^l \frac{1}{\hbar}
    [q(R_l),
    \frac{1}{\hbar^{n+1}} \frac{1}{\hbar}[r_n, r_n]] 
    \equiv 0 \pmod \hbar
  \end{align*}
\end{proof}
Hence we have constructed $r = q(R) + \hbar q(R_1) + \cdots$
with $[r,r]=0$.

\subsection{Uniqueness of the Solution}
\label{ssec:qmeunique}

In this paragraph we consider questions of uniqueness of solutions of the
quantum master equation that arise from quantization of a solution of the
classical master equation. 

\subsubsection{Quantum Gauge Equivalences}
\label{sssec:qgauge}

We define the subspace $K = \{ x \in G_\hbar^0 : p_0(x) \in q(I^{(2)}) \}$. 
\begin{lem}
	\label{lem:Kclosed}
	$K$ is closed under the commutator.
\end{lem}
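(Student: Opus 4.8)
The plan is to prove that the induced bracket $\tfrac{1}{\hbar}[-,-]$ of remark \ref{rem:cominh} maps $K \times K$ into $K$; this is the statement needed to make $K$ a Lie algebra of quantum gauge equivalences, in analogy with the classical fact that $\gfrak = X^0 \cap \iphat$ is closed under $\{-,-\}$. (Closure under the plain graded commutator is automatic, since $[K,K] \subset [G_\hbar, G_\hbar] \subset (\hbar)$ by remark \ref{rem:cominh}, so that $p_0([x,y]) = 0 \in q(I^{(2)})$; the genuine content is the induced-bracket version.) Fix $x,y \in K$. Because $x,y \in G_\hbar^0$ and the bracket has degree zero, $\tfrac{1}{\hbar}[x,y] \in G_\hbar^0$, so the only thing to verify is that $p_0\big(\tfrac{1}{\hbar}[x,y]\big) \in q(I^{(2)})$.

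First I would reduce everything to the leading $\hbar$-coefficients. Since $q : X \to G$ is an isomorphism of graded vector spaces, write $p_0(x) = q(a)$ and $p_0(y) = q(b)$ with $a,b \in I^{(2)}$, and split $x = q(a) + \hbar x'$, $y = q(b) + \hbar y'$ with $x',y' \in G_\hbar^0$ (well-defined because $x - q(a)$ and $y - q(b)$ have vanishing $\hbar^0$-part and hence lie in $(\hbar)$). Expanding the commutator $\R[[\hbar]]$-bilinearly and dividing by $\hbar$ gives
\begin{align*}
  \frac{1}{\hbar}[x,y] = \frac{1}{\hbar}[q(a),q(b)] + [q(a),y'] + [x',q(b)] + \hbar[x',y'].
\end{align*}
The last three summands lie in $(\hbar)$ (the first two by remark \ref{rem:cominh}, the third trivially), so $p_0$ annihilates them.

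The main step is then to identify the surviving term. By equation \ref{eq:commbracket} we have $\tfrac{1}{\hbar}[q(a),q(b)] \equiv q(\{a,b\}) \pmod{(\hbar)}$, so its $\hbar^0$-component is exactly $q(\{a,b\})$; combining with the previous paragraph,
\begin{align*}
  p_0\Big(\frac{1}{\hbar}[x,y]\Big) = q(\{a,b\}).
\end{align*}
Finally, since $a,b \in I^{(2)} = \iphat$ and $\{\iphat,\iphat\} \subset \iphat$ by lemma \ref{lem:ipclosed}, we get $\{a,b\} \in I^{(2)}$, whence $q(\{a,b\}) \in q(I^{(2)})$ and therefore $\tfrac{1}{\hbar}[x,y] \in K$.

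I expect the only real obstacle to be the bookkeeping around $p_0$: one must be sure that $\tfrac{1}{\hbar}[-,-]$ is well-defined (remark \ref{rem:cominh}) and that the congruence of equation \ref{eq:commbracket}, together with the vanishing of the cross-terms modulo $(\hbar)$, isolates precisely the classical bracket $q(\{a,b\})$. Once this is in place, closure is inherited directly from the classical relation $\{\iphat,\iphat\} \subset \iphat$, so no further estimates or filtration arguments are required.
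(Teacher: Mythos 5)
Your proof is correct and follows essentially the same route as the paper: reduce $\tfrac{1}{\hbar}[x,y]$ modulo $(\hbar)$ to $\tfrac{1}{\hbar}[p_0(x),p_0(y)]$, identify this with $q(\{a,b\})$ via equation \ref{eq:commbracket} (theorem \ref{thm:comp-brack-comm}), and conclude with $\{I^{(2)},I^{(2)}\}\subset I^{(2)}$ from lemma \ref{lem:ipclosed}. You merely spell out the cross-term bookkeeping that the paper leaves implicit.
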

\begin{proof}
	Let $x,y \in K$. Theorem \ref{thm:comp-brack-comm} allows us to calculate
	\begin{align*}
		\frac{1}{\hbar} [x,y] \equiv \frac{1}{\hbar} [p_0(x),p_0(y)] \equiv q( \{
			q^{-1}(p_0(x)), q^{-1}(p_0(x)) \} ) \pmod{\hbar}.
	\end{align*}
	By lemma \ref{lem:ipclosed}, the claim follows.
\end{proof}
We call elements of $K$
generators of quantum gauge equivalences. Typical elements of $K$ are
quantizations of generators of classical gauge equivalences or any degree zero
multiple of $\hbar$.
In order to exponentiate the Lie algebra $K$ to a group acting on $G_\hbar$ by
isomorphisms of associative algebras, we show that in each degree in $\hbar$
the Lie algebra $L$ acts pro-nilpotent with respect to the filtration $\FF^p
G^n$. 

\begin{lem}
  \label{lem:quantumgauge}
  We define $\ad_a b = \frac{1}{\hbar}[a,b]$ for $a \in K$ and $b \in G_\hbar$.
	The Lie algebra $\ad K \subset \End(G_\hbar)$ acts pro-nilpotently in each
degree of $\hbar$. In particular, it exponentiates to a group of automorphisms of
associative algebras.
\end{lem}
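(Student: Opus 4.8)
The plan is to transport the proof of Lemma \ref{lem:exponentiates} to the quantum algebra, with the Poisson bracket replaced by $\frac{1}{\hbar}[-,-]$ and the extra $\hbar$-dependence kept under control. By Lemma \ref{lem:Kclosed} the pair $(K,\ad)$ is a Lie algebra, and since every $a\in K$ has degree $0$ and $[a,-]$ is a derivation of the associative product, each $\ad_a=\frac{1}{\hbar}[a,-]$ is a graded derivation of $(G_\hbar,\cdot)$ of degree $0$. First I would record that $\ad_a$ preserves the filtration: for $b\in\FF^pG^m$ and any $\hbar$-component $a^{(i)}=p_i(a)$, part (1) of Lemma \ref{lem:multcont} gives $p_j(a^{(i)}b)\in\FF^pG$ and part (2) gives $p_j(b\,a^{(i)})\in\FF^pG$, so $[a,b]\in\FF^pG$ and dividing by $\hbar$ keeps us in $\FF^pG$. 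Hence $\ad_a$ is continuous and descends, at each fixed power of $\hbar$, to the quotient $G^n/\FF^pG^n$.

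The heart of the matter is pro-nilpotency on these quotients. Let $N(x)$ denote the number of $\MM^*$-factors of a monomial $x$, so that $N\gs p$ forces $x\in\FF^pG$. I would expand $\ad_a=\sum_{r\gs0}\hbar^r(\ad_a)_r$ into its $\hbar$-components $(\ad_a)_r\colon G\to G$ (well defined and continuous by continuity of the product). By the contraction description underlying Lemma \ref{lem:multcont} and Remark \ref{rem:explicitproduct}, the coefficient of $\hbar^t$ in a product is obtained by deleting exactly $t$ dual pairs $(e,e^*)$, and each deletion lowers $N$ by at most one. Two consequences drive the argument. Since $a^{(0)}=p_0(a)\in q(\iphat)$ satisfies $N(a^{(0)})\gs2$ and $(\ad_a)_0(u)$ is the one-contraction part of $[a^{(0)},u]$, the operator $(\ad_a)_0$ raises $N$ by at least one -- this is the exact quantum counterpart of the classical increase used in Lemma \ref{lem:ipnil}. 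On the other hand, $(\ad_a)_r$ with $r\gs1$ is assembled from products involving at most $r+1$ contractions, so it lowers $N$ by at most $r+1$.

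With these bounds in hand I would estimate $p_k(\ad_a^mb)$ for $b=\sum_s\hbar^sb_s$. Each contributing summand is a composite $(\ad_a)_{r_m}\cdots(\ad_a)_{r_1}(b_s)$ with $r_1+\cdots+r_m=k-s\ls k$; hence at most $k$ of the indices $r_j$ are nonzero, they satisfy $\sum_{r_j\gs1}(r_j+1)\ls2k$, and at least $m-k$ of the factors equal $(\ad_a)_0$. Adding up the per-factor changes, every monomial of $p_k(\ad_a^mb)$ has $N\gs(m-k)-2k=m-3k$, so $p_k(\ad_a^mb)\in\FF^pG^n$ as soon as $m\gs3k+p$. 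This is precisely pro-nilpotency in each degree of $\hbar$.

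Finally, pro-nilpotency and continuity make $\exp(\ad_a)=\sum_m\frac{1}{m!}\ad_a^m$ converge in $G_\hbar^n$, since modulo any $\FF^pG$ and any power $\hbar^{k+1}$ the series is a finite sum; it is invertible with inverse $\exp(-\ad_a)$, and is an automorphism of the graded associative algebra because $\ad_a$ is a derivation. The Baker--Campbell--Hausdorff formula, together with the closedness of $K$ under the bracket (Lemma \ref{lem:Kclosed}) and the same pro-nilpotency guaranteeing convergence, then shows that these automorphisms compose to a group. The step I expect to be the main obstacle is the double estimate of the previous two paragraphs: in contrast with the classical $\ad_c$, the higher $\hbar$-components of $\ad_a$ need not raise $N$ and may even lower it, so everything rests on the bookkeeping that each such term carries a strictly positive power of $\hbar$ and can therefore occur only boundedly often at a fixed $\hbar$-degree.
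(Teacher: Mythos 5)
Your proof is correct and follows essentially the same route as the paper's: both rest on the observation that in the coefficient of $\hbar^k$ of an iterated adjoint action at most $k$ of the factors can involve contractions or higher $\hbar$-components of the generators, so the remaining factors each behave like the classical $\ad$ of an element of $I^{(2)}$ and strictly increase the number of positive ($\MM^*$-)factors, forcing the result into $\FF^p G$ once the word length is large. The only difference is that the paper states the estimate directly for mixed compositions $\ad_{u_1}^{l_1}\cdots\ad_{u_k}^{l_k}$ of several generators --- which is what the Campbell--Baker--Hausdorff closure of the group actually requires --- whereas you write it out only for powers of a single $\ad_a$; your bookkeeping extends to that case verbatim.
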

\begin{proof}
	Fix  integers $j \geq 0$ and $k \geq 1$.
	For $i = 1, \dots, k$, take $u_i = u_{i0} + \hbar u_{i1} + \cdots$ where
	$u_{i0} \in q(I^{(2)})$ and $u_{ij} \in G^0$. Let $x \in G^n$. Fix $l = l_1 +
	l_k$ with integers $l_i \geq 0$.  Then 
	\begin{align*}
		&p_j(\ad_{u_1}^{l_1} \cdots \ad_{u_k}^{l_k} x) \\
		=& p_j( \sum_{\substack{j_i: \{1, \dots, l_i\} \to \N_0 \\ i = 1, \dots,
				k}} \hbar^{\sum_{i=1}^k \sum_{s=1}^{l_i} j_i(s)} \ad_{u_{1j_1(1)}} \circ \cdots
		\circ \ad_{u_{1j_1(l_1)}} \circ \cdots \circ \ad_{u_{kj_k(1)}} \circ \cdots
		\circ \ad_{u_{kj_k(l_k)}} (x) ) \\
		=& \sum_{\substack{j_i: \{1, \dots, l_i\} \to \N_0 \\ i = 1, \dots,
				k \\
			n = \sum_{i=1}^k \sum_{s=1}^{l_i} j_i(s) \leq j }}  p_{j-n} (\ad_{u_{1j_1(1)}} \circ \cdots
		\circ \ad_{u_{1j_1(l_1)}} \circ \cdots \circ \ad_{u_{kj_k(1)}} \circ \cdots
		\circ \ad_{u_{kj_k(l_k)}} (x) ) \\
	\end{align*}
	This is a finite sum. We now write out each argument of $p_{j-n}$ as a sum of
	products of the $u_{pq} \in G$. Each such product satisfies the following
	conditions: It contains $(l+1)$ factors. The
	factor $x$ appears once. The number of factors $u_{pq}$ with $q \geq 1$ is
	bounded above by $n$. Hence the number of factors $u_{i0} \in q(I^{(2)})$ is
	bounded below by $(l-n)$. Thus the number of positive factors before normal
	ordering is bounded from below by $2(l-n)$. After normal ordering and
	applying $p_{j-n}$ the number of positive factors that still remain are
	bounded from below by $2(l-n) - (j-n) \geq 2(l-j)$.	
	Hence, $p_j(\ad_{u_1}^{l_1} \cdots \ad_{u_k}^{l_k} x)$ is a finite sum of
	elements in $G^n$, which contain at least $2(l-j)$ factors of positive
	degree. This bound is independent of $x$. 

	Finally fix $p,j \geq 0$ and let $x = \sum_j x_j \hbar^h \in G_\hbar^n$. We have 
	\begin{align*}
	 	p_j( \ad_{u_1}^{l_1} \cdots \ad_{u_k}^{l_k} x) = \sum_{k=0}^j
		p_{j-k} (
		\ad_{u_1}^{l_1} \cdots \ad_{u_k}^{l_k} x_k ).  
	\end{align*}
	Pick $l_0$ such that for all $m = 0, \dots, j$, for all $r \geq 0$ and $l=l_1
	+ \cdots + l_k \geq l_0$ we have $p_m(  \ad_{u_1}^{l_1} \cdots
	\ad_{u_k}^{l_k} x_r ) \in \FF^p G^n$. Then for all $l = l_1 + \cdots + l_k \geq
	l_0$ we have $p_j ( \ad_{u_1}^{l_1} \cdots \ad_{u_k}^{l_k}  x) \in \FF^p
	G^n$.  Hence $\ad K$ acts pro-nilpotently
	with respect to this filtration and thus $\ad K$ exponentiates to a group of
	vector space automorphisms $\{\exp \ad_u : G_\hbar \to G_\hbar, u \in K\}$.
	These maps preserve the multiplicative structure since $\ad_u$ is a derivation
	for the product.
\end{proof}

\subsubsection{Ambiguity for a Given Solution of the Classical Master Equation}
\label{sssec:uniquenessgivenR}

Let $R \in X^1$ be a solution of the classical master equation. 
Throughout this paragraph we assume
\begin{ass}
  \label{ass:uniqueness}
  We have $H^1(X,d_R) = 0$. Thus $H^1(A, D_0) = 0$.
\end{ass}
Let
\begin{align*}
  r & = q(R) + \hbar q(R_1) + \dots \\
  r' & = q(R) + \hbar q(R_1') + \dots
\end{align*}
be two solutions to the quantum master equation, so that $r \equiv r' \pmod{\hbar}$. 
\begin{lem}
  \label{lem:quniquelitestep}
  Let $n \in \N_0$.  Assume that for $l = 1, \dots, n$ we have $R_l = R_l'$.
	Then there exists a generator $c \in (\hbar^{n+1}) \subset K$ of a quantum
	gauge equivalence such that $\exp \ad_c r \equiv r' \pmod{ (\hbar^{n+2}) }$.
\end{lem}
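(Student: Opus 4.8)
The plan is to run the quantum analogue of the single gauge-induction step of Lemma~\ref{lem:gaugeinduction}, now graded by powers of $\hbar$ rather than by filtration degree. Writing $w := R_{n+1} - R_{n+1}' \in X^1$ and using the hypothesis $R_l = R_l'$ for $l = 1, \dots, n$, the two solutions first disagree at order $n+1$, so $r - r' \equiv \hbar^{n+1} q(w) \pmod{(\hbar^{n+2})}$. I would reduce the whole statement to producing an element $\gamma \in X^0$ with $d_R \gamma = w$ and then setting $c := \hbar^{n+1} q(\gamma)$. Such a $c$ lies in $(\hbar^{n+1})$, and since $p_0(c) = 0 \in q(I^{(2)})$ it automatically lies in $K$, so it is a legitimate generator of a quantum gauge equivalence by Lemma~\ref{lem:quantumgauge}.

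Granting such a $\gamma$, the verification is a short order count. Because $c \in (\hbar^{n+1})$ and every commutator lies in $(\hbar)$ by Remark~\ref{rem:cominh}, the operator $\ad_c = \frac{1}{\hbar}[c, -]$ raises the $\hbar$-order by at least $n+1$; hence $\ad_c^2 r \in (\hbar^{2n+2}) \subset (\hbar^{n+2})$ and $\exp \ad_c r \equiv r + \ad_c r \pmod{(\hbar^{n+2})}$. Writing $\ad_c r = \hbar^n [q(\gamma), r]$ and keeping only the leading term of $r$, equation~\ref{eq:commbracket} gives $\ad_c r \equiv \hbar^{n+1} q(\{\gamma, R\}) = -\hbar^{n+1} q(d_R \gamma) \pmod{(\hbar^{n+2})}$. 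With $d_R \gamma = w$ this yields $\exp \ad_c r \equiv r - \hbar^{n+1} q(w) \equiv r' \pmod{(\hbar^{n+2})}$, as required.

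The main obstacle is producing $\gamma$, i.e. showing that $w$ is $d_R$-exact; by Assumption~\ref{ass:uniqueness} ($H^1(X, d_R) = 0$) it suffices to prove that $w$ is $d_R$-closed. I would extract this from the two quantum master equations. Since $r$ and $r'$ are odd, their commutators are symmetric, so $[r, r] - [r', r'] = [r + r', r - r']$, and $[r,r] = [r',r'] = 0$ forces $\frac{1}{\hbar}[r + r', r - r'] = 0$. Substituting $r + r' \equiv 2 q(R) \pmod{(\hbar)}$ together with $r - r' \equiv \hbar^{n+1} q(w) \pmod{(\hbar^{n+2})}$, and using equation~\ref{eq:commbracket} once more, the coefficient of $\hbar^{n+1}$ on the left equals $2 q(\{R, w\}) = 2 q(d_R w)$. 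Vanishing of the whole expression then gives $q(d_R w) = 0$, whence $d_R w = 0$ since $q$ is injective. This closes the argument; the only routine points left are checking that the discarded terms are genuinely of order $\hbar^{n+2}$ and the sign bookkeeping behind $\{\gamma, R\} = -d_R \gamma$.
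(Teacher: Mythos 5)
Your proposal is correct and follows essentially the same route as the paper: both extract $d$-closedness of the order-$(n+1)$ discrepancy from $0=[r+r',r-r']$, invoke the vanishing of $H^1$ (you on $(X,d_R)$, the paper on the isomorphic $(A,D_0)$) to write it as a coboundary, and exponentiate the resulting generator $c=\hbar^{n+1}q(\gamma)$, with the higher $\ad_c^l$ terms killed by the same order count. The only cosmetic difference is that you transport the cohomological step to the classical side via $\phi$, which the paper's Assumption~\ref{ass:uniqueness} already identifies with the quantum side.
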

\begin{proof}
	Let $v = q(R_{n+1})-q(R_{n+1}') \in G^1$. Then $0 = [r + r', r-r']$ since
	$r,r'$ solve the quantum master equation. Moreover, $r-r' \equiv \hbar^{n+1}
	v \pmod{\hbar^{n+2}}$. Hence
  \begin{align*}
	 0 = \frac{1}{\hbar} [r+r', v + \hbar \cdots] \equiv \frac{1}{\hbar}[ 2
	 q(R), v] \equiv 2 Dv \pmod{ \hbar}
  \end{align*}
  Thus $D_0 \pi v = 0$.  Hence by assumption \ref{ass:uniqueness}, $\pi v = D_0
	\pi u$ for some $u \in G^0_\hbar$, so $v \equiv D u \pmod{\hbar}$. Since $v
	\in G^1$ is constant in $\hbar$, we may also assume $u \in G^0$. Set $c =
	\hbar^{n+1} u \in K$.  We check that
  \begin{align*}
	 \exp \ad_c r - r' &= r-r' + \frac{1}{\hbar}[c,r] + \sum_{l=2}^{+\infty}
	 \frac{1}{l!} \ad_c^l r \\
	 &\equiv \hbar^{n+1} (v - \frac{1}{\hbar}[r, u]) \equiv \hbar^{n+1} ( v - D
	 u) \equiv 0 \pmod{ (\hbar^{n+2})}.
  \end{align*}
\end{proof}

\begin{thm}
  \label{thm:quniquelite}
  Under assumption \ref{ass:uniqueness}, there is a quantum gauge equivalence
  mapping $r$ to $r'$.  
\end{thm}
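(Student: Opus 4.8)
The plan is to mirror the classical argument of Theorem~\ref{thm:gaugeequivalence}, with the $\hbar$-adic filtration playing the role that $\FF^\bullet$ played there. I would build, by repeated application of Lemma~\ref{lem:quniquelitestep}, a sequence of quantum gauge equivalences whose composition pushes $r$ to agree with $r'$ to ever higher order in $\hbar$, and then assemble these into a single gauge equivalence by a Baker--Campbell--Hausdorff argument.

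First I would construct inductively generators $c_1, c_2, \dots$ with $c_k \in (\hbar^k) \subset K$. Set $g_0 = \id$ and $r^{(0)} = r$; since both $r$ and $r'$ have $p_0$-component $q(R)$, we have $r^{(0)} \equiv r' \pmod{(\hbar)}$. Assume $g_{k-1} = \exp \ad_{c_{k-1}} \circ \cdots \circ \exp \ad_{c_1}$ has been constructed so that $r^{(k-1)} := g_{k-1}(r)$ solves the quantum master equation and $r^{(k-1)} \equiv r' \pmod{(\hbar^{k})}$. Because each $c_j \in (\hbar^j)$ with $j \geq 1$, applying $\exp \ad_{c_j}$ leaves the $\hbar^0$-component fixed, so $r^{(k-1)}$ is again of the form $q(R) + \hbar q(R_1) + \cdots$ and the pair $(r^{(k-1)}, r')$ satisfies the hypotheses of Lemma~\ref{lem:quniquelitestep} with $n = k-1$. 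The lemma yields $c_k \in (\hbar^k) \subset K$ with $\exp \ad_{c_k}(r^{(k-1)}) \equiv r' \pmod{(\hbar^{k+1})}$. Since $\exp \ad_{c_k}$ is an automorphism of associative algebras by Lemma~\ref{lem:quantumgauge}, it preserves the commutator and hence the quantum master equation, so $r^{(k)} := g_k(r)$ with $g_k := \exp \ad_{c_k} \circ g_{k-1}$ again solves $[r^{(k)}, r^{(k)}] = 0$. This completes the induction.

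Next I would combine the $g_k$ into a single gauge equivalence. By Lemma~\ref{lem:Kclosed} and remark~\ref{rem:cominh}, the Lie bracket $(a,b) \mapsto \frac{1}{\hbar}[a,b]$ on $K$ sends $(\hbar^i) \times (\hbar^j)$ into $(\hbar^{i+j})$, because $[a,b] \in (\hbar)$ raises the $\hbar$-order by one. As $\ad K$ acts pro-nilpotently in each $\hbar$-degree (Lemma~\ref{lem:quantumgauge}), the Baker--Campbell--Hausdorff formula expresses $g_k = \exp \ad_{\gamma_k}$ for a generator $\gamma_k \in K$, and the filtration property above gives $\gamma_{k+1} \equiv \gamma_k \pmod{(\hbar^{k+1})}$, since every correction term beyond $\gamma_k$ carries a factor of $c_{k+1} \in (\hbar^{k+1})$. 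Thus the $\hbar$-coefficients of $\gamma_k$ stabilize and $\gamma_k$ converges to some $\gamma \in G_\hbar^0$; as $p_0(\gamma) = p_0(c_1) = 0 \in q(I^{(2)})$, we get $\gamma \in K$. By continuity of the commutator in each fixed degree, $\exp \ad_{\gamma_k} \to \exp \ad_{\gamma} =: g$ pointwise, so $g$ is a quantum gauge equivalence.

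Finally I would verify $g r = r'$. From $\exp \ad_{c_k}(y) \equiv y \pmod{(\hbar^k)}$ one gets $g_{k+1}(x) \equiv g_k(x) \pmod{(\hbar^{k+1})}$, so for each fixed $j$ the component $p_j(g_k(x))$ is constant once $k \geq j$; hence $g_k \to g$ pointwise, and applying this to $r$ the relation $g_k(r) \equiv r' \pmod{(\hbar^{k+1})}$ forces $p_j(g(r)) = p_j(r')$ for every $j$, that is $g r = r'$. I expect the main obstacle to be precisely this convergence step: one must simultaneously track the $\hbar$-adic order of the generators and invoke pro-nilpotency together with continuity of the commutator to guarantee that the infinite composition of gauge equivalences is itself a single gauge equivalence, just as in the delicate limiting argument that closes Theorem~\ref{thm:gaugeequivalence}.
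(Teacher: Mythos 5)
Your proposal is correct and follows essentially the same route as the paper: iterate Lemma~\ref{lem:quniquelitestep} to produce generators of increasing $\hbar$-adic order, combine the resulting exponentials into a single $\exp\ad_{\gamma_k}$ via the Campbell--Baker--Hausdorff formula, observe that $\gamma_{k+1}\equiv\gamma_k\pmod{(\hbar^{k+1})}$ so the generators converge to some $\gamma\in K$, and check $\exp\ad_\gamma r=r'$ coefficient by coefficient in $\hbar$. The only differences are an index shift in labelling the $c_k$ and a slightly different (but equivalent) phrasing of the final pointwise-convergence check, which the paper carries out by an induction comparing $\ad_{\gamma_{k-1}}^l r$ with $\ad_\gamma^l r$.
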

\begin{proof}
	By lemma \ref{lem:quniquelitestep}, there exists a sequence of generators
	$c_j \in (\hbar^{j+1}) \subset K$ of quantum gauge equivalences $\exp
	\ad_{c_j}$ which define a sequence $r_{(j)}$ of solutions of the quantum
	master equation via $r_{(0)} = r$ and $r_{(j+1)} = \exp \ad_{c_j} r_{(j)}$,
	so that $r_{(j+1)} \equiv r' \pmod{ \hbar^{j+2}}$. We have
	\begin{align*}
					r_{(j+1)} = \exp \ad_{c_j} \exp \ad_{c_{j-1}} \cdots \exp \ad_{c_0} r
					= \exp \ad_{\gamma_j} r,
	\end{align*}
	for some $\gamma_j \in K$. We are left to show that $\gamma_j \to \gamma \in
	K$ and $\exp \ad_\gamma r = r'$. By the Campbell-Baker-Hausdorff formula, we
	have $\gamma_0 = c_0$ and $\gamma_{j+1} = \gamma_j + c_{j+1} + \cdots$, where
	the terms we have dropped involve sums of nested commutators
	$\frac{1}{\hbar}[-,-]$, each of which containing at least one $c_{j+1} \in
	(\hbar^{j+2})$. Hence, $\gamma_{j+1} \equiv \gamma_{j} \pmod{\hbar^{j+2}}$
	and thus $\lim \gamma_j = \gamma \in K$ exists.

	Finally, fix $k \in \N_0$. We will prove that $p_k(\exp \ad_\gamma r - r') =
	0$. We already know that $p_k( \exp \ad_{\gamma_{k-1}}r - r') = 0$ since $\exp
	\ad_{\gamma_{k-1}} r = r_{(k)}$. Hence it suffices to prove
	that $p_k( \exp \ad_{\gamma_{k-1}}r  - \exp \ad_\gamma r) = 0$. We show by
	induction in $l \in \N_0$ that $\ad_{\gamma_{k-1}}^l r \equiv \ad_\gamma^l r
	\pmod{\hbar^{k+1}}$. The case $l=0$ is trivial. Now suppose the statement
	holds for some $l \in \N_0$. Then 
	\begin{align*} 
					\ad_{\gamma_{k-1}}^{l+1} r - \ad_\gamma^{l+1} r &=
					\ad_{\gamma_{k-1}} ( \ad_{\gamma_{k-1}}^{l} r - \ad_\gamma^{l} r +
					\ad_{\gamma}^l r) - \ad_\gamma^{l+1} r \\
					&\equiv \ad_{\gamma_{k-1}} \ad_\gamma^l r - \ad_\gamma^{l+1} r 
					= \ad_{\gamma_{k-1} - \gamma} \ad_\gamma^l r \equiv 0
					\pmod{\hbar^{k+1}},
	\end{align*}
	since $\gamma \equiv \gamma_{k-1} \pmod{\hbar^{k+1}}$.
\end{proof}

\subsubsection{Ambiguity for Two Classical Solutions Corresponding to the Same
Tate Resolution}
\label{sssec:qsametate}

Let $R, R'$ be two solutions of the classical master equation associated to the
same Tate resolution. Let 
\begin{align*}
  r & = q(R) + \hbar q(R_1) + \cdots \\
  r' & = q(R') + \hbar q(R_1') + \cdots
\end{align*}
be two solutions of the quantum master equation. 
\begin{thm}
  If either of the two solutions $R,R'$ of the classical master equation
	satisfy assumption \ref{ass:uniqueness}, then there is a quantum gauge
	equivalence mapping $r$ to $r'$.
\end{thm}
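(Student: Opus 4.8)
The plan is to reduce the statement to the two uniqueness results already in hand: the classical gauge-equivalence Theorem~\ref{thm:gaugeequivalence} and the quantum uniqueness Theorem~\ref{thm:quniquelite}. First I would observe that, since $R$ and $R'$ are associated to the same Tate resolution, they induce the same differential $\delta$ on $X/I$. Hence Theorem~\ref{thm:gaugeequivalence} supplies a classical gauge equivalence $g = \exp \ad_c$ with generator $c \in \gfrak = X^0 \cap \iphat$ and $R' = gR$. Because $g$ is a degree-zero Poisson automorphism with $R' = gR$, conjugation gives $d_{R'} = g \circ d_R \circ g^{-1}$, so $g$ is an isomorphism of complexes $(X, d_R) \to (X, d_{R'})$ and therefore $H^\bullet(X, d_R) \cong H^\bullet(X, d_{R'})$. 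In particular $H^1(X, d_R) = 0$ if and only if $H^1(X, d_{R'}) = 0$, so the hypothesis that \emph{either} $R$ or $R'$ satisfies assumption~\ref{ass:uniqueness} forces both to satisfy it.

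Next I would quantize this classical gauge equivalence. Since $c \in X^0 \cap \iphat = X^0 \cap I^{(2)}$, its image $q(c) \in G^0$ satisfies $p_0(q(c)) = q(c) \in q(I^{(2)})$, so $q(c) \in K$ and $\exp \ad_{q(c)}$ is a quantum gauge equivalence by lemma~\ref{lem:quantumgauge}. Set $\tilde r := \exp \ad_{q(c)} r$; as $\exp \ad_{q(c)}$ is an automorphism of associative algebras, $\tilde r$ again solves the quantum master equation. The key point is that this quantum equivalence reduces to the classical one modulo $\hbar$. Indeed, by theorem~\ref{thm:comp-brack-comm} we have $\ad_{q(c)} q(x) = \frac{1}{\hbar}[q(c), q(x)] \equiv q(\{c,x\}) = q(\ad_c x) \pmod{\hbar}$, and iterating gives $\exp \ad_{q(c)} q(x) \equiv q(\exp \ad_c \, x) = q(gx) \pmod{\hbar}$. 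Since $\ad_{q(c)}$ is $\R[[\hbar]]$-linear it preserves the $\hbar$-adic filtration, so from $r \equiv q(R) \pmod{\hbar}$ we obtain
\begin{align*}
  \tilde r \equiv \exp \ad_{q(c)} q(R) \equiv q(gR) = q(R') \pmod{\hbar}.
\end{align*}
As $r' \equiv q(R') \pmod{\hbar}$ as well, the two solutions $\tilde r$ and $r'$ of the quantum master equation both quantize the common classical solution $R'$, i.e. $\tilde r \equiv r' \pmod{\hbar}$.

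Finally I would invoke Theorem~\ref{thm:quniquelite}. Since $H^1(X, d_{R'}) = 0$ by the first step, that theorem applies to the pair $(\tilde r, r')$ and yields a quantum gauge equivalence $h$ with $h \tilde r = r'$. Composing, $h \circ \exp \ad_{q(c)}$ is again a quantum gauge equivalence by the group structure of lemma~\ref{lem:quantumgauge} (using that $K$ is closed under the bracket by lemma~\ref{lem:Kclosed} together with the Campbell--Baker--Hausdorff formula), and it maps $r$ to $r'$, as required.

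The main obstacle will be the second paragraph: carefully justifying that $\exp \ad_{q(c)}$ really descends to the classical gauge equivalence $g$ modulo $\hbar$ when applied to the full $r$ rather than to $q(R)$ alone. This needs the $\R[[\hbar]]$-linearity of $\ad_{q(c)}$, which keeps the higher-order corrections $\hbar q(R_1) + \cdots$ inside $\hbar G_\hbar$ under the exponential, combined with the bracket--commutator compatibility of theorem~\ref{thm:comp-brack-comm}. Once this identification is in place, the remaining steps are bookkeeping with the two established uniqueness theorems and the group structure of the quantum gauge equivalences.
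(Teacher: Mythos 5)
Your proposal is correct and follows essentially the same route as the paper: obtain the classical gauge equivalence $g=\exp\ad_u$ from theorem~\ref{thm:gaugeequivalence}, quantize its generator to get $q(u)\in K$, show by iterating the bracket--commutator compatibility of theorem~\ref{thm:comp-brack-comm} that $\exp\ad_{q(u)}r\equiv q(R')\equiv r'\pmod{\hbar}$, and then finish with theorem~\ref{thm:quniquelite}. The only point the paper spells out that you only flag as an ``obstacle'' is the passage to the limit $L\to\infty$ in the exponential series modulo $\hbar$, which it handles by continuity of $q$ and closedness of $(\hbar)$.
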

\begin{proof}
  By theorem \ref{thm:gaugeequivalence}, there exists a classical gauge
  equivalence $g = \exp \ad_u$ mapping $R$ to $R'$. In particular, assumption
  $\ref{ass:uniqueness}$ is satisfied for both solutions. We have $c=q(u) \in
	K$. Set $r'' = \exp \ad_c r$. It is a solution of the quantum master
	equation. We first prove that $r'' \equiv r' \pmod{\hbar}$. We have
	\begin{align*}
					r'' - r' = \exp \ad_c r - r' \equiv \exp \ad_{q(u)} q(R) - q(R')
					\pmod \hbar
	\end{align*}
	Now we prove by induction in $l \in \N_0$ that $\ad_{q(u)}^l q(R) \equiv
	q(\ad_u^l R) \pmod{\hbar}$. For $l=0$ this is obvious. Suppose it holds for
	some $l \in \N_0$. Then, by equation \ref{eq:commbracket},
	\begin{align*}
					\ad^{l+1}_{q(u)} q(R) = \ad_{q(u)} \ad_{q(u)}^l q(R) \equiv
					\ad_{q(u)} q( \ad_u^l R) \equiv q(\ad^{l+1}_u R) \pmod \hbar.
	\end{align*}
	We now have
	\begin{align*}
					\sum_{l=0}^L \frac{1}{l!} \ad_{q(u)}^l q(R) -q(R') \equiv q\bigg(
						\sum_{l=0}^L \frac{1}{l!} \ad_{c_0}^l R  - R'\bigg) \pmod{\hbar}
	\end{align*}
	For $L \to + \infty$ the left hand side converges to $\exp \ad_{q(u)} q(R) -
	q(R')$, and the argument of $q$ on the right hand side converges to zero. By
	continuity of $q$ and $(\hbar)$ being closed, we conclude $r'' \equiv r'
	\pmod{\hbar}$.

	We are now in the situation
	\begin{align*}
					r & = q(R) + \hbar q(R_1) + \cdots \\
					r' & = q(R') + \hbar q(R_1') + \cdots \\
					r'' = \exp \ad_c r & = q(R') + \hbar q(R_1'') + \cdots.
	\end{align*}
	By theorem \ref{thm:quniquelite}, there exists a quantum gauge equivalence
	$\exp \ad_v$ with $\exp \ad_v r'' = r'$. In particular, $r' = \exp \ad_v \exp
	\ad_c r$. 
\end{proof}

\subsubsection{Quantization of Trivial BRST Models}
\label{sssec:trivialquant}

Let $(Y,S)$ be a trivial BRST model, so $Y = \Sym(\NN \oplus \NN^*)$ for some
negatively graded vector space $\NN$ with finite-dimensional homogeneous
components and $S = \sum_j e_j^* \delta(e_j)$ with $\delta (e_j) = e_k$ for
some $k$ depending on $j$. Let $q:Y \to G$ denote the quantization map. Then $s
= q(S)$ solves the quantum master equation.  Moreover, $D_s = \frac{1}{\hbar}[
s, -]$ maps $G_0$ to $G_0$ and $G$ to $G$, as can be seen using the Leibnitz
rule. Hence both $q_0 : Y_0 \to G_0$ and $Y \to G$ are isomorphisms of
differential graded vector spaces. In particular $H^j(G_0, D_s) = 0$ for $j
\neq 0$ and $H^0(G_0, D_s) = \R$ by lemma \ref{lem:trivial-bfv-coho}.

\subsubsection{Quantization of Products With Trivial BRST Models}
\label{sssec:quantproduct}

Let $(X,R)$ be a BRST model and $(Y,S)$ be a trivial BRST model. Consider
quantizations $q : X \to F \subset F_\hbar$ and $q: Y \to G \subset G_\hbar$
with associated solutions of the quantum master equation  $s = q(S)$ and $r =
q(R) + \hbar \cdots$, respectively. Write $F_0 = \Sym(\NN)\otimes \Sym(\NN*)$
for some non-positively graded vector space $\NN$ and $G_0 = \Sym(\UU) \otimes
\Sym(\UU^*)$ for some negatively graded vector space $\UU$. Let $Z = X
\hat{\otimes}Y$ and $q: Z \to H \subset H_\hbar$ be the quantization obtained
from the splitting $H_0 = \Sym( \NN \oplus \UU) \otimes \Sym(\NN^* \oplus
\UU^*)$. 

\begin{lem}
	\label{lem:qquasiiso}
				The natural map $F_\hbar \to H_\hbar$ is a quasi-isomorphism of graded
				associative algebras.
\end{lem}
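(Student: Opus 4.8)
The plan is to transcribe the proof of lemma \ref{lem:quasiiso} into the quantum setting. Equip $F_\hbar$ with the differential $D_r = \frac{1}{\hbar}[r,-]$, the space $G_\hbar$ with $D_s = \frac{1}{\hbar}[s,-]$, and $H_\hbar$ with $D_{r+s} = \frac{1}{\hbar}[r+s,-]$; here $r+s$ solves the quantum master equation on $H_\hbar$ because $r$ (involving only the $\NN\oplus\NN^*$ oscillators) and $s$ (involving only the $\UU\oplus\UU^*$ oscillators) commute, so that $[r+s,r+s]=[r,r]+[s,s]=0$. The natural map $\iota : F_\hbar \to H_\hbar$, $a\mapsto a\otimes 1$, is a homomorphism of graded algebras because the inclusion $X\to Z$ respects products and normal ordering factorizes along the splitting $H_0 = \Sym(\NN\oplus\UU)\otimes\Sym(\NN^*\oplus\UU^*)$, and it is a chain map since $[s,\iota(a)]=0$ yields $D_{r+s}\iota = \iota D_r$. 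It therefore remains to show that $\iota$ induces an isomorphism on cohomology, for which I would exhibit an explicit homotopy inverse.

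First I would transport a contracting homotopy to the $G$-factor. By section \ref{sssec:trivialquant}, the quantization map is an isomorphism of differential graded vector spaces $(Y,d_S)\to(G,D_s)$, so equation \ref{eq:ch} yields a map $t$ on $G$ with $D_s t + t D_s = \id - \pi$, where $\pi : G \to \R \subset G$ is the projection onto the degree-zero-ghost part; extend $t$ and $\pi$ $\R[[\hbar]]$-linearly to $G_\hbar$. As in the classical case one has $\pi D_s = D_s\pi = 0$. Since $t$ preserves the filtration up to a degree shift, the operator $T := \id_{F_\hbar}\hat{\otimes}\,t$ is defined on $H_0[\hbar]$ and, by lemma \ref{lem:multcont} together with continuity of the normal-ordered product, extends to the completion $H_\hbar$.

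Next I would set $p : H_\hbar \to F_\hbar$, $p(x\otimes y) = x\cdot\pi(y)$ (with $\pi(y)\in\R[[\hbar]]$ a scalar, so the product is unambiguous), again extended to the completion. Then $p\circ\iota = \id_{F_\hbar}$ exactly, and $p$ is a chain map: for homogeneous $x$,
\begin{align*}
p(D_{r+s}(x\otimes y)) = \pi(y)\,D_r x + (-1)^{\deg x}\,x\cdot\pi(D_s y) = D_r(x\cdot\pi(y)) = D_r(p(x\otimes y)),
\end{align*}
using $\pi D_s = 0$. Writing $D_{r+s} = D_r\hat{\otimes}\,\id + \id\hat{\otimes}\,D_s$ under the commuting factorization, the two terms pairing $D_r$ with $t$ cancel for degree reasons (graded commutator of a degree $+1$ and a degree $-1$ operator), leaving the homotopy identity
\begin{align*}
D_{r+s}\,T + T\,D_{r+s} = \id_{F_\hbar}\hat{\otimes}\,(D_s t + t D_s) = \id - \iota\circ p.
\end{align*}
Thus $\iota\circ p$ is chain homotopic to the identity while $p\circ\iota = \id$, so $\iota$ is a quasi-isomorphism.

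The algebraic homotopy identity is formally identical to the classical one, so the main obstacle I anticipate is the analytic bookkeeping required to pass it to the completions in the presence of normal ordering and infinitely many ghost variables. Concretely, one must check that $\id_{F_\hbar}\hat{\otimes}\,t$ respects the filtration $\FF^p H^n$ up to the degree shift introduced by $t$ and is continuous both for the $\FF$-filtration and $\hbar$-adically, and then invoke lemma \ref{lem:multcont}, the explicit product formula of remark \ref{rem:explicitproduct}, and the continuity of the product to take the relevant limits. This is the quantum counterpart of the single remark ``Hence $\id\otimes t$ extends from $Z_0$ to the completion $Z$'' in lemma \ref{lem:quasiiso}, and it is the only place where the quantum argument is genuinely more delicate than the classical one.
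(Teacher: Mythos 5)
Your proposal is correct and takes essentially the same route as the paper: the paper likewise notes that normal ordering is unaffected by the extra $\UU$-variables (so the natural map is an algebra morphism), identifies $H_0[\hbar]$ with $F_0[\hbar]\otimes_{\R[\hbar]}G_0[\hbar]$, defines $\iota$ and $p$ through that identification, and transports the classical homotopy $t$ of lemma \ref{lem:quasiiso} to $G_\hbar$ via the quantization isomorphism of section \ref{sssec:trivialquant}. You merely spell out the homotopy identity, the sign cancellation of the cross terms, and the filtration/continuity bookkeeping that the paper compresses into the phrase that the situation of the classical proof is established in the quantum version.
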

\begin{proof}
				The natural map is a morphism of graded associative algebras since adding
				new variables from $\UU$ and $\UU^*$ does not change the rules defining
				normal ordering in $F_\hbar$. 

				Consider the isomorphism $\phi_0 : F_0 \otimes G_0 \to H_0$ of graded
				vector spaces. It extends $\hbar$-linearly to an isomorphism 
				\begin{align*}
					\phi : F_0[\hbar] \otimes_{\R[\hbar]} G_0[\hbar] \to H_0[\hbar]
				\end{align*}
				of graded associative algebras. On the left hand side we may take the
				tensor product of algebras, since elements of $F_0$ and $G_0$ commute.

				Using this isomorphism, we construct the $\hbar$-linear maps
				\begin{align*}
					\iota : & F_0[\hbar] \to F_0[\hbar] \otimes_{\R[\hbar]} G_0[\hbar]
					\to H_0[\hbar] \\
					p : & H_0[\hbar] \to F_0[\hbar] \otimes_{\R[\hbar]} G_0[\hbar] \to
					F_0[\hbar]
				\end{align*}
				where the last arrow takes $f\otimes g \in F_0 \otimes G_0$ to $f
				\pi(g)$. Here $\pi : G_0 \to G_0$ is the projection onto $\R$ along
				elements of nonzero form degree.
				
				The quantization map $q : Y_0 \to G_0$ is compatible with the
				decompositions $Y_0 = \R \oplus \bigoplus_{j>0} \Sym^j(\UU \oplus
				\UU^*)$ and $G_0 = \R \oplus \bigoplus_{p+q > 0} \Sym^p(\UU) \otimes
				\Sym^q(\UU)$. Moreover, it intertwines the differentials $d_S$ on $Y_0$
				and $D_s$ on $G_0$. Hence the situation of the proof of lemma
				\ref{lem:quasiiso} is established in the quantum version as well. We
				conclude that $\iota$ and $p$ extend $\hbar$-linearly to the respective
				completions, descend to cohomology, and induce mutual inverses on
				cohomology.
\end{proof}

\subsubsection{Relating Arbitrary BRST Models}
\label{sssec:genunique}

In order to relate the quantizations of two BRST charges defining BRST-models
for the same ideal, we need to quantize general automorphisms of the space $X$
that are the identity on $P=\R[V]$ modulo $I$ (see lemma \ref{lem:philift}).
Since the quantization procedure is not functorial, we do not directly obtain
an isomorphism of differential graded algebras on the quantum level. We were
unable to rigorously define a quantum analog to such an isomorphism.

\newpage
\appendix

\section{Graded Poisson Algebras}
\label{sec:super-poisson}

Let $(P, [-,-]_0)$ be a unital Poisson algebra over $\K = \R$. Let
$\MM$ be a negatively graded vector space with finite-dimensional
homogeneous components. Let $\MM^*$ be the positively graded vector
space with homogeneous components $(M^*)^i = (M^{-i})^*$. Define the
graded algebra $X_0 = P \otimes \Sym(\MM \oplus \MM^*)$.
\begin{lem}
  The bracket $\{-,-\}_0$ on $P$ naturally extends to a skew-symmetric,
  bilinear map $\{-,-\}$ on $X_0$ via the natural pairing of $\MM$ and
  $\MM^*$. This map has degree zero. Moreover, it is a derivation for
  the product on $X_0$ and satisfies the Jacobi identity. Thus, it
  turns $X_0$ into a graded Poisson algebra.
\end{lem}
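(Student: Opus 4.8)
The plan is to define the extended bracket $\{-,-\}$ by prescribing its values on the algebra generators and then extending it to all of $X_0$ as a graded biderivation, i.e. as a graded derivation of the product in each of its two arguments. On generators I set $\{p,q\} := \{p,q\}_0$ for $p,q \in P$, and I declare $P$ to bracket trivially with every element of $\MM \oplus \MM^*$. Within the ghost variables I put $\{\MM,\MM\} = 0$ and $\{\MM^*,\MM^*\} = 0$, while on a homogeneous basis $\{e_i\}$ of $\MM$ with dual basis $\{e_i^*\}$ of $\MM^*$ I set $\{e_i, e_j^*\} = \delta_{ij}$ and fix $\{e_j^*, e_i\}$ by graded antisymmetry. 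Since the pairing sends a degree $-i$ element of $\MM$ together with a degree $i$ element of $\MM^*$ to a scalar, the prescribed brackets respect the grading, so $\{-,-\}$ is degree-preserving (degree zero) as required; bilinearity is immediate.

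For well-definedness I would invoke the standard fact that a graded derivation of a free graded-commutative algebra is determined by, and may be freely prescribed on, a set of generators: fixing $b$, the assignment $a \mapsto \{a,b\}$ extends uniquely to a graded derivation of degree $|b|$, and symmetrically $\{a,-\}$ is a graded derivation of degree $|a|$; one then checks that the two extensions are mutually consistent, so that $\{-,-\}$ is a genuine biderivation on $X_0$. Graded skew-symmetry $\{a,b\} = -(-1)^{|a||b|}\{b,a\}$ holds on generators by construction, and since both sides are biderivations in the pair $(a,b)$, it propagates to all of $X_0$.

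The substantive point is the Jacobi identity. Here I would use the standard reduction: for a graded biderivation the Jacobiator
\begin{align*}
  J(a,b,c) = \{a,\{b,c\}\} - \{\{a,b\},c\} - (-1)^{|a||b|}\{b,\{a,c\}\}
\end{align*}
is itself a graded triderivation in $(a,b,c)$, so it suffices to verify $J \equiv 0$ on triples of generators. The triple drawn entirely from $P$ reduces to the Jacobi identity for $\{-,-\}_0$. Every triple containing a ghost factor vanishes because $P$ (and hence the image $\{P,P\}_0 \subseteq P$) brackets trivially with $\MM \oplus \MM^*$, while in the purely ghost triples each inner bracket $\{e_i, e_j^*\}$ is a scalar, which brackets to zero with the remaining generator, and $\{e_i,e_j\}$, $\{e_i^*,e_j^*\}$ vanish outright. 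This exhausts the cases and establishes Jacobi.

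The main obstacle is the careful bookkeeping of Koszul signs — pinning down the sign conventions under which graded skew-symmetry and the triderivation property of $J$ come out consistently. A clean way to sidestep most of this is conceptual: extended antisymmetrically, the pairing equips $V := \MM \oplus \MM^*$ with a graded symplectic form, so $\Sym(V)$ is the canonical graded Poisson algebra attached to a graded symplectic vector space, and $X_0 = P \otimes \Sym(V)$ is then the tensor product of two graded Poisson algebras, carrying the standard tensor-product graded Poisson structure, which is precisely the biderivation described above. I would present the generator-level verification but remark that this identification makes the Poisson axioms automatic.
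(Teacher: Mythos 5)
Your proposal is correct and follows essentially the same route as the paper: define the bracket on generators via the pairing, extend it as a graded biderivation, and verify the Jacobi identity by observing that the Jacobiator is a derivation in each slot and vanishes on generators because brackets of generators are scalars. The only cosmetic difference is that the paper first constructs the bracket $\{-,-\}_1$ on $\Sym(\MM\oplus\MM^*)$ alone and then sets $\{-,-\}=\{-,-\}_0+\{-,-\}_1$, whereas you treat all generators of $X_0$ simultaneously; both handle the sign bookkeeping and the mixed $P$--ghost cases in the same way.
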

\begin{proof}
  First we define a bracket on $\Sym(\MM \oplus \MM^*)$. For $x \in
  \MM$ and $\alpha \in \MM^*$ we set
  \begin{align*}
    \{x,x\}_1 & = 0, & \{\alpha, \alpha\}_1 & = 0, &\{x, \alpha\}_1 &= \alpha(x),
    & \{\alpha, x\}_1 &= -(-1)^{\deg \alpha \deg x} \alpha(x)
  \end{align*}
  and extend this definition as a bi-derivation to all of $\Sym(\MM
  \oplus \MM^*)$. It is then a bilinear, skew-symmetric map $\{-,-\}_1 :
  \Sym(\MM \oplus \MM^*) \times \Sym(\MM \oplus \MM^*) \to \Sym(\MM
  \oplus \MM^*)$ of degree zero which is by definition a derivation
  for the product. The expression
  \begin{align*}
    \zeta(a,b,c) := (-1)^{\deg a \deg c} \{a,\{b,c\}\} + \text{ cyclic
      permutations} 
  \end{align*}
  satisfies $\zeta(a_1a_2, b,c) = (-1)^{\deg a_1 \deg c} a_1
  \zeta(a_2,b,c) + (-1)^{\deg a_2 \deg b} \zeta(a_1,b,c)a_2$ and similar
  derivation-like statements for the other entries. Let $\{e_j\}$ be a
  homogeneous basis of $\MM$ and $\{e^*_j\}$ its dual basis. Then the
  bracket of any two of those generators is a scalar, whence $\zeta$
  is zero on generators. By the above derivation-type property,
  $\zeta$ vanishes identically, proving the graded Jacobi identity.

  Now set $\{-,-\} = \{-,-\}_0 + \{-,-\}_1$ on $X_0$.
\end{proof}
The grading on $\MM$ induces a grading on $X_0$. We obtain a filtration:
$\FF^n X_0$ is defined to be the ideal generated by elements of $X_0$
of degree at least $n$. We set $\FF^n X_0^m = \FF^n X_0 \cap
X_0^m$. We also define $I_0 := \FF^1 X_0$ and $I_0^{(n)} := I_0 \cdots
I_0$ to be the $n$-fold product ideal.

\subsection{Compatibility of Filtration and Bracket on $X_0$} 
\label{sec:filtbracket}

We use the derivation properties of the bracket on $X_0$ to derive
compatibility relations between the filtration and the bracket.
\begin{lem}
  \label{lem:optimal}
  For $m, n \in \Z$ and $p,q \in \N_0$ we have $\{\FF^p X_0^n ,\FF^q X_0^m\}
  \subset \FF^{r_{n,m}(p,q)} X_0^{m+n}$ where
  \begin{align}
    \label{eq:optimal}
    r_{n,m}(p,q) = \max\{m+n, \min\{\max\{p, q + n\}, \max\{q, p + m\}\}\}.
  \end{align}
\end{lem}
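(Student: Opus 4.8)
The plan is to reduce everything to the bracket of two monomials and to track a single invariant, the \emph{positive degree} $\deg_+(w)$ of a monomial $w \in X_0$, defined as the sum of the degrees of its $\MM^*$-factors. Directly from the definition of the filtration one checks that, for $p \gs 0$, the space $\FF^p X_0$ is spanned by the monomials with $\deg_+ \gs p$; since the $\MM$-factors contribute nonpositively to the total degree, every monomial satisfies $\deg_+(w) \gs \deg(w)$, so that $X_0^{m+n} \subset \FF^{m+n} X_0$. As the bracket has degree zero, $\{\FF^p X_0^n, \FF^q X_0^m\} \subset X_0^{m+n}$, and this already accounts for the outer $\max\{m+n,-\}$ in the formula. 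By bilinearity it then suffices to bound $\deg_+$ on $\{u,v\}$ for monomials $u = a\,M_u A_u \in \FF^p X_0^n$ and $v = b\, M_v A_v \in \FF^q X_0^m$, where $a,b \in P$, the $M$'s are products of $\MM$-generators and the $A$'s products of $\MM^*$-generators. Setting $P := \deg A_u = \deg_+(u)$ and $Q := \deg A_v = \deg_+(v)$, I record $P \gs \max\{p,n\}$ and $Q \gs \max\{q,m\}$, which combine the hypotheses $\deg_+(u) \gs p$, $\deg_+(v) \gs q$ with the automatic inequality $\deg_+ \gs \deg$.

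Next I would expand $\{u,v\}$ using that the bracket is a biderivation assembled from $\{-,-\}_0$ on $P$ and $\{-,-\}_1$ on $\Sym(\MM \oplus \MM^*)$. Among elementary brackets of generators, only three kinds survive: the $P$-bracket $\{a,b\}_0$; the pairing of an $\MM$-factor of $u$ with an $\MM^*$-factor of $v$; and the pairing of an $\MM^*$-factor of $u$ with an $\MM$-factor of $v$. Correspondingly $\{u,v\}$ splits into three families. The Type~A terms retain all ghost factors and carry a factor $\{a,b\}_0 \in P$, so each has $\deg_+ = P+Q$. A Type~B term arises by contracting an $\MM$-factor of $u$ of degree $-k$ against an $\MM^*$-factor of $v$ of degree $k$ (the pairing forces equal magnitudes) and deleting both; the result has $\deg_+ = P + Q - k$, where $k \ls Q$ because the deleted factor lies in $A_v$, and $k \ls P - n$ because the deleted factor lies in $M_u$, whose total $\MM$-degree has magnitude $\deg_+(u) - \deg(u) = P - n$. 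The Type~C terms are the mirror image, with $\deg_+ = P + Q - k$ subject to $k \ls P$ and $k \ls Q - m$.

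Finally I would read off the bounds. For Type~B, $k \ls Q$ gives $\deg_+ \gs P$ and $k \ls P-n$ gives $\deg_+ \gs Q+n$, so $\deg_+ \gs \max\{P, Q+n\} \gs \max\{p, q+n\}$; symmetrically Type~C gives $\deg_+ \gs \max\{Q, P+m\} \gs \max\{q, p+m\}$; and Type~A gives $\deg_+ = P+Q \gs \max\{p, q+n\}$, using $P \gs n$, $Q \gs q$, $P \gs p$, $Q \gs 0$. Hence every term of $\{u,v\}$ has $\deg_+ \gs \min\{\max\{p,q+n\},\max\{q,p+m\}\}$, so the whole bracket lies in that filtration level; intersecting with the membership $\{u,v\} \in X_0^{m+n} \subset \FF^{m+n} X_0$ and using $\FF^a X_0 \cap \FF^b X_0 = \FF^{\max\{a,b\}} X_0$ yields $\{u,v\} \in \FF^{r_{n,m}(p,q)} X_0^{m+n}$, as claimed. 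The one genuinely delicate point, which also explains the $\min$ in the formula, is that Types~B and~C are bounded by the two \emph{different} quantities $\max\{p,q+n\}$ and $\max\{q,p+m\}$, so a bracket containing contractions of both kinds can only be guaranteed to sit in the smaller level; making this precise rests entirely on the bookkeeping inequality $k \ls P-n$ (and its mirror $k \ls Q-m$), namely that the degree of a single contracted factor is controlled by the total $\MM$-degree of the monomial it is removed from.
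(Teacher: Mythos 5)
Your proof is correct and follows essentially the same route as the paper's: both expand the bracket via the biderivation property and bound the filtration degree of each resulting term, arriving at the same three quantities $p+q$, $\max\{p,q+n\}$ and $\max\{q,p+m\}$ whose minimum (capped below by $m+n$) gives $r_{n,m}(p,q)$. The only difference is granularity --- the paper uses the coarser four-term Leibniz expansion of $\{au,bv\}$ with $\deg u = p$, $\deg v = q$ and regroups factors to build elements of high degree, whereas you track the invariant $\deg_+$ through individual contractions of generators; this is a finer-grained version of the same bookkeeping.
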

\begin{proof}
  Let $a, b, u, v \in X_0$ be homogeneous elements with $\deg a + \deg
  u = n$, $\deg b + \deg v = m$, $\deg u = p$, and $\deg v =
  q$. Suppose without loss of generality that $p \geqslant n$ and $q
  \geqslant m$. Then
  \begin{align*}
    \{a u, b v\} = \pm a b \{u, v\} \pm a v \{u, b\} \pm u b \{a, v\} \pm u v
    \{a, b\}.
  \end{align*}
  We now combine different factors to construct elements of high
  degree using the fact that the bracket has degree zero. The first
  summand is in $\FF^{p+q} X_0$. The second summand is in $\FF^{\max\{
    q, p + m\}} X_0$. The third summand is in $\FF^{\max\{ p, n + q\}}
  X_0$. Finally, the last summand is again in $\FF^{p+q} X_0$. So the
  whole sum is in $\FF^r X_0$ where $r = \min\{p + q, \max\{p, q +
  n\}, \max\{q, p + m\}\} = \min\{\max\{p, q + n\}, \max\{q, p +
  m\}\}$.
\end{proof}
\begin{cor} We obtain for $p,q \in \N_0$ and $m,n \in \Z$,
  \label{cor:increaseF}
  \label{cor:x0preservesfp}
  \label{cor:tcor}
  \begin{enumerate}
  \item $\{\FF^p X_0^1, \FF^q X_0^1 \} \subset \FF^{l(p,q)} X_0$, where
    $ l(p,q) =
    \begin{cases}
      \max\{p, q\}, & \text{ if $p \neq q$ } \\
      p + 1, & \text{ if $p=q$ }
    \end{cases}$.
  \item $\{\FF^p X_0, X_0^m\} \subset \FF^p X_0$ provided $m \geqslant 0$.
  \item $\{\FF^p X_0^n, X_0^m\} \cup \{X_0^n, \FF^p X_0^m\} \subset
    \FF^{t_{n,m}(p)} X_0^{n+m}$, where $ t_{n,m}(p) = p - \max\{|n|,|m|\}.  $
  \end{enumerate}
\end{cor}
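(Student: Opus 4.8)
The plan is to deduce all three inclusions directly from Lemma \ref{lem:optimal} by specializing the degrees $n,m$ and the filtration indices $p,q$, and then relaxing the sharp exponent $r_{n,m}(p,q)$ to the coarser one claimed in each item. The only structural fact I need is that the filtration is decreasing, i.e. $\FF^a X_0 \subseteq \FF^b X_0$ whenever $a \geq b$, which is immediate because the generating set of $\FF^a X_0$ shrinks as $a$ grows; note also that $1 \in \FF^0 X_0$ forces $\FF^0 X_0 = X_0$, so that $\FF^t X_0 = X_0$ for all $t \leq 0$. Consequently, in each case it suffices to verify that the exponent produced by Lemma \ref{lem:optimal} is at least the exponent asserted in the corollary.

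For item (1) I would set $n = m = 1$ and simplify the inner expression $\min\{\max\{p, q+1\}, \max\{q, p+1\}\}$ by the three cases $p > q$, $p < q$, and $p = q$. A short computation shows it equals $\max\{p,q\}$ when $p \neq q$ and equals $p+1$ when $p = q$, i.e.\ exactly $l(p,q)$. Hence $r_{1,1}(p,q) = \max\{2, l(p,q)\} \geq l(p,q)$, and the claimed inclusion into $\FF^{l(p,q)} X_0$ follows by monotonicity; the corollary simply discards the $\max\{2,\cdot\}$ sharpening.

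For item (2) I would decompose $\FF^p X_0 = \bigoplus_n \FF^p X_0^n$ and treat each homogeneous piece by applying the lemma with second index $q = 0$, writing $X_0^m = \FF^0 X_0^m$. Since $m \geq 0$, both arguments of the inner minimum obey $\max\{p,n\} \geq p$ and $\max\{0, p+m\} = p+m \geq p$, so $r_{n,m}(p,0) \geq p$; summing over $n$ yields $\{\FF^p X_0, X_0^m\} \subseteq \FF^p X_0$. Item (3) is the same move applied to both orderings: for $\{\FF^p X_0^n, X_0^m\}$ take $q = 0$, and for $\{X_0^n, \FF^p X_0^m\}$ take the first index equal to $0$. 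In either situation each argument of the inner minimum is bounded below by $p - \max\{|n|,|m|\}$, using $p + n \geq p - |n|$, $p + m \geq p - |m|$, and $p \geq p - \max\{|n|,|m|\}$, so the exponent dominates $t_{n,m}(p)$.

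The whole argument is routine once Lemma \ref{lem:optimal} is available; the only place demanding care is the min--max bookkeeping, in particular the three-case reduction in item (1) and keeping the direction of every filtration inequality straight --- the lemma delivers a \emph{larger} exponent, hence a \emph{smaller} ideal, which is precisely what makes the stated, weaker, inclusions valid. I do not anticipate any genuine obstacle beyond this arithmetic.
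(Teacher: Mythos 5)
Your proposal is correct and coincides with the paper's intended argument: the paper states this corollary without proof as a direct specialization of Lemma~\ref{lem:optimal}, and your min--max bookkeeping (including the three-case analysis for item (1) and the conventions $\FF^0 X_0 = X_0$, $\FF^t X_0 = X_0$ for $t \leq 0$) is exactly the verification being left to the reader.
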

\begin{lem}
  \label{lem:ipbracket}
  We have $\{X_0^1 \cap \ip, \FF^{m} X_0\} \subset \FF^{m+1} X_0$.
\end{lem}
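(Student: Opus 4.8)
The plan is to reduce to monomials and then read off the filtration from the number and degrees of the dual factors. By definition $\FF^p X_0$ is the ideal generated by elements of degree at least $p$, so, writing $d^+(\xi)$ for the sum of the degrees of the $\MM^*$-factors of a monomial $\xi$ and $k^+(\xi)$ for their number, one checks directly from the definitions that a monomial lies in $\FF^p X_0$ exactly when $d^+(\xi)\geq p$, and that $\ip = I_0^{(2)}$ is spanned by those monomials with $k^+(\xi)\geq 2$ (a product of two elements of $I_0=\FF^1 X_0$ contributes at least two dual factors, and conversely a monomial with at least two dual factors splits as such a product). Since the bracket is bilinear and $\FF^{m+1}X_0$ is closed under sums, it suffices to prove $\{w_0,x_0\}\in\FF^{m+1}X_0$ for a single monomial $w_0$ of degree $1$ with $k^+(w_0)\geq 2$ and a single monomial $x_0$ with $d^+(x_0)\geq m$.

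Next I would split the bracket as $\{w_0,x_0\} = \{w_0,x_0\}_0 + \{w_0,x_0\}_1$ using the decomposition from the appendix. The $P$-part $\{-,-\}_0$ acts only on the $P$-coefficients and leaves every $\MM$- and $\MM^*$-factor in place, so each resulting monomial has $d^+ = d^+(w_0)+d^+(x_0)\geq 2+m$ and already lands in $\FF^{m+2}X_0$. The pairing part $\{-,-\}_1$ is a biderivation whose nonzero terms contract exactly one $\MM$-factor of one argument against one $\MM^*$-factor of the other, so I would distinguish two cases according to which argument supplies the contracted $\MM^*$-factor. If the contracted dual comes from $w_0$, the remaining part of $w_0$ still contains at least one $\MM^*$-factor because $k^+(w_0)\geq 2$, while all the duals of $x_0$ survive; hence the product has $d^+\geq 1+m$ and we are done.

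The delicate case is when the contracted dual, of degree $e\geq 1$, comes from $x_0$ and pairs against an $\MM$-factor of $w_0$. This term is nonzero only if $w_0$ contains an $\MM$-factor of degree $-e$; since $\deg w_0 = 1$ while the degrees of all $\MM$-factors of $w_0$ sum to at most $-e$, this forces $d^+(w_0)\geq 1+e$. The surviving duals of $w_0$ then contribute at least $1+e$, whereas the duals of $x_0$ contribute $d^+(x_0)-e\geq m-e$, so the resulting monomial has $d^+\geq (1+e)+(m-e)=m+1$. In every case the image lies in $\FF^{m+1}X_0$, proving the lemma. I expect this last case to be the main obstacle: a naive application of lemma \ref{lem:optimal} to $w_0\in\FF^2 X_0^1$ only yields $\FF^m X_0$ for monomials of filtration degree exactly two, so the point is precisely that the degree-one constraint on $w_0$ raises its dual degree above the generic bound of $2$ by exactly the amount $e$ lost from $x_0$. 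This is why the finer count by the number of dual factors, rather than the coarse filtration bound, is essential.
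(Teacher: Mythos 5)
Your proof is correct and follows essentially the same route as the paper's: both expand the bracket via the biderivation property and track the total degree of the surviving positive ($\MM^*$-)factors, using the fact that membership in $\ip$ guarantees two such factors so that at least one survives any single contraction. The paper organizes this with factored representatives $\{au_1u_2,bv\}$ (and disposes of your delicate Case B more quickly by noting that $\{au_1u_2,v\}$ is homogeneous of degree $m+1>0$ and hence automatically lies in $\FF^{m+1}X_0$), whereas you work monomial by monomial with the explicit invariants $d^+$ and $k^+$; the substance is the same.
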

\begin{proof}
  Let $a, u_1, u_2 \in X_0$ with $\deg(a) = 1-n$,
  $\deg(u_1)+\deg(u_2)=n$, and $\deg(u_1), \deg(u_2) > 0$. Let $b,v
  \in X_0$ with $\deg(v) = m$.  $ \{a u_1 u_2, b v\} = a u_1 \{ u_2, b\} v
  \pm a u_2 \{u_1, b\} v \pm u_1u_2 \{a, b\} v \pm \{a u_1 u_2, v\} b \in
  \FF^{m+1} X_0 $.
\end{proof}
\begin{lem}
  \label{lem:Iclosed}
  The ideal $I_0$ is closed under the bracket.
\end{lem}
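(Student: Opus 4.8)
The plan is to reduce the statement to a single biderivation computation on monomials. As a vector space $I_0 = \FF^1 X_0$ is spanned by monomials that contain at least one factor from $\MM^*$, since a monomial has positive degree precisely when such a factor is present. Hence every element of $I_0$ is a finite sum of terms $a\xi$ with $a \in X_0$ a monomial and $\xi \in \MM^*$ a basis vector. Because the bracket is bilinear, it will suffice to show $\{a\xi, b\eta\} \in I_0$ for all $a, b \in X_0$ and all $\xi,\eta \in \MM^*$.

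First I would expand $\{a\xi, b\eta\}$ by applying the graded Leibniz rule in each of the two slots. This yields four summands, schematically
\begin{align*}
  \{a\xi, b\eta\} = \pm\, a\,\{\xi,b\}\,\eta \;\pm\; ab\,\{\xi,\eta\} \;\pm\; \{a,b\}\,\eta\xi \;\pm\; b\,\{a,\eta\}\,\xi,
\end{align*}
the signs being governed by the degrees of the factors. Three of these summands still carry an explicit factor $\xi$ or $\eta$ lying in $\MM^*$, so each of them is again a monomial of positive degree and hence lies in $I_0$. The only summand without such a visible factor is $ab\,\{\xi,\eta\}$.

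The crux --- and really the only point that needs checking --- is that this last summand vanishes. Since $\xi,\eta \in \MM^*$ have no component in $P$, their bracket reduces to $\{\xi,\eta\}_1$, and the pairing underlying $\{-,-\}_1$ couples $\MM$ with $\MM^*$ while vanishing on $\MM^* \times \MM^*$; thus $\{\xi,\eta\} = 0$. Consequently $ab\,\{\xi,\eta\} = 0$, every summand of $\{a\xi,b\eta\}$ lies in $I_0$, and closedness $\{I_0, I_0\}\subset I_0$ follows.

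I do not expect a genuine obstacle here. As a sanity check I would also verify that the claim is a special case of lemma \ref{lem:optimal}: taking $p = q = 1$ one has $\max\{1,1+n\}\geq 1$ and $\max\{1,1+m\}\geq 1$, so $r_{n,m}(1,1)\geq 1$ and therefore $\{\FF^1 X_0^n, \FF^1 X_0^m\}\subset \FF^1 X_0$ for all $m,n\in\Z$. The direct biderivation argument above is the more transparent of the two, and its entire content is the isotropy of $\MM^*$.
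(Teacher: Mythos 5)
Your proof is correct and follows essentially the same route as the paper: expand the bracket of two products by the graded Leibniz rule and check that each of the four resulting summands lies in $I_0$. The only (immaterial) difference is the middle term: you specialize the positive factors to generators $\xi,\eta\in\MM^*$ and kill $ab\{\xi,\eta\}$ by the isotropy of $\MM^*$, whereas the paper keeps general degree-one factors $u,v$ and observes that $ab\{u,v\}$ still lies in $I_0$ because $\{u,v\}$ has degree two.
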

\begin{proof}
  Let $a, u, b, v \in X_0$ with $\deg(u)=\deg(v) = 1$. Then $ \{au, bv\}
  = \pm (a b) \{u,v\} \pm (a \{u,b\}) v \pm (\{a, v\} b) u \pm (\{a,b\} u) v
  \in I_0$.
\end{proof}

\subsection{Completion}
\label{sec:compl}

For each $j$, we use the filtration on $X_0^j$ to complete this space
to the space
\begin{align*}
  X^j = \lim_{\leftarrow p} \frac{X_0^j}{\FF^p X_0 \cap X_0^j}.
\end{align*}
The sum and scalar multiplication on $X_0^j$ extend to this space,
turning $X = \bigoplus_j X^j$ into a graded vector space. The product
of two elements $(x_p + \FF^p X_0^j)_p \in X^j$ and $(y_p + \FF^p
X_0^k)_p \in X^k$ is defined to be $(x_p y_p + \FF^p X_0^{j+k})_p \in
X^{j+k}$. This definition does not depend on the choice of
representatives since the product is compatible with the
filtration. Moreover, it defines an element of $X^{j+k}$ since for $p
\leqslant q$ we have $ x_p y_p \equiv x_q y_q \pmod {\FF^p
  X_0^{j+k}}$, since we may shift the representatives of $x$ and
$y$. The multiplication is compatible with the addition turning $X$
into a graded commutative algebra.

Endow $X_0^j / \FF^{p} X_0^j$ with the discrete topology and $\prod_p
X_0^j / \FF^{p} X_0^j $ with the product topology. Equip
$\lim_{\leftarrow} X_0^j / \FF^{p} X_0^j \subset \prod_p X_0^j /
\FF^{p} X_0^j$ with the subspace topology. Finally, equip $X =
\bigoplus_j X^j$ with the product topology. Hence a sequence $\{x_l\}_l
\subset X^j$, with $x_l = ( x_{l,p} + \FF^p X_0^j)_p$, converges to an
element $x = (x_p + \FF^p X_0^j)_p \in X^j$ if and only if for all $p
\in \N_0$ there exists a $l_0$ such that for all $l \geqslant l_0$ we
have $x_{p,l} \equiv x_p \pmod {\FF^p X_0^j}$. A sequence $\{x_l\}_l
\subset X$ converges to an element $x \in X$ if and only if all
homogeneous components converge. Since $X$ is first-countable,
continuity is characterized by the convergence of sequences. We
immediately obtain:
\begin{lem}
  The sum $X \times X \to X$ is continuous.
\end{lem}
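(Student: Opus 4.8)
The plan is to exploit the explicit description of convergence recorded immediately before the statement. Each $X^j$ carries the subspace topology inherited from the product $\prod_p X_0^j/(\FF^p X_0 \cap X_0^j)$ of discrete spaces, so it is first countable; hence $X = \bigoplus_j X^j$ and the product $X \times X$ are first countable as well. Consequently it suffices to verify sequential continuity: given a sequence $(x_l, y_l)$ in $X \times X$ converging to $(x,y)$, I would show that $x_l + y_l \to x + y$ in $X$.

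Convergence in the product topology on $X \times X$ means exactly that $x_l \to x$ and $y_l \to y$ separately, and convergence in $X$ is checked componentwise in the grading by the criterion stated above. Since addition preserves the grading, I would argue one homogeneous degree $j$ at a time, reducing the problem to showing that the $j$-th components satisfy $x_l^j + y_l^j \to x^j + y^j$ in $X^j$.

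For a fixed degree $j$ I would then fix a filtration level $p$ and unwind the convergence criterion. Writing $x^j = (x_p + \FF^p X_0^j)_p$ and $x_l^j = (x_{l,p} + \FF^p X_0^j)_p$, and similarly for $y$, the two convergence hypotheses produce indices $l_1, l_2$ so that $x_{l,p} \equiv x_p$ and $y_{l,p} \equiv y_p$ modulo $\FF^p X_0^j$ for all $l \gs \max\{l_1, l_2\}$. As $\FF^p X_0^j$ is an additive subgroup, adding these two congruences yields $x_{l,p} + y_{l,p} \equiv x_p + y_p \pmod{\FF^p X_0^j}$, which is precisely the assertion that the $p$-th coordinate of $x_l^j + y_l^j$ eventually agrees with that of $x^j + y^j$. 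Since $p$ was arbitrary, this is the required convergence, and letting $j$ range over all degrees completes the argument.

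There is no genuine obstacle here: the only thing to be careful about is that the three nested constructions—the inverse limit over $p$, the direct sum over degrees $j$, and the product $X \times X$—each contribute a layer of quantifiers, so one must keep the degree and the filtration level independent and handle them in the correct order. Once one commits to checking convergence coordinatewise, the conclusion is immediate from the fact that each quotient $X_0^j/\FF^p X_0^j$ is an abelian group on which addition of eventually constant sequences is trivially continuous.
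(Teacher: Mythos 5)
Your argument is correct and is exactly the routine coordinatewise verification the paper has in mind: the lemma is stated there with no written proof (``We immediately obtain''), and your unwinding of the convergence criterion degree by degree and filtration level by filtration level, using that $\FF^p X_0^j$ is an additive subgroup, is the intended justification.
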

For the product, only a weaker statement holds in general:
\begin{lem}
  The product $X \to X$ is continuous in each entry. For each pair
  $(j,k) \in \Z^2$, the product $X^j \times X^k \to X^{j+k}$ is
  continuous.
\end{lem}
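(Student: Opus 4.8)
The plan is to establish the two assertions in the logically convenient order: first the joint continuity on a fixed pair of homogeneous components, and then to deduce separate continuity on all of $X$ from it. The single structural fact driving everything is that each $\FF^p X_0$ is an ideal of $X_0$, so that $X_0 \cdot \FF^p X_0 \subset \FF^p X_0$; this is exactly the compatibility of the product with the filtration that was already used to define the multiplication on the completion. Since $X$ is first-countable, it suffices throughout to argue with sequences.

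First I would prove that for a fixed pair $(j,k) \in \Z^2$ the map $X^j \times X^k \to X^{j+k}$ is jointly continuous. Take $x_l \to x$ in $X^j$ and $y_l \to y$ in $X^k$, written as $x_l = (x_{l,p} + \FF^p X_0^j)_p$ and $y_l = (y_{l,p} + \FF^p X_0^k)_p$, with corresponding representatives for $x$ and $y$. Fix $p \in \N_0$; by the definition of convergence choose $l_0$ with $x_{l,p} \equiv x_p$ and $y_{l,p} \equiv y_p$ modulo $\FF^p X_0$ for all $l \geq l_0$. A $p$-th representative of $x_l y_l$ is $x_{l,p} y_{l,p}$, and
\[ x_{l,p} y_{l,p} - x_p y_p = (x_{l,p} - x_p)\, y_{l,p} + x_p\,(y_{l,p} - y_p) \in \FF^p X_0^{j+k}, \]
where the membership holds because $\FF^p X_0$ is an ideal. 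Thus $x_l y_l \equiv x y \pmod{\FF^p X_0^{j+k}}$ for all $l \geq l_0$, which is precisely convergence $x_l y_l \to x y$ in $X^{j+k}$.

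Next I would deduce continuity of the full product in each entry. Fix $y \in X$; since $X = \bigoplus_j X^j$ is a direct sum, $y$ has finite support $S = \{ k : y^{(k)} \neq 0 \}$. Let $x_l \to x$ in $X$, which by the product topology means $x_l^{(j)} \to x^{(j)}$ in $X^j$ for every $j$. For each fixed $n$ the degree-$n$ component is the \emph{finite} sum $(x_l y)^{(n)} = \sum_{k \in S} x_l^{(n-k)} y^{(k)}$, and by the fixed-degree statement just proved each summand converges to $x^{(n-k)} y^{(k)}$ in $X^n$; since a finite sum of convergent sequences converges, $(x_l y)^{(n)} \to (xy)^{(n)}$, and as $n$ was arbitrary, $x_l y \to x y$. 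Continuity in the other entry is identical, or follows by commutativity.

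The computations are entirely routine; the only point that requires care, and that also explains why no \emph{joint} continuity statement is claimed on all of $X$, is the bookkeeping of homogeneous degrees. For a fixed target degree $n$ there are infinitely many pairs $(j,k)$ with $j+k = n$, so if both factors were allowed to vary with supports growing in $l$, the sum defining $(x_l y_l)^{(n)}$ could involve an unbounded number of terms and the finite-sum argument would break down. Fixing one factor, as for separate continuity, bounds the support to the finite set $S$, while fixing the two degrees, as in the second assertion, reduces matters to a single product; in both of these restricted settings the argument above applies verbatim, and keeping this distinction straight is the main thing to watch.
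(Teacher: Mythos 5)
Your proof is correct and rests on exactly the same ingredients as the paper's: the fact that $\FF^p X_0$ is an ideal (so products with one factor in $\FF^p X_0$ stay in $\FF^p X_0$) together with the finite support of the fixed factor in the graded direct sum. The only difference is organizational — you establish the fixed-degree joint continuity first via the telescoping $(x_{l,p}-x_p)\,y_{l,p}+x_p\,(y_{l,p}-y_p)$ and then deduce separate continuity on all of $X$, whereas the paper proves separate continuity on $X$ directly and dismisses the fixed-degree statement as "follows similarly"; if anything your version makes that second statement more explicit.
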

\begin{proof}
  Consider a sequence $\{x_i\}_i$ in $X$ converging to $x \in X$ and
  fix $y \in X$. Denote the homogeneous components of $x_{i}$ by
  $x_{i}^j = (x_{i,p}^j + \FF^p X_0^j)_p$ and similarly for $x$ and
  $y$. Fix $l \in \Z$ and $p \in \N_0$. The $l$-th homogeneous
  component of $x_i y$ has a $p$-th component with representative
  $\sum_{j \in C} x_{i,p}^{l-j} y_{p}^{j}$ where the $C \subset \Z$ is
  the finite set for which $y^j \neq 0$. It does not depend on
  $i$. (Such a finite set which is independent of $i$ only exists in
  general when one entry of the product remains fixed.) We have
  \begin{align*}
    \sum_{j \in C} x_{i,p}^{l-j} y_{p}^{j} & = \sum_{j \in C} \Big(
    (x_{i,p}^{l-j} - x_p^{l-j}) y_p^{j} + x_p^{l-j} y_p^{j} \Big)
  \end{align*}
  For each $j \in C$ pick a number $i_{0,j}$ such that for $i_j \gs
  i_{0,j}$ we have $x_{i_j,p}^{l-j} \equiv x_p^{l-j} \pmod { \FF^p
    X_0^{l-j} }$ and let $i_0$ be their maximum. Now, for $i \gs i_0$,
  we have $\sum_{j \in C} x_{i,p}^{l-j} y_{p}^{j} \equiv \sum_{j \in
    C}  x_p^{l-j} y_p^{j} \pmod{ \FF^p X_0^l}$. The second statement
  follows similarly.
\end{proof}
Next, we approximate elements in $X$ by elements in $X_0$.
\begin{lem}
  The map $\iota : X_0^n \longrightarrow X^n$ sending $x \in X_0^n$ to
  $(x + \FF^p X_0^n)_p$ is injective.
\end{lem}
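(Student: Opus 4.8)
The plan is to compute the kernel of $\iota$ directly. By the definition of the inverse limit, $\iota(x) = (x + \FF^p X_0^n)_p$ is zero exactly when $x \in \FF^p X_0^n = \FF^p X_0 \cap X_0^n$ for every $p$. Thus $\ker \iota = \bigcap_p \FF^p X_0 \cap X_0^n$, and I would reduce the whole statement to showing that the filtration is separated, i.e. $\bigcap_p \FF^p X_0 = 0$; intersecting further with $X_0^n$ can only shrink this.

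The key step is to recognise the ideal-theoretic filtration $\FF^\bullet X_0$ as arising from a genuine non-negative grading. For $a \in \N_0$ let $X_0\la a\ra \subset X_0$ denote the span of all monomials $c\, m_-\, m_+$, with $c \in P$, $m_-$ a product of basis vectors of $\MM$ and $m_+$ a product of basis vectors of $\MM^*$, whose total $\MM^*$-degree $\deg m_+$ equals $a$. Since these dual degrees add under multiplication, $X_0 = \bigoplus_{a \gs 0} X_0\la a\ra$ is a grading of the algebra with $X_0\la a\ra \cdot X_0\la a'\ra \subset X_0\la a+a'\ra$. I claim that
\[
  \FF^p X_0 = \bigoplus_{a \gs p} X_0\la a \ra .
\]

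To see this I would verify both inclusions. For $\subseteq$, a homogeneous generator of $\FF^p X_0$ has total degree $\gs p$; writing it in monomials $c\, m_-\, m_+$ with $\deg m_+ = a$ and $\deg m_- = -b$, $b \gs 0$, its degree $a - b \gs p$ forces $a \gs p + b \gs p$, so the generator lies in $\bigoplus_{a\gs p} X_0\la a\ra$; as the latter is an ideal (multiplication raises the $\MM^*$-degree by a non-negative amount), the inclusion follows. For $\supseteq$, any monomial $\mu = c\, m_-\, m_+$ with $\deg m_+ = a \gs p$ factors as $(c\, m_-) \cdot m_+$, where $m_+$ is a homogeneous element of total degree $a \gs p$, hence a generator of $\FF^p X_0$; thus $\mu \in \FF^p X_0$.

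Granting this identification, the conclusion is immediate: any $x \in X_0$ is a finite sum $x = \sum_a x\la a\ra$ with $x\la a\ra \in X_0\la a\ra$, so if $x \neq 0$ there is a smallest $a_0$ with $x\la a_0\ra \neq 0$, whence $x \notin \FF^{a_0+1} X_0$ by directness of the sum. Therefore $\bigcap_p \FF^p X_0 = 0$, and a fortiori $\bigcap_p \FF^p X_0^n = 0$, giving $\ker \iota = 0$. The only delicate point — and the one I would treat carefully — is the identification of $\FF^\bullet X_0$ with the $\MM^*$-grading filtration, in particular the inclusion $\subseteq$: it rests on the fact that in the symmetric algebra the dual factors $m_+$ never cancel against the $\MM$-factors $m_-$, so that the total $\MM^*$-degree is a well-defined non-negative grading preserved under multiplication. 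Everything else is formal.
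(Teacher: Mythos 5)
Your proposal is correct and follows the same route as the paper, whose proof is a one-line reduction: by linearity, injectivity follows from $\bigcap_p \FF^p X_0^n = 0$, a fact the paper asserts without further justification. Your identification $\FF^p X_0 = \bigoplus_{a \gs p} X_0\langle a\rangle$ via the $\MM^*$-degree of monomials (the ``ghost degree'' that the paper only introduces later, in lemma~\ref{lem:form-degree-convergence}) is a correct and careful proof of exactly that asserted separatedness, resting as you note on the absence of cancellation between $\MM$- and $\MM^*$-factors in the symmetric algebra.
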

\begin{proof}
  Since $\iota$ is linear, the claim follows from $\bigcap_p \FF^p X_0^n =
  0$. 
\end{proof}
\begin{lem}
  \label{lem:iotalimit}
  For $x = (x_p + \FF^p X_0^n)_p \in X_0^n$ we have $\lim_{m \to \infty}
  \iota(x_m) = x$.
\end{lem}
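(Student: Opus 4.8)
The plan is to read this off directly from the description of the topology on $X^n$ given just above in Section~\ref{sec:compl}. Recall that a sequence $\{y_l\}_l \subset X^n$, with $y_l = (y_{l,p} + \FF^p X_0^n)_p$, converges to $y = (y_p + \FF^p X_0^n)_p$ precisely when, for every $p \in \N$, there is an index $l_0$ such that $y_{l,p} \equiv y_p \pmod{\FF^p X_0^n}$ for all $l \gs l_0$. So the entire proof consists of verifying this convergence criterion for the particular sequence $y_l := \iota(x_l)$ and limit $x = (x_p + \FF^p X_0^n)_p$, where the $x_p \in X_0^n$ form a compatible system representing $x$.

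First I would unwind the left-hand side: by definition of $\iota$, the element $\iota(x_m)$ has $p$-th component $x_m + \FF^p X_0^n$, independently of $p$. The $p$-th component of the target $x$ is $x_p + \FF^p X_0^n$. Thus the convergence criterion reduces to the claim that for each fixed $p$ one has $x_m \equiv x_p \pmod{\FF^p X_0^n}$ for all sufficiently large $m$. But this is exactly the compatibility condition defining the inverse limit: since $(x_p + \FF^p X_0^n)_p$ lies in $\lim_{\leftarrow p} X_0^n / \FF^p X_0^n$, the representatives satisfy $x_m \equiv x_p \pmod{\FF^p X_0^n}$ whenever $m \gs p$. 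Hence taking $l_0 = p$ discharges the criterion, and $\iota(x_m) \to x$ as $m \to \infty$.

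There is no genuine obstacle here: the statement is a formal consequence of the definition of the projective-limit topology and the coherence of the representing system $(x_p)_p$. The only point that requires a modicum of care is the direction of the congruence in the inverse limit — namely that passing to a coarser quotient ($\FF^p$ with $p \ls m$) is what makes the higher representatives collapse onto the lower ones — together with the fact, already noted in Section~\ref{sec:compl}, that $X^n$ is first-countable so that convergence of sequences suffices to characterize the topology.
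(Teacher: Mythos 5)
Your argument is correct and is essentially the paper's own proof: both reduce to the observation that the $p$-th component of $\iota(x_m)-x$ is $x_m - x_p$, which lies in $\FF^p X_0^n$ for $m \gs p$ by the compatibility of the representing system. Your version merely spells out the convergence criterion more explicitly.
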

\begin{proof}
  Fix $p$. For $m \geqslant p$ we have that the $p$-th component of
  $\iota(x_m) - x$ is $x_m - x_p \in \FF^p X_0^n$.
\end{proof}
\begin{cor}
  $X_0$ can be considered a dense subset of $X$.
\end{cor}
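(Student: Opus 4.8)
The plan is to read off density directly from the approximation statement in lemma \ref{lem:iotalimit}. First I would recall that the map $\iota : X_0 \to X$ is injective, so that identifying $X_0$ with its image $\iota(X_0)$ lets us regard $X_0$ as a subset of $X$; concretely $\iota$ sends $x \in X_0^n$ to the constant tower $(x + \FF^p X_0^n)_p$. Density then amounts to showing that every element of $X$ lies in the closure of $\iota(X_0)$.

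Since $X = \bigoplus_j X^j$ carries the product topology and a sequence converges precisely when each homogeneous component converges, I would first treat a single homogeneous component. Fix $x = (x_p + \FF^p X_0^n)_p \in X^n$. Lemma \ref{lem:iotalimit} states exactly that $\iota(x_m) \to x$ as $m \to \infty$, with each $x_m \in X_0^n$; hence $x$ lies in the closure of $\iota(X_0^n)$.

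To pass to a general element $x \in X$, I would use that $X$ is a direct sum, so only finitely many homogeneous components $x^{j_1}, \dots, x^{j_r}$ of $x$ are nonzero. Approximating each component by a sequence in $X_0$ as above and summing these sequences produces a sequence in $X_0$ converging to $x$, by continuity of the addition on $X$. This yields the density of $\iota(X_0)$ in $X$. I do not expect any substantive obstacle here: the corollary is an immediate consequence of lemma \ref{lem:iotalimit} together with the componentwise characterization of convergence, the only point that needs a word being the reduction from a single homogeneous component to an arbitrary element of the direct sum.
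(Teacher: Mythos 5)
Your argument is correct and is exactly the reasoning the paper intends: the corollary is stated without proof precisely because it follows from the injectivity of $\iota$ together with lemma \ref{lem:iotalimit}, plus the observation that a general element of the direct sum $X=\bigoplus_j X^j$ has only finitely many nonzero homogeneous components, each of which can be approximated componentwise. Nothing further is needed.
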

Now, we turn to the extension of the bracket to the completion.  Let
$x = (x_p + \FF^p X_0^j)_p \in X^j$ and $y = (y_p + \FF^p X_0^k)_p \in
X^k$. We define $\{x,y\} \in X^{j+k}$ to be the element
\begin{align*}
  ( \{x_{s_{j,k}(p)} , y_{s_{j,k}(p)} \} + \FF^p X_0^{j+k})_p,
\end{align*}
where $s_{j,k}(p) := p + \max\{\vert j\vert,\vert k\vert\}$. This
definition does not depend on the representatives of $x$ and $y$ by
corollary \ref{cor:tcor}, since $t_{j,k}( s_{j,k}(p) ) = s_{j,k}(
t_{j,k}(p)) = p$. Moreover, it defines an element of $X^{j+k}$: For $p
\leqslant q$ we have $\{x_{s_{j,k}(p)},y_{s_{j,k}(p)}\} \equiv
\{x_{s_{j,k}(q)},y_{s_{j,k}(q)}\} \pmod { \FF^{s_{j,k}(p)}}$ since we
may shift the representatives of $x$ and $y$. We extend this bracket
as a bilinear map to $X \times X$.
\begin{lem}
  The extension of the bracket on $X_0$ is a skew-symmetric, bilinear,
  degree zero map on $X$ that satisfies the graded Jacobi identity
  (i.e. the bracket is an odd derivation for itself).
\end{lem}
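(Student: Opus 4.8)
The plan is to verify the four asserted properties in turn, with the graded Jacobi identity being the only nontrivial one. Bilinearity is immediate, since the bracket was extended to $X \times X$ by bilinear extension of the formula on representatives and both entries of the bracket on $X_0$ are linear; concretely, for coherent representatives one has $\{(x+x')_{s(p)}, y_{s(p)}\} = \{x_{s(p)}, y_{s(p)}\} + \{x'_{s(p)}, y_{s(p)}\}$ at every level $p$, and similarly for scalars. The degree-zero claim is built into the construction, as the formula sends a pair in $X^j \times X^k$ to an element of $X^{j+k}$. For skew-symmetry I would exploit that the shift $s_{j,k}(p) = p + \max\{|j|,|k|\}$ is symmetric in $j$ and $k$, so that the $p$-th representatives of $\{x,y\}$ and of $\{y,x\}$ are computed from the identical truncations $x_{s_{j,k}(p)}$ and $y_{s_{j,k}(p)}$; the graded skew-symmetry $\{x_{s(p)}, y_{s(p)}\} = -(-1)^{jk}\{y_{s(p)}, x_{s(p)}\}$ on $X_0$ then descends to the quotient for each $p$.

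The heart of the argument is the graded Jacobi identity, which I would reduce to the identity already known on $X_0$. Fix homogeneous $a \in X^l$, $b \in X^m$, $c \in X^n$ with coherent representatives $a = (a_p + \FF^p X_0^l)_p$, and likewise for $b$ and $c$, and set $\zeta(a,b,c) = (-1)^{\deg a \deg c}\{a,\{b,c\}\} + \cycl$. The key observation is that, by corollary \ref{cor:tcor}, the $p$-th component of any bracket is insensitive to the choice of representative above the prescribed level: replacing $x_{s_{j,k}(p)}$ by any $x_Q$ with $Q \gs s_{j,k}(p)$ alters $\{x_{s_{j,k}(p)}, y_{s_{j,k}(p)}\}$ only by an element of $\FF^p X_0$. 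Consequently, for a nested bracket such as $\{a, \{b,c\}\}$ a $p$-th representative may be computed as $\{a_Q, \{b_Q, c_Q\}\}$ for any $Q$ exceeding the finitely many composed shifts $s_{l,m+n}(p)$ and $s_{m,n}(s_{l,m+n}(p))$ appearing in its definition.

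I would then fix $p$ and choose a single level $Q$ large enough to serve simultaneously for all three cyclic terms and for every nesting they involve. With this choice each of the three summands of $\zeta(a,b,c)$ acquires a $p$-th representative of the form $\{a_Q, \{b_Q, c_Q\}\}$, computed from the one triple of truncations $a_Q, b_Q, c_Q \in X_0$. Hence a $p$-th representative of $\zeta(a,b,c)$ is congruent modulo $\FF^p X_0^{l+m+n}$ to $\zeta(a_Q, b_Q, c_Q)$, which vanishes identically because the graded Jacobi identity already holds on $X_0$. Since $p$ was arbitrary, $\zeta(a,b,c) = 0$ in $X$, and the reformulation as the statement that the bracket is an odd derivation for itself is merely a rewriting of this. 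The main obstacle, and the only place demanding care, is the bookkeeping of the composed filtration shifts — confirming that only finitely many occur and that a common $Q$ can indeed be selected — which is precisely what corollary \ref{cor:tcor} is designed to control.
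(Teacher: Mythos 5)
Your proposal is correct and follows essentially the same route as the paper: the paper likewise dismisses skew-symmetry, bilinearity, and degree zero as immediate from the construction, and proves the graded Jacobi identity by bounding all the nested filtration shifts by a single cyclically invariant level (it takes $r_{j,k,l}(p) = p + 2(|j|+|k|+|l|)$, playing the role of your common $Q$) so that all three cyclic terms are computed from one triple of truncations in $X_0$, where the identity is already known. The only cosmetic difference is that you justify the freedom to raise the truncation level via corollary~\ref{cor:tcor}, whereas the paper relies on the coherence of representatives built into the definition of the extended bracket; both are valid.
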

\begin{proof}
  It is trivial that the extended bracket is a skew-symmetric,
  bilinear degree zero map. These properties follow directly from
  the definitions.

  We prove the graded Jacobi identity. Consider elements
  \begin{align*}
    x& = (x_p + \FF^p X_0^j)_p \in X^j & y &= (y_p + \FF^p X_0^k)_p
    \in X^k & z & =
    (z_p + \FF^p X_0^l)_p \in X^l.
  \end{align*}
  The $p$-th element of $\{y,z\}$ has representative
  $\{y_{s_{k,l}(p)},z_{s_{k,l}(p)}\}$. Hence the $p$-th element of
  $\{x,\{y,z\}\}$ has representative $\{x_{s_{j, k+l}(p)},
  \{y_{s_{k,l}(s_{j, k+l}(p))}, z_{s_{k,l}(s_{j,k+l}(p))} \}$. We now
  want to bound the indices from above by a term which is invariant
  under cyclic permutations of $(j,k,l)$. The function $r_{j,k,l}(p)
  := p + 2 (\vert j \vert + \vert k \vert + \vert l \vert)$ does the
  job. So, $(-1)^{j l} \{x, \{y,z\}\} + (-1)^{k j} \{y,\{z,x\}\} + (-1)^{l k}
  \{z,\{x,y\}\}$ has representative
  \begin{align*}
    (-1)^{j l} \{x_{r_{j,k,l}(p)}, \{y_{r_{j,k,l}(p)},z_{r_{j,k,l}(p)}\}\}
    +  \text{cyclic permutations}
  \end{align*}
  which vanishes by the graded Jacobi identity on $X_0$.
\end{proof}
\begin{lem}
  The bracket on $X$ is a derivation for the product.
\end{lem}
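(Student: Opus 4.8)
The plan is to reduce to showing that for each homogeneous $a \in X$ the operator $\{a,-\}$ is a graded derivation, i.e.
\[
  \{a, bc\} = \{a,b\}\,c + (-1)^{\deg a \deg b}\, b\,\{a,c\}
\]
for all homogeneous $b,c \in X$; the general case then follows by bilinearity of both the bracket and the product. This identity already holds whenever $a,b,c \in X_0$, since $X_0$ is a graded Poisson algebra. The task is therefore to propagate it from the dense subalgebra $X_0$ to the completion $X$ by a density-and-continuity argument, exactly as was done for the graded Jacobi identity above.

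First I would record that the bracket on $X$ is continuous in each entry. Fix homogeneous $y \in X^k$ and a sequence $x_i \to x$ in $X^n$. By the explicit formula for the extended bracket, the $p$-th component of $\{x_i,y\}$ is represented by $\{x_{i,\,s_{n,k}(p)},\, y_{s_{n,k}(p)}\}$. Once $i$ is large enough that $x_{i,\,s_{n,k}(p)} \equiv x_{s_{n,k}(p)} \pmod{\FF^{s_{n,k}(p)} X_0^n}$, corollary \ref{cor:tcor} forces $\{x_{i,\,s_{n,k}(p)} - x_{s_{n,k}(p)},\, y_{s_{n,k}(p)}\} \in \FF^{p} X_0^{n+k}$, so the $p$-th components of $\{x_i,y\}$ and $\{x,y\}$ agree for $i$ large; hence $\{x_i,y\} \to \{x,y\}$. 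Continuity in the second slot follows from skew-symmetry, and the product is already known to be continuous in each entry.

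With continuity in hand I would extend the derivation identity one variable at a time, because only separate (not joint) continuity of the product and bracket is available. Fixing homogeneous $y,z \in X_0$, the map
\[
  x \longmapsto \{x, yz\} - \{x,y\}\,z - (-1)^{\deg x \deg y}\, y\,\{x,z\}
\]
is continuous on each homogeneous component $X^n$ (it is a composite of the bracket and the product, each continuous in the varying slot), and it vanishes on the dense subset $X_0$; hence it vanishes for all $x \in X$. Next fix homogeneous $x \in X$ and $z \in X_0$ and let $y$ range over $X$: the same expression is continuous in $y$ and vanishes on $X_0$ by the previous step, so it vanishes identically. Finally fix homogeneous $x,y \in X$ and let $z$ range over $X$, which establishes the identity on all homogeneous triples and thus, by bilinearity, on all of $X$.

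The only delicate point is precisely that the product and bracket are continuous in each entry separately but not jointly, so one cannot approximate $x,y,z$ by elements of $X_0$ simultaneously; the three-step extension above is designed to circumvent this. An alternative, entirely in the spirit of the Jacobi-identity proof, is to verify the identity directly on representatives: evaluating both sides at level $p$, shifting all representatives up to the common level $p + 2(|\deg x| + |\deg y| + |\deg z|)$, applying the derivation property of $X_0$ there, and checking that the three resulting terms agree modulo $\FF^p X_0^{\deg x + \deg y + \deg z}$ by corollary \ref{cor:tcor}. Either route works; I would present the continuity argument as it is cleaner, while the representative computation is more self-contained.
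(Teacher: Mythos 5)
Your proposal is correct, but your primary route differs from the paper's. The paper proves the identity by the direct representative computation you describe only as an ``alternative'': it writes down the $p$-th representative of $\{xy,z\} - (x\{y,z\} + (-1)^{jk} y\{x,z\})$, shifts all representatives to a common upper index $q = p + |j|+|k|+|l|$ (using that the bracket and product are well defined on shifted representatives and invoking corollary~\ref{cor:tcor}), and then observes that the resulting expression vanishes identically by the derivation property of the bracket on $X_0$ --- the same one-display pattern used for the graded Jacobi identity. Your main argument instead goes through density of $X_0$ and separate continuity, extending the identity one variable at a time precisely because joint continuity of the product and bracket is unavailable; this is a legitimate and correctly executed workaround, and your sketch of continuity of the bracket in the first slot is essentially the paper's own proof of lemma~\ref{lem:bracketcontinuous}. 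Two remarks on your route: first, lemma~\ref{lem:bracketcontinuous} appears in the paper only \emph{after} the derivation lemma, so you must (as you do) establish continuity independently, which causes no circularity since its proof uses only corollary~\ref{cor:tcor}; second, the density argument tacitly uses that the embedding $\iota : X_0 \to X$ intertwines the brackets and products and that $X^n$ is Hausdorff (so that a continuous map vanishing on a dense set vanishes), both of which hold here but deserve a word. The representative computation is more self-contained at this point in the development; your continuity argument is arguably cleaner and reuses machinery that the paper needs anyway.
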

\begin{proof}
  Let
  \begin{align*}
    x& = (x_p + \FF^p X_0^j)_p \in X^j & y &= (y_p + \FF^p X_0^k)_p
    \in X^k & z & =
    (z_p + \FF^p X_0^l)_p \in X^l.
  \end{align*}
  The $p$-th element of $\{xy,z\} - (x \{y,z\} + (-1)^{jk} y \{x,z\})$ has
  representative
  \begin{align*}
    &\{ x_{s_{j+k,l}(p)} y_{s_{j+k,l}(p)}, z_{s_{j+k,l}(p)}\}
    - \bigg(
    x_p \{ y_{s_{k,l}(p)}, z_{s_{k,l}(p)} \}
    + (-1)^{jk}
    y_p \{ x_{s_{j,l}(p)}, z_{s_{j,l}(p)} \}
    \bigg)\\
    \equiv\;&
    \{x_q y_q , z_q\} - (x_q \{y_q,z_q\} + (-1)^{jk} y_q \{x_q,z_q\}) 
    \pmod { \FF^p X_0^{j+k+l}}
  \end{align*}
  where $q := p + \vert m\vert + \vert n\vert + \vert k\vert$ is a
  common upper bound of all indices appearing in the formula. The last
  line vanishes by the derivation property of the bracket on $X_0$.
\end{proof}
\begin{lem}
  \label{lem:bracketcontinuous}
  For each pair $(j,k) \in \Z^2$, the map $\{ - ,-\} : X^j \times X^k
  \to X$ is continuous. The map $\{-,-\} : X \times X \to X$ is
  continuous in each entry.
\end{lem}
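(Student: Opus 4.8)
The plan is to use the sequential characterisation of continuity noted above (both spaces are first-countable), so that it suffices to show the bracket sends convergent sequences to convergent sequences. I would prove the two assertions in order: first the joint continuity of $\{-,-\} : X^j \times X^k \to X^{j+k}$ in fixed degrees, and then deduce the separate continuity of $\{-,-\} : X \times X \to X$ from it.

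For the fixed-degree statement, suppose $x_i \to x$ in $X^j$ and $y_i \to y$ in $X^k$, with representatives $x_i = (x_{i,p} + \FF^p X_0^j)_p$, $x = (x_p + \FF^p X_0^j)_p$, and similarly for $y$. Since the bracket has degree zero, $\{x_i, y_i\}$ and $\{x, y\}$ lie in $X^{j+k}$, so I only need to check convergence in this single component; by definition of the topology it suffices to show that for every $p$ the $p$-th components agree modulo $\FF^p X_0^{j+k}$ for $i$ large. Writing $s := s_{j,k}(p) = p + \max\{|j|, |k|\}$, the $p$-th component of $\{x_i, y_i\}$ is represented by $\{x_{i,s}, y_{i,s}\}$ and that of $\{x, y\}$ by $\{x_s, y_s\}$, and bilinearity gives
\begin{align*}
  \{x_{i,s}, y_{i,s}\} - \{x_s, y_s\} = \{x_{i,s} - x_s, y_{i,s}\} + \{x_s, y_{i,s} - y_s\}.
\end{align*}
Applying the convergence hypothesis at index $s$, for $i$ large we have $x_{i,s} - x_s \in \FF^s X_0^j$ and $y_{i,s} - y_s \in \FF^s X_0^k$. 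Corollary \ref{cor:tcor} then places both summands in $\FF^{t_{j,k}(s)} X_0^{j+k}$, and the crucial point is the exact identity $t_{j,k}(s_{j,k}(p)) = p$, so each summand lies in $\FF^p X_0^{j+k}$. Hence the $p$-th components agree modulo $\FF^p X_0^{j+k}$ for $i$ large, which is precisely joint continuity in fixed degrees.

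For separate continuity, fix $y \in X$ and let $x_i \to x$ in $X$. Since $X = \bigoplus_j X^j$ is a direct sum, $y$ has only finitely many nonzero homogeneous components; let $C \subset \Z$ be the finite set of degrees $k$ with $y^k \neq 0$. Convergence in $X$ is componentwise, so $x_i^{l-k} \to x^{l-k}$ in $X^{l-k}$ for every $l$ and $k$. The $l$-th homogeneous component of $\{x_i, y\}$ is the finite sum $\sum_{k \in C} \{x_i^{l-k}, y^k\}$, and this finiteness — independent of $i$, exactly as in the product lemma — is what makes the argument run. Applying the fixed-degree continuity just established to each term (with the second entry the constant sequence $y^k$) and using that a finite sum of convergent sequences converges to the sum of the limits, I conclude that the $l$-th component of $\{x_i, y\}$ converges to that of $\{x, y\}$. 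As this holds for every $l$, we obtain $\{x_i, y\} \to \{x, y\}$ in $X$, and continuity in the other entry follows by skew-symmetry.

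The argument is largely bookkeeping with the filtration, and I expect no serious obstacle. The only subtle point — and the one I would take care to state precisely — is the exact identity $t_{j,k} \circ s_{j,k} = \id$ relating the filtration-raising estimate of Corollary \ref{cor:tcor} to the index shift built into the definition of the bracket on $X$, together with the finite-support observation that, once one entry is fixed, the homogeneous decomposition of the bracket involves only finitely many terms. Note that this is why genuine \emph{joint} continuity cannot be expected on all of $X \times X$: if both arguments vary, the relevant support set need not stay finite.
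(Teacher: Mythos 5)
Your proof is correct and follows essentially the same route as the paper: reduce to sequential continuity, use the shift $s_{j,k}(p)$ together with the identity $t_{j,k}(s_{j,k}(p))=p$ from Corollary \ref{cor:tcor} for the fixed-degree case, and exploit the finiteness of the support of the fixed entry for separate continuity on all of $X$. The only cosmetic differences are that you make the bilinearity decomposition $\{x_{i,s},y_{i,s}\}-\{x_s,y_s\}=\{x_{i,s}-x_s,y_{i,s}\}+\{x_s,y_{i,s}-y_s\}$ explicit (the paper leaves it implicit) and you deduce the second statement from the first plus continuity of finite sums rather than re-running the estimate.
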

\begin{proof}
  Let $x_n = ( x_{n,p} + \FF^p X_0^j )_p \in X^j$ and $y_n = (y_{n,p}
  + \FF^p X_0^k)_p \in X^k$ define two sequences converging to the
  respective elements $x = ( x_{p} + \FF^p X_0^j )_p \in X^j$ and $y =
  (y_{p} + \FF^p X_0^k) \in X^k$. Fix $p$, set $s = s_{j,k}(p)$, and
  pick $n_0$ such that for $n \gs n_0$,
  \begin{align*}
    x_{n,s} &\equiv x_s \pmod {\FF^s X_0^j} & y_{m,s} & \equiv y_s \pmod
    {\FF^s X_0^k}
  \end{align*}
  Let $n \gs n_0$. The $p$-th element of $\{x_n, y_n\} - \{x,y\}$ has the
  representative $\{x_{n, s}, y_{n, s}\} - \{x_s, y_s\} \in
  \FF^{t_{j,k}(s)} \subset \FF^p X_0^{j+k}$ by corollary
  \ref{cor:tcor}.
  
  Now consider a sequence $\{x_i\}_i$ in $X$ converging to $x \in X$
  and fix $y \in X$. Denote the homogeneous components of $x_{i}$ by
  $x_{i}^j = (x_{i,p}^j + \FF^p X_0^j)_p$ and similarly for $x$ and
  $y$. Fix $l \in \Z$ and $p \in \N_0$. Set $C = \{ j \in \Z : y^{j}
  \neq 0 \}$. This is a finite set. The $l$-th homogeneous component
  of $\{x_i,y\}$ has a $p$-th component with representative
  \begin{align*}
    \sum_{j \in C} \{x_{i,s_{l-j,j}(p)}^{l-j}, y^j_{s_{l-j,j}(p)}\}
  \end{align*}
  Set $s = \max\{ s_{l-j,j}(p) : j \in C\}$ and pick $n_0$ such that
  for $n \gs n_0$ and all $j \in C$ we have $x^{l-j}_{n, s} \equiv
  x^{l-j}_s \pmod { \FF^s X_0^{l-j}}$. For such $n$,
  \begin{align*}
    \sum_{j \in C} \{x_{n,s_{l-j,j}(p)}^{l-j}, y^j_{s_{l-j,j}(p)}\}
    \equiv
    \sum_{j \in C} \{x_{n,s}^{l-j}, y^j_{s}\} 
    \equiv
    \sum_{j \in C} \{x_{s}^{l-j}, y^j_{s}\} 
    \pmod { \FF^p X_0^l }
  \end{align*}
\end{proof}

\subsection{The Filtration and the Bracket on the Completion}

The filtration on $X_0$ induces a filtration on the completion with
homogeneous components
\begin{align*}
  \FF^p X^n := \lim_{\leftarrow q} \frac{\FF^p X_0^n}{\FF^{p+q} X_0^n}
  = \{ (x_q + \FF^q X_0^n)_{q \gs p} \in X^n : x_q \in \FF^p X_0^n \}.
\end{align*}
This defines a homogeneous ideal $\FF^p X = \bigoplus_n \FF^p X^n$ in
$X$. We set $I = \FF^1 X$ and 
\begin{align*}
				I^{(n)} = \bigoplus_m\; \lim_{\leftarrow p}
				\frac{X_0^m \cap I_0^{(n)}}{\FF^p X_0^m \cap I_0^{(n)}}.
\end{align*} Those are homogeneous ideals in $X$.
\begin{lem}
  \label{lem:filtrationclosed}
  \label{lem:iphattopclosed}
  For each $j \in \Z$, the sets $\FF^p X^j$ and $\iphat \cap X^j$ are
  closed.
\end{lem}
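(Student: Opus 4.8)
The plan is to read the result directly off the topology on $X^j$, which by construction is the subspace topology inherited from the product $\prod_q X_0^j/\FF^q X_0^j$ of \emph{discrete} spaces. In particular, each coordinate projection $\pi_q : X^j \to X_0^j/\FF^q X_0^j$ is continuous, and since the codomain is discrete, every one of its subsets is closed. Hence for any family of subsets $S_q \subseteq X_0^j/\FF^q X_0^j$ the set $\bigcap_q \pi_q^{-1}(S_q)$ is closed in $X^j$, and the entire problem reduces to exhibiting both $\FF^p X^j$ and $\iphat \cap X^j$ in this form.

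For $\FF^p X^j$ this is essentially immediate. Writing $S_q$ for the image of $\FF^p X_0^j$ in $X_0^j/\FF^q X_0^j$, I would check that $\FF^p X^j = \bigcap_{q \geq p}\pi_q^{-1}(S_q)$. The inclusion $\subseteq$ is clear from the explicit description of $\FF^p X^j$ as those compatible systems whose cosets have representatives in $\FF^p X_0^j$. For $\supseteq$, if $\pi_q(x) \in S_q$ for all $q \geq p$ then $x_q \in \FF^p X_0^j + \FF^q X_0^j = \FF^p X_0^j$, using $\FF^q X_0^j \subseteq \FF^p X_0^j$ for $q \geq p$; so every relevant coset meets $\FF^p X_0^j$ and $x$ lies in $\FF^p X^j$. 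Closedness follows.

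For $\iphat \cap X^j$ I would run the same scheme with $S_q$ the image of $X_0^j \cap \ip$ in $X_0^j/\FF^q X_0^j$, and this is where the one genuine point lies. By definition $\iphat \cap X^j$ is the image in $X^j$ of $\lim_{\leftarrow q} A_q$, where $A_q := (X_0^j \cap \ip)/(\FF^q X_0^j \cap \ip)$; the containment $\iphat \cap X^j \subseteq \bigcap_q \pi_q^{-1}(S_q)$ is obvious, but the reverse containment — that an element all of whose truncations lie in $S_q$ actually comes from a \emph{compatible} system of representatives inside $\ip$ — is the main obstacle. The key fact that removes it is that each natural map $A_q \to X_0^j/\FF^q X_0^j$ is injective, since $(X_0^j \cap \ip) \cap \FF^q X_0^j = \FF^q X_0^j \cap \ip$. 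Injectivity makes the preimage of $\pi_q(x)$ in $A_q$ a single element $a_q$, and then naturality of the system maps together with injectivity forces the $a_q$ to be compatible; hence $(a_q)_q \in \lim_{\leftarrow q} A_q$ maps to $x$, giving the desired equality and therefore closedness.

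I expect that everything except this lifting step for $\iphat$ is purely formal. As an alternative to the inverse-limit argument one could instead argue sequentially, which is legitimate because $X$ is first countable: given $x_l \to x$ with $x_l \in \iphat$, convergence provides, for each $q$, some $x_l$ agreeing with $x$ modulo $\FF^q X_0^j$ and hence a representative of the $q$-th coset of $x$ lying in $X_0^j \cap \ip$, and the same injectivity observation assembles these into a compatible system realizing $x$ as an element of $\iphat$.
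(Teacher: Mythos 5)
Your proof is correct, but your primary argument takes a genuinely different route from the paper's. The paper argues sequentially: since $X$ is first-countable it suffices to take a sequence in the set converging to some $x$, and for each fixed $q$ it uses a far-out term of the sequence to replace the $q$-th representative of $x$ by one lying in $\FF^p X_0^j$ (respectively in $\ip$). Your main argument instead exhibits each set as $\bigcap_q \pi_q^{-1}(S_q)$ for subsets $S_q$ of the discrete quotients $X_0^j/\FF^q X_0^j$, which is closed for purely formal topological reasons and needs no countability hypothesis; the sequential version appears only as your closing remark. The two proofs share the same substance --- membership in either set is detected by the truncations --- but you make explicit the one point the paper leaves implicit, namely that the per-level representatives in $\ip$ assemble into a compatible system in $\lim_{\leftarrow q}(X_0^j\cap\ip)/(\FF^q X_0^j\cap\ip)$. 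Your injectivity observation, resting on the identity $(X_0^j\cap\ip)\cap\FF^q X_0^j=\FF^q X_0^j\cap\ip$, settles this cleanly; in the paper's version the same identity is what silently guarantees that replacing each $x_p$ by $x_{n,p}\in\ip$ yields a coherent element of $\iphat$. So your write-up is, if anything, the more complete of the two, at the cost of being slightly longer than the paper's two-line sequential check.
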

\begin{proof}
  Consider the first statement. Since $X$ is first-countable, it
  suffices to consider sequences $x_n = ( x_{n,q} + \FF^q X_0^j )_q$
  converging to an $x = ( x_{q} + \FF^q X_0^j )_q$ in $X$ with
  $x_{n,q} \in \FF^p X_0^j$ and show that $x \in \FF^p X^j$. So, fix
  $q \gs p$. Let $n$ be an integer with $x_{n,q} \equiv x_q \pmod
  {\FF^q X_0^j}$.  Then $x_q \equiv x_{n,q} \equiv 0 \pmod { \FF^p
    X_0^j}$.

  Now let $x_n = ( x_{n,p} + \FF^p X_0^j)_p$ be a sequence converging
  to $x = ( x_{p} + \FF^p X_0^j )_p$ in $X$ with $x_{n,p} \in
  \ip$. Fix $p$. For $n$ large enough we may replace $x_p$ by $x_{n,p}
  \in \ip$.
\end{proof}
\begin{lem}
  \label{lem:ipclosed}
  \label{lem:ipnil}
  \label{lem:iqx1}
  Fix $p \in \N_0$.
  \begin{enumerate}
  \item $\{I^{(2)},I^{(2)}\} \subset I^{(2)}$.
  \item $\{\iphat, I^{(p)}\} \subset I^{(p+1)} \subset \FF^{p+1} X$
  \item $\{I^{(p)}, X^1\} \subset I^{(p)}$.
  \end{enumerate}
\end{lem}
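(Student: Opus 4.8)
The plan is to reduce everything to the dense subalgebra $X_0$ and one combinatorial dictionary. Recall $I_0 = \FF^1 X_0$ is generated by the homogeneous elements of degree $\gs 1$. A homogeneous monomial of $X_0$ has positive degree if and only if it contains at least one tensor factor from $\MM^*$, and any monomial containing an $\MM^*$-factor is divisible by that factor, which has degree $\gs 1$ and so lies in $I_0$. Hence $I_0$ is precisely the span of the monomials carrying at least one $\MM^*$-factor, and $I_0^{(n)}$ is the span of the monomials carrying at least $n$ such factors; I will call these the \emph{positive factors}. On generators the bracket acts only through $\{P,P\} \subset P$ and the pairing $\{\MM, \MM^*\} \subset \K \subset P$, every other generator bracket vanishing. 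In particular a nonzero bracket of two generators always lands in $P$, so the only mechanism that can destroy a positive factor is the pairing $\{e_i, e^*_j\}$, and that pairing simultaneously consumes one $\MM$-factor. With this dictionary the three inclusions become a bookkeeping of positive factors under the Leibniz rule.

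First I would prove the statements on $X_0$, where by bilinearity it suffices to treat homogeneous monomials. Expanding $\{u,v\}$ by the biderivation property gives a sum of terms $\pm (u/f)\,\{f,g\}\,(v/g)$ over pairs of generators $f$ of $u$ and $g$ of $v$. If $f,g \in P$ the bracket lies in $P$ and no positive factor is lost; if $\{f,g\}$ is a pairing exactly one positive factor is removed from one of $u,v$. Thus $\{I_0^{(a)}, I_0^{(b)}\} \subset I_0^{(a+b-1)}$, which already yields $\{\ip,\ip\} \subset I_0^{(3)} \subset \ip$ for part (1) and $\{\ip, I_0^{(p)}\} \subset I_0^{(p+1)}$ for part (2); the remaining inclusion $I_0^{(p+1)} \subset \FF^{p+1}X_0$ in (2) holds because the $p+1$ positive factors of such a monomial constitute a divisor of degree $\gs p+1$. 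Part (3) is the only delicate point. Taking $u \in I_0^{(p)}$ and homogeneous $w \in X_0^1$, the dangerous terms pair a factor $g \in \MM$ of $w$ against a factor $f \in \MM^*$ of $u$, which costs $u$ a positive factor. But deleting the negative generator $g$ from $w$ leaves $w/g$ of degree $1 - \deg g \gs 2$, hence $w/g$ must contain a positive factor, which compensates; every such term therefore retains at least $p$ positive factors, giving $\{I_0^{(p)}, X_0^1\} \subset I_0^{(p)}$.

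Finally I would transport these inclusions to the completion directly from the inverse-limit description. If $x \in I^{(n)} \cap X^j$, then by the definition of $I^{(n)}$ it has a representing sequence $(x_q)_q$ with every $x_q \in X_0^j \cap I_0^{(n)}$, and likewise for the second argument. The $p$-th component of the extended bracket is $\{x_{s_{j,k}(p)}, y_{s_{j,k}(p)}\}$, a bracket of two elements of the relevant $I_0^{(\cdot)}$; by the $X_0$-level inclusions just proved each such component lies in the target $I_0^{(\cdot)}$, and these representatives are coherent there because they are coherent in $X$ and all lie in the same $I_0^{(\cdot)}$. Hence $\{x,y\}$ defines an element of the corresponding $I^{(\cdot)}$. (Alternatively one may argue by density of $X_0$ together with continuity of the bracket in each entry, lemma \ref{lem:bracketcontinuous}, and closedness of $\iphat \cap X^j$, lemma \ref{lem:iphattopclosed}.) The main obstacle is part (3): one must observe that a degree-one element always supplies a compensating positive factor precisely when the pairing consumes one, which is what prevents the filtration degree from dropping below $p$.
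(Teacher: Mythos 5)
Your proof is correct and takes essentially the same route as the paper: both reduce the three inclusions to $X_0$ via the Leibniz rule (your positive-factor count $\{I_0^{(a)},I_0^{(b)}\}\subset I_0^{(a+b-1)}$ is the same bookkeeping the paper carries out with the chain $\{\ip, I_0^{(p)}\}\subset I_0\{I_0,I_0\}I_0^{(p-1)}$ and the auxiliary fact that $I_0$ is closed under the bracket) and then pass to the completion by picking representatives in $I_0^{(\cdot)}$. As a minor remark, part (3) needs no separate compensation argument, since every degree-one monomial already contains a positive factor, i.e.\ $X_0^1\subset I_0$, so your general inclusion with $b=1$ gives $\{I_0^{(p)},X_0^1\}\subset I_0^{(p)}$ directly.
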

\begin{proof}
  The first statement: Consider elements $u = (u_p + \FF^p X_0^{j})_p$
  and $v = (v_p + \FF^p X_0^{k})_p$ of $X$ with $u_p, v_p \in
  \ip$. Then the $p$-th element of $\{u, v\}$ has the representative $\{
  u_{s_{j,k}(p)}, v_{s_{j,k}(p)}\} \in \{\ip, \ip\} \subset \ip $ by the
  Leibnitz rule.
  
  Now the second statement:
  First consider $p=0$. Then by the Leibnitz rule, $\{\ip, X_0\} \subset I_0
  \{I_0, X_0\} \subset I_0$. Now consider $p>0$. By repeated use of the
  Leibnitz rule
  \begin{align*}
    \{\ip, I_0^{(p)}\} \subset \{\ip, I_0\} I_0^{(p-1)} \subset I_0 \{I_0,I_0\}
    I_0^{(p-1)} \subset I_0^{(p+1)}
  \end{align*}
  by lemma \ref{lem:Iclosed}. The statement generalizes to the
  completion, as in the case above.

  The third statement follows analogously by picking representatives.
\end{proof}
\begin{lem}
  \label{lem:filtrationsums}
  Let $l \mapsto q(l)$ define an unbounded non-decreasing function $\N
  \to \N$. Let $x_l = (x_{l,p} + \FF^p X_0^n)_p \in \FF^{q(l)} X^n$ define a
  sequence of elements in $X^n$. Then $\sum_{l=0}^{\infty} x_l$
  converges to an element $x \in X^n$.
\end{lem}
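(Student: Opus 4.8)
The plan is to exploit the explicit description of $X^n$ as the inverse limit $\lim_{\leftarrow p} X_0^n/\FF^p X_0^n$, together with the characterization of convergence established in the completion section: a sequence in $X^n$ converges precisely when, for each fixed $p$, its $p$-th components eventually stabilize in $X_0^n/\FF^p X_0^n$. Thus it suffices to produce a candidate limit $x$ and to verify stabilization of the partial sums $S_N := \sum_{l=0}^N x_l$ component by component.

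First I would fix $p \in \N_0$ and use that $q$ is unbounded and non-decreasing to define the threshold $N(p) := \min\{ l \in \N : q(l) \geqslant p \}$, which exists. For every $l \geqslant N(p)$ we have $x_l \in \FF^{q(l)} X^n \subseteq \FF^p X^n$, so by the description of $\FF^p X^n$ the representative $x_{l,p}$ lies in $\FF^p X_0^n$; hence the $p$-th component of $x_l$ is zero in $X_0^n/\FF^p X_0^n$. Consequently only the finitely many terms with $l < N(p)$ can contribute to the $p$-th component of any partial sum.

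Next I would define the candidate limit $x = (x_p + \FF^p X_0^n)_p$ by setting $x_p := \sum_{l=0}^{N(p)-1} x_{l,p}$, a finite sum in $X_0^n$. To see that $x$ is a genuine element of the inverse limit I would check compatibility under the canonical projection $X_0^n/\FF^{p'} X_0^n \to X_0^n/\FF^{p} X_0^n$ for $p \leqslant p'$: since $N(p) \leqslant N(p')$, the terms indexed by $N(p) \leqslant l < N(p')$ satisfy $x_{l,p'} \in \FF^p X_0^n$ and project to zero, while for $l < N(p)$ the fact that each $x_l$ is itself a compatible system gives $x_{l,p'} \equiv x_{l,p} \pmod{\FF^p X_0^n}$. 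Hence the projection sends the class of $x_{p'}$ to the class of $x_p$, as required.

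Finally I would show $S_N \to x$. For fixed $p$, the $p$-th component of $S_N$ equals $\sum_{l=0}^{\min\{N,\, N(p)-1\}} x_{l,p}$ modulo $\FF^p X_0^n$, by the vanishing of the tail established above; for all $N \geqslant N(p)-1$ this agrees with $x_p$ modulo $\FF^p X_0^n$, so the $p$-th component stabilizes. By the convergence criterion in $X^n$ the partial sums converge to $x$. There is no serious obstacle here: the statement is essentially a completeness assertion for the inverse-limit topology, and the only points requiring attention are the compatibility check making $x$ a legitimate element of $X^n$ and the careful bookkeeping of representatives, both of which rest on the single quantitative input that $q(l) \to \infty$, so that in each fixed quotient all but finitely many summands vanish.
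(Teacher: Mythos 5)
Your proposal is correct and follows essentially the same route as the paper's own proof: both define the candidate limit by truncating the sum of representatives in each quotient $X_0^n/\FF^p X_0^n$ (the paper first normalizes to $q(l)=l$ and cuts off at $l<p$, while you keep general $q$ and cut off at the threshold $N(p)$), then verify compatibility of the resulting system and stabilization of the partial sums. The difference is purely cosmetic bookkeeping.
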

\begin{proof}
  We may suppose $q(l) = l$.  Define $x_p := \sum_{l=0}^{p-1}
  x_{l,p}$. Then $x := (x_p + \FF^p X_0^n)_p$ defines an element of
  $X^n$ since, for $p \leqslant q$, we have
  \begin{align*}
    x_q - x_p = \sum_{l=0}^{q-1} x_{l,q} - \sum_{l=0}^{p-1} x_{l,p} =
    \sum_{l=0}^{p-1} (x_{l,q}-x_{l,p}) + \sum_{l=p}^{q-1} x_{l,q} \in
    \FF^p X_0^n.
  \end{align*}
  We claim that $\sum_{l=0}^k x_l$ converges to $x$ as $k \to
  \infty$. Fix $p$. Let $k \geqslant k_0 := p$. Then the $p$-th
  element of $\sum_{l=0}^k x_l - x$ has representative
  $\sum_{l=0}^k x_{l,p} - x_p = \sum_{l=p}^k x_{l,p} \in \FF^p X_0^n$.
\end{proof}
\begin{lem}
  \label{lem:expansion} Each $H \in X^n$ can be expanded as $H =
  \sum_{p \geq 0} h_p$ with $h_p \in B^p \otimes_P T^{n-p}$.
\end{lem}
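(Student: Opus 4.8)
The plan is to exploit the decomposition $X_0 = B \otimes_P T$ together with an explicit description of the filtration $\FF^\bullet X_0$ as the $B$-degree filtration, and then to reconstruct $H$ from its graded pieces using the convergence result of lemma \ref{lem:filtrationsums}. First I would record the shape of the filtration in a fixed degree. Since $B = P \otimes \Sym(\MM^*)$ is non-negatively graded and $T = P \otimes \Sym(\MM)$ is non-positively graded, the degree-$n$ part of $X_0$ splits as a finite direct sum $X_0^n = \bigoplus_{p \geq 0} B^p \otimes_P T^{n-p}$, the summand vanishing unless $p \geq \max\{0,n\}$. As is implicit in lemma \ref{lem:associatedgraded}, the ideal $\FF^p X_0$ consists precisely of the elements of $B$-degree at least $p$, so that
\begin{align*}
  \FF^p X_0^n = \bigoplus_{q \geq p} B^q \otimes_P T^{n-q}, \qquad \frac{X_0^n}{\FF^p X_0^n} \cong \bigoplus_{q < p} B^q \otimes_P T^{n-q} =: C_p .
\end{align*}
In particular $C_p$ is a canonical complement to $\FF^p X_0^n$ inside $X_0^n$, so every class in $X_0^n / \FF^p X_0^n$ admits a unique representative in $C_p$.

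Next I would extract the homogeneous pieces of $H$. Writing $H = (x_p + \FF^p X_0^n)_p$ and choosing each $x_p$ in the complement $C_p$, I decompose $x_p = \sum_{q<p} h_q^{(p)}$ with $h_q^{(p)} \in B^q \otimes_P T^{n-q}$. The compatibility condition $x_{p+1} \equiv x_p \pmod{\FF^p X_0^n}$, combined with the directness of the decomposition, forces $h_q^{(p+1)} = h_q^{(p)}$ for all $q < p$. Hence the component $h_q := h_q^{(p)}$ is independent of the chosen level $p > q$, and I obtain a well-defined family $h_q \in B^q \otimes_P T^{n-q}$ (with $h_q = 0$ for $q < \max\{0,n\}$).

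Finally I would assemble the sum. Viewing each $h_p$ as an element of $X^n$ via the inclusion $\iota : X_0^n \to X^n$, it lies in $\FF^p X^n$ because $h_p \in B^p \otimes_P T^{n-p} \subset \FF^p X_0^n$. Applying lemma \ref{lem:filtrationsums} with $q(p) = p$ then shows that $\sum_{p \geq 0} h_p$ converges to some $x \in X^n$ whose $p$-th representative is $\sum_{q < p} h_q = x_p$, by the explicit formula in the proof of that lemma. Thus $x$ and $H$ agree at every level of the inverse limit, so $H = \sum_{p \geq 0} h_p$ with $h_p \in B^p \otimes_P T^{n-p}$, as claimed.

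This argument is largely bookkeeping; the one point requiring genuine care — and which I would treat as the main obstacle — is the identification of $\FF^p X_0^n$ with the $B$-degree filtration $\bigoplus_{q \geq p} B^q \otimes_P T^{n-q}$, i.e.\ checking that ``ideal generated by elements of total degree at least $p$'' coincides with ``elements of $B$-degree at least $p$''. One inclusion is immediate, since $B$-degree is additive and non-negative, so any homogeneous element of total degree $\geq p$ already has $B$-degree $\geq p$ and this persists under multiplication by arbitrary elements. For the reverse inclusion I would note that a homogeneous monomial of $B$-degree $q$ factors as the product of an arbitrary element of $X_0$ with the degree-$q$ monomial built from its $\MM^*$-factors, exhibiting it in the ideal generated by degree-$\geq p$ elements whenever $q \geq p$. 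This is exactly the content underlying lemma \ref{lem:associatedgraded}, which I would invoke.
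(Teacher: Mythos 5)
Your proof is correct and follows essentially the same route as the paper's: choose representatives $x_p$ lying in the complement of $\FF^p X_0^n$ (i.e.\ containing no monomials of $B$-degree $\geq p$), identify the filtration with the $B$-degree filtration as in lemma \ref{lem:associatedgraded}, extract the degree-$p$ components $h_p$ (the paper obtains them as the telescoping differences $x_{p+1}-x_p$, which is the same thing), and conclude via lemma \ref{lem:filtrationsums}. The only difference is that you spell out the direct-sum bookkeeping that the paper leaves implicit.
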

\begin{proof}
  Write $H = (x_p + \FF^p X_0^n)_p$ with $x_0 = 0$. Pick a homogeneous basis
	$e_i$ of the underlying graded vector space. Redefine $x_p$
  such that $x_p$ does not contain a monomial in $\FF^p X_0^n$. Set
  $h_p = x_{p+1}-x_p \in \FF^p X_0^n$. It cannot contain a monomial of degree
	$(p+1)$ or higher. Hence $h_p \in B^p
  \otimes_P T^{n-p}$. Then by lemmas \ref{lem:iotalimit} and
  \ref{lem:filtrationsums}, $\sum_p h_p = H$.
\end{proof}
\begin{lem}
  All statements from section \ref{sec:filtbracket} are valid for $X_0$
  replaced by $X$.
\end{lem}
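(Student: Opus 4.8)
The plan is to prove a single transfer principle and then read off each assertion as a special case. Every statement in section~\ref{sec:filtbracket} has the form of a bracket inclusion $\{A_0, B_0\} \subset C_0$, where $A_0 \subset X_0^n$, $B_0 \subset X_0^m$, and $C_0 \subset X_0^{n+m}$ are subspaces cut out by a filtration condition $\FF^p X_0$, by the power ideal $\ip$, or by a homogeneous degree, in all the combinations appearing in lemmas~\ref{lem:optimal},~\ref{lem:ipbracket},~\ref{lem:Iclosed} and corollary~\ref{cor:increaseF}. I claim that each such inclusion lifts verbatim to the completion, with $A_0, B_0, C_0$ replaced by the corresponding subspaces $\FF^p X^n$, $\iphat \cap X^n$, and so on.

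The mechanism rests on the explicit inverse-limit descriptions. By the formula $\FF^p X^n = \{(x_q + \FF^q X_0^n)_q : x_q \in \FF^p X_0^n\}$ and the analogous description of $I^{(k)}$ as the inverse limit of the $X_0^m \cap I_0^{(k)}$, any homogeneous element of a completed source subspace $A$ admits representatives $x_l$ all lying in the corresponding $X_0$-subspace $A_0$, and likewise for $B$. By the definition of the extended bracket, the $P$-th representative of $\{x,y\}$ is $\{x_{s_{n,m}(P)}, y_{s_{n,m}(P)}\}$ with $s_{n,m}(P) = P + \max\{|n|,|m|\}$; since the membership $x_{s_{n,m}(P)} \in A_0$, $y_{s_{n,m}(P)} \in B_0$ holds for every index, the $X_0$-version of the statement places $\{x_{s_{n,m}(P)}, y_{s_{n,m}(P)}\}$ in $C_0$ for all $P$. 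These are then representatives exhibiting $\{x,y\}$ as an element of the completed target $C$ through its own inverse-limit description; equivalently, $\{x,y\}$ lands in a closed subspace by lemmas~\ref{lem:filtrationclosed} and~\ref{lem:iphattopclosed}.

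Applying this principle to each source--target pair gives the completed forms of lemma~\ref{lem:optimal}, of all three parts of corollary~\ref{cor:increaseF}, of lemma~\ref{lem:ipbracket}, and of lemma~\ref{lem:Iclosed} (where $I = \FF^1 X$ is closed by lemma~\ref{lem:filtrationclosed} with $p = 1$). For the parts of corollary~\ref{cor:increaseF} whose second argument is an unfiltered $X_0^m$, one first reduces to homogeneous arguments using bilinearity and the grading $X = \bigoplus_j X^j$ (invoking continuity of the bracket, lemma~\ref{lem:bracketcontinuous}, should infinitely many homogeneous components occur), so that the representative-level argument applies degree by degree. The only real obstacle is bookkeeping: one must confirm that elements of the completed source subspaces genuinely possess representatives inside the matching $X_0$-subspaces --- immediate from the explicit descriptions of $\FF^p X$ and of the power $\iphat$ --- and that the index shift $s_{n,m}$ built into the definition of the bracket causes no harm, which it does not, precisely because the membership conditions on the representatives are independent of the index. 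No estimate beyond those already obtained for $X_0$ is required.
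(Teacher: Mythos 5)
Your proposal is correct and is essentially an expanded version of the paper's own (one-sentence) proof: the bracket on $X$ acts on representatives via the bracket on $X_0$, where the inclusions hold, and the inverse-limit descriptions of $\FF^p X$ and $I^{(k)}$ guarantee that representatives can be chosen in the corresponding $X_0$-subspaces. The extra care you take with the index shift $s_{n,m}$ and closedness of the targets is sound but not a departure from the paper's argument.
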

\begin{proof}
  The bracket on $X$ is defined by acting on representatives with the
  bracket of $X_0$ where the statements hold. 
\end{proof}

\subsection{Extension of Maps}

Next, we consider the problem of extending maps on $X_0$ to $X$.
\begin{rem}
  \label{rem:extension-maps}
	A linear map on $X_0$ of a fixed degree preserving the filtration up to a
	fixed shift naturally extends to a linear map on $X$ preserving the
	filtration up to the same shift. This extension is continuous. 
\end{rem}

\subsection{The Associated Graded}

The associated graded $\gr X$ of $X$ is defined as the graded algebra
with homogeneous components $\gr^p X = \FF^p X \big/ \FF^{p+1} X$.  We
have $\gr^0 X = X / I$.
\begin{lem}
  $X/I$ is naturally identified with $P \otimes \Sym(\MM)$.
\end{lem}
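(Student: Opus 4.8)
The plan is to realise the quotient $X/I$ as the completion of the ordinary quotient $X_0/I_0$ and to recognise the latter explicitly. The first and only structural step is to identify $I_0 = \FF^1 X_0$ with the ideal generated by $\MM^*$. Writing $X_0 = P \otimes \Sym(\MM) \otimes \Sym(\MM^*)$, the factor $P$ sits in degree $0$, every factor from $\MM$ has negative degree, and every factor from $\MM^*$ has positive degree. Hence a homogeneous element of degree $\geqslant 1$ must involve at least one factor from $\MM^*$, so it lies in the ideal $(\MM^*)$; conversely $\MM^*$ is concentrated in positive degrees, so $(\MM^*) \subset \FF^1 X_0$. Thus $I_0 = (\MM^*)$, and the projection $\pi_0 \colon X_0 \to P \otimes \Sym(\MM) = T$ obtained by tensoring the augmentation $\Sym(\MM^*) \to \K$ with the identity is a surjective homomorphism of graded algebras with kernel $I_0$. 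In particular $\pi_0$ restricts in each degree $j$ to an isomorphism $X_0^j / \FF^1 X_0^j \xrightarrow{\sim} T^j$, and it annihilates $\FF^p X_0$ for every $p \geqslant 1$.

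Next I would pass to the completion. Since $\pi_0$ kills $\FF^p X_0^j \subseteq \FF^1 X_0^j$ for all $p \geqslant 1$, it factors through each term $X_0^j/\FF^p X_0^j$ of the inverse system defining $X^j = \lim_{\leftarrow p} X_0^j/\FF^p X_0^j$. Composing the canonical projection onto the $p=1$ term with the isomorphism above yields a graded map $\phi \colon X^j \to T^j$ sending $(x_p + \FF^p X_0^j)_p \mapsto \pi_0(x_1)$. This $\phi$ is multiplicative because the product on $X$ is computed on representatives and $\pi_0$ is a ring homomorphism, and it is surjective because any lift of $t \in T^j$ to $X_0^j$ defines, via the dense inclusion $\iota$, an element of $X^j$ lying over $t$.

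Finally I would compute the kernel. An element $(x_p + \FF^p X_0^j)_p$ lies in $\ker \phi$ exactly when $x_1 \in \FF^1 X_0^j$; by the compatibility relation $x_p \equiv x_1 \pmod{\FF^1 X_0^j}$ this is equivalent to $x_p \in \FF^1 X_0^j$ for every $p$, which is precisely the description of $\FF^1 X^j = I^j$. Hence $\phi$ descends to a natural isomorphism of graded algebras $X/I \xrightarrow{\sim} T = P \otimes \Sym(\MM)$, the identification being canonical since $\pi_0$ involves no choices. The one point deserving care is the opening step — that $\FF^1 X_0$ is exactly the augmentation ideal of $\Sym(\MM^*)$ — together with checking that the filtration, the inverse limit and the quotient interact so that the sequence $0 \to I^j \to X^j \to T^j \to 0$ is genuinely exact; everything after that is a formal manipulation, so this is the only (mild) obstacle.
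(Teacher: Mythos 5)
Your proof is correct and takes essentially the same route as the paper's: both arguments rest on the decomposition $X_0 = I_0 \oplus (P \otimes \Sym(\MM))$ (equivalently, $I_0$ being the augmentation ideal of $\Sym(\MM^*)$) and on the observation that the compatibility condition of the inverse system forces the $P \otimes \Sym(\MM)$-component of a representative system to be constant. The paper phrases this as each class in $X/I$ having a unique representative $z_1 \in \Sym_P(\MM)$, while you package the identical computation as the explicit homomorphism $(x_p)_p \mapsto \pi_0(x_1)$ with kernel $I$; the extra checks of multiplicativity and surjectivity are welcome but do not constitute a different argument.
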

\begin{proof}
  Let $x = (x_p + \FF^p X_0^n)_p \in X^n$. Let $u_p \in I_0$ and $z_p \in
  X_0^n$ such that $x_p = u_p + z_p$ and $z_p$ does not contain a
  summand in $I_0$ (or is zero), i.e. $z_p \in \Sym_P(\MM)$. Then $z_p
  - z_1 = x_p - x_1 - (u_p - u_1) \in I_0$. Hence $z_p = z_1$ for all
  $p$. Hence $z := ( z_1 + \FF^p X_0^n)_p \in X^n$ and $x$ define the
  same equivalence class in $X/I$. It is clear that different values
  of $z_1$ yield different equivalence classes.
\end{proof}
\begin{lem}
  \label{lem:associatedgraded}
  There is a natural isomorphism $\gr^\bullet X \cong B^\bullet
  \otimes_P T$ of graded algebras.
\end{lem}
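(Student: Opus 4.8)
The plan is to show that the filtration $\FF^\bullet X_0$ is nothing but the grading of $X_0$ by the total degree of its $\MM^*$-factors, and that passing to the completion leaves the associated graded unchanged.

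First I would make the filtration explicit. For a monomial $m = c\,a\,b$ with $c \in P$, $a \in \Sym(\MM)$ and $b \in \Sym(\MM^*)$, call $\deg b$ (the sum of the degrees of its dual factors) the \emph{dual degree} of $m$; since $\MM$ is negatively graded and $\MM^*$ positively graded, the total degree satisfies $\deg m = \deg a + \deg b \ls \deg b$. I claim $\FF^p X_0$ is precisely the span $X_0[\gs p]$ of all monomials of dual degree $\gs p$. Indeed, a homogeneous generator of degree $\gs p$ is a sum of monomials each of dual degree $\gs \deg m \gs p$, and multiplying by an arbitrary element of $X_0$ can only increase the dual degree; conversely a monomial $m = (ca)\,b$ of dual degree $\gs p$ lies in $\FF^p X_0$ because its factor $b$ is homogeneous of degree $\deg b \gs p$. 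Hence the dual degree defines a second grading $X_0 = \bigoplus_{d \gs 0} X_0[d]$ that recovers the filtration via $\FF^p X_0 = \bigoplus_{d \gs p} X_0[d]$, so that
\[
\gr^p X_0 = \FF^p X_0 / \FF^{p+1} X_0 \cong X_0[p].
\]
Under the canonical isomorphism $\Sym(\MM \oplus \MM^*) \cong \Sym(\MM) \otimes \Sym(\MM^*)$ the dual-degree-$p$ part $X_0[p]$ becomes $P \otimes \Sym(\MM) \otimes \Sym(\MM^*)^p$, which is exactly $B^p \otimes_P T$. Since multiplication adds dual degrees and the Koszul signs are absorbed by this canonical isomorphism, $\bigoplus_p \gr^p X_0 \cong B \otimes_P T$ as graded algebras.

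Next I would verify that completion does not affect the associated graded, i.e. that the natural map $\gr^p X_0 \to \gr^p X$ induced by $\iota : X_0 \to X$ is an isomorphism. Using the description above, in a fixed total degree $n$ one has $X_0^n / \FF^q X_0^n = \bigoplus_{d < q} X_0^n[d]$, so that $\FF^p X^n$ consists of those compatible families all of whose representatives lie in $\FF^p X_0^n$, and quotienting by $\FF^{p+1} X^n$ isolates the single slot $X_0^n[p] = \gr^p X_0^n$. Concretely, the map $\gr^p X_0^n \to \gr^p X^n$ is injective because $\iota(y) \in \FF^{p+1} X^n$ forces $y \in \FF^{p+1} X_0^n$ (read off the $q$-th component for $q \gs p+1$), and surjective because any $w = (w_q + \FF^q X_0^n)_q \in \FF^p X^n$ is congruent to $\iota(w_{p+1})$ modulo $\FF^{p+1} X^n$, their $q$-th components differing by $w_q - w_{p+1} \in \FF^{p+1} X_0^n$ for $q \gs p+1$ by the inverse-limit compatibility. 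Combining the two steps gives the natural isomorphism $\gr^\bullet X \cong \gr^\bullet X_0 \cong B^\bullet \otimes_P T$ of graded vector spaces, and it respects products because the product on $X$ is continuous and filtration-compatible, hence determined on $\gr X$ by its values on the dense subalgebra $X_0$, where the identification is already an algebra map.

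The routine but most delicate point will be the completion bookkeeping of the second step: one must keep the inverse-limit indices straight to see that $\FF^{p+1} X$ really cuts out everything of dual degree strictly greater than $p$, so that the quotient collapses onto the single graded slot $X_0^n[p]$. Everything else reduces to the bilinear, sign-tracking check that $\Sym(\MM \oplus \MM^*) \cong \Sym(\MM) \otimes \Sym(\MM^*)$ intertwines the two algebra structures, which is the standard graded-tensor-product computation.
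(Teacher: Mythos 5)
Your proof is correct and rests on the same observation as the paper's: the filtration on $X_0$ is split by the total degree of the $\MM^*$-factors, so that $\gr^p$ picks out exactly the dual-degree-$p$ slot $B^p \otimes_P T$, and passing to the completion does not change the associated graded. The paper runs these two steps together, working directly with representatives $(x_{n,q})_q$ and splitting off the summand containing no monomial in $\FF^{p+1}X_0$, whereas you factor cleanly through $\gr X_0$ via the explicit second grading; this makes the paper's terse ``obviously $T$-linearly independent'' claim precise but is not a genuinely different route.
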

\begin{proof}
  The
  inclusions $B^p \hookrightarrow \FF^p X$ induce a $P$-linear map $B
  \longrightarrow \gr X$. From this, we obtain a map $B \otimes_P T
  \to \gr X$ via $B^p \otimes_P T \ni b \otimes t \mapsto bt \in \gr^p
  X$. The claim
  follows since the monomials in $B^p$ span the free $T$-module $\gr^p
  X$: The image of linearly independent monomials in $B^p$ under the above map
  is obviously $T$-linearly independent. Now for a given $x \in \FF^p X$
  decompose it into homogeneous elements $x^n = (x_{n,q} + \FF^q
  X_0^n)_q \in \FF^p X^n$.  Split $x_{n,q} = b_{n,q} + y_{n,q}$ with
  $y_{n,q} \in \FF^{p+1} X_0^n$ and $b_{n,q} \in \FF^p X_0^n$ does not
  contain a summand in $\FF^{p+1}X_0^n$. Then $b_{n,q} - b_{n,{p+1}}
  \in \FF^{p+1}X_0^n$ for $q > p$, and hence this difference
  vanishes. Set $b^n = (b_{n,p+1} + \FF^q X_0^n)_q$. We have that
  $b^n$ and $x^n$ define the same equivalence class in $\gr^p X^n$ and
  hence $b = \sum_n b^n$ and $x$ define the same equivalence class in
  $\gr^p X$. Each $b^n$ is in the image of $B^p \otimes_P T \to \gr^p
  X$.
\end{proof}

\subsection{Form Degree}

We can filter the algebra $X$ by form degree. For $n \in \Z$ and $j
\in \N_0$, we set $X_0^{n,j} = P \otimes \Sym^j( \MM \oplus \MM^*)
\cap X_0^n$ and define the homogeneous components of $X^{(j)} =
\bigoplus_n X^{n,j}$ to be
\begin{align*}
  X^{n,j} = \lim_{\leftarrow p} \frac{X_0^{n,j}}{ \FF^p X_0^n \cap X_0^{n,j}}
\end{align*}
We have
\begin{lem}
  \label{lem:form-degree-convergence}
  If $x_j \in X^n$ have form degree $j$ then $\sum_j x_j$ converges in
  $X$. 
\end{lem}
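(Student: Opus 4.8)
The plan is to reduce the statement to Lemma \ref{lem:filtrationsums}, which already guarantees convergence of a series $\sum_l x_l$ as soon as each summand satisfies $x_l \in \FF^{q(l)} X^n$ for some unbounded non-decreasing $q : \N \to \N$. So it suffices to show that high form degree forces high filtration degree, i.e. to exhibit an unbounded non-decreasing $q$ with $X^{n,j} \subseteq \FF^{q(j)} X^n$. The key is that filtration degree is governed by the total degree coming from the $\MM^*$-factors, while the cohomological degree constraint ties this quantity to the form degree.

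First I would treat a single monomial $\mu \in X_0^{n,j}$ of form degree $j$ and total degree $n$. Write $\mu = c \cdot m_- \cdot m_+$ with $c \in P$ of degree zero, $m_-$ a product of $a$ factors from $\MM$, and $m_+$ a product of $b$ factors from $\MM^*$, so $a+b=j$. Since $\MM$ lives in degrees $\ls -1$ and $\MM^*$ in degrees $\gs 1$, the degree $d^+ \gs 0$ of $m_+$ satisfies $d^+ \gs b$, and the degree of $m_-$ equals $-d^-$ with $d^- \gs a$. The total-degree relation $n = d^+ - d^-$ gives $d^+ = n + d^- \gs n + a$, and combining this with $d^+ \gs b$ yields $2d^+ \gs (a+b) + n = j + n$, hence $d^+ \gs (n+j)/2$. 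As $m_+$ is a homogeneous element of degree $d^+$, the monomial $\mu$ lies in the ideal $\FF^{d^+} X_0$, and therefore in $\FF^{q(j)} X_0^n$ for $q(j) := \max\{0, \lceil (n+j)/2 \rceil\}$, because $q(j) \ls d^+$. Every element of $X_0^{n,j}$ is a finite $P$-linear combination of such monomials, so $X_0^{n,j} \subseteq \FF^{q(j)} X_0^n$.

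Passing to the completion, any $x_j \in X^{n,j}$ is represented by a coherent sequence in $X_0^{n,j} \subseteq \FF^{q(j)} X_0^n$; by the description of $\FF^p X^n$ as coherent sequences whose representatives lie in $\FF^p X_0^n$, this gives $x_j \in \FF^{q(j)} X^n$. For fixed $n$ the function $q$ is non-decreasing and unbounded, so Lemma \ref{lem:filtrationsums} applies with $l = j$ and delivers convergence of $\sum_j x_j$ in $X$.

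The only genuinely delicate point is the lower bound $d^+ \gs (n+j)/2$, and this is where the standing hypotheses that both $\MM$ and $\MM^*$ avoid degree zero become essential: were factors of degree zero permitted, the form degree could grow without forcing the filtration degree up, and the reduction to Lemma \ref{lem:filtrationsums} would fail. Everything else — that a monomial whose $\MM^*$-part has degree $d^+$ lies in $\FF^{d^+} X_0$, and the transfer of the inclusion from $X_0$ to its completion — is routine bookkeeping with the definitions.
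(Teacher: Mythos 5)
Your proof is correct and follows essentially the same route as the paper: the paper also establishes the bound $g(x_j) \geqslant \tfrac{1}{2}(n+j)$ on the degree of the $\MM^*$-part (its ``ghost degree'') by combining the count of positive and negative factors with the total-degree constraint, concludes $x_j \in \FF^{p(j)} X^n$ for the corresponding floor of $(n+j)/2$, and then invokes Lemma \ref{lem:filtrationsums}. The only differences are notational (your $d^+, d^-$ versus the paper's ghost/anti-ghost degrees).
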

\begin{proof}
  Fix $n$. Let $g$ denote the ghost degree and $a$ the anti-ghost
  degree. This means that $g = \deg$ on (homogeneous elements in) $P \otimes
  \Sym(\MM^*)$ and $g = 0$ on $\Sym(\MM)$. Similarly $a = \deg$ on $P \otimes
  \Sym(\MM)$ and $a = 0$ on
  $\Sym(\MM^*)$. Hence $g \geqslant 0$, $a \leqslant 0$ and $a+g = \deg$. We
  decompose a summand $s \in X_0^{n,j}$ of a representative of $x_j$ as $s = a_j \otimes x_{j,1} \dots x_{j,j} \in X_0^{n,j}$ according
  to form degree. Let $l_j$ be the number of factors of positive degree in
  this decomposition. Then $g(x_j) \geqslant l_j$ and $a(x_j) \leqslant
  -(j-l_j)$. Hence
  \begin{align*}
    g(x_j) = a(x_j) + (g(x_j)-a(x_j)) \geqslant a(x_j) + (l_j + (j -
    l_j)) = a(x_j) + j = n + j - g(x_j)
  \end{align*}
  So $g(x_j) \geqslant \frac{1}{2}(n+j)$.
  Set $p(j) = \max\{ k \in \Z : k \leqslant
  \frac{j+n}{2}\}$. We obtain $x_j \in \FF^{p(j)} X^n$. Now apply
  lemma~\ref{lem:filtrationsums}.
\end{proof}
\begin{lem}
  \label{lem:form-degree-decomposition}
  If $x \in X^n$. Then there are $x_j \in X^n$ of form degree $j$ with
  $x = \sum_j x_j$.
\end{lem}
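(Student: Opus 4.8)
The plan is to decompose each representative of $x$ by form degree and reassemble the pieces. Write $x = (x_p + \FF^p X_0^n)_p \in X^n$. Each representative $x_p \in X_0^n$ is a finite $\K$-linear combination of monomials, so it has a unique decomposition $x_p = \sum_{j \gs 0} x_p^{(j)}$ with $x_p^{(j)} \in X_0^{n,j} = P \otimes \Sym^j(\MM \oplus \MM^*) \cap X_0^n$, and only finitely many summands are nonzero. Fixing $j$ and letting $p$ vary, I would set $x_j := (x_p^{(j)} + \FF^p X_0^n \cap X_0^{n,j})_p$, show that this is a well-defined element of $X^{n,j}$, and finally prove $\sum_j x_j = x$.

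The one point that requires care — and the main obstacle — is that passing to form-degree components must be compatible with the filtration, i.e. that $x_j$ is well defined. For this I would show that $\FF^p X_0$ is homogeneous with respect to form degree. The algebra $X_0$ carries two compatible gradings, the total degree and the form degree, and the generating subspace $\bigoplus_{m \gs p} X_0^m$ of $\FF^p X_0$ is itself graded by form degree, being a direct sum of the bihomogeneous subspaces $X_0^m \cap X_0^{(j)}$. Hence the ideal it generates is bihomogeneous, and in particular $\FF^p X_0^n \cap X_0^{n,j} = (\FF^p X_0^n)^{(j)}$. Consequently, from the compatibility $x_q - x_p \in \FF^p X_0^n$ of the representatives of $x$ (valid for $q \gs p$), taking form-degree-$j$ parts gives $x_q^{(j)} - x_p^{(j)} \in \FF^p X_0^n \cap X_0^{n,j}$, which is exactly the compatibility needed for $x_j \in X^{n,j}$. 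Everything beyond this is bookkeeping.

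It remains to verify $\sum_j x_j = x$. Convergence of the sum in $X$ is immediate from Lemma \ref{lem:form-degree-convergence}, since each $x_j$ has form degree $j$; recall from its proof that $x_j \in \FF^{p(j)} X^n$ with $p(j) \gs \tfrac12(n+j) \to \infty$. To identify the limit $y := \sum_j x_j$ with $x$, I compare $p$-th components for each fixed $p$. Whenever $p(j) \gs p$ we have $x_j \in \FF^p X^n$, so its $p$-th representative $x_p^{(j)}$ lies in $\FF^p X_0^n$; thus modulo $\FF^p X_0^n$ only the finitely many terms with $p(j) < p$ survive, and the $p$-th component of $y$ equals $\sum_{j : p(j) < p} x_p^{(j)} + \FF^p X_0^n$. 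On the other hand, applying the same observation to the remaining summands of $x_p = \sum_j x_p^{(j)}$ gives $x_p \equiv \sum_{j : p(j) < p} x_p^{(j)} \pmod{\FF^p X_0^n}$. Hence $y$ and $x$ have the same $p$-th component for every $p$, so $y = x$, which completes the proof.
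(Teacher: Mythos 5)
Your proof is correct and follows essentially the same route as the paper's: both decompose by form degree and both rest on the fact that the form-degree projections preserve the filtration $\FF^p X_0^n$, which you prove explicitly via bihomogeneity of the ideal and which the paper uses implicitly when it applies Lemma \ref{lem:filtrationsums} to the form-degree components $x_j^l$ of its series expansion $x=\sum_l x^l$. The only organizational difference is that you work directly with the inverse-limit representatives $x_p$ and verify the compatibility $x_q^{(j)}-x_p^{(j)}\in\FF^p X_0^n\cap X_0^{n,j}$, whereas the paper writes $x$ as a telescoping series of elements of $X_0$ of increasing filtration degree and interchanges the two sums.
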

\begin{proof}
  Write $x = \sum_l x^l$ with $x^l \in \FF^l X_0^n$. Expand each $x^l
  = \sum_j x_j^l$ where the sum is finite with $x_j^l$ being of form
  degree $j$. By lemma \ref{lem:filtrationsums}, $x_j =
  \sum_l x_j^l$ converges to an element of $X^n$ of form degree
  $j$. By lemma  
  \ref{lem:form-degree-convergence}, the sum $\sum_j x_j$ converges.
  We have $x = \sum_l \sum_j x_j^l = \sum_j \sum_l x_j^l$, which
  can be verified evaluating both sides modulo $\FF^p$ for general $p$.
\end{proof}
\begin{lem}
  \label{lem:form-degree-sum}
  For $\xi_j \in X^n$ of form degree $j$ with $\sum_j \xi_j = 0$ we
  have $\xi_j = 0$.
\end{lem}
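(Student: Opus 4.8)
The plan is to recover each summand by projecting the relation $\sum_i \xi_i = 0$ onto a single form degree. For each $j \in \N_0$ let $\pi_j : X_0 \to X_0$ be the projection onto the form-degree-$j$ summand coming from the grading $X_0 = \bigoplus_j X_0^{(j)}$ by form degree. First I would extend $\pi_j$ to a continuous operator on the completion $X$, and then show that it isolates $\xi_j$ from the convergent sum.

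The key step is that $\pi_j$ preserves the filtration. The projection $\pi_j$ has internal degree zero, since form degree and internal degree are independent gradings of $X_0$; in particular it maps $X_0^n$ to $X_0^n$. Moreover $\FF^p X_0$ is homogeneous for the form-degree grading: it is the ideal generated by elements of internal degree at least $p$, and since each $\pi_i$ preserves internal degree, the form-degree-homogeneous components of any such generator again have internal degree at least $p$. Hence $\FF^p X_0$ is generated by form-degree-homogeneous elements and is therefore a graded ideal for the form-degree grading, so $\pi_j(\FF^p X_0^n) \subseteq \FF^p X_0^n$. By remark \ref{rem:extension-maps} the map $\pi_j$ then extends to a continuous, filtration-preserving linear map $\pi_j : X \to X$, acting on representatives by $\pi_j\big( (x_p + \FF^p X_0^n)_p \big) = (\pi_j(x_p) + \FF^p X_0^n)_p$.

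It remains to evaluate $\pi_j$ on a form-degree-$i$ element and to commute it past the infinite sum. If $\xi_i \in X^n$ has form degree $i$, its representatives lie in $X_0^{(i)}$, on which $\pi_j$ acts as $\delta_{ij}\,\mathrm{id}$; reading off the explicit formula for the extension gives $\pi_j \xi_i = \delta_{ij}\,\xi_i$. The sum $\sum_i \xi_i$ converges by lemma \ref{lem:form-degree-convergence}; writing $S_N = \sum_{i=0}^N \xi_i$ for its partial sums, we have $\pi_j(S_N) = \xi_j$ for every $N \geq j$, so by continuity of $\pi_j$ we obtain $\pi_j\big(\sum_i \xi_i\big) = \lim_N \pi_j(S_N) = \xi_j$. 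Applying this to the hypothesis $\sum_i \xi_i = 0$ yields $\xi_j = \pi_j(0) = 0$, as desired.

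I expect the only genuinely nontrivial point to be the claim that $\FF^p X_0$ is graded for the form-degree grading (equivalently, that $\pi_j$ preserves the filtration); everything else is formal once $\pi_j$ has been extended continuously via remark \ref{rem:extension-maps}. One should also double-check that ``of form degree $i$'' is being used in the sense of membership in the completed component $X^{n,i}$, so that the representatives may indeed be taken in $X_0^{(i)}$; this is consistent with how the form-degree pieces are produced in lemma \ref{lem:form-degree-decomposition}.
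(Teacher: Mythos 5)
Your proof is correct and rests on the same key fact as the paper's: that $\FF^p X_0^n$ is homogeneous for the form-degree grading, so that the form-degree-$j$ component of an element of $\FF^p X_0^n$ again lies in $\FF^p X_0^n$. The paper applies this directly to the (effectively finite) representatives $\sum_j \xi_{j,p} \in \FF^p X_0^n$ by expanding in a monomial basis, whereas you package the same observation as a continuous, filtration-preserving projection $\pi_j$ and pass to the limit over partial sums; the two arguments are equivalent.
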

\begin{proof}
	Write $\xi_j = (\xi_{j,p} + \FF^p X_0^n)_p$ with $\xi_{j,p} \in
	X_0^{n,j}$. The $p$-th component of $\sum_j \xi_j$ has representative
	$\sum_{j} \xi_{j,p} \in \FF^p X_0^n$, where this sum is effectively
	finite by the proof of lemma \ref{lem:form-degree-convergence}. We see that
	$\xi_{j,p} \in \FF^p X_0^n$ by expanding in a $P$-basis of $X_0$ consisting of
	monomials in basis elements of the underlying vector space $\MM \oplus
	\MM^*$. 
\end{proof}

\subsection{Symplectic Case}
  
Consider the graded commutative algebra $X_0 = \R[x_i, y_i, e_{j}^{(l)},
{e_{j}^{(l)}}^*]$ where $x_i, y_i$ are of degree zero and
$e_{j}^{(l)}$ and ${e_j^{(l)}}^*$ are of degree $-l$ and $l$,
respectively and there are only finitely many generators of a given degree.
Define a Poisson structure by setting $\{x_i, y_i\} = \delta_{ij}$,
$\{e_j^{(l)}, e_m^{(n)}\} = \delta_{jm}\delta_{ln}$ and setting
all other brackets between generators to zero. We complete the space $X_0$ to
the space $X$ as described above. The partial derivatives 
\begin{align*}
		  \frac{\p}{\p x_i} &= - \{y_i, -\} &
		  \frac{\p}{\p y_i} &= \{x_i, -\} &
		  \frac{\p}{\p e_j^{(l)}} &= (-1)^{l+1}\{ {e_j^{(l)}}^*, -\} &
		  \frac{\p}{\p {e_j^{(l)}}^*} &= \{ e_j^{(l)}, -\}
\end{align*}
are defined via the bracket and hence are all well-defined on $X$. 
\begin{lem}
  \label{lem:generatingfunction}
Let $X_i, Y_i,
  E_j^{(l)}, {E_j^{(l)}}^*$ be elements of $X$ such that the assignments
  \begin{align*}
    (x_i, y_i, e_j^{(l)}, {e_j^{(l)}}^*) &\mapsto
    (x_i, Y_i, e_j^{(l)}, {E_j^{(l)}}^*) \\
    (x_i, y_i, e_j^{(l)}, {e_j^{(l)}}^*) &\mapsto
    (X_i, Y_i, E_j^{(l)}, {E_j^{(l)}}^*) 
  \end{align*}
  both define automorphisms $X \to X$ of graded commutative
  algebras. Then the latter map is a Poisson automorphism if there
  exists an element $S(x_i, Y_i, e_j^{(l)}, {E_j^{(l)}}^*) \in X$ such
  that
  \begin{align*}
        \frac{ \p S}{ \p x_i} & = y_i & \frac{ \p S }{ \p Y_i} & = X_i &
    \frac{\p S}{\p e_j^{(l)}} & = (-1)^l {e_j^{(l)}}^* & \frac{\p S}{\p
      {E_j^{(l)}}^*} & = E_j^{(l)}
  \end{align*}
  Here the partial derivatives with respect to the new variables are defined
  via the chain rule.
\end{lem}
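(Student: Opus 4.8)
The plan is to show that the algebra automorphism determined by the second assignment, which I write as $\Phi(x_i)=X_i$, $\Phi(y_i)=Y_i$, $\Phi(e_j^{(l)})=E_j^{(l)}$, $\Phi({e_j^{(l)}}^*)={E_j^{(l)}}^*$, preserves the bracket. Since $\Phi$ is already an algebra homomorphism and $\{-,-\}$ is a derivation in each argument, the set of pairs $(a,b)$ with $\Phi\{a,b\}=\{\Phi a,\Phi b\}$ is closed under multiplication in each slot, so it suffices to verify the identity when $a,b$ range over the generators. Using continuity of the bracket (Lemma \ref{lem:bracketcontinuous}) and density of $X_0$, this reduces the whole statement to checking that the images $X_i,Y_i,E_j^{(l)},{E_j^{(l)}}^*$ obey the \emph{same} canonical relations as the original generators, i.e. $\{X_i,Y_j\}=\delta_{ij}$, $\{E_j^{(l)},{E_m^{(n)}}^*\}=\delta_{jm}\delta_{ln}$, and all remaining brackets among these elements vanish. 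Conceptually this is the graded analogue of the classical fact that a type-two generating function induces a symplectomorphism; the algebraic shadow of the symplectic argument $d^2S=0$ will be the graded symmetry of the Hessian of $S$.

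To compute these brackets I would first invert the defining formulas for the partial derivatives, obtaining $\{x_i,-\}=\p/\p y_i$, $\{y_i,-\}=-\p/\p x_i$, $\{e_j^{(l)},-\}=\p/\p{e_j^{(l)}}^*$ and $\{{e_j^{(l)}}^*,-\}=(-1)^{l+1}\p/\p e_j^{(l)}$, so that the Poisson bracket of any two elements is the usual sum over conjugate pairs, with Koszul signs, of products of canonical partial derivatives. I would then work with two coordinate systems on $X$: the canonical one $(x_i,y_i,e_j^{(l)},{e_j^{(l)}}^*)$ and the intermediate one $(x_i,Y_i,e_j^{(l)},{E_j^{(l)}}^*)$, the latter being a legitimate coordinate system precisely because the \emph{first} assignment is assumed to be an automorphism. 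The generating relations express all of $y_i$, ${e_j^{(l)}}^*$, $X_i$ and $E_j^{(l)}$ as first partials of $S$ in the intermediate coordinates. The chain rule between the two systems introduces the Jacobian block $A_{ik}=\p^2 S/\p Y_k\,\p x_i$ together with its ghost analogue; the automorphism hypothesis on the first assignment forces its linear part, hence $A$, to be invertible, which is exactly the invertibility invoked when this lemma is applied in the proof of Lemma \ref{lem:philift}.

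The heart of the argument is then a direct computation. Expanding $\{X_i,Y_j\}$ in canonical partials, substituting $X_i=\p S/\p Y_i$ and $y_i=\p S/\p x_i$, and eliminating $\p Y/\p y$ and $\p Y/\p x$ through the inverse of $A$, the leading piece collapses to $(A^{-1}A)_{ji}=\delta_{ij}$, while the two remaining triple-sum contributions cancel because the block $\p^2 S/\p x\,\p x$ is symmetric. The other families are handled identically: $\{Y_i,Y_j\}=0$ and $\{X_i,X_j\}=0$ follow from the symmetry of $\p^2 S/\p x\,\p x$ and $\p^2 S/\p Y\,\p Y$, the ghost relations from their graded-symmetric analogues, and the cross-sector brackets (such as $\{X_i,E_j^{(l)}\}$) vanish because the relevant mixed Hessian blocks enter twice with opposite sign. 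The only fact needed about $S$ is that its second partials in a fixed coordinate system graded-commute, $\p_a\p_b S=(-1)^{|a||b|}\p_b\p_a S$, which is elementary. I expect the main obstacle to be bookkeeping rather than conceptual: tracking the Koszul signs uniformly across the four sectors, and handling the implicit-function inversion (the appearance of $A^{-1}$ in the completed algebra) carefully enough that every non-canonical term finds its cancelling partner. Once the sign conventions are fixed, the cancellations are forced by the symmetry of the Hessian of $S$.
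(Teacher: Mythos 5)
Your proposal is correct and follows essentially the same route as the paper: reduce to the canonical relations on generators via the derivation property and continuity, express the bracket as a signed sum of canonical partial derivatives, pass to the intermediate coordinate system $(x_i, Y_i, e_j^{(l)}, {E_j^{(l)}}^*)$ via the chain rule, and derive the cancellations from the graded symmetry of the Hessian of $S$. The only cosmetic difference is that you plan to verify all families of canonical relations, whereas the paper displays the computation only for the mixed bracket $\{X_m^{(n)}, H_{m'}^{(n')}\}$ and leaves the rest implicit.
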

\begin{proof}
  Set $\xi_i^{(0)} = x_i$ and $\xi_j^{(l)} = e^{(l)}_j$ for $l > 0$
  and similarly $\eta^{(0)}_i = y_i$ and $\eta_j^{(l)} = {e^{(l)}_j}^*$. We use
  capital Greek letters $\Xi$ and $H$ for the corresponding transformed
  variables. We express the bracket as
  \begin{align*}
    \{f,g\} 
    =& \sum_{l} (-1)^{l \deg f}
    \sum_j ( (-1)^{l} \frac{ \p f}{ \p \xi_j^{(l)}}\frac{\p g}{\p
      \eta_j^{(l)}} - \frac{\p f}{ \p \eta_j^{(l)}} \frac{\p g}{\p
      \xi_j^{(l)}})
  \end{align*}
  since both sides define derivations which agree on generators. The
  sums converge by lemma \ref{lem:filtrationsums}. We have
  \begin{align*}
    \frac{\p S}{\p \xi_j^{(l)}}= (-1)^l \eta_j^{(l)} \text{ and }
    \frac{\p S}{\p  H_j^{(l)}}   = \Xi_j^{(l)} \text{ so also \quad}
    \frac{ \p \eta_{p}^{(q)}}{ \p \xi_{j}^{(l)}} = 
    \frac{ \p \eta_{j}^{(l)}}{ \p \xi_{p}^{(q)}} (-1)^{q+l+ql}
  \end{align*}
  There are functions $f_{j,l}(\xi, \eta) = \Xi_{j}^{(l)}$ and
  $g_{j,l}(\xi,\eta) = H_j^{(l)}$ realizing the change of
  coordinates so that
  \begin{align*}
    f_{j,l}(\xi, \eta( \xi, H)) & = \frac{\p S}{\p H_j^{(l)}} &
    g_{j,l}(\xi, \eta( \xi, H)) & = H_j^{(l)}
  \end{align*}
  We obtain in the variables $(\xi_j^{(l)}, H_j^{(l)})$ 
  \begin{align*}
    \frac{\p f_{m,n}}{\p \xi_j^{(l)} } + 
    \sum_{p,q} \frac{\p \eta_p^{(q)}}{\p \xi_j^{(l)}}
    \frac{\p f_{m,n}}{\p \eta_{p}^{(q)}}  & = \frac{\p^2 S}{\p
      \xi_j^{(l)} \p H_m^{(n)}}  & & & \\
    \frac{ \p g_{m',n'} }{ \p \xi_{j}^{(l)}}  + \sum_{pq}
    \frac{\p \eta_{p}^{(q)}}{\p \xi_j^{(l)}}
    \frac{\p g_{m',n'} }{\p \eta_p^{(q)}} & = 0 &
    \sum_{pq}\frac{ \p \eta_p^{(q)}}{\p H_{m}^{(n)}}
    \frac{\p g_{m',n'}}{\p \eta_{p}^{(q)}} &= \delta_{mm'} \delta_{nn'}
  \end{align*}
  These expressions make sense in the completion by lemma
  \ref{lem:filtrationsums} since $(j,l,n,m)$ are fixed and the
  $\eta_p^{(q)}$ derivatives are of non-decreasing and unbounded
  degree.  Using those equalities, we calculate in the variables
  $(\xi_j^{(l)}, H_j^{(l)})$ the bracket $ \{f_{m,n}, g_{m',n'}\} $ as
  \begin{align*}
    &\sum_{l} (-1)^{l n} \sum_j ( (-1)^{l}
    \frac{ \p f_{m,n}}{ \p \xi_j^{(l)}}
    \frac{\p g_{m',n'}}{\p \eta_j^{(l)}} 
    -
    \frac{\p f_{m,n}}{ \p \eta_j^{(l)}} 
    \frac{\p g_{m',n'}}{\p \xi_j^{(l)}})\\
    =& 
    \sum_{jl} (-1)^{l(n+1)}
    \frac{\p^2 S}{\p \xi_j^{(l)} \p H_m^{(n)}}
    \frac{\p g_{m',n'}}{\p \eta_j^{(l)}} 
    -
    \sum_{jl} (-1)^{ln} \bigg( 
    \sum_{pq} (-1)^l \frac{ \p \eta_p^{(q)}}{ \p \xi_j^{(l)}} \frac{\p
      f_{m,n} }{ \p \eta_p^{(q)}}     \frac{\p g_{m',n'}}{\p \eta_j^{(l)}} 
    +
    \frac{\p f_{m,n}}{ \p \eta_j^{(l)}} 
    \frac{\p g_{m',n'}}{\p \xi_j^{(l)}}
    \bigg)\\
    =& 
    \sum_{jl} (-1)^{l}
    \frac{\p^2 S}{ \p H_m^{(n)} \p \xi_j^{(l)}}
    \frac{\p g_{m',n'}}{\p \eta_j^{(l)}} 
    -
    \sum_{jl} (-1)^{ln} 
    \sum_{pq}  \bigg( (-1)^l
    \frac{ \p \eta_p^{(q)}}{ \p \xi_j^{(l)}} \frac{\p
      f_{m,n} }{ \p \eta_p^{(q)}}     \frac{\p g_{m',n'}}{\p \eta_j^{(l)}} 
    -
    \frac{\p f_{m,n}}{ \p \eta_j^{(l)}} 
    \frac{\p \eta_{p}^{(q)}}{\p \xi_j^{(l)}}
    \frac{\p g_{m',n'} }{\p \eta_p^{(q)}} 
    \bigg)\\
    =& 
    \sum_{jl} 
    \frac{\p \eta_{j}^{(l)}}{\p H_m^{(n)}}
    \frac{\p g_{m',n'}}{\p \eta_j^{(l)}} 
    -
    \sum_{jlpq} 
    \bigg( (-1)^{ln+l} 
    \frac{ \p \eta_p^{(q)}}{ \p \xi_j^{(l)}} \frac{\p
      f_{m,n} }{ \p \eta_p^{(q)}}     \frac{\p g_{m',n'}}{\p \eta_j^{(l)}} 
    -
    (-1)^{ql + qn+l}
    \frac{\p \eta_{p}^{(q)}}{\p \xi_j^{(l)}}
    \frac{\p f_{m,n}}{ \p \eta_j^{(l)}} 
    \frac{\p g_{m',n'} }{\p \eta_p^{(q)}} 
    \bigg)\\
    =& 
    \delta_{mm'}\delta_{nn'}    -
    \sum_{jlpq} 
    \bigg( (-1)^{ln+l} 
    \frac{ \p \eta_p^{(q)}}{ \p \xi_j^{(l)}} \frac{\p
      f_{m,n} }{ \p \eta_p^{(q)}}     \frac{\p g_{m',n'}}{\p \eta_j^{(l)}} 
    -
    (-1)^{qn + q}
    \frac{\p \eta_{j}^{(l)}}{\p \xi_p^{(q)}}
    \frac{\p f_{m,n}}{ \p \eta_j^{(l)}} 
    \frac{\p g_{m',n'} }{\p \eta_p^{(q)}} 
    \bigg) = \delta_{mm'} \delta_{nn'}\\
  \end{align*}
\end{proof}

\section{Normal Ordering}

Let $\{\xi_i\}_{i \in \N}$ be formal variables of degree $d_i \leq 0$
and $\{\eta_i\}_{i \in \N}$ formal variables of degree $-d_i$, such that there
are only $n_l < + \infty$ many in each degree $l$ and $d_i$ is non-increasing
in $i$. An element of the free algebra $F_0 =
\R \la \xi_i, \eta_i \ra$ is called normal ordered if it belongs to the
subspace spanned by all elements
\begin{align*}
		  \xi_{i_1} \dots \xi_{i_k} \eta_{j_1} \dots \eta_{i_l}
\end{align*}
where $i_n$ and $j_n$ are both nondecreasing in $n$, and $k, l \geq 0$ are
integers.  Let $\hbar$ be a formal variable. Let $(cr) \subset F_0[\hbar]$ be
the two sided ideal generated by
\begin{align*}
		  [\xi_i, \eta_j] - \hbar \delta_{ij}, \qquad 
		  [\xi_i, \xi_j], \qquad 
		  [\eta_i, \eta_j],
\end{align*}
where $[x,y] = xy - (-1)^{\deg x \deg y} yx$.
\begin{thm}
		  \label{thm:pbw}
		  Each class $x + (cr)$ in $F_0[\hbar]/(cr)$ has a unique representative
		  $x' \in F_0[\hbar]$ whose coefficients in $F_0$ are all normal ordered.
		  The representative $x'$ depends $\R[\hbar]$-linearly on $x$.
\end{thm}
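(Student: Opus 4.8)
The plan is to identify the associative algebra $\mathcal A := F_0[\hbar]/(cr)$ with a universal enveloping algebra and to read the statement off as a Poincar\'e--Birkhoff--Witt theorem. Take $k := \R[\hbar]$ as the ground ring, so that $\hbar$ is an even central scalar, and let $\hfrak$ be the free $k$-module on the symbols $\{\xi_i,\eta_i\}$, made into a $k$-linear Lie superalgebra by $[\xi_i,\eta_j] = \hbar\delta_{ij}$ and all other brackets of generators zero. The generators of $(cr)$ are exactly the expressions $xy - (-1)^{\deg x\deg y}yx - [x,y]$ for generators $x,y$, so $\mathcal A \cong U(\hfrak)$. A $k$-basis of $U(\hfrak)$ given by the ordered (``normal ordered'') monomials is precisely what the theorem asserts; the one proviso is that an odd generator occurs at most once, since $[\xi_i,\xi_i]=0$ forces $\xi_i^2=0$ in $\mathcal A$, so for odd $\xi_i$ the nondecreasing condition must be read as strictly increasing. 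Granting the basis, the $\R[\hbar]$-linearity of $x'$ is automatic: $x'$ is the image of $x$ under the coordinate projection onto the span of normal ordered words.

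Existence of a normal ordered representative I would obtain by an explicit terminating rewriting procedure. Reading the relations as rewrite rules that push every $\xi$ to the left of every $\eta$ and sort indices within each block, namely $\eta_j\xi_i \rightsquigarrow (-1)^{d_id_j}(\xi_i\eta_j - \hbar\delta_{ij})$, and $\xi_j\xi_i \rightsquigarrow (-1)^{d_id_j}\xi_i\xi_j$ for $i<j$ (similarly for the $\eta$'s), together with $\xi_i^2 \rightsquigarrow 0$ for odd $i$, one sees that each application either strictly lowers the word length (the $\hbar\delta_{ij}$ term) or preserves length while strictly lowering the number of out-of-order adjacent pairs. Ordering monomials lexicographically by (length, inversion number) yields a well-founded order: although there are infinitely many generators, any word involves only finitely many of them and no reduction ever introduces a new letter, so only finitely many monomials can appear along a descending chain. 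Hence every element of $F_0[\hbar]$ reduces in finitely many steps to a $k$-combination of normal ordered words.

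The substance of the theorem is uniqueness, i.e.\ $k$-linear independence of the normal ordered words in $\mathcal A$. Rather than check confluence of all overlap ambiguities directly, I would exhibit a module that separates them. Let $V$ be the free $k$-module on the normal ordered words, equivalently the super-symmetric algebra $\Sym^{\mathrm{super}}(\hfrak)$, with vacuum $v_0 = 1$. Define $k$-linear operators by letting $\rho(\xi_i)$ act by super-symmetric (ordered) multiplication by $\xi_i$, and $\rho(\eta_i)$ act by super-symmetric multiplication by $\eta_i$ plus the central correction $\hbar\,\p/\p\xi_i$ (the super-derivative contracting a $\xi_i$). One checks that these operators satisfy the defining relations of $\hfrak$, so $\rho$ extends to an algebra map $\rho\colon \mathcal A \to \End_k(V)$. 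Since the derivative part annihilates the vacuum, one computes $\rho(w)\,v_0 = w$ for every normal ordered word $w$; distinct normal ordered words therefore have distinct images in $\mathcal A$, which is precisely linear independence and hence uniqueness of the representative.

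The main obstacle is verifying that $\rho$ is a bona fide representation, i.e.\ that $[\rho(x),\rho(y)] = \hbar\{x,y\}$ with the correct Koszul signs; this is the graded Jacobi/associativity identity underlying PBW, and the sign bookkeeping together with the infinite index set make it the delicate point. What rescues the computation is that the bracket of $\hfrak$ is central and two-step nilpotent, so the check collapses to the single identity on pairs of generators and is completely uniform in the index set $\{\xi_i,\eta_i\}_{i\in\N}$; this is exactly why having infinitely many $\xi_i,\eta_i$ creates no genuine difficulty, once one has confirmed that every word uses only finitely many of them.
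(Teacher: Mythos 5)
Your proposal is correct and rests on the same key identification as the paper: you recognize $F_0[\hbar]/(cr)$ as the universal enveloping algebra of a Heisenberg-type Lie superalgebra (you take $\R[\hbar]$ as ground ring with $\hfrak$ free on the $\xi_i,\eta_i$; the paper instead adjoins $\hbar$ as a central element of $\gfrak=\Span\{\xi_i,\eta_i,\hbar\}$ over $\R$ --- the two are equivalent here). The difference is what happens next: the paper simply cites the Poincar\'e--Birkhoff--Witt theorem for this superalgebra, whereas you prove the PBW basis statement from scratch, with existence via a terminating rewriting system and uniqueness via a Fock-type representation on $\Sym^{\mathrm{super}}(\hfrak)$ in which $\rho(\xi_i)$ acts by multiplication and $\rho(\eta_i)$ by multiplication plus an $\hbar$-weighted contraction, so that $\rho(w)v_0=w$ separates the normal ordered words. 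This buys a self-contained and concretely verifiable argument (the two-step nilpotency of the bracket indeed reduces the representation check to generators, and the infinite index set is harmless since every word is finitely supported), at the cost of the sign bookkeeping you acknowledge: with the naive contraction one gets $[\rho(\xi_i),\rho(\eta_j)]=-(-1)^{\deg\xi_i}\hbar\delta_{ij}$, so the correction term must carry a degree-dependent sign --- routine, but it should be pinned down. Your proviso that odd generators must appear with strictly increasing indices is a genuine catch: as literally stated, with ``nondecreasing'' indices, $\xi_i^2$ is a nonzero normal ordered word in $F_0$ lying in $(cr)$ for odd $\xi_i$, so uniqueness fails unless the basis is read in the super-PBW sense (odd generators at most once), which the paper's appeal to PBW uses implicitly but its definition of ``normal ordered'' does not make explicit.
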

\begin{proof}
		  Define the vector space $\gfrak = \Span\{\xi_i, \eta_i, \hbar\}$. The
		  formulae $\{\xi_i, \eta_j\} = \hbar \delta_{ij}$, $\{\xi_i, \xi_j\} =
		  \{\eta_i, \eta_j\} = 0$ define a graded Lie algebra structure on
		  $\gfrak$ with central element $\hbar$. The existence, uniqueness, and
		  $\R$-linearity all follow directly from the
		  Poincar\'e-Birkhoff-Witt theorem applied to  this Lie algebra.
		  Linearity in $\hbar$ now follows directly.
\end{proof}
The assignment $x + (cr) \mapsto x'$ is basis independent in the following way:
Consider the two-sided ideal $(c) \subset F_0[\hbar]$ generated by  
\begin{align*}
		  [\xi_i, \xi_j], \qquad 
		  [\eta_i, \eta_j],
\end{align*}
Let $\alpha : F_0[\hbar] \to F_0[\hbar]$ be an
isomorphism induced by an isomorphism of graded vector spaces, i.e. by mapping
$\xi_i \mapsto \sum_j a_{ij} \xi_j$ and $\eta_i \mapsto
\sum_j b_{ij} \eta_j$ where $a$ is
an invertible block diagonal matrix with finite dimensional blocks and
$b$ is the transpose of the inverse of $a$. In particular the above two sums are finite. 
\begin{lem}
	\label{lem:pbw-base-indep}
	Let $x \in F_0[\hbar]$. Then $\alpha(x') \equiv \alpha(x)' \pmod{ (c) }$. 
\end{lem}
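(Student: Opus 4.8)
The plan is to work modulo the smaller ideal $(c)$ throughout and to reduce the claim to the statement that $\alpha$ commutes with normal ordering in the quotient algebra $B := F_0[\hbar]/(c)$. Write $\rho : F_0[\hbar] \to B$ for the projection and $N : F_0[\hbar] \to F_0[\hbar]$, $N(x) = x'$, for the (linear) passage to the normal-ordered representative supplied by theorem \ref{thm:pbw}. First I would check that $\alpha$ preserves both $(cr)$ and $(c)$: since $b = (a^{-1})^T$ one computes $[\alpha\xi_i, \alpha\eta_j] = \sum_{k,l} a_{ik} b_{jl} [\xi_k,\eta_l] \equiv \hbar (a b^T)_{ij} = \hbar \delta_{ij} \pmod{(cr)}$, while $[\alpha\xi_i,\alpha\xi_j]$ and $[\alpha\eta_i,\alpha\eta_j]$ manifestly lie in $(c)$; as $\alpha^{-1}$ is of the same form, these inclusions are equalities. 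Consequently $\alpha$ descends to an automorphism $\bar\alpha$ of $B$ preserving the image $\overline{(cr)}$ of $(cr)$, and normal ordering descends to a well-defined map $\bar N : B \to B$ via $\bar N(\rho(b)) = \rho(N(b))$ (well-defined because $(c) \subseteq (cr)$ forces $N$ to be constant on $(c)$-cosets). The identities $\rho \circ N = \bar N \circ \rho$ and $\rho \circ \alpha = \bar\alpha \circ \rho$ then turn the desired congruence $\alpha(x') \equiv \alpha(x)' \pmod{(c)}$ into the single equality $\bar\alpha \circ \bar N = \bar N \circ \bar\alpha$ on $B$.

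The heart of the argument is the structural description of $\bar N$. In $B$ the $\xi_i$ graded-commute among themselves and so do the $\eta_i$, so the normal-ordered monomials span exactly the subspace $B_{\mathrm{no}}$ of products of an element of the graded-symmetric algebra on the $\xi_i$ with one on the $\eta_i$. Using the uniqueness in theorem \ref{thm:pbw} (applied to nondecreasing-index lifts) I would show that $B = B_{\mathrm{no}} \oplus \overline{(cr)}$ as graded vector spaces and that $\bar N$ is precisely the projection onto $B_{\mathrm{no}}$ along $\overline{(cr)}$. Now $\bar\alpha$ sends each $\xi_i$ to a combination of $\xi$'s and each $\eta_i$ to a combination of $\eta$'s, hence maps $B_{\mathrm{no}}$ into itself, and by the first paragraph it also preserves $\overline{(cr)}$; thus it preserves both summands of the decomposition. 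Therefore $\bar\alpha$ commutes with the projection $\bar N$, which is exactly what is needed.

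The main obstacle I expect is the bookkeeping behind the direct-sum decomposition $B = B_{\mathrm{no}} \oplus \overline{(cr)}$ and the identification of $\bar N$ with the corresponding projection: this is where the uniqueness half of the Poincar\'e--Birkhoff--Witt statement is genuinely used, since one must rule out that a nonzero normal-ordered element of $B$ lies in $\overline{(cr)}$. Conceptually, everything hinges on the single fact that the only datum of normal ordering able to interact with $\alpha$ is the contraction $[\xi_i,\eta_j] \equiv \hbar \delta_{ij}$, and the hypothesis $b = (a^{-1})^T$ is exactly the condition that $\alpha$ preserve this pairing; once that is in hand, the decomposition argument makes commutation with $\bar N$ automatic.
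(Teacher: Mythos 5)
Your argument is correct, but it proceeds quite differently from the paper's. The paper proves the lemma by induction on the number of factors of a monomial $x$: after establishing $\alpha((c))=(c)$ and $\alpha((cr))=(cr)$ (your first paragraph is the same computation), it splits into cases according to whether a $\xi_j$ occurs in $x$ and whether it stands in front, and in the remaining case explicitly commutes $\xi_j$ past powers of $\eta_j$, tracking the resulting $\hbar$-correction terms and signs before invoking the inductive hypothesis. You instead pass to $B=F_0[\hbar]/(c)$ and reduce everything to the linear-algebraic statement that $\bar\alpha$ commutes with the projection $\bar N$ onto $B_{\mathrm{no}}$ along $\overline{(cr)}$. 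The decomposition $B=B_{\mathrm{no}}\oplus\overline{(cr)}$ that you single out as the main obstacle is in fact immediate: theorem \ref{thm:pbw} gives $F_0[\hbar]=\mathrm{NO}\oplus(cr)$ as $\R[\hbar]$-modules, and since $(c)\subseteq(cr)$ one gets both that $\rho$ is injective on $\mathrm{NO}$ and that $\rho(\mathrm{NO})\cap\overline{(cr)}=0$, so the decomposition and the identification of $\bar N$ with the projection follow at once. The remaining input, $\bar\alpha(B_{\mathrm{no}})\subseteq B_{\mathrm{no}}$, holds because $\alpha$ maps $\xi$'s to $\xi$'s and $\eta$'s to $\eta$'s of the same degree, and reordering within each block costs only elements of $(c)$. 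What your route buys is the elimination of all sign and commutator bookkeeping and a transparent explanation of the hypothesis $b=(a^{-1})^T$: it is precisely the condition that $\alpha$ preserve the contraction ideal $(cr)$, hence the second summand. What the paper's route buys is that it stays entirely inside $F_0[\hbar]$ and needs no auxiliary quotient algebra, at the cost of a longer case analysis. Both proofs rest on the same two pillars, lemma \ref{lem:alphaideal} and the uniqueness half of theorem \ref{thm:pbw}, so yours is a legitimate and arguably cleaner alternative.
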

For the proof we need the auxiliary
\begin{lem}
	\label{lem:alphaideal} Let $\alpha$ be as above. We have
	$	\alpha(c)  = (c)$ and $\alpha(cr)  = (cr)$.
\end{lem}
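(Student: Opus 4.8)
The plan is to use that $\alpha$ is a \emph{graded} algebra automorphism of $F_0[\hbar]$ together with the elementary fact that an algebra automorphism carries the two-sided ideal generated by a set $S$ onto the two-sided ideal generated by $\alpha(S)$. First I would record that $\alpha$ preserves degree: the matrices $a$ and $b$ are block diagonal along the degree decomposition, so $\alpha(\xi_i)$ is a finite combination of $\xi_k$ of degree $d_i$ and $\alpha(\eta_j)$ a finite combination of $\eta_l$ of degree $-d_j$. Because $\deg\alpha(x)=\deg x$, the map $\alpha$ commutes with the graded commutator, $\alpha([x,y])=[\alpha(x),\alpha(y)]$ for homogeneous $x,y$ (scalars are central of degree zero, so no signs intervene). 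I would also note that $\alpha^{-1}$ is a map of the same type: its $\xi$-matrix is $a^{-1}$ and its $\eta$-matrix is $b^{-1}=a^{T}$, which is the transpose of the inverse of $a^{-1}$, and both remain block diagonal. Hence it suffices to establish the inclusions $\alpha((c))\subseteq(c)$ and $\alpha((cr))\subseteq(cr)$; applying them to $\alpha^{-1}$ then yields the reverse containments and the asserted equalities.

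Since $\alpha$ is a surjective algebra homomorphism, $\alpha((c))$ equals the ideal generated by the images of the generators, so it is enough to check each image lies in $(c)$. Here $\alpha([\xi_i,\xi_j])=\sum_{k,l}a_{ik}a_{jl}[\xi_k,\xi_l]$ is a finite $\R$-combination of generators of $(c)$, and the computation for $[\eta_i,\eta_j]$ is identical; this gives $\alpha((c))\subseteq(c)$, hence $\alpha((c))=(c)$ after applying $\alpha^{-1}$.

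For $(cr)$ the two families $[\xi_i,\xi_j]$ and $[\eta_i,\eta_j]$ are handled exactly as above, as $(c)\subseteq(cr)$. The substantive point is the mixed generator $g_{ij}:=[\xi_i,\eta_j]-\hbar\delta_{ij}$. I would expand $\alpha(g_{ij})=\sum_{k,l}a_{ik}b_{jl}[\xi_k,\eta_l]-\hbar\delta_{ij}$ and rewrite each bracket as $[\xi_k,\eta_l]=([\xi_k,\eta_l]-\hbar\delta_{kl})+\hbar\delta_{kl}$. The first pieces lie in $(cr)$, while the collected scalar contribution is $\hbar\sum_k a_{ik}b_{jk}=\hbar(ab^{T})_{ij}$. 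The defining relation $b=(a^{-1})^{T}$, i.e.\ $b^{T}=a^{-1}$, forces $ab^{T}=I$, so this contribution is precisely $\hbar\delta_{ij}$, cancelling the constant term and leaving $\alpha(g_{ij})\in(cr)$. Thus $\alpha((cr))\subseteq(cr)$, and equality follows by the same argument for $\alpha^{-1}$.

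I would single out this cancellation not so much as an obstacle but as the crux of the lemma: it is exactly the duality condition $b^{T}=a^{-1}$ that renders the mixed commutation relations $[\xi_i,\eta_j]=\hbar\delta_{ij}$ invariant under the change of variables, and it is the one place where the transpose-inverse hypothesis relating $a$ and $b$ is genuinely needed. Everything else reduces to the formal behaviour of generators of two-sided ideals under a degree-preserving algebra automorphism.
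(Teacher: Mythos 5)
Your proof is correct and follows essentially the same route as the paper: check the images of the generators, use block-diagonality for finiteness, and use $b=(a^{-1})^{T}$ to see that the mixed relation $[\xi_i,\eta_j]-\hbar\delta_{ij}$ is preserved, then apply the same argument to $\alpha^{-1}$ (which is of the same type) to upgrade the inclusions to equalities. You merely make explicit the cancellation $\sum_k a_{ik}b_{jk}=\delta_{ij}$ that the paper compresses into the phrase ``since $b$ is the transpose inverse of $a$.''
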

\begin{proof}
	We have $\alpha([\xi_i,\xi_j]) = [\alpha(\xi_i), \alpha(\xi_j)] = \sum_{kl}
	a_{ik} a_{jl} [\xi_k,\xi_l] \in (c)$ since the sums are effectively finite.
	Similarly $\alpha([\eta_i,\eta_j]) \in (c)$. Furthermore 
	\begin{align*}
		\alpha([\xi_i,\eta_j] - \delta_{ij} \hbar) = [\alpha(\xi_i),\alpha(\eta_j)]
		- \delta_{ij} \hbar = \sum_{kl} a_{ik} b_{jl} ([\xi_k,\eta_l] - \delta_{kl}
		\hbar) \in (cr)
	\end{align*}
	since the sums are finite and $b$ is the transpose inverse of $a$. We obtain
	$\alpha(c) \subset (c)$ and $\alpha(cr) \subset (cr)$. In both cases equality
	follows since the inverse of $\alpha$ is also induced by an isomorphism of
	graded vector spaces.
\end{proof}
\begin{proof}[Proof of lemma \ref{lem:pbw-base-indep}]
	We may suppose that $x \in F_0$ is a monomial which is constant in $\hbar$.
	We perform an induction in the number $n$ of factors in $x$. For $n=1$ we
	have $x'=x$, hence $\alpha(x') = \alpha(x)$. Since $\alpha(x)$ is a sum of
	monomials with one factor each, we also have $\alpha(x) = \alpha(x)'$.

	Now assume the statement holds for all monomials with at most $n-1$ factors.
	Let $x$ be a monomial with $n$ factors. Assume first that no $\xi_i$ appears
	in $x$. Then $x \equiv x' \pmod{ (c) }$, so $\alpha(x) \equiv \alpha(x')
	\pmod{ (c) } $ by lemma \ref{lem:alphaideal}. The image $\alpha(x)$ also does
	not contain any $\xi_i$. Hence $\alpha(x) \equiv \alpha(x)' \pmod{ (c) }$.
	Now assume that $x$ does contain some $\xi_j$. If some $\xi_j$ appears in
	front, we can write $x = \xi_j x_0$. We thus have $x' \equiv \xi_j x_0'
	\pmod{(c)}$ and conclude
	\begin{align*}
		\alpha(x') \equiv \alpha(\xi_j) \alpha(x_0')
		\equiv \alpha(\xi_j) \alpha(x_0)' \equiv (\alpha(\xi_i) \alpha(x_0))' =
		(\alpha(x))' \pmod{(c)}
	\end{align*}
	In the first equivalence we have used lemma \ref{lem:alphaideal}, in the
	second the induction assumption, and in the third the fact that
	$\alpha(\xi_i)= \sum_j a_{ij} \xi_j$. Finally, if no $\xi_j$ appears in
	front, we may write $x = \eta_{i_1} \cdots \eta_{i_k} \xi_j y_0 \equiv (-1)^m
	\eta_j^l \xi_j x_0 \pmod{(c)}$ for some $m, l \geq 0$, where the sign comes
	from commuting all $\eta_p$ with $p \neq j$ past $\xi_j$. Hence $x' = (-1)^m
	(\eta_j^l \xi_j x_0)'$. If $l = 0$ we
	directly obtain using the case above
	\begin{align*}
		\alpha(x') = \alpha((-1)^m (\xi_j x_0)') \equiv ( \alpha( (-1)^m \xi_j x_0)
		)' = \alpha(x)' \pmod{ (c)}
	\end{align*}
	since $x \equiv (-1)^m \xi_j x_0 \pmod{ (c)}$ implies $\alpha(x) \equiv \alpha((-1)^m
	\xi_j x_0) \pmod{ (c)}$ by lemma \ref{lem:alphaideal}. If $l > 0$ then use
	$\eta_j^l \xi_j = (-1)^d \xi_j \eta_j^l - f \eta_j^{l-1}
	\hbar$ for $d = \deg \xi_j$  and $f = \sum_{s=1}^l (-1)^{sd}$ to conclude 
	\begin{align*}
		x \equiv (-1)^{m+d} \xi_j \eta_j^l x_0 - (-1)^m f \eta_j^{l-1} x_0 \hbar
		\pmod{ (c)}
	\end{align*}
	We now use the induction assumption and the case where a $\xi_j$ is in front
	to compute
	\begin{align*}
		\alpha(x') &= \alpha\Big( ((-1)^{m+d} \xi_j \eta_j^l x_0 )'\Big) -
		\alpha\Big( ( (-1)^m f \eta_j^{l-1} x_0)'\Big) \hbar \\
		& \equiv
	\Big(\alpha( (-1)^{m+d} \xi_j \eta_j^l x_0 ) \Big)' - \Big(\alpha( (-1)^m f
	\eta_j^{l-1} x_0 ) \Big)' \hbar = ( \alpha( (-1)^m \eta_j^l \xi_j x_0 ))'
	\pmod{(c)}
	\end{align*}
	Since $x \equiv (-1)^m \eta_j^l \xi_j x_0 \pmod{(c)}$, the last term equals
	$(\alpha(x))'$ by lemma \ref{lem:alphaideal}. We are done.

\end{proof}

\bibliography{biblio}{}
\bibliographystyle{plain}

\end{document}